\documentclass[aps,prl,10pt,twocolumn,superscriptaddress, longbibliography]{revtex4-2}

\usepackage{xpatch}
\makeatletter
\patchcmd{\@ssect@ltx}
    {\addcontentsline{toc}{#1}{\protect\numberline{}#8}}
    {}
    {}
    {}
\makeatother

\usepackage{amsmath,amssymb,amsfonts}
\usepackage{comment}
\usepackage{graphicx}
\usepackage[pdfusetitle,
 bookmarks=true,bookmarksnumbered=false,bookmarksopen=false,
 breaklinks=false,pdfborder={0 0 0},pdfborderstyle={},backref=false,colorlinks=true]
 {hyperref}
\hypersetup{
 linkcolor=magenta, urlcolor=blue, citecolor=blue}
\usepackage{booktabs}
\usepackage{braket}
\usepackage{bbm}
\usepackage{soul}
\usepackage{xcolor}
\usepackage{mathtools}
\usepackage{amsthm}
\usepackage{cleveref}

\usepackage{tikz}
\usetikzlibrary{positioning,calc}
\usetikzlibrary{arrows.meta}

\newtheorem{theorem}{Theorem}
\newtheorem{proposition}{Proposition}
\newtheorem{result}{Result}
\newtheorem{corollary}{Corollary}
\newtheorem{lemma}[theorem]{Lemma}
\newtheorem*{remark}{Remark}
\newtheorem{definition}{Definition}

\newcommand{\tr}{\mathrm{Tr}}
\newcommand{\Tr}{\mathrm{Tr}}
\newcommand{\id}{\mathrm{id}}

\newcommand{\calN}{\mathcal{N}}
\newcommand{\calP}{\mathcal{P}}
\newcommand{\calA}{\mathcal{A}}
\newcommand{\calD}{\mathcal{D}}
\newcommand{\calE}{\mathcal{E}}
\newcommand{\calX}{\mathcal{X}}

\newcommand{\mmw}[1]{{\color{purple} MMW: #1}}

\newcommand{\mpc}[1]{{\color{blue} MP: #1}}

\renewcommand{\thesection}{\Roman{section}}
\renewcommand{\thesubsection}{\Alph{subsection}}

\allowdisplaybreaks

\begin{document}

\title{Onset of superactivation of quantum capacity}

\author{Marco Parentin}
\thanks{These authors contributed equally.}
\affiliation{Politecnico di Torino, 10129 Torino, Italy}

\author{Bjarne Bergh}
\thanks{These authors contributed equally.}
\affiliation{Department of Applied Mathematics and Theoretical Physics, University of Cambridge, Cambridge CB3 0WA, United Kingdom}

\author{Nilanjana Datta}
\affiliation{Department of Applied Mathematics and Theoretical Physics, University of Cambridge, Cambridge CB3 0WA, United Kingdom}

\author{Mark~M. Wilde}
\affiliation{School of Electrical and Computer Engineering, Cornell University, Ithaca, New York 14850, USA}

\date{\today}

\begin{abstract}
Superactivation of quantum capacity is the phenomenon whereby two quantum channels, each with zero quantum capacity, 
can exhibit a strictly positive capacity when used in tandem.
In this work, we explore superactivation in the previously unexplored non-asymptotic regime of finitely many channel uses. We give a definition of finite-blocklength superactivation and propose numerical methods that can certify it. Then, focusing on the 50\% erasure and positive-partial-transpose channels considered in the original work on superactivation, we show that as few as 17 uses of the joint channel already enable qubit transmission with a fidelity unattainable by any number of uses of either channel alone, demonstrating a strong finite-blocklength form of superactivation and opening the door to experimental demonstration.
\end{abstract}

\maketitle
\flushbottom

\textit{Introduction.}---
Characterizing the fundamental limits of information transmission over noisy quantum channels is a central problem in quantum Shannon theory. If the information to be transmitted is quantum, then the relevant quantity for a noisy quantum channel $\calN$ is its quantum capacity, $Q(\calN)$, which quantifies the maximum rate at which quantum information can be reliably transmitted through it~\cite{shor1995scheme}. Despite its importance, the behavior of the quantum capacity remains poorly understood, as it is generally given by a \textit{regularised formula} involving the limit $n \to \infty$ over the number ($n$) of uses of the channel and is therefore intractable in general. Specifically, by the quantum capacity theorem~\cite{Schumacher1996,schumacher1996quantum,Lloyd1997,BarnumNielsenSchumacher1998,BarnumKnillNielsen2000, Shor2002, Devetak2005}:
\begin{equation}
\label{eq:LSWtheorem}
        Q(\mathcal{N}) =\lim_{n\to \infty} \frac{1}{n} I_c(\mathcal{N}^{\otimes n}),
\end{equation}
where $I_c(\mathcal{N})$ is the coherent information of $\mathcal{N}$ \cite{schumacher1996quantum} and $\mathcal{N}^{\otimes n}$ represents $n$ parallel uses of $\mathcal{N}$. In general, $I_c(\mathcal{N}^{\otimes n}) \neq n I_c(\mathcal{N})$, meaning that the limit in~\eqref{eq:LSWtheorem} is unavoidable. This is in contrast to the case of a classical channel, for which the capacity is given by a single-letter expression.

Such non-additivity is a common feature 
of most communication settings that employ quantum channels~\cite{Holevo2002,Schumacher1997, Schumacher1996, Lloyd1997, Shor2002, Devetak2005, cai2004quantum}, with the sole exception of entanglement-assisted information transmission~\cite{bennett1999entanglement, bennett2002entanglement}. Non-additivity implies that quantum correlations across channel inputs can provide an advantage for coding~\cite{hastings2009superadditivity, smith2008structured, Li2009private, divincenzo1998quantum, cubitt2015unbounded,Leditzky2023,Leditzky2023Platypus,BhaleraLeditzky2025}.

The most extreme manifestation of this phenomenon is the \textit{superactivation} of quantum capacity~\cite{Smith2008}: two quantum channels, each individually incapable of reliably transmitting any quantum information, can acquire a strictly positive quantum capacity when used in tandem. Smith and Yard proved the existence of such a pair: a positive-partial-transpose (PPT) channel $\mathcal{P}$ and a 50\% quantum erasure channel $\mathcal{A}$ both have zero quantum capacity; i.e.~$Q(\mathcal{P}) = Q(\mathcal{A}) = 0$, and yet $Q(\calP \otimes \calA) > 0$ for a certain choice of $\mathcal{P}$ and $\mathcal{A}$~\cite{Smith2008}. 
This startling discovery showed that the usefulness of a channel for quantum communication can depend not only on its intrinsic properties but also on the context in which it is used---a feature with no classical counterpart. It also implies that quantum capacity is a strongly non-additive quantity; i.e.~there is no hope of finding a characterization different from~\eqref{eq:LSWtheorem} that is additive. Subsequent works
have since expanded on the theme~\cite{Cubitt2011, Cubitt2012, Smith2011, Brandao2012, Chen2011, Wilde2012, Strelchuk2012, Shirokov2015, Lim2019,wu2025}, and superactivation is now recognized as one of the defining features of quantum Shannon theory~\cite{Wilde_book}.

However, all prior investigations of superactivation have been confined to the \textit{asymptotic} regime, in which an unlimited number of channel uses is assumed. One could therefore question the practical relevance of this effect, given the existing state of the literature. 

In this paper, we address this question by studying superactivation in the \textit{non-asymptotic} regime~\cite{ renner2006securityquantumkeydistribution,datta2009smooth}, where the number of channel uses is finite. Rather than the 
highest achievable rate of quantum communication, we focus on the smallest transmission error, as measured by the channel fidelity $F_c(\calN^{\otimes n}, d)$, which quantifies the maximum success probability that is achievable for transmitting a maximally entangled state of Schmidt rank $d$
with $n$ parallel uses of a channel $\calN$.  We establish upper bounds on the channel fidelity of the individual channels $\mathcal{P}^{\otimes m}$ and $\mathcal{A}^{\otimes m}$, together with a lower bound on the channel fidelity of $n$ uses of the joint channel $(\mathcal{P} \otimes \mathcal{A})^{\otimes n}$. We then define non-asymptotic superactivation through the threshold number of channel uses, $n$, at which the latter exceeds both of the former. 
A central goal is to make this threshold as small as possible, so as to establish whether superactivation could ever be observed in practice. Focusing on the most natural case of qubit transmission ($d = 2$) and combining an analytical channel reduction with an efficient numerical algorithm for lower-bounding the channel fidelity, we show that remarkably few channel uses---just 17---suffice to demonstrate non-asymptotic superactivation. As a byproduct, our approach yields an explicit coding protocol for the joint channel, opening the door to an experimental demonstration of superactivation.


\textit{Entanglement transmission.}---Given a quantum channel $\calN_{A \to B}$, the goal of (one-shot) entanglement transmission is to transfer entanglement with a reference system, in the form of a maximally entangled state of Schmidt rank $d$, 
through the channel $\mathcal{N}$ with the help of an encoding channel $\mathcal{E}$ and a decoding channel $\mathcal{D}$~\cite{Buscemi_2010}. Specifically, the aim is to maximize the fidelity of the output state with a target maximally entangled state $\Phi^d$; so the performance metric is the \textit{channel fidelity}, defined as:
\begin{equation}\label{eq:channel_fidelity}
    F_c(\calN, d) \coloneqq \max_{\calE, \calD} F\!\left(\Phi^d_{RB'},\, \omega_{RB'}\right),
\end{equation}
where $F(\rho, \sigma) \coloneqq \left\|\sqrt{\rho}\sqrt{\sigma}\right\|^2_1$ is the fidelity~\cite{Uhlmann1976} and 
\begin{equation}
\omega_{RB'} \coloneqq (\id_R \otimes (\calD_{B \to B'} \circ \calN_{A \to B} \circ \calE_{A' \to A}))\, (\Phi^d_{RA'}).
\end{equation}

The minimum achievable error is $\varepsilon^*_Q(\calN, d) \coloneqq 1 - F_c(\calN, d)$; in general, this will be larger than zero, and so this scheme is also referred to as an approximate quantum error correction protocol~\cite{schumacher2002approximate}. Suppose we are given $n$ identical instances of $\mathcal{N}$: by using correlated inputs and joint decoding over the parallel concatenation $\mathcal{N}^{\otimes n}$, the minimum achievable error can only decrease; i.e., $\varepsilon^*_Q(\calN^{\otimes n}, d) \leq \varepsilon^*_Q(\calN^{\otimes m}, d)$ for all $n \geq m$, and the inequality can be strict.

The $n$-shot (entanglement transmission) quantum capacity is defined as the maximum achievable rate over $\mathcal{N}^{\otimes n}$ with error at most $\varepsilon$:
\begin{equation}
\label{eq:n-shotquantumcapacity}
    Q^{n, \varepsilon}(\mathcal{N}) \coloneqq \sup_{d\in \mathbb{N}} \left\{\frac{\log d}{n}: \varepsilon^*_Q(\calN^{\otimes n}, d) \leq \varepsilon \right\},
\end{equation}
where here and throughout, $\log$ denotes the binary logarithm (i.e., $\log \equiv 
\log_2$).
The asymptotic quantum capacity is then defined in terms of $Q^{n, \varepsilon}(\mathcal{N})$ as
\begin{equation}
Q(\calN) \coloneqq \lim_{\varepsilon \to 0} \liminf_{n\to\infty} Q^{n,\varepsilon}(\calN).
\end{equation}


\textit{Prior work.}---Let us recall the original superactivation result by Smith and Yard~\cite{Smith2008}, so as to introduce the two channels of interest. In their pioneering work, they considered the following two channels:
\begin{enumerate}
   \item The $50\%$ quantum erasure channel $\calA$~\cite{grassl1997codes}, whose action on an input state consists of sending it unaltered, with probability $1/2$, to the output, and replacing it by an erasure flag orthogonal to all input states, with probability $1/2$.
   
   This channel is symmetric in the sense of~\cite{Smith_2008}, i.e., self-complementary; implying that $Q(\calA) = 0$ by the no-cloning theorem~\cite{Park1970Transition,wootters1982single}.
    \item A positive-partial-transpose (PPT) channel $\mathcal{P}$ is one whose output is a PPT state~\cite{horodecki1996necessary, horodecki1997separability} when acting on 
    one share of a bipartite state. Since these states are undistillable~\cite{horodecki1997separability}, $Q(\calP) = 0$ for every PPT channel $\mathcal{P}$.
\end{enumerate}
The key contribution of~\cite{Smith2008} was to establish a lower bound on the quantum capacity of the joint channel $\calP \otimes \calA$, having the form $Q(\calP \otimes \calA) \geq \frac{1}{2} I_p(\mathcal{P})$, where $I_p(\mathcal{P})$ is the private information~\cite{Devetak2005, cai2004quantum} of the channel $\mathcal{P}$ and quantifies its ability to send private classical messages reliably. The existence of PPT channels with positive private information, referred to as private Horodecki channels~\cite{Horodecki2005SecureKey, horodecki2008low}, enabled them to prove superactivation. They also exhibited an explicit pair of channels with input dimension $d_A = 4$ satisfying $Q(\calP \otimes \calA) > 0.01$ (see also~\cite[Section~8.3.2]{watrous2011tqi}). While useful for proving the possibility of the superactivation effect, the analysis of~\cite{Smith2008} does not clarify the structure of a quantum code for $\calP \otimes \calA$ and does not appear to be helpful for understanding the superactivation effect in the non-asymptotic regime.


A natural first approach for gaining a better understanding of superactivation in the non-asymptotic regime is to fix an allowed error tolerance $\varepsilon$ and compare the value of~\eqref{eq:n-shotquantumcapacity} for the joint channel $\mathcal{P}\otimes \mathcal{A}$ to the values of~\eqref{eq:n-shotquantumcapacity} for the two channels taken individually, 
knowing that we should recover the result of~\cite{Smith2008} in the limits as $n \to \infty $ and then as $\varepsilon \to 0$. Since evaluating $Q^{n,\varepsilon}(\mathcal{N})$ for a general channel $\mathcal{N}$ is computationally difficult, we instead rely on bounding it. Specifically, we evaluate a second-order lower bound on $Q^{n,\varepsilon}(\calP \otimes \calA)$ from~\cite{Tomamichel2016} and compare it with upper bounds on $Q^{n,\varepsilon}(\calP)$  and $Q^{n,\varepsilon}(\calA)$, from \cite[Lemma~1]{Tomamichel2016} and \cite[Corollary~2]{Kaur2021}.
Then we determine the value of $n$ for which the lower bound exceeds both upper bounds. Figure~\ref{fig:normal-approx} depicts various scenarios in which superactivation is predicted to take hold for finite~$n$.

An advantage of the second-order lower bound in~\eqref{eq:2nd-order-lower-bnd} is that it is easy to evaluate and can be used to predict when the superactivation effect occurs
for a given error $\varepsilon$ and blocklength $n$. A downside of this bound is that 
it gives no information about how to code over the joint channel in an experimental setting. Furthermore, it predicts that the number of channel uses required to observe superactivation is on the order of several thousands for an error $\varepsilon$ on the order of $0.1$, which might make it too difficult to observe in practice. In the Methods Section and SM, we explore an alternative approach based on error exponents for quantum communication \cite{berta2026tight}, but we find similarly that this approach predicts that several thousands of channel uses are required to demonstrate superactivation.

The preceding concerns motivate us to consider an alternative numerical approach. Rather than asking at what rate information can be transmitted, we can alternatively focus on transmission quality and define superactivation in terms of the channel fidelity~\eqref{eq:channel_fidelity}. Given an input dimension $d \in \mathbb{N}$, we say that the channels~$\mathcal{P}$ and $\mathcal{A}$ exhibit \emph{$n$-shot superactivation} for some threshold error $\varepsilon \in (0,1)$ if there exists an $n\in \mathbb{N}$ such that
\begin{align}
\max\left\{F_c(\mathcal{P}^{\otimes m},d),\,
F_c(\mathcal{A}^{\otimes m},d)\right\}
&\le 1-\varepsilon \quad \forall\, m\in\mathbb{N},\notag \\
F_c((\mathcal{P}\!\otimes\!\mathcal{A})^{\otimes n},d)
&>1-\varepsilon .\label{eq:n-shot-superact}
\end{align}
In words, there exists a coding protocol for the joint channel achieving a  fidelity that is unattainable by the individual channels, regardless of the number of times they are used. If the channels $\mathcal{P}$ and $\mathcal{A}$ exhibit $n$-shot superactivation we call the \emph{superactivation threshold} $n^*$ the smallest $n$ for which $n$-shot superactivation occurs.

 
\textit{Results.}---In preparation for our results on superactivation in the non-asymptotic regime,
we first establish an explicit pre- and post- processing of the joint channel $\mathcal{P}\otimes \mathcal{A}$ which yields a much simpler effective channel. These pre- and post-processing scheme is inspired by the argument in \cite{Smith2008}, and we show that this effective channel achieves the coherent information lower bound that was shown for $\mathcal{P}\otimes \mathcal{A}$ in \cite{Smith_2008}. We then use this effective channel to obtain lower bounds on achievable channel fidelity of the channel pair $\mathcal{P}\otimes \mathcal{A}$.

This is helpful for both analysis and numerical optimization because this joint channel has input and output dimensions $d_A = 16$ and $ d_B = 20$, respectively, which is too complex for both purposes. A consequence of the following proposition is that we can analyze a much simpler effective channel~$\widetilde{\calN}$, with $d_A = 2$ and $d_B = 4$, rather than the more complex joint channel $\mathcal{P}\otimes \mathcal{A}$.  
\begin{proposition}
\label{prop:equivalent_channel}
There exists pre- and postprocessing operations $(\widetilde{\calE}, \widetilde{\calD})$ for $\calP \otimes \calA$ such that the effective channel 
\begin{equation}
\widetilde{\calN} \coloneqq \widetilde{\calD} \circ (\calP \otimes \calA) \circ \widetilde{\calE}     
\end{equation}
is a qubit-input channel of the following form:
\begin{equation}\label{eq:equivalent_channel}
    \widetilde{\calN}(\rho) = \tfrac{1}{2}\,\rho \otimes \ket{0}\!\bra{0}_Z + \tfrac{1}{2}\,(\calX^p \circ \overline{\Delta})(\rho) \otimes \ket{1}\!\bra{1}_Z,
\end{equation}
where $\calX^p(\rho) \coloneqq (1-p)\,\rho + p\, \sigma_X \rho \sigma_X$ is a bit-flip channel 
with $p = 1/(1+\sqrt{2})$, $\overline{\Delta}(\rho) \coloneqq \tfrac{1}{2}\rho + \tfrac{1}{2} \sigma_Z\rho \sigma_Z$ is the completely dephasing channel, and $Z$ is a classical flag register. Furthermore, $\sigma_X$ and $\sigma_Z$ are the Pauli $X$ and $Z$ matrices, respectively.
\end{proposition}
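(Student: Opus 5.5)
The plan is to make explicit the mechanism behind the Smith--Yard bound $Q(\calP\otimes\calA)\ge\frac12 I_p(\calP)$ and to track it on a single qubit instead of averaging over random codes. Take $\calP$ to be the private Horodecki channel of~\cite{Smith2008}, with $d_A=4$. Split its input as a key qubit $A_k$ and a shield qubit $A_s$, and its output as a key qubit $B_k$ and a shield qubit $B_s$. Its Choi state $\tau_{A_kA_sB_kB_s}$ has the private-state form
\begin{equation*}
\tau=\tfrac12\sum_{i,j\in\{0,1\}} c_{ij}\,\ket{ii}\!\bra{jj}_{A_kB_k}\otimes U_i\sigma_{A_sB_s}U_j^\dagger+(\text{key-flip terms}),
\end{equation*}
where $U_i$ are twisting unitaries on the shield. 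The key-flip terms occur with total weight $p$, and the value $p=1/(1+\sqrt2)$ is forced by the PPT condition in their construction. The first step is to recompute this decomposition from the explicit channel of~\cite{Smith2008} (see also~\cite[Section~8.3.2]{watrous2011tqi}) and to read off $p$. This is where I expect most of the bookkeeping to lie.

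The encoding $\widetilde\calE$ is the isometry that maps the message qubit $\ket{k}$, $k\in\{0,1\}$, to the key component of the purification of $\tau$. Concretely, Alice prepares the purification of the $\calP$-input, feeds one half to $\calP$, and feeds the other half (dimension $4$, matching $\calA$) to the erasure channel $\calA$. Since the input qubit is encoded coherently into the key index, $\alpha\ket0+\beta\ket1$ is carried into a superposition of the two key branches of $\tau$.

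For the decoding $\widetilde\calD$, Bob first measures whether $\calA$ erased, and writes the result in the flag $Z$; this is possible because the erasure flag is orthogonal to the input space. There are two cases.
\begin{enumerate}
\item No erasure (probability $1/2$). Bob holds $\calP$'s output together with Alice's purifying system, so the joint state is pure. He applies the controlled untwisting $\sum_i \ket{i}\!\bra{i}_{B_k}\otimes U_i^\dagger$ and then the inverse of the encoding isometry. This recovers $\rho$ exactly, because the full purification is unitarily related to the input and $\calA$ acts as the identity in this branch.
\item Erasure (probability $1/2$). Bob holds only $B_kB_s$. The environment of $\calP$ holds a copy of the key index, so the coherence between key branches is lost, which is the map $\overline\Delta$. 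The key bit Bob receives is flipped with probability $p$, which is $\calX^p$. Bob simply discards $B_s$ and outputs $B_k$.
\end{enumerate}

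It remains to verify that the erasure-branch output is exactly $\calX^p\circ\overline\Delta(\rho)$ and not some other Pauli-diagonal channel. This requires that the reduced state on $B_k$, conditioned on the input key value $k$, equals $(1-p)\ket{k}\!\bra{k}+p\ket{1-k}\!\bra{1-k}$ with no residual off-diagonal terms. That in turn follows from the Choi state having the form above, with the shield part independent of the key once the twisting is undone. Assembling the two branches with their flags gives~\eqref{eq:equivalent_channel}.

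As a consistency check, I will compute the coherent information of $\widetilde\calN$ at the maximally mixed input. The no-erasure branch contributes $1$ and the erasure branch contributes $-1$ (the entropies are $1$ and $2$), so the coherent information of $\widetilde\calN$ should match the Smith--Yard lower bound~\cite{Smith_2008}.

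The main obstacle is the first step: extracting the exact private-state decomposition of the specific Smith--Yard PPT channel, and showing that after untwisting and tracing out the shield, the erasure-branch output is precisely a dephasing followed by a bit flip with $p=1/(1+\sqrt2)$. I would do this by writing the channel's Choi matrix in the basis $\{\ket{ii}\}\otimes$ shield and checking the required block structure directly.
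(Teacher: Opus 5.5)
Your encoder is essentially the paper's: the message is copied coherently into the key slot of the erasure input, and a maximally entangled shield pair is split between $\calP$ and $\calA$. Your erasure branch also reaches the right conclusion. The no-erasure branch, however, has a genuine gap.

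The Choi state of $\calP$ is $(1-p)\gamma^0+p\gamma^1$. Here $\gamma^1$ is a private state whose key is shifted by $\sigma_X$, and it has its own shield state and twisting unitaries. This key-flip component acts on every use of $\calP$, not only when $\calA$ erases. With no erasure Bob holds
\begin{equation*}
\omega^{(0)}_{B_1A_2}=\sum_{\ell\in\{0,1\}}p_\ell\sum_{i,j}\rho_{ij}\,\sigma_X^{\ell}\ket{i}\!\bra{j}_{B_1'}\sigma_X^{\ell}\otimes\ket{i}\!\bra{j}_{A_2'}\otimes U^{i,\ell}\sigma^{\ell}(U^{j,\ell})^\dagger ,
\end{equation*}
with $p_0=1-p$, $p_1=p$, $\sigma^0=\pi_4$ and $\sigma^1=\overline{\Phi^2}$. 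This is a mixed state, so your claim that the joint state is pure, and hence invertible by undoing the encoding isometry, is false.

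Concretely, your decoder (untwisting controlled on $B_k$ alone, then inverting the encoding) only works in the even-parity sector. In the odd sector $B_1'$ carries $i\oplus1$ while $A_2'$ carries $i$, so undoing the CNOT outputs $\sigma_X\rho\,\sigma_X$. In addition, the untwisting you apply there uses $U^{i\oplus1,0}$ instead of $U^{i,1}$. This leaves residual key--shield correlations that damp the coherences once the shield is discarded. Your flag-$0$ branch is therefore a noisy channel, not the identity.

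The missing ingredient is to use the clean copy of the key surviving in $A_2'$ as an error syndrome. The paper's decoder proceeds in four steps:
\begin{enumerate}
\item It untwists with $V^\dagger=\sum_{\ell,i}\ket{i\oplus\ell}\!\bra{i\oplus\ell}_{B_1'}\otimes\ket{i}\!\bra{i}_{A_2'}\otimes(U^{i,\ell})^\dagger$, i.e.\ conditioned on both key registers.
\item It traces out the shields.
\item It measures the parity of $B_1'A_2'$ to learn $\ell$ and applies $\sigma_X^{\ell}$ to $B_1'$.
\item Only then does it undo the CNOT and discard $A_2'$.
\end{enumerate}
Equivalently, after untwisting you could CNOT from $A_2'$ onto $B_1'$ and output $A_2'$. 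This works precisely because $\gamma^0$ and $\gamma^1$ are supported on orthogonal parity subspaces of the key pair, which is what the private-state decomposition you flagged as the main bookkeeping step has to deliver.

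Your consistency check is also wrong. In the erasure branch the Choi state on $RB_1'$ is $(1-p)\overline{\Phi^2}+p\,(\mathbbm{1}\otimes\sigma_X)\overline{\Phi^2}(\mathbbm{1}\otimes\sigma_X)$. Its entropy is $S(RB)=1+h(p)$, not $2$, so that branch contributes $-h(p)$ rather than $-1$. The total is then $\tfrac12(1-h(p))\approx 0.0107$. With your entropies the total would be $0$, i.e.\ no superactivation at all.

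Finally, the dephasing in the erasure branch comes from the CNOT copy in $A_2'$ being lost to the erasure environment, not from the environment of $\calP$.
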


The channel $\widetilde{\calN}$ is a qubit-input, (qubit $\otimes$ flag)-output channel: with probability $1/2$, the flag register $Z$ contains the value $0$ and the input qubit state $\rho$ passes through unaltered; with probability $1/2$, the flag register $Z$ contains the value $1$ and the input qubit state $\rho$ suffers dephasing followed by bit-flip noise. We prove Proposition~\ref{prop:equivalent_channel} in the supplementary material (SM) by explicit construction of the correction procedure. The intuition behind \eqref{eq:equivalent_channel} is that the code $(\calE^p, \calD^p)$ allows us to recover an identity channel in the cases when the erasure does not occur, while it leads to the noisy Pauli channel $\calX^p \circ \overline{\Delta}$ in the case of an erasure. As mentioned above, this reduction is helpful because it replaces the original high-dimensional joint channel with a much simpler qubit-input channel whose classical-quantum output structure is amenable to analysis and efficient numerical optimization.

\begin{figure*}[ht]
    \centering
    \includegraphics[width=\textwidth]{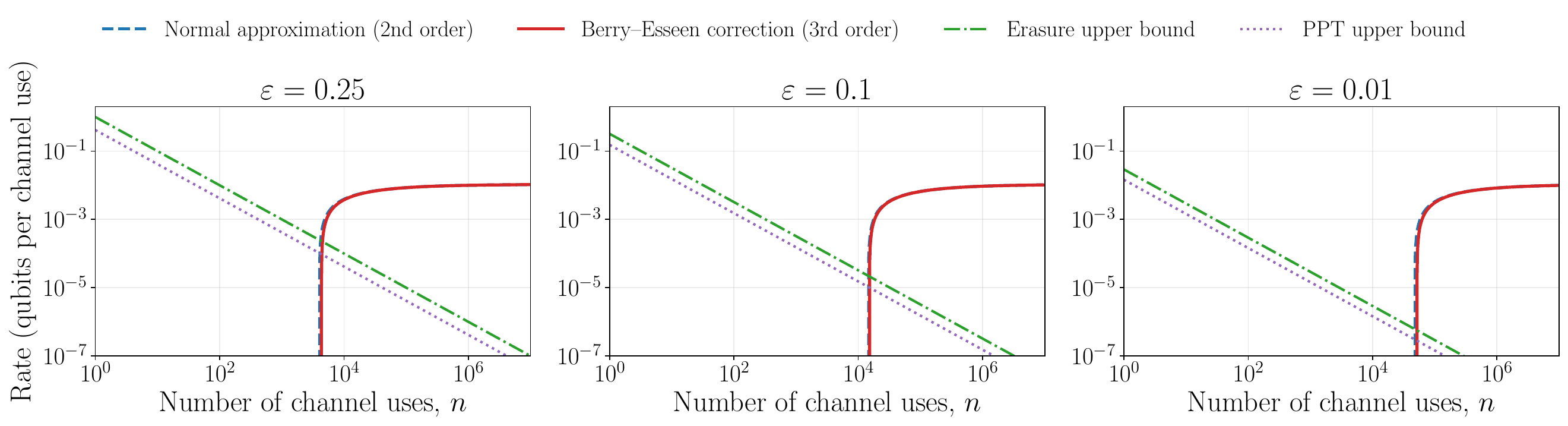}
    \caption{Comparison between lower bounds on the $n$-shot quantum capacity
    $Q^{n,\varepsilon}(\mathcal{P}\otimes \mathcal{A})$ in \eqref{eq:n-shotquantumcapacity}, obtained via the normal approximation \eqref{eq:2nd-order-lower-bnd} and the third-order correction from Berry-Esseen Theorem \eqref{eq:3nd-order-lower-bnd} and the upper bounds on $Q^{n,\varepsilon}(\mathcal{P})$ and $Q^{n,\varepsilon}(\mathcal{A})$  in \eqref{eq:upperbounds_two_channels}, as a function of the number of channel uses, $n$. Both axes use logarithmic scales. The lower bound exceeds both upper bounds at $n \sim 10^3$--$10^4$, certifying superactivation in the finite blocklength regime; however, the slow convergence motivates direct optimization methods at smaller $n$.}
    \label{fig:normal-approx}
\end{figure*}

We now state our primary result, showing that superactivation emerges in the $n$-shot regime for $\calA$ and $\calP$, in the sense of \eqref{eq:n-shot-superact}. In particular, focusing on the simplest and most relevant case of qubit transmission ($d = 2$), we show that very few channel uses are sufficient to achieve a fidelity over the joint channel that is unattainable by any number of uses of the two channels individually, 
leading to the first explicit protocol to test superactivation in practice.

\begin{result}\label{thm:main}
Let $\calP$ be the private Horodecki channel and~$\calA$ the $50\%$ erasure channel considered in
~\cite{Smith2008}. Then, the channel fidelities for qubit transmission satisfy the following bounds:
\begin{equation}
\label{eq:n-shot-superact-theor}
\begin{aligned}
\max\left\{F_c(\mathcal{P}^{\otimes m},2),\,
F_c(\mathcal{A}^{\otimes m},2)\right\}
&\le \frac{3}{4} \quad \forall\, m\in\mathbb{N},\\
F_c((\mathcal{P}\!\otimes\!\mathcal{A})^{\otimes 17},2)
&>\frac{3}{4} .
\end{aligned}
\end{equation}
In particular, $\calP$ and $\calA$ exhibit $n$-shot superactivation in the sense of~\eqref{eq:n-shot-superact} with $d = 2$, $\varepsilon = 1/4$, and superactivation threshold $n^* \leq 17$.
\end{result}

We argue for \eqref{eq:n-shot-superact-theor} as follows. 
For the PPT channel $\mathcal{P}$, the output state $\omega_{RB'}$ has positive partial transpose, so its fidelity with respect to the maximally entangled state (MES) is well-known to be bounded by $F(\omega_{RB'}, \Phi_{RB'}^d) \leq 1/d$~\cite{Horodecki1999}. This holds for any number of copies of $\mathcal{P}$ and implies that $F_c(\mathcal{P}^{\otimes m},2) \leq 1/2$ for $d = 2$. 

On the other hand, the 50\% erasure channel is known to be a two-extendible channel~\cite{kaur2019extendibility} (also known in the literature as `antidegradable' channel~\cite{Caruso2006,leditzky2018approaches}); i.e. its output state is always two-extendible~\cite{werner1989application, doherty2002distinguishing, doherty2004complete}, which means that there exists a state $\sigma_{RB_1 B_2}$ such that $B_1 \simeq B_2$ and $\tr_{B_2} \sigma_{RB_1 B_2} = \tr_{B_1}\sigma_{RB_1 B_2} = (\id_R \otimes \calA_{A \to B})(\rho_{RA})$ for every input state $\rho_{RA}$. Moreover, two-extendibility continues to hold for the tensor-power channel $\mathcal{A}^{\otimes m}$~\cite{doherty2002distinguishing} and is preserved by the local decoding operation~\cite{Nowakowski_2009}, so that the output state of the protocol, $\omega_{RB'}$, is two-extendible. The fidelity of a $k$-extendible state with the MES is well-known to be upper bounded by $F(\omega_{RB'}, \Phi_{RB'}^d) \leq (k+d-1)/(dk)$~\cite{johnson2013compatible}, which  yields the bound  $F_c(\mathcal{A}^{\otimes m},2) \leq 3/4$ for $k = d = 2$.

We now prove that it is possible to exceed these uniform 
upper bounds by using a coding strategy over the joint channel. For that purpose, we first use the channel~$\widetilde{\calN}$ 
obtained in Proposition~\ref{prop:equivalent_channel}. Since $\widetilde{\calN}$ is obtained by pre- and post-processing a single use of $\calP \otimes \calA$, it follows that
\begin{equation}\label{eq:fidelity_reduction}
    F_c((\calP \otimes \calA)^{\otimes n}, 2) \geq F_c(\widetilde{\calN}^{\otimes n}, 2) \quad \forall\, n \in \mathbb{N},
\end{equation}
by the data-processing inequality. It suffices to show that $F_c(\widetilde{\calN}^{\otimes n}, 2) > 3/4$ for some $n \in \mathbb{N}$. We obtain this via explicit encoder--decoder pairs constructed by the symmetric seesaw optimization described below. For $n = 17$, we provide an encoding / decoding protocol $(\calE^*_n, \calD ^*_n)$ for $\widetilde{\calN}^{\otimes n}$ that achieves an entanglement fidelity equal to $0.75013 > 3/4$. Since the bound is achieved by an explicit constructive protocol, the result is established.
\medskip

 Note that the two fidelity bounds provide a strong performance limitation for both channels 
 individually, showing that the success probability of transmitting an entangled qubit is always upper bounded by $3/4$; in particular, it is bounded away from $1$ (and indeed $Q(\calP) =Q(\calA) =0$). The superactivation threshold $n^* \leq 17$ identifies a concrete, finite number of channel uses beyond which the joint channel achieves a quality of quantum communication that is provably impossible with either channel individually, regardless of the number of uses. The following section describes the numerical method that we use to show this.

\begin{figure*}[ht]
    \centering
    \includegraphics[width=\textwidth]{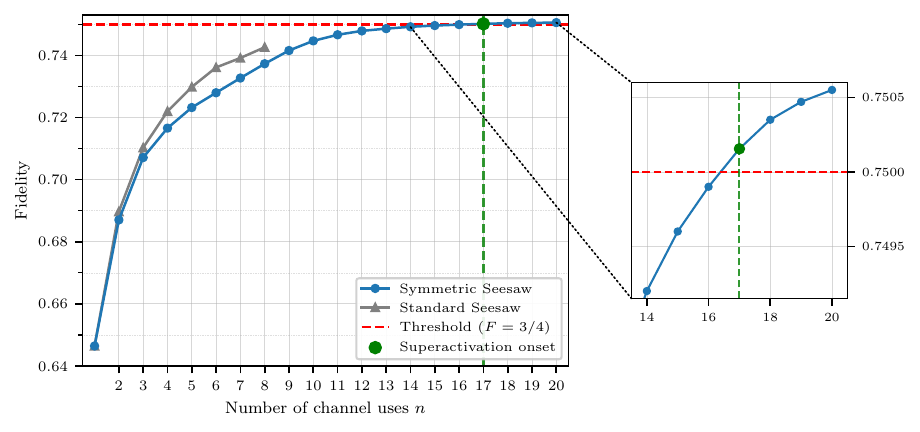}
    \caption{Lower bounds on the channel fidelity 
    $F_c(\widetilde{\calN}^{\otimes n}, 2)$, obtained via our new symmetric seesaw method,  
    as a function of the number of channel uses $n$. The gray curve shows the results of the standard seesaw method from \cite{Reimpell2005} applied to the same channel $\widetilde{\calN}^{\otimes n}$. The latter method's computational complexity is exponential in $n$ and hence infeasible for larger $n$. The dashed red line indicates the threshold $F = 3/4$, which cannot be surpassed 
    by either $\calP$ or $\calA$ individually for any number of uses 
    (Result~\ref{thm:main}). 
    The inset shows a magnified view of the region $n=14$–$20$, 
    where the crossing at $n=17$ becomes visible, 
    demonstrating non-asymptotic superactivation.}
    \label{fig:main}
\end{figure*}

\textit{Efficient Lower bounds on Channel Fidelity.}---Computing $F_c(\widetilde{\calN}^{\otimes n}, 2)$ for the effective
channel~\eqref{eq:equivalent_channel} is challenging for two reasons: the bilinear nature of the optimization and its exponential scaling with $n$. 
To address the first problem, by exploiting the linearity of the objective function in \eqref{eq:channel_fidelity}, one can obtain SDP-computable upper~\cite{Berta2022,Holdsworth_2023, CTV2023,kossmann2025approximatequantumerrorcorrection} and lower bounds~\cite{Reimpell2005,Fletcher2007,Taghavi_2010,johnson2017qvectoralgorithmdevicetailoredquantum}. In this work, since we are interested in finding an achievable coding strategy -- and thus a lower bound --  we use \textit{seesaw optimization techniques}, described in detail for quantum error correction first in~\cite{reimpell2008quantum} and revisited in the SM. In a nutshell, instead of finding the best encoder and decoder simultaneously in \eqref{eq:channel_fidelity}, we alternately find the optimal decoder and then the optimal encoder, improving the fidelity at each iteration up to saturation. The single optimizations at every step can be phrased as SDPs using the Choi representations~\cite{CHOI1975285} of the encoding map $\mathcal{E}_n$  and the decoding map $\mathcal{D}_n$ on $\widetilde{\mathcal{N}}^{\otimes n}$,
and they can be solved efficiently with standard solvers or by using an algorithm introduced in~\cite{Reimpell2005}, called the \textit{power iteration method}. 
To address the second issue---the exponential scaling with $n$---we use a new numerical method, which we call the \textit{symmetric seesaw method}~\cite{Bergh2026}. The idea is to restrict encoders and decoders to be permutation-invariant and exploit the representation theory of the symmetric group~$\mathfrak{S}_n$ to reduce the problem size, bringing the complexity down from exponential to polynomial in $n$~\cite{vallentin2009symmetry}. While this restriction yields a suboptimal solution in general, it allows us to reach significantly larger values of $n$, which is precisely what is needed to beat the threshold of $F=3/4$. 
This method is laid out in detail in \cite{Bergh2026}; we also include a brief summary in the SM.

The results of the numerical analysis, for $n = 1, 2, \ldots, 20$, are reported in Fig.~\ref{fig:main}. The channel fidelity grows steadily with $n$, crossing the threshold $F = 3/4$ at $n = 17$.

In order to reach the values of $n$ in Fig.~\ref{fig:main} using the symmetric seesaw method, we exploit the classical-quantum structure of~\eqref{eq:equivalent_channel}: specifically, for $n$ channel uses, the decoder 
first measures all $n$ classical flags, obtaining a binary string $z \in \{0,1\}^n$ with Hamming weight $k = |z|$ occurring with probability $\binom{n}{k}/2^n$. Conditioned on $k$, the effective qubit channel is $(\calX^p \circ \overline{\Delta})^{\otimes k} \otimes \id^{\otimes(n-k)}$ (up to a relabeling of positions), which has a reduced symmetry $\mathfrak{S}_k \times \mathfrak{S}_{n-k}$ rather than the full $\mathfrak{S}_n$. 
This splits the decoder SDP into $n+1$ independent, smaller subproblems, yielding a significant computational advantage. The details of this method are explained in the SM.


Since the values plotted are \textit{lower} bounds obtained from explicit encoding/decoding protocols, the true channel fidelity can only be higher, and so the actual superactivation threshold (i.e., the smallest $n$ for which superactivation occurs in the sense of~\eqref{eq:n-shot-superact}) satisfies $n^* \leq 17$. We emphasize that this threshold is fundamentally different from asymptotic capacity statements: it identifies a concrete, finite number of channel uses for which a constructive protocol provably outperforms what either channel alone could ever achieve. The symmetric seesaw method produces explicit encoder--decoder pairs achieving these fidelities, which are also permutation-invariant, and hence, 
in principle, easier to implement, providing a protocol that could be implemented experimentally. 
The protocol requires coherent manipulation of $n = 17$ qubits---a resource that, while demanding, is within the capabilities of current quantum hardware platforms, as demonstrated in neutral-atom systems controlling over 50 qubits~\cite{Bernien2017,Manetsch2025} and superconducting and trapped-ion platforms exceeding 20 qubits~\cite{Kam2024,Moses2023}.


\textit{Conclusion and outlook.}---This work provides the first systematic investigation of the superactivation phenomenon in the non-asymptotic regime, revealing structural features that are not apparent in the asymptotic limit. We have shown that, for the  channels considered in~\cite{Smith2008}, a finite and surprisingly small number of channel uses suffices to demonstrate superactivation---in the sense that the joint channel achieves a transmission fidelity unattainable by any number of uses of either individual channel. This transforms superactivation from a purely asymptotic property into a concrete, experimentally testable phenomenon.

Several natural questions emerge from our work. First, one may ask whether the superactivation threshold $n^*$ may be further 
lowered, e.g., by improving the numerical methods to obtain better lower bounds. Second, it would be valuable to perform a similar non-asymptotic analysis for other channel pairs known to exhibit superactivation in the asymptotic regime, such as bosonic Gaussian channels~\cite{Smith2011} or the channels presented in~\cite{Brandao2012}. Third, we suspect that the computational framework presented here, based on permutation-invariant codes, may find independent applications in finite-blocklength quantum coding beyond the superactivation context, for the study  of other channels such as the depolarizing channel.

\medskip
\textbf{Acknowledgments}.
The authors thank Karol Horodecki, Robert Salzmann, Satvik Singh, and Sergii Strelchuk for helpful discussions.
B.B.\ and N.d.\ are supported by the Engineering and Physical Sciences Research Council [Grant Ref: EP/Y028732/1]. MMW acknowledges support from the National Science Foundation under grant no.~2329662
and from the Cornell School of Electrical and Computer Engineering.

\let\oldaddcontentsline\addcontentsline
\renewcommand{\addcontentsline}[3]{}
\bibliography{references}
\let\addcontentsline\oldaddcontentsline

\section*{Data Availability}

The data and code used to generate the numerical results in this work, including the encoding and decoding channels achieving fidelity above $0.75$ for $n=17$, are available in the accompanying folder \texttt{anc}, included in our arXiv post. These resources allow independent verification of the results shown in Figure~\ref{fig:main}. 

The \texttt{anc} folder is organized as follows:
\begin{description}
    \item[\texttt{superactivation\_operators\_n17.npz}]. A NumPy archive containing the optimal symmetric encoder~$\mathcal{E}$, the decoders $(\mathcal{D}_k)_{k=0}^{17}$ (see \eqref{eq:outcomes_modified_seesaw}), and the intermediate operator blocks $\mathcal{M}_k = \mathcal{N}_k \circ \mathcal{E}$ (the concatenation of the encoder with the channels, as seen by the decoder), where $\mathcal{N}_k$ is defined in \eqref{eq:channel_k},  for $k \in \{0, \dots, 17\}$. All operators are stored in the Schur--Weyl block basis \eqref{eq:Shur-Weyl-decomposition}, as block-matrices of the form in \eqref{eq:generic_perm_inv_input}. This file also includes the fidelity, providing a self-contained dataset for verification.
        
    \item[\texttt{verify\_results.py}]. A standalone Python script to validate the results. It checks the physical validity of all operators (ensuring they correspond to valid quantum channels) and computes the entanglement fidelity directly from the stored blocks using formula \eqref{eq:SDP_k_block} to confirm non-asymptotic superactivation.
\end{description}

These files provide a complete, reproducible dataset for confirming the numerical demonstration of non-asymptotic superactivation presented in this work, together with the encoding and decoding maps that achieve it.  

\section*{Methods}

In this section, we provide more details of upper bounds and lower bounds on the $n$-shot capacity $Q^{n,\varepsilon}$ and the minimum achievable error $\varepsilon^*_Q$. We appeal to general formulas from the quantum information theory literature, in particular \cite{Tomamichel2016,Kaur2021,berta2026tight}, and we evaluate them for the channels $\mathcal{P}$, $\mathcal{A}$, and $\mathcal{P}\otimes \mathcal{A}$. Full details of our derivations are available in the supplementary material.

Let us begin by establishing upper bounds on the $n$-shot capacity $Q^{n,\varepsilon}$. Upper bounds on $Q^{n,\varepsilon}(\calP)$  and $Q^{n,\varepsilon}(\calA)$, from \cite[Lemma~1]{Tomamichel2016} and \cite[Corollary~2]{Kaur2021}, respectively, imply that
\begin{align}
\label{eq:upperbounds_two_channels}
    Q^{n,\varepsilon}(\calP) & \leq -\frac{1}{n}\log(1-\varepsilon), \\
    Q^{n,\varepsilon}(\calA) & \leq -\frac{1}{n}\log(1-2\varepsilon),
\end{align}
the former holding for all $\varepsilon \in [0,1)$ and the latter for all $\varepsilon \in [0,1/2)$. The first bound comes from the fact that $\calP$ is a PPT channel, and the second bound comes from the fact that $\calA$ is a two-extendible channel.

Let us now turn to establishing a lower bound on $Q^{n, \varepsilon}(\mathcal{P}\otimes \mathcal{A})$.
With the simplification from Proposition~\ref{prop:equivalent_channel} in place, we can evaluate the coherent information~\cite{schumacher1996quantum} and coherent information variance~\cite{Tomamichel2016} when sending one share of a Bell state through the effective channel $\widetilde{\mathcal{N}}$, leading to the  following lower bound  on the $n$-shot quantum capacity $Q^{n, \varepsilon}(\mathcal{P}\otimes \mathcal{A})$ of the joint channel $\mathcal{P}\otimes \mathcal{A}$:
\begin{equation}
    Q^{n, \varepsilon}(\mathcal{P}\otimes \mathcal{A}) \geq \iota + \sqrt{\frac{\nu}{n}} \Phi^{-1}(\varepsilon) + O\!\left(\frac{\log n}{n} \right),
    \label{eq:2nd-order-lower-bnd}
\end{equation}
where the coherent information $\iota$ and coherent information variance $\nu$ are given by
\begin{equation}
    \iota  \equiv \frac{1}{2} (1  - h(p)),\quad 
    \nu  \equiv \frac{1}{2}v(p) + \frac{1}{4}(1+h(p))^2, \label{eq:dispersion}
\end{equation}
with
\begin{equation}
p = 1/(1+\sqrt{2}),    
\end{equation}
and the binary entropy $h(p)$ and binary entropy variance~$v(p)$ are defined as
\begin{align}
h(p)& \coloneqq  -p\log p - \bar{p} \log \bar{p},\\
v(p) & \coloneqq p (-\log p -h(p))^2 
 + \bar{p}(-\log \bar{p} - h(p))^2,
\end{align}
where $\bar{p}\equiv 1-p$.
Furthermore,
\begin{equation}
\Phi^{-1}(\varepsilon )\coloneqq \sup\{y \in \mathbb{R}: \Phi(y) \leq \varepsilon  \}    
\end{equation}
is the inverse cumulative Gaussian distribution function, such that $\Phi^{-1}(\varepsilon )\leq 0$ for $\varepsilon \in (0,1/2]$ and $\Phi^{-1}(\varepsilon )> 0$ for $\varepsilon\in(1/2,1)$. 

The first-order term $\iota$ matches that from \cite{Smith2008} as an asymptotic rate of quantum communication over the joint channel $\mathcal{P}\otimes \mathcal{A}$, while the second-order term $\sqrt{\frac{\nu}{n}} \Phi^{-1}(\varepsilon)$ features the coherent information variance~$\nu$ and is novel in the context of superactivation. The second-order term gives a sense of how far below the asymptotic rate $\iota$ we must operate in order to meet the error constraint~$\varepsilon$ at finite blocklength $n$. The  coherent information variance $\nu$ quantifies
how much the channel output deviates from behaving like a constant-rate (deterministic) quantum communication channel. The induced channel $\widetilde{\mathcal{N}}$ is an equal probability, flagged mixture of an ideal channel and one that destroys quantum information. As such, at its core is a Bernoulli process with variance $1/4$, so we expect the coherent information variance $\nu$ to take a value strictly greater than $1/4$, as observed in~\eqref{eq:dispersion}. This variance needs to be overcome in the non-asymptotic regime in order for superactivation to occur, and it can be accomplished with sufficiently many channel uses.
We provide a detailed analysis in the Supplementary Material (SM) justifying~\eqref{eq:2nd-order-lower-bnd}, confirming that a strictly positive capacity of the joint channel is indeed achievable at finite blocklength. 

In order to establish an upper bound on the minimum achievable error $\varepsilon^*_Q((\mathcal{P}\otimes \mathcal{A})^{\otimes n},2)$, we can appeal to error exponent bounds for quantum communication, the most recent of which is available from \cite[Theorem~14]{berta2026tight}. Achievable error exponent bounds are results of the form $ \varepsilon_Q^* (\mathcal{N}^{\otimes n}, d) \leq  f(\mathcal{N}, n, d),$ where the function $f$ typically features an \textit{exponential} dependence on $n$ of the form
\begin{equation}
2^{-n\left(\frac{1}{n}g(\mathcal{N}^{\otimes n}) -r\right)},    
\end{equation}
where $g$ is a one-shot entropic quantity that depends on the channel and $r = \frac{\log d}{n}$ is a fixed rate. In \cite[Theorem~14]{berta2026tight}, the authors established the following $n$-shot bound, valid for every $s \in (0, 1)$ (equivalently, $\alpha = 1/(1+s) \in (1/2, 1)$):
\begin{equation}
\label{eq:error_bound_END_MATTER}
     \varepsilon^*_{Q}(\mathcal{N}^{\otimes n}, d)\leq 
     2\sqrt{\frac{s^s(1-s)^{1-s}}{s}} \cdot 2^{
     -\frac{s}{2}\left(I^c_{1/(1+s)}(\mathcal{N}^{\otimes n}) - \log d\right)},
\end{equation}
where
\begin{equation}
    I_\alpha^c(\mathcal{N}) \coloneqq \max_{\psi_{RA}} \inf_{\sigma_{B}}
    D_{\alpha}(\omega_{RB}\|\mathbbm{1}_{R}\otimes\omega_{B})
\end{equation}
is the \textit{Petz--R\'enyi coherent information of $\mathcal{N}$}, with
\begin{equation}
 \omega_{RB} \equiv (\id_R \otimes \mathcal{N}_{A \to B})(\psi_{RA})
\end{equation}
and
\begin{equation}
D_\alpha(\rho \|\sigma) \coloneqq \frac{1}{\alpha -1}
\log\!\left(\Tr[\rho^\alpha \sigma^{1-\alpha}]\right)
\end{equation}
(for $\mathrm{supp}(\rho) \subseteq \mathrm{supp}(\sigma)$) the
$\alpha$-\textit{Petz--R\'enyi relative entropy}~\cite{petz1985quasi}.
Evaluating this bound requires establishing a lower bound on the
Petz--R\'enyi coherent information of the joint channel
$\mathcal{P}\otimes \mathcal{A}$, which we prove is given by
\begin{multline}
\iota_\alpha \coloneqq \\ \frac{\alpha}{\alpha-1}
\log\!\left(\frac{1}{2}\left(\frac{1}{\left(1+\sqrt{2}\right)^{\alpha}}
+\left(\frac{2}{1+\sqrt{2}}\right)^{\alpha}\right)^{\frac{1}{\alpha}}
+2^{-\frac{1}{\alpha}}\right).
\end{multline}
Then, by appealing to~\eqref{eq:error_bound_END_MATTER}, we find the
following upper bound on the minimum achievable error
$\varepsilon^*_Q((\mathcal{P}\otimes \mathcal{A})^{\otimes n}, 2)$:
\begin{equation}
\label{eq:error_exponent_bound_superactivation_special_case_END_MATTER}
     \varepsilon^*_{Q}((\mathcal{P}\otimes \mathcal{A})^{\otimes n}, 2)
     \leq 2\sqrt{\frac{s^s(1-s)^{1-s}}{s}} \cdot 2^{
     -\frac{ns}{2}\left(\, \iota_{\frac{1}{1+s}} - \frac{1}{n}\right)},
\end{equation}
which one can optimize over $s \in (0,1)$ for each $n$. This bound shows that the error probability for qubit transmission indeed vanishes as $n \to \infty$, as we expect from the asymptotic result (the rate $\iota_\alpha$ for all $\alpha < 1$ is achievable in the limit $n \to \infty$), but the values of $n$ required to surpass the error threshold $1/4$, i.e.~the uniform lower bound for qubit transmission over an arbitrary number of copies of the $50\%$ erasure channel, are on the order of $n \approx 10^5$. This occurs because the exponent multiplying $n$ in
\eqref{eq:error_exponent_bound_superactivation_special_case_END_MATTER}
is extremely small (of order $10^{-5}$) and the prefactor
$\sqrt{s^s(1-s)^{1-s}/s}$ is non-negligible at the optimum, so the
bound trivializes unless $n$ becomes very large. This further
strengthens our main result of Theorem~\ref{thm:main}, which shows
that far fewer channel uses are required to surpass this bound. For a
more detailed discussion of error bounds, we refer to
Figure~\ref{fig:fidelity_bounds} and the surrounding section in the SM.

\clearpage
\onecolumngrid

\large

\begin{center}
\textbf{\large Supplemental Material}
\end{center}

\setcounter{secnumdepth}{3}

\setcounter{section}{0}
\setcounter{equation}{0}
\setcounter{figure}{0}
\setcounter{table}{0}

\renewcommand{\thesection}{\Roman{section}}
\renewcommand{\thesubsection}{\Alph{subsection}}

\renewcommand{\thesection}{S\arabic{section}}
\renewcommand{\thesubsection}{S\arabic{section}.\arabic{subsection}}
\renewcommand{\theequation}{\thesection.\arabic{equation}}
\renewcommand{\thefigure}{S\arabic{figure}}
\renewcommand{\thetable}{S\arabic{table}}

\makeatletter
\@addtoreset{equation}{section}
\makeatother

\tableofcontents

\section{Basic Notation}

\subsection{Quantum Channels: Choi representation, Concatenations}

Let $\mathcal{H}$ be a finite-dimensional complex Hilbert space of dimension $d_{\mathcal{H}}$, and let $\{\ket{i}_\mathcal{H}\}_{i = 1}^{d_\mathcal{H}}$ denote the  standard basis. We will also use the notation $[d_\mathcal{H}] \coloneqq \{1, \ldots, d_{\mathcal{H}}\}$. We denote by $\mathcal{L}(\mathcal{H})$ the set of linear operators acting on $\mathcal{H}$, by $\mathcal{P}(\mathcal{H})$ the set of positive semidefinite operators acting on $\mathcal{H}$, and by $\mathcal{D}(\mathcal{H}) \coloneqq \{\rho \in \mathcal{P}(\mathcal{H}): \Tr(\rho) = 1\}$ the set of quantum states, i.e.~density operators acting on $\mathcal{H}$. Pure states are density operators of unit rank, i.e.~of the form $\rho = \psi \equiv  |\psi\rangle \!\langle{\psi}|$ for some unit vector $\ket{\psi} \in \mathcal{H}$. Given two density operators $\rho$ and $ \sigma$, we define the fidelity as $F(\rho, \sigma) \coloneqq \left(\Tr[\sqrt{\sqrt{\sigma} \rho \sqrt{\sigma}}]\right)^2$~\cite{Uhlmann1976}. The fidelity is a measure of closeness between two states, and it simplifies to the inner product $F(\rho, \psi) = \bra{\psi}\rho \ket{\psi}$ when one of the two states is pure.

We denote by $\mathbbm{1}_{\mathcal{H}}$ the identity operator acting on $\mathcal{H}$ and by $\pi_{\mathcal{H}} \coloneqq \frac{\mathbbm{1}_{\mathcal{H}}}{d_{\mathcal{H}}}$ the maximally mixed state acting on $\mathcal{H}$. For two Hermitian operators $\rho, \sigma \in \mathcal{L} (\mathcal{H})$, we write $\sigma \geq \rho $ if $\sigma -\rho \in \mathcal{P}(\mathcal{H})$.  We will label different quantum systems by capital Roman letters, and in a communication setting, we denote by a letter $A$ the Hilbert spaces of systems that are in the possession of the sender (say, Alice) and by a letter $B$, the Hilbert spaces of the systems that are in the possession of the receiver (say, Bob). We write a bipartite operator as $X_{RA} \in \mathcal{P}(R \otimes A)$, where the subscripts explicitly indicate the two local subsystems. A quantum state $\rho_{AB}$ is separable if it can be written as a convex combination of product states on $A$ and $B$:
    \begin{equation}
        \rho_{AB} = \sum_{x} p_X(x) \rho^x_A \otimes \sigma^x_B,
    \end{equation}
    for some  probability distribution $p_X(x)$ and some sets of states $\{\rho^x_A\}_{x \in \mathcal{X}}$ and $\{\sigma^x_B\}_{x \in \mathcal{X}}$. A state that is not separable is called entangled. We say that $\rho_{AB}$ is a \textit{positive partial transpose} (PPT) state if $\rho_{AB}^{T_B} \geq 0$ \cite{Peres1996PPT,horodecki1996necessary}, where the superscript $T_B$ denotes a partial transpose (i.e.~transposition on $B$ alone). All separable states are PPT, but there exist PPT entangled states.

For a bipartite state $\rho_{RA}$, the marginal system of the subsystem $R$ is obtained by taking the partial trace $\rho_A = \Tr_R[ \rho_{RA} ] \coloneqq \sum_{i}(\mathbbm{1}_A \otimes \bra{i}_R) \rho_{AR}(\mathbbm{1}_A \otimes \ket{i}_R)$. We write $R \simeq A$ if $R$ is isomorphic to $A$. Given the standard basis $\{\ket{i}\}_{i = 1}^{d}$, we denote by $\Phi^d_{RA} \equiv  \ket{\Phi}\!\bra{\Phi}_{RA}$ the maximally entangled state (MES) of dimension $d$, defined as:\begin{equation}
    \ket{\Phi}_{RA} \coloneqq \frac{1}{\sqrt{d}} \sum_{i = 1}^d \ket{i}_R \ket{i}_A.
\end{equation}
We will also use the unnormalized maximally entangled state $\Gamma^d_{RA} \coloneqq d\Phi^d_{RA}$. Note that, given a bipartite state $\rho_{RA}$, the fidelity $F(\rho, \Phi^d) = \bra{\Phi^d}\rho\ket{\Phi^d}$ denotes the probability of passing an entanglement test of being equal to the maximally entangled state. Every PPT state $\rho$ has the property that $F(\rho, \Phi^d) \leq \frac{1}{d}$. This follows from~\cite[Lemma~2]{Rains1999Bound}:
\begin{equation}
\label{eq:PPT_singlet_fidelity}
    F(\rho, \Phi^d) = \Tr[\Phi^d \rho] = \frac{1}{d}\Tr[\mathbb{F} \rho^{T_B}] \leq \frac{1}{d} \left\|\rho^{T_B}\right\|_1 = \frac{1}{d},
\end{equation}
where we used the self-adjointness of $T_B$, the standard property $d \ket{\Phi^d}\!\bra{\Phi^d}^{T_B} = \mathbb{F}$, where $\mathbb{F}$ is the (unitary) flip operator, the variational characterization of the trace norm, and in the last step, the PPT property of $\rho$.

Given $n$ copies of the same Hilbert space $\mathcal{H}$, we use the notation $\mathcal{H}^{\otimes n} \coloneqq \mathcal{H} \otimes \mathcal{H} \otimes \cdots \otimes \mathcal{H}$, and we denote by $\{\ket{\underline{i}}_{\mathcal{H}^{\otimes n}}\}_{\underline{i} \in[d_{\mathcal{H}}]^{\times n}}$, i.e.~$\underline{i} \coloneqq  (i_1, i_2, \ldots, i_n)$ with $i_k \in [d_\mathcal{H}]$ for all $k = 1, \ldots, n$. 
\

We denote by $\operatorname{CP}(A\to B)$ the set of linear completely positive maps from $\mathcal{L}(A)$ to $\mathcal{L}(B)$. A quantum channel $\mathcal{N}_{A \to B}$ is a completely positive and trace-preserving map from $\mathcal{L}(A)$ to $\mathcal{L}(B)$, and we write $\mathcal{N}_{A \to B}\in \operatorname{CPTP}(A \to B)$.

Quantum channels describe the most general physical operation one can perform on quantum systems, both as a local operation (e.g., the encoding or decoding maps, respectively $\mathcal{E}$ and $\mathcal{D}$) or as a communication medium (denoted as~$\mathcal{N}$).  The ideal communication channel is represented by the identity channel, denoted as $\operatorname{id}_{A\to B}$. We will also denote the set of completely positive and unital maps as $\operatorname{CPU}(A \to B)$. Given a quantum channel $\mathcal{N}_{A \to B}\colon \mathcal{L}(\mathcal{A}) \to \mathcal{L}(B)$, we define its adjoint channel (with respect to the Hilbert--Schmidt inner product) as the map $\mathcal{N}^*_{B \to A}\colon \mathcal{L}(\mathcal{H}_B)\to \mathcal{L}(\mathcal{H}_A)$, such that $\Tr[Y \mathcal{N}(X)] = \Tr[\mathcal{N}^*(Y) X]$ for all $X \in  \mathcal{L}(\mathcal{H}_A)$ and $Y \in  \mathcal{L}(\mathcal{H}_B)$. One has the correspondence $  \mathcal{N}_{A \to B} \in \operatorname{CPTP}(A \to B) \iff \mathcal{N}^*_{B \to A} \in \operatorname{CPU}(B \to A)$, i.e.~unitality and trace preservation are dual notions. Choosing to use CPU maps instead of CPTP maps to describe quantum channels corresponds to selecting to work in the Heisenberg picture of quantum mechanics (linear transformations of observables) rather than the Schr\"odinger picture (linear transformations of states)~\cite{Wolf2012Channels}.

Among the several known characterizations of CPTP maps, in this work we will extensively make use of the Choi representation~\cite{CHOI1975285}. In particular, every quantum channel $\mathcal{N} \in \operatorname{CPTP}(A\to B)$ is uniquely described by its Choi state, defined, for $R \simeq A$, as $ \Phi^{\mathcal{N}}_{RB} \coloneqq (\operatorname{id}_R\otimes \mathcal{N}_{A\to B})(\Phi^{d_A}_{RA})$. 
The Choi matrix is the corresponding unnormalized operator $ \Gamma^{\mathcal{N}}_{RB} \coloneqq d_A  \Phi^{\mathcal{N}}_{RB}  =  (\operatorname{id}_R\otimes \mathcal{N}_{A\to B})(\Gamma^{d_A}_{RA})$. Notice in particular that $\Phi^{\operatorname{id}}_{RB} = \Phi^d_{RB}$ and $\Gamma^{\operatorname{id}}_{RB} = \Gamma^d_{RB}$ for $d_A = d_B = d$. By the Choi--Jamiolkowski isomorphism, the Choi state $\Phi^{\mathcal{N}}_{RB}$ contains all the information about the superoperator~$\mathcal{N}$. Specifically, its action on an arbitrary operator $X_{A} \in \mathcal{L}(\mathcal{H}_A)$ can be obtained as:\begin{equation}
\label{eq:choi_jamiolkovski_isomorphism}
    \mathcal{N}_{A \to B}(X_{A}) = \Tr_A[(X_{A}^{T} \otimes \mathbbm{1}_{B})\Gamma^{\mathcal{N}}_{AB}].
\end{equation} 
In the following, we will extensively use the following facts: \begin{equation}
\begin{aligned}
    & \Gamma^{\mathcal{N}}_{RB} \in \mathcal{P}(\mathcal{H}_{R} \otimes \mathcal{H}_{B}) \\
    &\Tr_{B}  \Gamma^{\mathcal{N}}_{RB} = \mathbbm{1}_R
\end{aligned}
\iff
\mathcal{N}_{A\to B} \in \mathrm{CPTP}(A \to B),
\end{equation}
\begin{equation}
\begin{aligned}
    & \Gamma^{\mathcal{N}}_{RB}  \in \mathcal{P}(\mathcal{H}_{R} \otimes \mathcal{H}_{B}) \\
    &\Tr_{R}  \Gamma^{\mathcal{N}}_{RB} = \mathbbm{1}_B
\end{aligned}
\iff
\mathcal{N}_{A\to B} \in \mathrm{CPU}(A \to B).
\end{equation}
This implies that the constraints of being a quantum channel in the Schr\"odinger or Heisenberg pictures are suited well for semidefinite optimization.

In communication settings, we will often consider serial concatenations, parallel concatenations, and adjoints of quantum channels in the Choi representation.

Given $\mathcal{N}_{A \to B}$ and $\mathcal{D}_{B \to C}$, their serial concatenation is denoted by $\mathcal{M}_{A \to C} = \mathcal{D}_{B \to C}\circ \mathcal{N}_{A \to B}$, and its Choi representation is as follows:
\begin{equation}
\label{eq:concatenation_choi}
   \Gamma^{\mathcal{M}}_{RC}  = \Tr_B[(\Gamma^{\mathcal{N}}_{RB})^{T_B} \cdot \Gamma^{\mathcal{D}}_{BC}],
\end{equation}
where we denoted by $T_B$ the partial transpose map over $B$, acting on $X_{AB}\in \mathcal{L}(AB)$ as $X_{AB}^{T_B} = \sum_{j,j'=1}^{d_B} (\mathbbm{1}_A \otimes \ket{j}_{B}\bra{j'}_B) X_{AB} (\mathbbm{1}_A \otimes \ket{j}_{B}\bra{j'}_B)$, and identities are clear from the context.

Given $\mathcal{M}_{A_1 \to B_1}$ and $\mathcal{N}_{A_2 \to B_2}$, the tensor-product channel $\mathcal{O}_{A_1A_2 \to B_1B_2} =\mathcal{M}_{A_1 \to B_1}\otimes \mathcal{N}_{A_2 \to B_2} $ has the Choi representation 
\begin{equation}
  \Gamma^{\mathcal{O}}_{A_1A_2B_1B_2} =\Gamma^{\mathcal{M}}_{A_1B_1}\otimes \Gamma^{\mathcal{N}}_{A_2B_2},
\end{equation}
where we left implicit a swap $B_1 \leftrightarrow A_2$ after taking the Kronecker product. Of particular interest is the case $\mathcal{N}^{\otimes n}_{A^n \to B^n}$, i.e.~the $n$-fold tensor product of $\mathcal{N}_{A \to B}$, where one has:\begin{equation}
\label{eq:choi_tensor_product}
            \Gamma^{\mathcal{N}^{\otimes n}}_{A^n B^n} =(\Gamma^{\mathcal{N}}_{AB})^{\otimes n}.
        \end{equation}
        Finally, if $\mathcal{N}^*_{B \to A}$ denotes the adjoint map of $\mathcal{N}_{A \to B}$, its Choi representation has the form:\begin{equation}
        \label{eq:adjoint_choi}
            \Gamma^{\mathcal{N}^*}_{BA} = (\Gamma^{\mathcal{N}}_{AB})^{T},
        \end{equation}
        where again a swap $A \leftrightarrow B$ is hidden after taking the transpose of the matrix (in the canonical basis of $\Phi$).
        
\subsection{Semidefinite Programming and Symmetry}

A semidefinite program (SDP) is a convex optimization problem in which one maximizes a linear objective function over the intersection of the cone of positive semidefinite matrices with an affine subspace~\cite{vandenberghe1996semidefinite}. More precisely, given Hermitian $C, D $ and a Hermiticity preserving map $\Psi$, an SDP takes the form:
\begin{equation}\label{SDP_definition}
\begin{aligned}
\textbf{Primal:} \quad & \max_{X \geq 0} \; \Tr[C X] && \textbf{Dual:} \quad & \min_{Y \geq 0} \; \Tr[D Y] \\
& \text{s.t.} \quad \Psi(X) \leq D && & \text{s.t.} \quad \Psi^*(Y) \geq C.
\end{aligned}
\end{equation}
The primal and dual optimal values satisfy $\alpha \leq \beta$ (weak duality), with equality under mild regularity conditions (strong duality, guaranteed, e.g., by Slater's condition). SDPs are efficiently solvable in polynomial time via interior-point methods; see~\cite{skrzypczyk2023semidefinite} for a review in the context of quantum information. All SDPs in this work satisfy strong duality.


In many cases of practical interest, an SDP is invariant under a group action, and one can restrict the optimization to the invariant subspace without loss of optimality~\cite{vallentin2009symmetry}. In this work, the relevant symmetry is the symmetric group~$\mathfrak{S}_n$, acting on $n$ copies of a Hilbert space $\mathcal{H}$ by permuting tensor factors:
\begin{equation}
    P(\pi) \cdot (\ket{v_1} \otimes \cdots \otimes \ket{v_n}) \coloneqq \ket{v_{\pi^{-1}(1)}} \otimes \cdots \otimes \ket{v_{\pi^{-1}(n)}} \qquad \forall\, \pi \in \mathfrak{S}_n,
\end{equation}
where $P(\pi)$ denotes the corresponding unitary (permutation matrix). A state $\rho \in \mathcal{D}(\mathcal{H}^{\otimes n})$ is \textit{permutation-invariant} if $P(\pi)\,\rho\,P(\pi)^\dagger = \rho$ for all $\pi \in \mathfrak{S}_n$. For every operator $\rho$, its symmetrization $\bar{\rho} \coloneqq \frac{1}{n!} \sum_{\pi \in \mathfrak{S}_n} P(\pi)\,\rho\,P(\pi)^\dagger$ is permutation-invariant by construction. The subspace of permutation-invariant operators is as follows:
\begin{equation}
    \mathrm{End}^{\mathfrak{S}_n}(\mathcal{H}^{\otimes n}) \coloneqq \bigl\{\rho \in \mathcal{L}(\mathcal{H}^{\otimes n}) : P(\pi)\,\rho\,P(\pi)^\dagger = \rho \;\; \forall\, \pi \in \mathfrak{S}_n \bigr\}.
\end{equation}
If an SDP of the form~\eqref{SDP_definition} is invariant under permutations (i.e., $P(\pi)\,X\,P(\pi)^\dagger$ is feasible with the same objective value whenever $X$ is), then an optimal solution exists in $\mathrm{End}^{\mathfrak{S}_n}(\mathcal{H}^{\otimes n})$ by convexity. We will often consider in this work multipartite operators $X_{RS^n} \in \mathcal{L}(R \otimes S^{\otimes n})$ for some system $S$ and a fixed reference $R$. A bipartite state $\rho_{RS} \in \mathcal{D}(R \otimes S)$ is called \textit{$n$-extendible} with respect to $S$~\cite{doherty2002distinguishing, doherty2004complete} if there exists a permutation-invariant extension $\sigma_{RS^n} \in \mathrm{End}^{\mathfrak{S}_n}(R \otimes S^{\otimes n})$ with $\sigma_{RS^n} \geq 0$ and $\mathrm{Tr}_{S^{n-1}}[\sigma_{RS^n}] = \rho_{RS}$. A quantum channel $\mathcal{N}$ is is called \textit{$n$-extendible}~\cite{pankowski2013entanglementdistillationextendiblemaps, kaur2019extendibility} if its Choi state $\Phi^{\mathcal{N}}_{RB}$ is $n$-extendible with respect to $B$, or equivalently if its output $\omega_{RB} \coloneqq(\mathrm{id}_R \otimes \mathcal{N}_{A \to B})(\rho_{RA})$ is $n$-extendible for every input state $\rho_{RA}$.

A quantum channel $\mathcal{N}_n \in \mathrm{CPTP}(A^n \to B^n)$ is \textit{permutation-covariant} if:
\begin{equation}
    \mathcal{N}_n \bigl(P_{A^n}(\pi)\,\rho\,P_{A^n}(\pi)^\dagger\bigr) = P_{B^n}(\pi)\,\mathcal{N}_n(\rho)\,P_{B^n}(\pi)^\dagger \qquad \forall\, \pi \in \mathfrak{S}_n.
\end{equation}
The simplest example is an $n$-fold tensor product $\mathcal{N}^{\otimes n}$, whose Choi matrix $(\Gamma^{\mathcal{N}})^{\otimes n}$ is permutation-invariant under simultaneous local permutations: $\Gamma^{\mathcal{N}^{\otimes n}}_{A^n B^n} \in \mathrm{End}^{\mathfrak{S}_n}(A^{\otimes n} \otimes B^{\otimes n})$.

\subsection{\texorpdfstring{Entanglement Transmission: Channel Fidelity and $n$-Shot Quantum Capacity}{Entanglement Transmission: Channel Fidelity and n-Shot Quantum Capacity}}

We now formally introduce the information-processing task of entanglement transmission~\cite{Buscemi_2010}, where the goal is to find a code, consisting of an encoding and a decoding channel, that enables a state entangled with a reference system to be reliably transmitted through a communication channel $\mathcal{N}$. We remark that this is a strong requirement, because reliable entanglement transmission implies reliable transmission, on average, of all unentangled input states~\cite{BarnumNielsenSchumacher1998}. Given a channel $\mathcal{N}$ and a parallel repetition of it, an $n$-shot entanglement transmission code is defined by a triplet $\{d, \mathcal{E}, \mathcal{D}\}$, where $d$ is the Schmidt rank of a maximally entangled state $\Phi^d_{RA'}$ that we want to transmit over the $n$-fold tensor product channel $\mathcal{N}^{\otimes n}$, while $\mathcal{E}_{A' \to A^n}$ and $\mathcal{D}_{B^n\to B'}$ are quantum channels called the encoding and the decoding operations, respectively. The complete scheme is shown in Figure~\ref{fig:quantumcommunicationquantumchannel}.

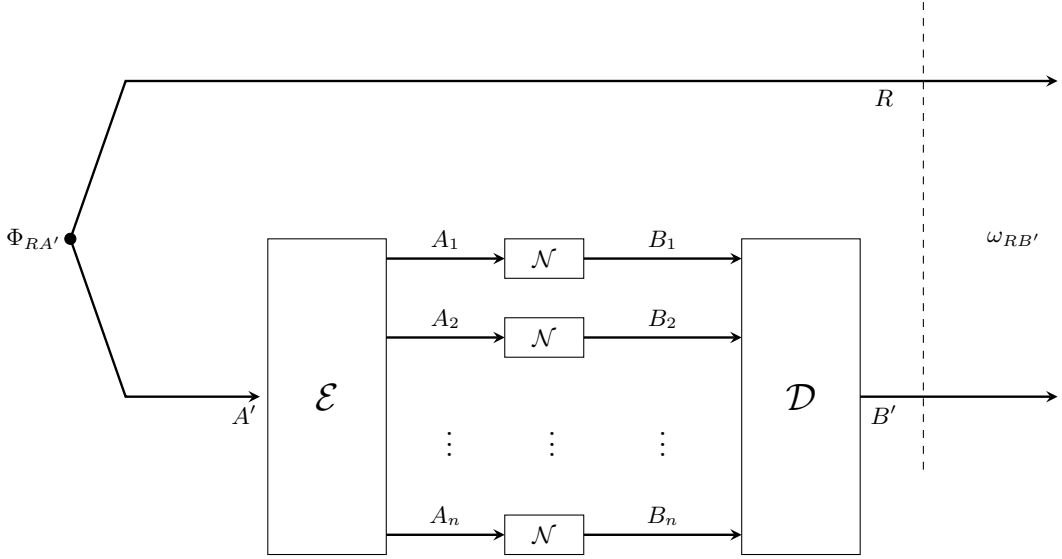
\begin{figure}[!ht]
\centering
\begin{tikzpicture}[>=stealth, scale=1.05]

\tikzset{
    thickarrow/.style={->, line width=1.5pt},
    qarrow/.style={->, line width=0.9pt},
    wire/.style={line width=0.9pt}
}

\draw (1,1) rectangle (2.5,5);
\node at (1.75,3) {\Large $\mathcal{E}$};

\draw (7,1) rectangle (8.5,5);
\node at (7.75,3) {\Large $\mathcal{D}$};

\foreach \y in {4.75,3.75,1.25} {
    \draw (4,\y-0.25) rectangle (5,\y+0.25);
    \node at (4.5,\y) {$\mathcal{N}$};
}
\node at (3.3,2.5) {\large \vdots};
\node at (4.6,2.5) {\large\vdots};
\node at (6,2.5) {\large\vdots};

\foreach \y/\lbl in {4.75/$A_1$,3.75/$A_2$,1.25/$A_n$} {
    \draw[qarrow] (2.5,\y) -- (4,\y) node[midway, above] {\lbl};
}

\foreach \y/\lbl in {4.75/$B_1$,3.75/$B_2$,1.25/$B_n$} {
    \draw[qarrow] (5,\y) -- (7,\y) node[midway, above] {\lbl};
}

\filldraw[black] (-1.5,5) circle (2pt);
\node[left] at (-1.5,5) {$\Phi_{RA'}$};

\draw[qarrow] (-1.5,5) -- (-0.8,7) -- (11,7);
\draw[qarrow] (-1.5,5) -- (-0.8,3) -- (0.9,3);
\node[below] at (8.8,7) {$R$};
\node[below] at (0.7,3) {$A'$};

\draw[qarrow] (8.5,3) -- (11,3);
\node[below] at (8.8,3) {$B'$};

\node[right] at (10,5) {$\omega_{RB'}$};
\draw[dashed] (9.3,8) -- (9.3,2);
\end{tikzpicture}
\caption{Entanglement transmission protocol over $n$ parallel uses of $\mathcal{N}$. The systems $R$, $A'$, and $B'$ are isomorphic and of dimension $d$. The encoder $\mathcal{E}_{A' \to A^n}$ encodes the part $A'$ of the maximally entangled state $\Phi^d_{RA'}$ into the channel input systems. Later, the decoder $\mathcal{D}_{B^n \to B'}$ recovers the state from the channel output systems. The performance of the code is measured using the fidelity $F(\omega_{RB'}, \Phi_{RB'})$.}
\label{fig:quantumcommunicationquantumchannel}
\end{figure}
The state at the end of the protocol is denoted as:\begin{equation}
\label{eq:output_state_protocol_et}
    \omega_{RB'} \equiv (\operatorname{id}_R\otimes (\mathcal{D}_{B^n\to B'} \circ \mathcal{N}^{\otimes n}_{A^n \to B^n}  \circ \mathcal{E}_{A'\to A^n}))(\Phi_{RA'}).
\end{equation}
We say that a triple $(n, d, \varepsilon )$ is an achievable pair for $n$-shot entanglement transmission over the channel $\mathcal{N}$ if there exists an $n$-shot entanglement transmission code with a rate of $\frac{\log d}{n}$, measured in qubits per channel use, satisfying:\begin{equation}
    F(\omega_{RB'}, \Phi_{RB'}) \geq 1-\varepsilon ,
\end{equation}
where $\varepsilon  \in (0,1)$ is the tolerated error and $F(\omega_{RB'}, \Phi_{RB'})$ can be seen as the probability of the output state passing an entanglement test.

In essence, for a given $n \in \mathbb{N}$, the goal of Alice and Bob is to devise a protocol $(\mathcal{E}_n, \mathcal{D}_n)$ that allows for making the overall channel $\mathcal{D}_{B^n \to B'} \circ \mathcal{N}_{A^n \to B^n} \circ \mathcal{E}_{A' \to A^n}$ as close as possible to $\operatorname{id}_{A' \to B'}$. This leads to the definition of channel fidelity.

\begin{definition}
    Given $n,d \in \mathbb{N}$ and a quantum channel $\mathcal{N}\in \operatorname{CPTP}(A \to B)$, with $A,B$ finite dimensional Hilbert spaces, with $d \leq \min\{d_A, d_B\}$, the channel fidelity is defined as the joint optimization problem:\begin{equation}
\label{eq:channel_fidelity_SM}
\begin{aligned}
    F_c(\mathcal{N}^{\otimes n}, d) 
    &\coloneqq \max_{\substack{\mathcal{E} \in \operatorname{CPTP}(A'\to A^n), \\ \mathcal{D} \in \operatorname{CPTP}(B^n\to B')}} 
    F\Bigl(\Phi_{RB'}^d, \bigl(\id_R \otimes (\mathcal{D}_{B^n \to B'} \circ \mathcal{N}_{A^n \to B^n} \circ \mathcal{E}_{A' \to A^n})\bigr)(\Phi^d_{RA'})\Bigr).
\end{aligned}
\end{equation}
\end{definition} 

The channel fidelity can also be interpreted in terms of the Choi representation:
\begin{equation}
\label{eq:reformulation_channel_fidelity}
     F_c(\mathcal{N}^{\otimes n}, d) \coloneqq \max_{\mathcal{E}_n, \mathcal{D}_n} 
    F\Bigl(\Phi^d_{RB'}, \Phi^{\mathcal{D}_n \circ \mathcal{N}^{\otimes n} \circ \mathcal{E}_n}_{RB'}\Bigr).
\end{equation}
The objective function in~\eqref{eq:reformulation_channel_fidelity} is also known in the literature as the entanglement fidelity~\cite{Schumacher1996} of the overall channel, defined as:\begin{equation}
\label{eq:entanglement_fidelity}
    F_e(\mathcal{M}) \coloneqq F\Bigl(\Phi^d, \Phi^{\mathcal{M}}\Bigr) = \bra{\Phi^d}\Phi^{\mathcal{M}}\ket{\Phi^d},
\end{equation}
where $d$ denotes the input and output dimension of $\mathcal{M}$. This quantifies the ability of the channel to preserve entanglement as the maximum probability of its Choi state passing an entanglement test (in line with the observation that $\Phi^{\id} = \Phi^d$). Conversely, the minimum achievable error in $n$-shot entanglement transmission is defined as:
\begin{equation}
\label{eq:minimium_chievable-error}
    \varepsilon ^*_Q(\mathcal{N}^{\otimes n}, d) \coloneqq 1 -F_c(\mathcal{N}^{\otimes n},d).
\end{equation}

\begin{definition} 
    Given $n \in \mathbb{N}$, $\varepsilon  \in (0,1)$ and a quantum channel $\mathcal{N}\in \operatorname{CPTP}(A \to B)$, with $A,B$ finite-dimensional Hilbert spaces, the $n$-shot, $\varepsilon$-error quantum capacity is defined to be the maximum achievable rate over $\mathcal{N}^{\otimes n}$ with error at most~$\varepsilon $:\begin{equation}
\label{eq:n-shotquantumcapacity_SM}
                Q^{n, \varepsilon }(\mathcal{N}) \coloneqq \sup_{d\in \mathbb{N}} \left\{\frac{\log_2d}{n}:\varepsilon ^*_Q(\mathcal{N}^{\otimes n}, d)\leq  \varepsilon  \right\}.
    \end{equation}
    The quantum capacity of $\mathcal{N}$ is defined as:\begin{equation}
    \label{eq:quantum_capacity}
        Q(\mathcal{N}) \coloneqq \lim_{\varepsilon  \to 0} \liminf_{n\to \infty} Q^{n, \varepsilon }(\mathcal{N}).
    \end{equation}
\end{definition}

While in the asymptotic setting the quantum capacity is the only quantity of interest (as $\varepsilon ^*_Q(\mathcal{N}^{\otimes n}, 2^{nR}) \xrightarrow{n \to \infty} 0$ for every achievable rate $R$), in the finite-blocklength setting one can focus both on $ Q^{n, \varepsilon }(\mathcal{N})$ or on $\varepsilon ^*_Q(\mathcal{N}^{\otimes n}, d)$. Choosing one or the other is a matter of deciding to focus on different aspects of information transmitted over $\mathcal{N}^{\otimes n}$, and both perspectives have found interest in the literature.

The quantum capacity theorem~\cite{Schumacher1996,schumacher1996quantum,Lloyd1997,BarnumNielsenSchumacher1998,BarnumKnillNielsen2000, Shor2002, Devetak2005} expresses the asymptotic quantum capacity $Q(\mathcal{N})$ in terms of its coherent information. Specifically, 
\begin{equation}
\label{eq:LSD}
    Q(\mathcal{N}) = \lim_{n \to \infty} \frac{1}{n} I_c(\mathcal{N}^{\otimes n}),
\end{equation}
where the coherent information of $\mathcal{N}$ is defined as:\begin{equation}
\label{eq:coherentInfoChannel}
I_c(\mathcal{N}) \coloneqq  \max_{\psi_{RA}} I_c(R\rangle B)_{\omega}
\end{equation}
and $I_c(R\rangle B)_{\omega} \coloneqq  S(B)_{\omega} - S(RB)_{\omega}$, where $S(B)_{\omega} \coloneqq  - \Tr[\omega_B\log \omega_B]$ is the von Neumann entropy. Also, we denoted $\omega_B \equiv \Tr_R [\omega_{RB}]$, where $\omega_{RB} \equiv (\mathrm{id}_R \otimes \mathcal{N}_{A \to B})(\psi_{RA})$, and the maximization is over pure states $\psi_{RA}$ with $d_R = d_A$. 

The coherent information is a measure of a channel’s ability to preserve quantum correlations. It is superadditive, $I_c(\mathcal{N}^{\otimes n}) \geq n I_c(\mathcal{N})$, and for some channels,  the inequality can be strict \cite{divincenzo1998quantum}. Thus we have in general that $Q(\mathcal{N}) \geq I_c(\mathcal{N}) \equiv Q^{(1)}(\mathcal{N})$, and the coherent information of $\mathcal{N}$ is only an achievable rate for reliable qubit transmission over $\mathcal{N}$ in the asymptotic limit.


\section{Pairs of channels exhibiting superactivation}

\subsection{Erasure and PPT Channels}

The qudit erasure channel of parameter $q$ is denoted as $\mathcal{A}^{q}\colon \mathcal{L}(\mathcal{H}_A) \to  \mathcal{L}(\mathcal{H}_B)$, with $d_{B} = d_A +1$, and it acts on an input by transmitting it unaltered with probability $1-q$ and by replacing it with a state $\ket{e}\!\bra{e}$ orthogonal to all input states with probability $q\in [0,1]$:
    \begin{equation}
    \label{eq:quantumerasurechannel}
       \mathcal{A}^{q}(\rho)\coloneqq (1-q) \rho + q \Tr(\rho)\ket{e}\!\bra{e}.
    \end{equation}
   An erasure can be perfectly detected by a projective measurement, but the original information is irretrievably lost in this case.  In this paper we will be concerned with the $50\%$ four-dimensional input erasure channel, i.e.~the case $q = \frac{1}{2}$, $d_A = 4$, and $d_B = 5$, which we simply denote by $\mathcal{A}$. This channel is symmetric; i.e.~it coincides with a channel complementary to it, as can be seen from the following isometric extension of it:\begin{equation}
    V_{A \to BE}^{\mathcal{A}} \ket{\psi}_A= \sqrt{\frac{1}{2}}\ket{\psi}_B\ket{e}_E + \sqrt{\frac{1}{2}}\ket{e}_B\ket{\psi}_E. 
\end{equation} 
Hence, the channel is two-extendible (also known as anti-degradable), and by the no-cloning theorem~\cite{Park1970Transition,wootters1982single}, its quantum capacity must vanish, $Q(\mathcal{A}) = 0$ \cite{bennett1997}. As we are interested in the non-asymptotic setting, the goal is to characterize how fast $Q^{n,\varepsilon }(\mathcal{A}) \to 0$ as $n\to \infty$  and $\varepsilon  \to 0$. The tightest known upper bound on the (unassisted) $n$-shot quantum capacity of an antidegradable channel (like~$\mathcal{A}$) is given in~\cite[Corollary~2]{Kaur2021} in the context of \textit{two-extendible assisted quantum communication} and establishes the following bound for all $\varepsilon \in [0,1/2)$:
\begin{equation}
\label{eq:outer_bound_erasure}
    Q^{n,\varepsilon }(\mathcal{A}) \leq  \frac{1}{n} \log\!\left(\frac{1}{1-2\varepsilon }\right).
\end{equation}

A PPT channel (also known in the literature as an entanglement-binding or Horodecki channel~\cite{Horodecki1999Binding}) is a channel $\mathcal{P}$ which, upon acting on one share of a bipartite state, outputs a so-called bound-entangled PPT state. The latter is a state from which two distant parties (who share the state) cannot distill Bell states via local quantum operations and classical communication (LOCC). Equivalently, its Choi matrix is PPT, i.e.~$(\Phi^{\mathcal{P}}_{RB})^{T_B} \geq 0$. Such a channel has zero quantum capacity, $Q(\mathcal{P}) = 0$. This can be seen from~\eqref{eq:PPT_singlet_fidelity}, as the output state of $\omega_{RB'}$ of every entanglement transmission protocol over these channels has fidelity with the MES upper bounded by $1/d$.


%
An important class of PPT channels is the set of entanglement breaking (EB) channels, whose Choi matrix is separable, but there also exist PPT channels which are not EB. An important example of those is the class of so-called \textit{private Horodecki channels}, which can be used to send classical private messages~\cite{horodecki2008low, Horodecki2005SecureKey}; i.e.~they are useful for cryptographic tasks. Specifically, this paper will use as a PPT channel the private Horodecki channel having the following Choi state (Eq.~(13) of \cite{horodecki2008low}):\begin{equation}
\Phi^{\mathcal{P}}_{RB} \coloneq \sum_{i=0}^{3}q_{i}\psi_{R'B'}^{i} \otimes\rho_{R''B''}^{(i)},
\label{eq:choi-state-horo}
\end{equation}
where $R \equiv R' R'',$ $B\equiv B' B''$, and we defined:
\begin{align}
|\psi^{0}\rangle  &  \coloneqq  \ket{\Phi^2} = \frac{1}{\sqrt{2}}\left(  |00\rangle+|11\rangle\right)  ,\\
|\psi^{1}\rangle &  \coloneqq \frac{1}{\sqrt{2}}\left(  |00\rangle
-|11\rangle\right)  ,\\
|\psi^{2}\rangle  &  \coloneqq \frac{1}{\sqrt{2}}\left(  |01\rangle+|10\rangle\right)  ,\\
|\psi^{3}\rangle  &  \coloneqq \frac{1}{\sqrt{2}}\left(  |01\rangle-|10\rangle\right)  ,
\end{align}
as the Bell basis states, and
\begin{align}
\rho^{(0)}  &  \coloneqq \frac{1}{2}\left[  \ket{00}\!\bra{00} +\psi^{2}\right]  ,\\
\rho^{(1)}  &  \coloneqq \frac{1}{2}\left[   \ket{11}\!\bra{11}+\psi^{3}\right]  ,\\
\rho^{(2)}  &  \coloneqq \chi^{+},\\
\rho^{(3)}  &  \coloneqq \chi^{-},\\
|\chi^{+}\rangle &  \coloneqq \frac{1}{2}\left(  \sqrt
{2+\sqrt{2}}|00\rangle+\sqrt{2-\sqrt{2}}|11\rangle\right)  ,\\
|\chi^{-}\rangle &  \coloneqq \frac{1}{2}\left(  \sqrt
{2-\sqrt{2}}|00\rangle-\sqrt{2+\sqrt{2}}|11\rangle\right)  ,\\
q_{0}  &  =q_{1}\coloneq \frac{1-p}{2} \coloneq \frac{1}{2} \left(  \frac{\sqrt{2}}{1+\sqrt{2}}\right), \\
q_{2}  &  =q_{3}\coloneq \frac{p}{2}=\frac{1}{2}\left(  \frac{1}{1+\sqrt{2}}\right),
\end{align}
where we defined
\begin{equation}
    p \coloneq \frac{1}{1+\sqrt{2}}.
\end{equation}
One can check that $\Phi^{\mathcal{P}}_{RB}$ is a valid Choi state of a quantum channel, i.e.~$\Phi^{\mathcal{P}}_{RB} \geq 0$ and $\Tr_B \Phi^{\mathcal{P}}_{RB} = \pi_R$. 

One can also verify that it is a PPT state, i.e.~$(\Phi^{\mathcal{P}}_{RB})^{T_B} \geq 0$, and its rank is equal to six; indeed, Smith and Yard \cite{Smith2008} provided the six Kraus operators, so to define the action of $\mathcal{P}$ as:
\begin{align}
\label{eq:ppt_channel}
    \mathcal{P}\colon \mathcal{L}(\mathcal{H}) &\to  \mathcal{L}(\mathcal{H}) \\
    \rho &\mapsto  \sum_{i=1}^6 K_i \rho K_i^\dag,
\end{align}
where $d_\mathcal{H} = 4$ and:
\begin{align}
K_1  &  \coloneqq \sqrt{\frac{1-p}{2}}\mathbbm{1} \otimes|0\rangle
\langle0|,\\
K_2  &  \coloneqq \sqrt{\frac{1-p}{4}}\mathbbm{1} \otimes \sigma_X,\\
K_3  &  \coloneqq \sqrt{\frac{1-p}{2}}\sigma_Z \otimes|1\rangle
\langle1|,\\
K_4  &  \coloneqq \sqrt{\frac{1-p}{4}}\sigma_Z\otimes \sigma_Y,\\
K_5  &  \coloneqq \sqrt{p}\sigma_X\otimes M_{0},\\
K_6  &  \coloneqq \sqrt{p}\sigma_Y\otimes M_1,
\end{align}
where \footnote{There is a sign typo in the coefficient of $|1\rangle\langle1|$ of $M_1$ in the Supporting Material of \cite{Smith2008}.}
\begin{align}
M_{0}  &  \coloneqq \frac{1}{2}\left(  \sqrt{2+\sqrt{2}}|0\rangle
\langle0|+\sqrt{2-\sqrt{2}}|1\rangle\langle1|\right)  ,\\
M_{1}  &  \coloneqq \frac{1}{2}\left(  \sqrt{2-\sqrt{2}}|0\rangle
\langle0|-\sqrt{2+\sqrt{2}}|1\rangle\langle1|\right).
\end{align}


One can verify using~\eqref{eq:ppt_channel} that indeed $(\mathrm{id}_R \otimes \mathcal{P}_{A \to B})(\Phi^4_{RA}) = \Phi^{\mathcal{P}}_{RB}$ defined in~\eqref{eq:choi-state-horo}, so that the above Kraus operators are in direct correspondence with the PPT Choi state in~\eqref{eq:choi-state-horo}.

Again, we are interested in tight upper bounds on $Q^{n,\varepsilon }(\mathcal{P})$. Given that $\omega_{RB'}$ in~\eqref{eq:output_state_protocol_et} is a PPT state, it belongs to the Rains set~\cite{Rains1999Bound, Audenaert2002Asymptotic} and therefore using the hypothesis testing Rains relative entropy bound of \cite[Lemma~1]{Tomamichel2016}, which holds for \textit{PPT-assisted quantum communication} (see also~\cite{Leung2015PPT,tomamichel2017}) we have:\begin{equation}
    \label{eq:outer_bound_PPT}
     Q^{n,\varepsilon }(\mathcal{P}) \leq  \frac{1}{n} \log\!\left(\frac{1}{1-\varepsilon }\right).
\end{equation}

Smith and Yard~\cite{Smith2008} proved that these two channels exhibit asymptotic superactivation of quantum capacity. In other words, the quantum capacity of the joint channel $\mathcal{A} \otimes \mathcal{P}$ is strictly larger than zero while the two single channels are individually useless for quantum communication (see Figure~\ref{fig:superactivation_formal}).

\begin{figure}[!ht]
\centering
\begin{tikzpicture}[>=stealth, scale=1.0]

\tikzset{
    qarrow/.style={->, line width=0.9pt},
    block/.style={draw, minimum width=1.6cm},
    decodeblock/.style={draw, minimum width=1.6cm},
    channel/.style={draw, minimum width=1.2cm, minimum height=0.5cm},
    chanlabel/.style={font=\normalsize}
}

\draw[block] (1,5.1) rectangle (2.6,8.1);
\node at (1.8,6.6) {\Large $\mathcal{E}_1$};

\draw[block] (1,1.9) rectangle (2.6,4.9);
\node at (1.8,3.4) {\Large $\mathcal{E}_2$};

\draw[decodeblock] (4.8,5.1) rectangle (6.4,8.1);
\node at (5.6,6.6) {\Large $\mathcal{D}_1$};

\draw[decodeblock] (4.8,1.9) rectangle (6.4,4.9);
\node at (5.6,3.4) {\Large $\mathcal{D}_2$};

\draw[block] (8.4,1.9) rectangle (10.0,8.1);
\node at (9.2,5.0) {\Large $\mathcal{E}$};

\draw[decodeblock] (12.0,1.9) rectangle (13.6,8.1);
\node at (12.8,5.0) {\Large $\mathcal{D}$};

\draw[channel] (3,7.6) rectangle (4.2,8.0);
\node[chanlabel] at (3.6,7.8) {$\mathcal{A}$};

\draw[channel] (3,7.0) rectangle (4.2,7.4);
\node[chanlabel] at (3.6,7.2) {$\mathcal{A}$};

\node at (3.6,6.3) {\Large$\vdots$};

\draw[channel] (3,5.2) rectangle (4.2,5.6);
\node[chanlabel] at (3.6,5.4) {$\mathcal{A}$};

\draw[channel] (3,4.4) rectangle (4.2,4.8);
\node[chanlabel] at (3.6,4.6) {$\mathcal{P}$};

\draw[channel] (3,3.8) rectangle (4.2,4.2);
\node[chanlabel] at (3.6,4.0) {$\mathcal{P}$};

\node at (3.6,3.1) {\Large$\vdots$};

\draw[channel] (3,2.0) rectangle (4.2,2.4);
\node[chanlabel] at (3.6,2.2) {$\mathcal{P}$};

\draw[channel] (10.4,7.6) rectangle (11.6,8.0);
\node[chanlabel] at (11.0,7.8) {$\mathcal{A}$};

\draw[channel] (10.4,7.0) rectangle (11.6,7.4);
\node[chanlabel] at (11.0,7.2) {$\mathcal{A}$};

\node at (11.0,6.3) {\Large$\vdots$};

\draw[channel] (10.4,5.2) rectangle (11.6,5.6);
\node[chanlabel] at (11.0,5.4) {$\mathcal{A}$};

\draw[channel] (10.4,4.4) rectangle (11.6,4.8);
\node[chanlabel] at (11.0,4.6) {$\mathcal{P}$};

\draw[channel] (10.4,3.8) rectangle (11.6,4.2);
\node[chanlabel] at (11.0,4.0) {$\mathcal{P}$};

\node at (11.0,3.1) {\Large$\vdots$};

\draw[channel] (10.4,2.0) rectangle (11.6,2.4);
\node[chanlabel] at (11.0,2.2) {$\mathcal{P}$};

\foreach \y in {7.8,7.2,5.4} {
    \draw[qarrow] (2.6,\y) -- (3,\y);
    \draw[qarrow] (4.2,\y) -- (4.8,\y);
}
\foreach \y in {4.6,4.0,2.2} {
    \draw[qarrow] (2.6,\y) -- (3,\y);
    \draw[qarrow] (4.2,\y) -- (4.8,\y);
}

\foreach \y in {7.8,7.2,5.4,4.6,4.0,2.2} {
    \draw[qarrow] (10.0,\y) -- (10.4,\y);
    \draw[qarrow] (11.6,\y) -- (12.0,\y);
}

\draw[qarrow] (0.2,6.8) -- (1,6.8);
\draw[qarrow] (0.2,3.1) -- (1,3.1);
\draw[qarrow] (6.4,6.8) -- (7.2,6.8);
\draw[qarrow] (6.4,3.1) -- (7.2,3.1);

\draw[qarrow] (7.6,5.2) -- (8.4,5.2);
\draw[qarrow] (13.6,5.2) -- (14.4,5.2);

\draw[dashed] (7.6,1.6) -- (7.6,8.3);

\end{tikzpicture}
\caption{Setting for superactivation. On the left, Alice and Bob attempt to separately use two zero-capacity channels $\mathcal{A}$ and $\mathcal{P}$ to transmit quantum information. Alice uses separate encoders $\mathcal{E}_1$ and $\mathcal{E}_2$ for each group of channels, and Bob uses separate decoders $\mathcal{D}_1$ and $\mathcal{D}_2$. Any attempt will fail because the quantum capacity of each channel is equal to zero. On the right, the same two channels are used in parallel for the same task. Alice’s encoder $\mathcal{E}$ now has simultaneous access to the inputs of all channels being used, and Bob’s decoding $\mathcal{D}$ is also performed jointly. In the case that the superactivation effect takes hold, noiseless quantum communication is possible in this setting.}
\label{fig:superactivation_formal}
\end{figure}
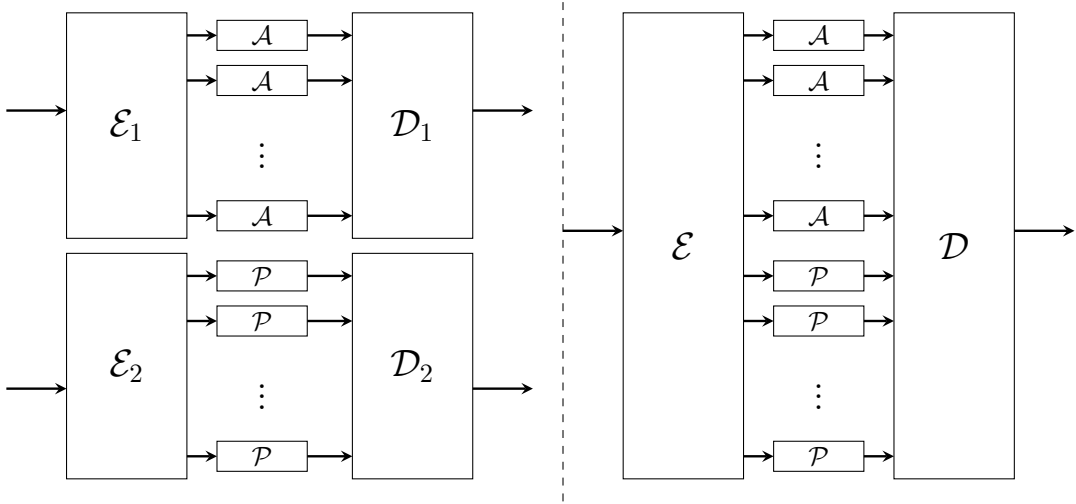
They showed this by establishing a lower bound on the coherent information of the joint channel $\mathcal{A}\otimes \mathcal{P}$ in terms of the private information of $\mathcal{P}$, which is strictly greater than zero. 

\subsection{Second-Order Analysis}
\label{sec:second-order-analysis-background}

One can attempt to obtain a non-asymptotic statement from these asymptotic results by studying second-order expansions of the quantum capacity, as given in~\cite{Tomamichel2016, wilde2017converse}. 

Second-order asymptotics concerns contributions to the rate that are proportional to $\frac{1}{\sqrt{n}}$ when we move from the $n$-shot to the asymptotic regime. In this context, the relevant entropic quantities can be systematically organized using the notion of relative entropy moments.

\subsubsection{Relative Entropy Moments}

For all $k\in\mathbb{N}$, we define the \textit{relative entropy $k$th moment} of a state $\rho$ and a positive semi-definite operator $\sigma$ as
\begin{equation}
M_{k}(\rho\Vert\sigma)\coloneqq\operatorname{Tr}[\rho\left(  \log\rho -\log\sigma\right)  ^{k}].
\end{equation}
Setting $k=1$ recovers the standard Umegaki relative entropy~\cite{umegaki1962}:
\begin{equation}
D(\rho \| \sigma) = M_1(\rho\|\sigma) = \Tr[\rho (\log \rho - \log \sigma)],
\end{equation}
which is well-defined if ${\rm{supp}}\, \rho \subseteq {\rm{supp}}\, \sigma$, and is equal to $+\infty$ else. The \textit{relative entropy variance}~\cite{li2014second,tomamichel2013hierarchy} is defined as
\begin{equation}
V(\rho\Vert\sigma)\coloneqq\operatorname{Tr}[\rho\left( \log\rho- \log\sigma-D(\rho\Vert\sigma)\right)  ^{2}] = M_{2}(\rho\Vert\sigma)-\left[  D(\rho\Vert\sigma)\right]^{2}.
\end{equation}

The following properties of relative entropy moments are generalizations of standard properties of the Umegaki relative entropy and will be used in the second-order analysis below.

It is well known that the relative entropy is invariant with respect to  isometries, namely:
\begin{equation}
    D(\rho \| \sigma) = D(\mathcal{U}(\rho)\Vert\mathcal{U}(\sigma)),
\end{equation}
for every isometric channel $\mathcal{U}(\cdot)\coloneqq U(\cdot)U^{\dag}$, where $U$ is an isometry.

\begin{lemma}
\label{lem:mom-iso-inv}
For every state $\rho$ and positive semidefinite operator $\sigma$ such that ${\rm{supp}}\, \rho \subseteq {\rm{supp}}\, \sigma$,
the relative entropy $k$th moment satisfies
\begin{equation}
M_{k}(\rho\Vert\sigma)=M_{k}(\mathcal{U}(\rho)\Vert\mathcal{U}(\sigma)),
\label{eq:spec-decomp-formula-moment-k}
\end{equation}
for all $k\in\mathbb{N}$ and every isometric channel $\mathcal{U}(\cdot)\coloneqq U(\cdot)U^{\dag}$, where $U$ is an isometry.
\end{lemma}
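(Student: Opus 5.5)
The plan is to use the functional calculus for isometries: for a positive semidefinite operator $X$ and an isometry $U$, one has $\log(UXU^\dagger) = U(\log X)U^\dagger$. Here the logarithm is understood on the support, which is the standard convention in the definition of $D$ and $M_k$. To verify this, write the spectral decomposition $X = \sum_j \lambda_j \Pi_j$ over the positive eigenvalues. Since $U^\dagger U = \mathbbm{1}$, the operators $U\Pi_j U^\dagger$ are mutually orthogonal projections, so $UXU^\dagger = \sum_j \lambda_j U\Pi_jU^\dagger$ is itself a spectral decomposition. Hence $\log(UXU^\dagger) = \sum_j \log\lambda_j\, U\Pi_jU^\dagger = U(\log X)U^\dagger$ on the support. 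Everything outside $\mathrm{range}(U)$ is annihilated in what follows, so the convention chosen for the zero eigenspace does not matter.

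Applying this to $\rho$ and $\sigma$ gives $\log\mathcal{U}(\rho) - \log\mathcal{U}(\sigma) = U(\log\rho - \log\sigma)U^\dagger$. The support condition $\mathrm{supp}\,\rho\subseteq\mathrm{supp}\,\sigma$ is preserved under conjugation by $U$, and it ensures that both logarithms are finite on the relevant subspace. Set $L \coloneqq \log\rho-\log\sigma$. Since $U^\dagger U=\mathbbm{1}$, a simple induction gives $(ULU^\dagger)^k = UL^kU^\dagger$. Then, by cyclicity of the trace,
\begin{equation}
\Tr[U\rho U^\dagger\, U L^k U^\dagger] = \Tr[U^\dagger U \rho\, U^\dagger U L^k] = \Tr[\rho L^k],
\end{equation}
which is exactly $M_k(\rho\Vert\sigma)$.

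The only subtle point, and so the expected obstacle, is the treatment of kernels. Strictly, $\log\sigma$ is $-\infty$ on $\ker\sigma$, and $\mathcal{U}(\sigma)$ has a larger kernel that includes $\mathrm{range}(U)^\perp$. I would handle this by restricting everything to $\mathrm{supp}\,\sigma$ and its image $U(\mathrm{supp}\,\sigma)=\mathrm{supp}\,\mathcal{U}(\sigma)$. Because $\rho$ is supported there, $\rho$ sandwiches $L^k$ only on this subspace, and the identity above holds verbatim on the restricted operators.
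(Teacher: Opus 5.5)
Your proof is correct and follows the same route as the paper's. Both arguments rest on three facts: $\log\mathcal{U}(X)=\mathcal{U}(\log X)$ on the range of $U$, the identity $(ULU^\dagger)^k=UL^kU^\dagger$, and $U^\dagger U=\mathbbm{1}$. The paper phrases the last step as $\mathcal{U}^\dagger\circ\mathcal{U}=\mathrm{id}$ rather than as cyclicity of the trace, and your extra discussion of the kernel conventions is a reasonable refinement that leaves the argument unchanged.
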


\begin{proof}
The formula in~\eqref{eq:spec-decomp-formula-moment-k} follows because
\begin{align}
M_{k}(\mathcal{U}(\rho)\Vert\mathcal{U}(\sigma)) &  = \operatorname{Tr}[\mathcal{U}(\rho)\left(  \log\mathcal{U}(\rho
)-\log\mathcal{U}(\sigma)\right)  ^{k}]\\
& = \operatorname{Tr}[\mathcal{U}(\rho)\left(  \mathcal{U}\left(  \log
\rho-\log\sigma\right)  \right)  ^{k}] \\
&  =\operatorname{Tr}\!\left[\mathcal{U}(\rho)\mathcal{U}\!\left(  \left(  \log
\rho-\log\sigma\right)  ^{k}\right)  \right]\\
&  =\operatorname{Tr}\!\left [(\mathcal{U}^\dag \circ\mathcal{U})(\rho)\left(  \left(  \log
\rho-\log\sigma\right)  ^{k}\right)  \right]\\
  &  =\operatorname{Tr}[\rho\left(  \log\rho
-\log\sigma\right)  ^{k}]\\
&  =M_{k}(\rho\Vert\sigma),
\end{align}
where we used the facts that $\mathcal{U}(\log A)=\log\mathcal{U}(A)$ on the range of $U$ for a positive semi-definite operator $A$, and that $\mathcal{U}(\rho)$ is supported on the range of $U$, and $\mathcal{U}(A^{k})=(\mathcal{U}(A))^{k}$.
\end{proof}

It is well known that the relative entropy satisfies:
\begin{equation}
    D(\rho \otimes \tau \Vert \sigma \otimes \tau) = D(\rho\Vert\sigma),
\end{equation}
for every state $\tau \in \mathcal{D}(\mathcal{H})$.

\begin{lemma}
\label{lem:kth-mom-prod-state}For states $\rho$ and $\tau$ and a positive
semi-definite operator $\sigma$, such that  ${\rm{supp}}\, \rho \subseteq {\rm{supp}}\, \sigma$, and for all $k\in\mathbb{N}$, the following holds:
\begin{equation}
M_{k}(\rho\otimes\tau\Vert\sigma\otimes\tau)=M_{k}(\rho\Vert\sigma).
\end{equation}
\end{lemma}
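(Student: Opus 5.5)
The plan is to mirror the proof of Lemma~\ref{lem:mom-iso-inv}: reduce everything to functional calculus on tensor products. The key identity is that for positive semidefinite $X$ and a state $\tau$, on the relevant supports,
\begin{equation}
\log(X\otimes\tau) = \log X \otimes \Pi_\tau + \Pi_X \otimes \log \tau ,
\end{equation}
where $\Pi_X$ and $\Pi_\tau$ denote the projections onto the supports of $X$ and $\tau$. This follows from the spectral decompositions, since the eigenvalues of $X\otimes\tau$ are the products $x_i t_j$ with eigenvectors $\ket{e_i}\otimes\ket{f_j}$, and $\log(x_i t_j)=\log x_i+\log t_j$.

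Apply this with $X=\rho$ and with $X=\sigma$. Because $\operatorname{supp}\rho\subseteq\operatorname{supp}\sigma$, the support of $\rho\otimes\tau$ lies inside the support of $\sigma\otimes\tau$, so the moment is well defined. Multiplying by $\rho\otimes\tau$ kills everything outside $\Pi_\rho\otimes\Pi_\tau$, and on that subspace the $\log\tau$ terms cancel. This gives, effectively,
\begin{equation}
\log(\rho\otimes\tau)-\log(\sigma\otimes\tau) = (\log\rho-\log\sigma)\otimes \Pi_\tau .
\end{equation}
Strictly, there are cross terms $\Pi_\rho\otimes\log\tau - \Pi_\sigma\otimes\log\tau = (\Pi_\rho-\Pi_\sigma)\otimes\log\tau$, and these are the delicate point.

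To handle them cleanly, I would adopt the convention already implicit in the definition of $M_k$ under the support condition: $\log\sigma$ means the logarithm on $\operatorname{supp}\sigma$, and we may restrict the whole Hilbert space to $\operatorname{supp}\sigma$. Then $\rho$ lives inside this space, $\Pi_\sigma$ is the identity there, and similarly we restrict the second factor to $\operatorname{supp}\tau$.

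The main obstacle is that the difference $\log\rho-\log\sigma$ is taken with $\log\rho$ defined only on $\operatorname{supp}\rho$, which does not commute with $\Pi_\rho$ in general. So I would not write $\log\rho$ as restricted. Instead I would use the standard interpretation $\rho\log\rho$, with $0\log 0=0$, extended by zero off the support. On the restricted space $\Pi_\sigma=\mathbbm{1}$ and $\Pi_\tau=\mathbbm{1}$, so the $\tau$ contributions become $\mathbbm{1}\otimes\log\tau$ in both logarithms and cancel exactly. This leaves $(\log\rho-\log\sigma)\otimes\mathbbm{1}$.

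With that identity in hand, the $k$th power factorizes as $(\log\rho-\log\sigma)^k\otimes\mathbbm{1}$. Then
\begin{equation}
\operatorname{Tr}\bigl[(\rho\otimes\tau)\bigl((\log\rho-\log\sigma)^k\otimes\mathbbm{1}\bigr)\bigr]
=\operatorname{Tr}\bigl[\rho(\log\rho-\log\sigma)^k\bigr]\operatorname{Tr}[\tau],
\end{equation}
and $\operatorname{Tr}[\tau]=1$ gives $M_k(\rho\otimes\tau\Vert\sigma\otimes\tau)=M_k(\rho\Vert\sigma)$.
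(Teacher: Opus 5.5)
Your core computation is the paper's own. The paper writes $\log(\rho\otimes\tau)-\log(\sigma\otimes\tau)=(\log\rho-\log\sigma)\otimes\mathbbm{1}$, factorizes the $k$th power and uses $\Tr[\tau]=1$, without discussing supports. After restricting to $\operatorname{supp}\sigma\otimes\operatorname{supp}\tau$, that identity is exact when $\operatorname{supp}\rho=\operatorname{supp}\sigma$, and in that case your proof is correct.

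The gap is your step asserting that on the restricted space ``the $\tau$ contributions become $\mathbbm{1}\otimes\log\tau$ in both logarithms.'' The restriction turns $\Pi_\sigma$ and $\Pi_\tau$ into identities, but not $\Pi_\rho$. So there $\log(\rho\otimes\tau)=\log\rho\otimes\mathbbm{1}+\Pi_\rho\otimes\log\tau$, and the residual $-P\otimes\log\tau$, with $P\coloneqq\Pi_\sigma-\Pi_\rho$, survives. The relevant non-commutativity is between $\Pi_\rho$ and $\log\sigma$; $\log\rho$ always commutes with $\Pi_\rho$. The convention $0\log 0=0$ only removes residual factors adjacent to $\rho\otimes\tau$, via $\rho P=0$ and cyclicity of the trace. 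That is why $k=1,2$ are unaffected. For $k\geq 3$ the residual can sit in an interior position. For $k=3$ the single surviving extra word gives
\begin{equation}
M_3(\rho\otimes\tau\Vert\sigma\otimes\tau)-M_3(\rho\Vert\sigma)=H(\tau)\,\Tr\!\left[\rho\,\log\sigma\,P\,\log\sigma\right],\qquad H(\tau)\coloneqq-\Tr[\tau\log\tau].
\end{equation}
This holds whatever finite value is assigned to $\log$ on $\ker\rho$, including the usual extension by zero.

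So the step cannot be patched: under the sole hypothesis $\operatorname{supp}\rho\subseteq\operatorname{supp}\sigma$, the statement is false for $k\geq 3$. For instance, $\rho=\ket{0}\!\bra{0}$, $\tau=\mathbbm{1}/2$ and $\sigma=\tfrac34\ket{+}\!\bra{+}+\tfrac14\ket{-}\!\bra{-}$ give a difference $|\bra{1}\log\sigma\ket{0}|^2=\tfrac14(\log 3)^2\neq 0$ (binary logarithms). The paper's proof tacitly uses the full-rank identity and has the same hole.

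What is true, and provable exactly along your lines, is the lemma under $\operatorname{supp}\rho=\operatorname{supp}\sigma$. More generally it holds under $[\Pi_\rho,\log\sigma]=0$. Then $P$ commutes with $\log\rho-\log\sigma$, so in every word containing $P\otimes\log\tau$ that factor can be moved next to $\rho\otimes\tau$, where it vanishes. This weaker condition holds wherever the paper invokes the lemma. Examples are the GHZ state against $\mathbbm{1}_R\otimes\overline{\Phi}^{\ell}$, whose logarithm is a multiple of its support projector, and the pure flag $\ket{e}\!\bra{e}$, for which $\log\tau=0$.
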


\begin{proof}
Consider that
\begin{align}
  M_{k}(\rho\otimes\tau\Vert\sigma\otimes\tau)  & =\operatorname{Tr}[\left(  \rho\otimes\tau\right)  \left(  \log\!\left(
\rho\otimes\tau\right)  -\log\!\left(  \sigma\otimes\tau\right)  \right)
^{k}]\\
&  =\operatorname{Tr}[\left(  \rho\otimes\tau\right)  \left(  \left[  \log
\rho-\log\sigma\right]  \otimes \mathbbm{1}\right)  ^{k}]\\
&  =\operatorname{Tr}\!\left[  \left(  \rho\otimes\tau\right)  \left(  \left[
\log\rho-\log\sigma\right]  ^{k}\otimes \mathbbm{1}\right)  \right] \\
&  =\operatorname{Tr}[\rho\left(  \log\rho-\log\sigma\right)  ^{k}]
  =M_{k}(\rho\Vert\sigma),
\end{align}
concluding the proof.
\end{proof}

Let $\rho_{XB}$ and $\sigma_{XB}$ be a classical--quantum state and positive
semi-definite operator, respectively, defined as%
\begin{align}
\rho_{XB}  &  \coloneqq\sum_{x}r(x)|x\rangle\!\langle x|_{X}\otimes\rho
_{B}^{x},\label{eq:cq-state-1}\\
\sigma_{XB}  &  \coloneqq\sum_{x}s(x)|x\rangle\!\langle x|_{X}\otimes
\sigma_{B}^{x}. \label{eq:cq-state-2}%
\end{align}
Then it is well known that%
\begin{equation}
D(\rho_{XB}\Vert\sigma_{XB})=D(r\Vert s)+\sum_{x}r(x)D(\rho_{B}^{x}\Vert
\sigma_{B}^{x}).
\end{equation}

\begin{lemma}
\label{lem:cq-moments}For a classical--quantum state $\rho_{XB}$ and positive
semi-definite operator $\sigma_{XB}$ of the form in
\eqref{eq:cq-state-1}--\eqref{eq:cq-state-2}\ and for all $k\in\mathbb{N}$,
the relative entropy $k$th moment can be written as follows:%
\begin{equation}
M_{k}(\rho_{XB}\Vert\sigma_{XB})=\sum_{i=0}^{k}\binom{k}{i}\sum_{x}r(x)\left[
\log_{2}\!\left(  \frac{r(x)}{s(x)}\right)  \right]  ^{i}M_{k-i}(\rho_{B}%
^{x}\Vert\sigma_{B}^{x}),
\end{equation}
where we note that%
\begin{align}
\left.  \binom{k}{i}\sum_{x}r(x)\left[  \log_{2}\!\left(  \frac{r(x)}%
{s(x)}\right)  \right]  ^{i}M_{k-i}(\rho_{B}^{x}\Vert\sigma_{B}^{x}%
)\right\vert _{i=0}  &  =\sum_{x}r(x)M_{k}(\rho_{B}^{x}\Vert\sigma_{B}^{x}),\\
\left.  \binom{k}{i}\sum_{x}r(x)\left[  \log_{2}\!\left(  \frac{r(x)}%
{s(x)}\right)  \right]  ^{i}M_{k-i}(\rho_{B}^{x}\Vert\sigma_{B}^{x}%
)\right\vert _{i=k}  &  =M_{k}(r\Vert s).
\end{align}
In the special case that $r=s$, it follows that
\begin{equation}
M_{k}(\rho_{XB}\Vert\sigma_{XB})=\sum_{x}r(x)M_{k}(\rho_{B}^{x}\Vert\sigma
_{B}^{x}). \label{eq:direct-sim-kth-moment}
\end{equation}

\end{lemma}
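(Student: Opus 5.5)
We prove Lemma~\ref{lem:cq-moments} by direct computation, using the block-diagonal structure of classical--quantum operators. Since $\rho_{XB}$ and $\sigma_{XB}$ are both block diagonal in the fixed basis $\{|x\rangle\}$, the functional calculus acts blockwise. Assume ${\rm supp}\,\rho_{XB}\subseteq{\rm supp}\,\sigma_{XB}$, so that $s(x)>0$ and ${\rm supp}\,\rho^x_B\subseteq{\rm supp}\,\sigma^x_B$ whenever $r(x)>0$, and restrict attention to these $x$. The first step is to write
\begin{equation*}
\log\rho_{XB}-\log\sigma_{XB}=\sum_x |x\rangle\!\langle x|_X\otimes\Bigl(\log_2\!\bigl(\tfrac{r(x)}{s(x)}\bigr)\mathbbm{1}_B+\log\rho^x_B-\log\sigma^x_B\Bigr).
\end{equation*}
This uses $\log(c\,\omega)=(\log c)\,\Pi_\omega+\log\omega$ on the support of $\omega$, with the usual convention that $\log$ is taken on the support. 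Support projectors cause no trouble, since everything is ultimately traced against $\rho^x_B$.

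Next we take the $k$th power. Because the blocks are orthogonal, the $k$th power is again block diagonal, with $x$-block $(a_x\mathbbm{1}+L_x)^k$, where $a_x\coloneqq\log_2(r(x)/s(x))$ and $L_x\coloneqq\log\rho^x_B-\log\sigma^x_B$. The scalar $a_x\mathbbm{1}$ commutes with $L_x$, so the binomial theorem applies even though $L_x$ is an operator:
\begin{equation*}
(a_x\mathbbm{1}+L_x)^k=\sum_{i=0}^k\binom{k}{i}a_x^i L_x^{k-i}.
\end{equation*}

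We then multiply by $\rho_{XB}$ and take the trace. This gives
\begin{equation*}
\sum_x r(x)\sum_i\binom{k}{i}a_x^i\,\Tr[\rho^x_B L_x^{k-i}]=\sum_x r(x)\sum_i\binom{k}{i}a_x^i\,M_{k-i}(\rho^x_B\Vert\sigma^x_B),
\end{equation*}
which is the claimed formula after exchanging the two sums.

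Finally we check the boundary terms and the special case. For $i=0$ the coefficient is $1$ and we get $\sum_x r(x)M_k(\rho^x_B\Vert\sigma^x_B)$. For $i=k$ we have $M_0(\rho^x_B\Vert\sigma^x_B)=\Tr\rho^x_B=1$, so the term is $\sum_x r(x)a_x^k=M_k(r\Vert s)$, the classical $k$th moment. When $r=s$, every $a_x=0$, so only the $i=0$ term survives, which gives \eqref{eq:direct-sim-kth-moment}.

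The only delicate point is justifying the commutation of the scalar shift with $L_x$, which makes the binomial expansion valid, together with the support conventions when $\sigma^x_B$ is not full rank. Both are routine.
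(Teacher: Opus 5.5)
Your computation follows the paper's proof step for step. Both take the blockwise logarithm of the classical--quantum operators, expand $(a_x\mathbbm{1}+L_x)^k$ binomially because the scalar shift commutes with $L_x$, trace against $\rho_{XB}$, and read off the $i=0$, $i=k$ and $r=s$ cases.

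The one thing I would not accept is your claim that support projectors ``cause no trouble, since everything is ultimately traced against $\rho^x_B$.'' Write $P_x\le Q_x$ for the support projectors of $\rho^x_B$ and $\sigma^x_B$. Under the convention you state, the $x$-block of $\log\rho_{XB}-\log\sigma_{XB}$ is $\log_2 r(x)\,P_x-\log_2 s(x)\,Q_x+L_x$. This differs from $a_x\mathbbm{1}+L_x$ by $-a_xQ_x^{\perp}-\log_2 r(x)\,(Q_x-P_x)$.

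The $Q_x^{\perp}$ part is harmless, because $L_x$ vanishes there. The $(Q_x-P_x)$ part is not harmless once $k\ge 3$. Tracing against $\rho^x_B$ only kills the terms in which that projector sits (cyclically) next to $\rho^x_B$. For $k=3$ the surviving discrepancy is $-\log_2 r(x)\,\Tr[\rho^x_B L_x(Q_x-P_x)L_x]$. This is strictly positive whenever $r(x)<1$ and $\sigma^x_B$ does not commute with $P_x$.

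For a concrete case, take:
\begin{itemize}
\item $r=s=(\tfrac12,\tfrac12)$;
\item $\rho^0_B=|0\rangle\!\langle 0|$, with $\sigma^0_B$ full rank, non-diagonal, and with distinct eigenvalues;
\item $\rho^1_B=\sigma^1_B$ full rank.
\end{itemize}
This violates \eqref{eq:direct-sim-kth-moment} at $k=3$ by $\tfrac12|\langle 1|\log_2\sigma^0_B|0\rangle|^2$. So the identity holds unconditionally only for $k\le 2$. For $k\ge 3$ you need $[P_x,\sigma^x_B]=0$ for every $x$ with $0<r(x)<1$, for example equal supports, or $\rho^x_B$ commuting with $\sigma^x_B$.

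The paper's proof makes the same identification $\log_2(r(x)\rho^x_B)-\log_2(s(x)\sigma^x_B)=\log_2\!\bigl(r(x)/s(x)\bigr)\mathbbm{1}_B+\log_2\rho^x_B-\log_2\sigma^x_B$ without comment. In the paper's applications the relevant blocks do commute with their reference operators, so nothing downstream breaks. Even so, you should state this assumption rather than call the support issue routine.
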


\begin{proof}
Consider that%
\begin{align}
  M_{k}(\rho_{XB}\Vert\sigma_{XB})
&  =\operatorname{Tr}\!\left[  \left(  \sum_{x^{\prime}}r(x^{\prime})|x^{\prime
}\rangle\!\langle x^{\prime}|_{X}\otimes\rho_{B}^{x^{\prime}}\right)  \left(
\begin{array}
[c]{c}%
\log_{2}\!\left(  \sum_{x}r(x)|x\rangle\!\langle x|_{X}\otimes\rho_{B}%
^{x}\right) \\
-\log_{2}\!\left(  \sum_{x}s(x)|x\rangle\!\langle x|_{X}\otimes\sigma_{B}%
^{x}\right)
\end{array}
\right)  ^{k}\right] \\
&  =\sum_{x^{\prime}}r(x^{\prime})\operatorname{Tr}\!\left[
\begin{array}
[c]{c}%
\left(  |x^{\prime}\rangle\!\langle x^{\prime}|_{X}\otimes\rho_{B}^{x^{\prime
}}\right)  \times\\
\left(  \sum_{x}|x\rangle\!\langle x|_{X}\otimes\left[  \log_{2}\!\left(
r(x)\rho_{B}^{x}\right)  -\log_{2}\!\left(  s(x)\sigma_{B}^{x}\right)  \right]
\right)  ^{k}%
\end{array}
\right] \\
&  =\sum_{x^{\prime}}r(x^{\prime})\operatorname{Tr}\!\left[
\begin{array}
[c]{c}%
\left(  |x^{\prime}\rangle\!\langle x^{\prime}|_{X}\otimes\rho_{B}^{x^{\prime
}}\right)  \times\\
\left(  \sum_{x}|x\rangle\!\langle x|_{X}\otimes\left[  \log_{2}\!\left(
r(x)\rho_{B}^{x}\right)  -\log_{2}\!\left(  s(x)\sigma_{B}^{x}\right)  \right]
^{k}\right)
\end{array}
\right] \\
&  =\sum_{x}r(x)\operatorname{Tr}\!\left[  \rho_{B}^{x}\left[  \log_{2}\!\left(
r(x)\rho_{B}^{x}\right)  -\log_{2}\!\left(  s(x)\sigma_{B}^{x}\right)  \right]
^{k}\right] \\
&  =\sum_{x}r(x)\operatorname{Tr}\!\left[  \rho_{B}^{x}\left[  \log_{2}\!\left(
\frac{r(x)}{s(x)}\right)  \mathbbm{1}_{B}+\log_{2}\rho_{B}^{x}-\log_{2}\sigma_{B}%
^{x}\right]  ^{k}\right] \label{eq:useful-cq-form-Mk}\\
&  =\sum_{x}r(x)\operatorname{Tr}\!\left[  \rho_{B}^{x}\sum_{i=0}^{k}\binom
{k}{i}\left[  \log_{2}\!\left(  \frac{r(x)}{s(x)}\right)  \right]  ^{i}\left[
\log_{2}\rho_{B}^{x}-\log_{2}\sigma_{B}^{x}\right]  ^{k-i}\right] \\
&  =\sum_{x}r(x)\sum_{i=0}^{k}\binom{k}{i}\left[  \log_{2}\!\left(  \frac
{r(x)}{s(x)}\right)  \right]  ^{i}\operatorname{Tr}\!\left[  \rho_{B}^{x}\left[
\log_{2}\rho_{B}^{x}-\log_{2}\sigma_{B}^{x}\right]  ^{k-i}\right] \\
&  =\sum_{x}r(x)\sum_{i=0}^{k}\binom{k}{i}\left[  \log_{2}\!\left(  \frac
{r(x)}{s(x)}\right)  \right]  ^{i}M_{k-i}(\rho_{B}^{x}\Vert\sigma_{B}^{x}),
\end{align}
concluding the proof.
\end{proof}

\begin{corollary}
\label{cor:variance-cq-state}
For a classical--quantum state $\rho_{XB}$ and positive semi-definite operator
$\sigma_{XB}$ of the form in~\eqref{eq:cq-state-1}--\eqref{eq:cq-state-2}, the
following holds:%
\begin{equation}
V(\rho_{XB}\Vert\sigma_{XB})=\sum_{x}r(x)V(\rho_{B}^{x}\Vert\sigma_{B}%
^{x})+\operatorname{Var}[W],
\end{equation}
where $W$ is a random variable that takes a value $\log_{2}\!\left(  \frac
{r(x)}{s(x)}\right)  +D(\rho_{B}^{x}\Vert\sigma_{B}^{x})$ with probability
$r(x)$. In particular, if $r=s$, then%
\begin{equation}
V(\rho_{XB}\Vert\sigma_{XB})=\sum_{x}r(x)V(\rho_{B}^{x}\Vert\sigma_{B}%
^{x})+\operatorname{Var}[Z],
\end{equation}
where $Z$ is a random variable that takes a value $D(\rho_{B}^{x}\Vert
\sigma_{B}^{x})$ with probability $r(x)$.
\end{corollary}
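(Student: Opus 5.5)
The plan is to derive the corollary from Lemma~\ref{lem:cq-moments} with $k=1$ and $k=2$, using $V = M_2 - M_1^2$. For each $x$, write $\ell_x \coloneqq \log_2\!\left(\frac{r(x)}{s(x)}\right)$, $D_x \coloneqq D(\rho_B^x\Vert\sigma_B^x) = M_1(\rho^x_B\Vert\sigma^x_B)$ and $V_x \coloneqq V(\rho_B^x\Vert\sigma_B^x)$. Note that $M_0(\rho^x_B\Vert\sigma^x_B)=\Tr[\rho^x_B]=1$.

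First, for $k=1$ the lemma gives
\begin{equation}
M_1(\rho_{XB}\Vert\sigma_{XB}) = \sum_x r(x)\left(\ell_x + D_x\right) = \mathbb{E}[W],
\end{equation}
which is the familiar chain rule for the relative entropy.

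Second, for $k=2$ the lemma gives
\begin{equation}
M_2(\rho_{XB}\Vert\sigma_{XB}) = \sum_x r(x)\left(M_2(\rho^x_B\Vert\sigma^x_B) + 2\ell_x D_x + \ell_x^2\right).
\end{equation}
Substituting $M_2(\rho^x_B\Vert\sigma^x_B) = V_x + D_x^2$ and completing the square turns the summand into $V_x + (\ell_x+D_x)^2$. Hence
\begin{equation}
M_2(\rho_{XB}\Vert\sigma_{XB}) = \sum_x r(x) V_x + \mathbb{E}[W^2].
\end{equation}

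Subtracting the square of the first identity then gives
\begin{equation}
V(\rho_{XB}\Vert\sigma_{XB}) = \sum_x r(x)V_x + \mathbb{E}[W^2]-\mathbb{E}[W]^2,
\end{equation}
and the last two terms are exactly $\operatorname{Var}[W]$. The special case $r=s$ follows at once, since then $\ell_x=0$ and $W$ reduces to $Z$.

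There is no real obstacle. The only points needing care are support conditions: the lemma requires $\mathrm{supp}\,\rho^x_B\subseteq\mathrm{supp}\,\sigma^x_B$ and $s(x)>0$ whenever $r(x)>0$. Terms with $r(x)=0$ drop out of every sum, so they cause no trouble.
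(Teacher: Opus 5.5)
Your proof is correct and follows essentially the same route as the paper: both compute $M_2(\rho_{XB}\Vert\sigma_{XB})$ via Lemma~\ref{lem:cq-moments}, obtain $\sum_x r(x)V(\rho^x_B\Vert\sigma^x_B)+\mathbb{E}[W^2]$, and subtract $D(\rho_{XB}\Vert\sigma_{XB})^2=\mathbb{E}[W]^2$. The only cosmetic difference is in how the square is handled. The paper works from the intermediate trace form of the lemma and centers $\log_2\rho^x_B-\log_2\sigma^x_B$ by $D(\rho^x_B\Vert\sigma^x_B)$ inside the trace, so the cross term vanishes. You instead use the binomial form with $M_0=1$ and complete the square at the scalar level.
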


\begin{proof}
Applying~\eqref{eq:useful-cq-form-Mk} of Lemma~\ref{lem:cq-moments}, consider
that%
\begin{align}
  M_{2}(\rho_{XB}\Vert\sigma_{XB})
&  =\sum_{x}r(x)\operatorname{Tr}\!\left[  \rho_{B}^{x}\left(  \log_{2}\!\left(
\frac{r(x)}{s(x)}\right)  I_{B}+\left(  \log_{2}\rho_{B}^{x}-\log_{2}%
\sigma_{B}^{x}\right)  \right)  ^{2}\right] \\
&  =\sum_{x}r(x)\operatorname{Tr}\!\left[  \rho_{B}^{x}\left(
\begin{array}
[c]{c}%
\left[  \log_{2}\!\left(  \frac{r(x)}{s(x)}\right)  +D(\rho_{B}^{x}\Vert
\sigma_{B}^{x})\right]  I_{B}\\
+\left(  \log_{2}\rho_{B}^{x}-\log_{2}\sigma_{B}^{x}-D(\rho_{B}^{x}\Vert
\sigma_{B}^{x})\right)
\end{array}
\right)  ^{2}\right] \\
&  =\sum_{x}r(x)\operatorname{Tr}\!\left[  \rho_{B}^{x}\left(
\begin{array}
[c]{c}%
\left[  \log_{2}\!\left(  \frac{r(x)}{s(x)}\right)  +D(\rho_{B}^{x}\Vert
\sigma_{B}^{x})\right]  ^{2}I_{B}\\
+2\left[  \log_{2}\!\left(  \frac{r(x)}{s(x)}\right)  +D(\rho_{B}^{x}\Vert
\sigma_{B}^{x})\right]  \times\\
\left(  \log_{2}\rho_{B}^{x}-\log_{2}\sigma_{B}^{x}-D(\rho_{B}^{x}\Vert
\sigma_{B}^{x})\right) \\
+\left(  \log_{2}\rho_{B}^{x}-\log_{2}\sigma_{B}^{x}-D(\rho_{B}^{x}\Vert
\sigma_{B}^{x})\right)  ^{2}%
\end{array}
\right)  \right] \\
&  =\sum_{x}r(x)\left[  \log_{2}\!\left(  \frac{r(x)}{s(x)}\right)  +D(\rho
_{B}^{x}\Vert\sigma_{B}^{x})\right]  ^{2}+\sum_{x}r(x)V(\rho_{B}^{x}%
\Vert\sigma_{B}^{x}).
\end{align}
Recall that%
\begin{equation}
D(\rho_{XB}\Vert\sigma_{XB})=\sum_{x}r(x)\left[  \log_{2}\!\left(  \frac
{r(x)}{s(x)}\right)  +D(\rho_{B}^{x}\Vert\sigma_{B}^{x})\right]  .
\end{equation}
Observing the first term of $M_{2}(\rho_{XB}\Vert\sigma_{XB})$ is the second
moment of a random variable that takes a value $\log_{2}\!\left(  \frac
{r(x)}{s(x)}\right)  +D(\rho_{B}^{x}\Vert\sigma_{B}^{x})$ with probability
$r(x)$ and $D(\rho_{XB}\Vert\sigma_{XB})$ is its expectation, then we find
that
\begin{align}
  V(\rho_{XB}\Vert\sigma_{XB})
&  =M_{2}(\rho_{XB}\Vert\sigma_{XB})-\left[  D(\rho_{XB}\Vert\sigma
_{XB})\right]  ^{2}\\
&  =\sum_{x}r(x)\left[  \log_{2}\!\left(  \frac{r(x)}{s(x)}\right)  +D(\rho
_{B}^{x}\Vert\sigma_{B}^{x})\right]  ^{2}+\sum_{x}r(x)V(\rho_{B}^{x}%
\Vert\sigma_{B}^{x})\nonumber\\
&  \qquad-\left[  D(\rho_{XB}\Vert\sigma_{XB})\right]  ^{2}\\
&  =\sum_{x}r(x)\left[  \log_{2}\!\left(  \frac{r(x)}{s(x)}\right)  +D(\rho
_{B}^{x}\Vert\sigma_{B}^{x})-D(\rho_{XB}\Vert\sigma_{XB})\right]
^{2}\nonumber\\
&  \qquad+\sum_{x}r(x)V(\rho_{B}^{x}\Vert\sigma_{B}^{x}).
\end{align}
This concludes the proof.
\end{proof}

To capture the third-order contribution in finite blocklength expansions, we follow~\cite{Nussbaum_2009, li2014second}. Let $\rho, \sigma \in \mathcal{D}(\mathcal{H})$ have spectral decompositions $\rho = \sum_x p(x)\ket{\psi^x}\!\bra{\psi^x}$ and $\sigma = \sum_y q(y)\ket{\phi^y}\!\bra{\phi^y}$ for some orthonormal bases $\{\ket{\psi^x}\}_x$ and $\{\ket{\phi^y}\}_y$. If $(X,Y)$ denotes the pair of random variables on alphabet $\{(x,y)\}_{x,y=1}^{d_\mathcal{H}}$ with joint distribution $p_{X,Y}(x,y) = p(x)\,|\!\braket{\phi^y|\psi^x}\!|^2$, then \cite[Lemma~3]{li2014second}:\begin{equation}
    D(\rho\|\sigma) = \mathbb{E}_{(X,Y)}\left[\log\left(\frac{p(X)}{q(Y)}\right)\right],
\end{equation}
and\begin{equation}
\label{eq:variance-relative-entropy-spectrum}
    V(\rho\|\sigma) = \mathrm{Var}_{(X,Y)}\left[\log\left(\frac{p(X)}{q(Y)}\right)\right].
\end{equation}

We define the \textit{third absolute moment of the relative entropy} as
\begin{equation}
\label{eq:third_order_term_definition}
    T^3(\rho \| \sigma) \coloneqq \mathbb{E}_{(X,Y)}\!\left|\log\!\left( \frac{p(X)}{q(Y)}\right) - D(\rho \| \sigma)  \right|^3.
\end{equation}
Note that, unlike $D$ and $V$, this quantity cannot be expressed as a polynomial of the moments $\{M_k\}$ due to the absolute value.

\subsubsection{Second-Order Expansion of Coherent Information}

For quantum communication tasks, we define the \textit{coherent information $k$th moment} of a bipartite state $\sigma_{RB}$ as
\begin{equation}
M_k(R\rangle B)_\sigma \coloneqq M_k(\sigma_{RB} \| \mathbbm{1}_R \otimes \sigma_B),
\end{equation}
where $\sigma_B = \Tr_R[\sigma_{RB}]$. Setting $k=1$ recovers the coherent information $I(R\rangle B)_\sigma$, while the \textit{coherent information variance} is
\begin{align}
\label{eq:coherent_entropy_variance}
V(R \rangle B)_\sigma & \coloneqq  V(\sigma_{RB}\|\mathbbm{1}_R \otimes \sigma_B) = M_2(R\rangle B)_\sigma - [I(R\rangle B)_\sigma]^2.
\end{align}
The \textit{third absolute moment of the coherent information} is defined from~\eqref{eq:third_order_term_definition} analogously:
\begin{equation}
\label{eq:third-order-coh-absolute-moment}
T^3(R\rangle B)_\sigma \coloneqq T^3(\sigma_{RB} \| \mathbbm{1}_R \otimes \sigma_B).
\end{equation}
At the channel level, the channel's coherent information variance~\cite{Tomamichel2016} is:
\begin{equation}
    V_c^\varepsilon (\mathcal{N}) \coloneqq \begin{cases}\min_{\psi_{RA} \in \Pi }V(R\rangle B)_{\sigma} & \text{if } \varepsilon  \leq \frac{1}{2},    \\
    \max_{\psi_{RA}\in \Pi }V(R\rangle B)_{\sigma} & \text{if } \varepsilon  \geq \frac{1}{2},
    \end{cases}
\end{equation}
where the set $\Pi \subseteq \mathcal{D}(\mathcal{H}_{RA})$ contains all states achieving the maximum in the coherent information~\eqref{eq:coherentInfoChannel}. Then, one has:
\begin{equation}
\label{eq:lowerBound_analytic_n_shot}
  Q^{\varepsilon,n}(\mathcal{N}) \geq  I_c(\mathcal{N}) + \sqrt{\frac{V_c^\varepsilon (\mathcal{N})}{n}} \Phi^{-1}(\varepsilon) + \mathcal{O}\!\left(\frac{\log n}{n}\right),
\end{equation}
where $\Phi^{-1}(\varepsilon)\coloneqq \sup\{y \in \mathbb{R}: \Phi(y) \leq \varepsilon \}$  is the inverse cumulative Gaussian distribution function (recall that  $\Phi(x) \coloneqq \frac{1}{\sqrt{2\pi}} \int_{-\infty}^x e^{-t^2/2}\, dt$).

As a preliminary analysis of superactivation, an initial strategy would be to compare~\eqref{eq:lowerBound_analytic_n_shot} with the upper bounds~\eqref{eq:outer_bound_erasure} and~\eqref{eq:outer_bound_PPT}, to understand for a given error threshold $\varepsilon$, what is the value of $n$ for which the former exceeds both of the latter.

To apply~\eqref{eq:lowerBound_analytic_n_shot}, we need to explicitly write the third-order contribution hidden in the term $\mathcal{O}\!\left(\frac{\log n}{n}\right)$.  The channel's third absolute coherent information moment is:
\begin{equation}
    T_c^3(\mathcal{N}) \coloneqq \min_{\psi_{RA} \in \Pi} T^3(R \rangle B)_\sigma.
\end{equation}
Using the Berry--Esseen theorem following the analysis of~\cite{li2014second}, the right-hand side can be rewritten as:
\begin{equation}
\label{eq:Berry_Esseen_bound}
I_c(\mathcal{N}) + \sqrt{\frac{V_c^\varepsilon (\mathcal{N})}{n}} \Phi^{-1}\!\left(\varepsilon - \frac{1}{\sqrt{n}}\frac{C\, T_c^3(\mathcal{N})}{V_c^\varepsilon(\mathcal{N})^{3/2}}\right),
\end{equation}
where $C \in [0.40973,\, 0.4784]$ is a constant~\cite{korolev2012improvement}.

We highlight that, since~\eqref{eq:Berry_Esseen_bound} is derived starting from second-order asymptotics and the Berry--Esseen theorem, it is applicable for sufficiently large $n$ (i.e., $n= O(1/\sqrt{\varepsilon})$), but it can become inapplicable for small values of $n$. More precisely, the bound requires
\begin{equation}
\varepsilon  - \frac{1}{\sqrt{n}}
\frac{C T^3}{V_c^\varepsilon \!\left(\mathcal{N}\right)^{3/2}} \ge 0,
\end{equation}
so that the argument of the inverse cumulative distribution function is non-negative. We should also point out that~\eqref{eq:lowerBound_analytic_n_shot}  does not give an explicit encoding/decoding protocol for the joint channel.
In our case, the channel $\mathcal{N}$ in~\eqref{eq:lowerBound_analytic_n_shot} is the tensor product $\mathcal{P}\otimes \mathcal{A}$, and in the coherent information~\eqref{eq:coherentInfoChannel}, we maximize over pure states $\psi_{RA}$ with $d_R = d_A = 16$; similarly for the second and third-order coherent information moments ($V^\epsilon_c(\mathcal{P}\otimes \mathcal{A})$ and $T_c(\mathcal{P}\otimes \mathcal{A})$). Given the large dimensions involved in the underlying optimization problems, we need to come up with a simplifying protocol that reduces the complexity of the evaluation of the entropic quantities.


\section{Protocol for Effective Channel}
\label{sec:effective_protocol}

\begin{figure}[h!]
\centering
\begin{tikzpicture}[>=stealth, scale=1.3]
\tikzset{
    qarrow/.style={->, line width=1pt},
    block/.style={draw, minimum width=1.2cm, minimum height=4.0cm},
    procblock/.style={draw, minimum width=1.0cm, minimum height=4.0cm},
    channel/.style={draw, minimum width=1.4cm, minimum height=1.6cm},
    chanlabel/.style={font=\large},
    dimlabel/.style={font=\small, above=8pt}
}
\def\xshift{-1}

\draw[draw=gray!70, thick, dashed, rounded corners=10pt, fill=gray!8]
    ({\xshift+1.7},2.6) rectangle ({\xshift+7.0},7.4);
\node[text=gray!70!black] at ({\xshift+4.35},5.0) {\Large$\widetilde{\mathcal{N}}$};

\draw[block] (\xshift,3.0) rectangle ({\xshift+1},7.0);
\node at ({\xshift+0.5},5.0) {\LARGE $\mathcal{E}$};

\draw[procblock] ({\xshift+2.0},3.0) rectangle ({\xshift+3.0},7.0);
\node at ({\xshift+2.5},5.0) {\Large $\widetilde{\mathcal{E}}$};

\draw[channel] ({\xshift+3.8},5.7) rectangle ({\xshift+5.0},7.0);
\node[chanlabel] at ({\xshift+4.4},6.35) {\Large $\mathcal{P}$};

\draw[channel] ({\xshift+3.8},3.0) rectangle ({\xshift+5.0},4.3);
\node[chanlabel] at ({\xshift+4.4},3.65) {\Large $\mathcal{A}$};

\draw[procblock] ({\xshift+5.8},3.0) rectangle ({\xshift+6.8},7.0);
\node at ({\xshift+6.3},5.0) {\Large $\widetilde{\mathcal{D}}$};

\draw[block] ({\xshift+7.8},3.0) rectangle ({\xshift+8.8},7.0);
\node at ({\xshift+8.3},5.0) {\LARGE $\mathcal{D}$};


\draw[qarrow] ({\xshift-0.8},5.0) -- (\xshift,5.0);

\draw[qarrow] ({\xshift+1},5.0) -- ({\xshift+2.0},5.0)
    node[dimlabel, midway] {$d_{A_1'}\!=\!2$};

\draw[qarrow] ({\xshift+3.0},6.35) -- ({\xshift+3.8},6.35)
    node[dimlabel, midway] {$d_{A_1}\!=\!4$};

\draw[qarrow] ({\xshift+3.0},3.65) -- ({\xshift+3.8},3.65)
    node[dimlabel, midway] {$d_{A_2}\!=\!4$};

\draw[qarrow] ({\xshift+5.0},6.35) -- ({\xshift+5.8},6.35)
    node[dimlabel, midway] {$d_{B_1}\!=\!4$};

\draw[qarrow] ({\xshift+5.0},3.65) -- ({\xshift+5.8},3.65)
    node[dimlabel, midway] {$d_{B_2}\!=\!5$};

\draw[qarrow] ({\xshift+6.8},6.35) -- ({\xshift+7.8},6.35)
    node[dimlabel, midway] {$d_{B_1'}\!=\!2$};

\draw[->, line width=2pt] ({\xshift+6.8},3.65) -- ({\xshift+7.8},3.65)
    node[dimlabel, midway] {$d_Z\!=\!2$};

\draw[qarrow] ({\xshift+8.8},5.0) -- ({\xshift+9.6},5.0);

\end{tikzpicture}
\caption{Protocol for joint channel.}
\label{fig:Protocol_joint}
\end{figure}

The idea of the underlying protocol for our two channels of interest is shown in Fig.~\ref{fig:Protocol_joint}.
Before showing the proof of Proposition~\ref{prop:equivalent_channel}, we recall some basic notions from the task of secret key distillation~\cite{DevetakWinter2005Distillation,Horodecki2005SecureKey,Horodecki2008PrivateStates} to define the notion of twisting and untwisting unitary.

\subsection{Secret-Key Distillation}
\subsubsection{Private States}

Secret key distillation is a fundamental cryptographic task in which two communication parties share a bipartite state $\rho_{A_1 B_1}$, whose purification is $\psi_{A_1 B_1 E}$ (the system $E$ is supposed to be under the possession of an eavesdropper Eve), and their goal is to perform  LOCC in order to transform it to a state that approximates an ideal secret key of dimension $K$, which is of the form $\gamma_{A_1 B_1 E} \approx \overline{\Phi^K}_{A_1 B_1} \otimes \sigma_E$, where
\begin{equation}\label{eq:key-state-example}
    \overline{\Phi^K}_{A_1 B_1} \coloneqq \frac{1}{K}\sum_{i=0}^{d-1} \ket{i}\!\bra{i}_{A_1} \otimes \ket{i}\!\bra{i}_{B_1}
\end{equation}
is a maximally classically correlated state and $\sigma_E$ is some state of Eve's system.

An equivalent notion of private state, which motivates our construction, is given by the following~\cite{Horodecki2005SecureKey,Horodecki2008PrivateStates}. Let  $A_1 = A_1' A_1''$ and $B_1 = B_1' B_1''$, where the systems $A_1'$ and $B_1'$ are called the 
\textit{key systems}, and the systems $A_1''$ and $B_1''$ are called the \textit{shield systems}. A state $\gamma_{A_1'B_1'A_1''B_1''}$ is called a \textit{bipartite private state} of dimension $K$, containing $\log K$ bits of secrecy, if after purifying $\gamma_{A_1'B_1'A_1''B_1''}$  to a state $\psi_{A_1'B_1'A_1''B_1''E}$ and tracing over the shield systems $A_1''B_1''$,  the resulting state $\psi_{A_1'B_1'E}$  is a secret key state of dimension $K$, as in~\eqref{eq:key-state-example}.

This definition of private state is useful because it can be characterized in the following form (see, e.g., \cite[Theorem~15.5]{Khatri2020Principles}):
\begin{equation}
\label{eq:characterization_private_states}
\gamma_{A_1'B_1'A_1''B_1''} =U_{A_1'B_1'A_1''B_1''}\big(\Phi^K_{A_1'B_1'}\otimes\sigma_{A_1''B_1''}\big)U^\dagger_{A_1'B_1'A_1''B_1''},
\end{equation}
where $\Phi^K_{A_1'B_1'}$ is an MES of dimension $K$ and $\sigma_{A_1''B_1''}$ is an arbitrary state of the shield systems.
The unitary $U_{A_1'B_1'A_1''B_1''}$ is called a  \textit{global twisting unitary} and has the form
\begin{equation}
\label{eq:twisting_unitary}
U_{A_1'B_1'A_1''B_1''} =\sum_{i,j=0}^{K-1}\ket{i}\!\bra{i}_{A_1'}\otimes\ket{j}\!\bra{j}_{B_1'}\otimes U_{A_1''B_1''}^{ij},
\end{equation}
where $U_{A_1''B_1''}^{ij}$ is unitary  for all $i,j \in \{0,\ldots,K-1\}$. Notice that $d_{A_1'} = d_{B_1'} = K$, but we make no assumptions about the dimensions $d_{A_1''} =  d_{B_1''} \eqqcolon d_s$ of the shield systems. A quantum channel $\mathcal{N}_{A_1 \to B_1}$ whose Choi state $\Phi^{\mathcal{N}}_{A_1 B_1}$ is a private state in the sense of~\eqref{eq:characterization_private_states} is called a \textit{private channel}. If such a state is also a PPT state, then it is called a private Horodecki channel. These channels output PPT states that can be used to distill a shared key between the parties holding $A_1$ and $B_1$. More formally, in analogy with the notion of entanglement test, defined by the projections $\{\Phi^d_{A_1 B_1}, \mathbbm{1}_{A_1 B_1} - \Phi^d_{A_1 B_1}\}$, we can define a \textit{privacy test} by the projections $\{\Pi_{A_1'B_1'A_1''B_1''}, 
\mathbbm{1}_{A_1'B_1'A_1''B_1''} - \Pi_{A_1'B_1'A_1''B_1''}\}$ (see, e.g.,~\cite{wilde2017converse}), 
where
\begin{equation}
\Pi_{A_1'B_1'A_1''B_1''} \coloneqq U_{A_1'B_1'A_1''B_1''} \big( \Phi^K_{A_1'B_1'} \otimes \mathbbm{1}_{A_1''B_1''} \big) U^\dagger_{A_1'B_1'A_1''B_1''}.
\end{equation}

Operationally, this test corresponds to  \textit{untwisting} the private state and projecting onto a maximally  entangled state on the key systems $A_1'B_1'$.  While all PPT states fail to pass an entanglement test with high probability (see~\eqref{eq:PPT_fidelity}), there exist some of them that are private in the sense specified above.

\subsubsection{Mixtures of Private States}

Let $\{\gamma^\ell_{A_1'B_1'A_1''B_1''}\}_{\ell=0}^{K-1}$ be a set of $K$ 
private states, each given by
\begin{equation}
\gamma^\ell_{A_1'B_1'A_1''B_1''} = U^\ell_{A_1'B_1'A_1''B_1''} \big((\Phi^K_{A_1'B_1'})^\ell \otimes \sigma^\ell_{A_1''B_1''}\big)  (U^\ell_{A_1'B_1'A_1''B_1''})^\dagger,
\end{equation}
where
\begin{align}
(\Phi^K_{A_1'B_1'})^\ell &\coloneqq  X_{B_1'}^{\ell}\, \Phi^K_{A_1'B_1'}\, X_{B_1'}^{-\ell}, \\
X_{B_1'}^{\ell} &\coloneqq \sum_{x=0}^{K-1} \ket{x \oplus \ell}\!\bra{x}_{B_1'}, \\
U^\ell_{A_1'B_1'A_1''B_1''} &\coloneqq \sum_{i,j=0}^{K-1} 
\ket{i}\!\bra{i}_{A_1'} \otimes 
\ket{j \oplus \ell}\!\bra{j \oplus \ell}_{B_1'} \otimes 
U^{ij,\ell}_{A_1''B_1''}.
\end{align}
The operator $X_{B_1'}^{\ell}$ is the generalized shift operator, with 
$X_{B_1'}^{-\ell} = (X_{B_1'}^{\ell})^\dagger$. The set 
$\{(\Phi^K_{A_1'B_1'})^\ell\}_{\ell=0}^{K-1}$ is an orthogonal set of states, 
i.e., $(\Phi^K)^\ell (\Phi^K)^{\ell'} = \delta_{\ell,\ell'} (\Phi^K)^\ell$. 
Expanding explicitly:
\begin{equation}
\label{eq:gamma_ell_explicit}
\gamma^\ell_{A_1'B_1'A_1''B_1''} = \frac{1}{K} \sum_{i,j=0}^{K-1} 
\ket{i}\!\bra{j}_{A_1'} \otimes 
X_{B_1'}^{\ell} \ket{i}\!\bra{j}_{B_1'} X_{B_1'}^{-\ell} \otimes 
U^{i,\ell}_{A_1''B_1''}\, \sigma^\ell_{A_1''B_1''}\, 
(U^{j,\ell}_{A_1''B_1''})^\dagger.
\end{equation}

Now let $\omega_{A_1'B_1'A_1''B_1''}$ denote the following state:
\begin{equation}
\label{eq:choi_state_generic_PPT_channel}
\omega_{A_1'B_1'A_1''B_1''} \coloneqq \sum_{\ell=0}^{K-1} p_\ell\, 
\gamma^\ell_{A_1'B_1'A_1''B_1''},
\end{equation}
where $\{p_\ell\}_\ell$ is a probability distribution. Observe that
\begin{equation}
\omega_{A_1'B_1'A_1''B_1''} = V_{A_1'B_1'A_1''B_1''} \left( 
\sum_{\ell=0}^{K-1} p_\ell\, (\Phi^K_{A_1'B_1'})^\ell \otimes 
\sigma^\ell_{A_1''B_1''} \right) (V_{A_1'B_1'A_1''B_1''})^\dagger,
\end{equation}
where
\begin{equation}
\label{eq:twisting-unitaries-shift}
V_{A_1'B_1'A_1''B_1''} \coloneqq \sum_{\ell=0}^{K-1} \sum_{i=0}^{K-1}  \ket{i}\!\bra{i}_{A_1'} \otimes \ket{i \oplus \ell}\!\bra{i \oplus \ell}_{B_1'} \otimes 
U^{i,\ell}_{A_1''B_1''}.
\end{equation}

\begin{lemma}
\label{lem:key-attacked}
Let $\omega_{A_1'B_1'A_1''B_1''} = \sum_{\ell=0}^{K-1} p_\ell\, 
\gamma^\ell_{A_1'B_1'A_1''B_1''}$ be a mixture of shifted private states 
as in~\eqref{eq:choi_state_generic_PPT_channel}. If one performs the completely dephasing channel $\overline{\Delta}(\cdot) \coloneqq \sum_{i=0}^{K-1} \ket{i}\!\bra{i}(\cdot)\ket{i}\!\bra{i}$ on the key 
system $A_1'$ and traces out the shield systems $A_1''B_1''$, the 
resulting state on the key systems $A_1'B_1'$ is:
\begin{equation}
\label{eq:key-attacked}
    (\overline{\Delta}_{A_1'} \otimes \operatorname{Tr}_{A_1''B_1''})
    (\omega_{A_1'B_1'A_1''B_1''}) 
    = \sum_{\ell=0}^{K-1} p_\ell\, \overline{\Phi^K}_{A_1'B_1'}^{\ell},
\end{equation}
where $\overline{\Phi^K}$ is a maximally correlated state of dimension $K$ and:
\begin{align}
    \overline{\Phi^K}_{A_1'B_1'}^{\ell} &\coloneqq 
    X_{B_1'}^{\ell}\, \overline{\Phi^K}_{A_1'B_1'}\, X_{B_1'}^{-\ell}
    = \frac{1}{K} \sum_{i=0}^{K-1} \ket{i}\!\bra{i}_{A_1'} \otimes 
    \ket{i \oplus \ell}\!\bra{i \oplus \ell}_{B_1'}.
\end{align}
\end{lemma}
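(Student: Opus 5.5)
The computation is linear in $\omega$, so I will prove the claim for a single shifted private state $\gamma^\ell$ and then sum with weights $p_\ell$. The starting point is the explicit expansion~\eqref{eq:gamma_ell_explicit}. There the twisting acts on the shield only through the unitaries $U^{i,\ell}_{A_1''B_1''}$ and $U^{j,\ell}_{A_1''B_1''}$, which are indexed by the $A_1'$ basis labels $i,j$. The label of $B_1'$ is fixed to $i\oplus\ell$ by the shift, which is exactly why $V$ in~\eqref{eq:twisting-unitaries-shift} is block diagonal with a single block per $(i,\ell)$.

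First I apply $\overline{\Delta}_{A_1'}$. This projects $\ket{i}\!\bra{j}_{A_1'}$ onto $\delta_{ij}\ket{i}\!\bra{i}_{A_1'}$, so only the diagonal terms $i=j$ survive. On these terms the shield operator becomes $U^{i,\ell}\sigma^\ell (U^{i,\ell})^\dagger$. In parallel, $X^\ell_{B_1'}\ket{i}\!\bra{i}X^{-\ell}_{B_1'}=\ket{i\oplus\ell}\!\bra{i\oplus\ell}_{B_1'}$. Next I trace out $A_1''B_1''$. Since each $U^{i,\ell}$ is unitary and $\sigma^\ell$ is a state, cyclicity of the trace gives
\begin{equation}
\operatorname{Tr}_{A_1''B_1''}\!\big[U^{i,\ell}\sigma^\ell (U^{i,\ell})^\dagger\big]=1 .
\end{equation}
This leaves $\frac{1}{K}\sum_i \ket{i}\!\bra{i}_{A_1'}\otimes\ket{i\oplus\ell}\!\bra{i\oplus\ell}_{B_1'}$, which is precisely $\overline{\Phi^K}^{\ell}_{A_1'B_1'}$. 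Summing over $\ell$ with weights $p_\ell$ then yields~\eqref{eq:key-attacked}.

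The only step needing care is whether~\eqref{eq:gamma_ell_explicit} really matches the definition of $\gamma^\ell$ through $U^\ell$. In that definition the twisting unitary is controlled by the $B_1'$ index $j\oplus\ell$. After the shift, however, $\Phi^K$ has support only on pairs $(i, i\oplus\ell)$, so the controls collapse to a single index $i$, and $U^{ij,\ell}$ reduces to $U^{i,\ell}$ with the label of $B_1'$ determined. I would verify this briefly by expanding $(\Phi^K)^\ell=\frac1K\sum_{i,j}\ket{i}\!\bra{j}\otimes\ket{i\oplus\ell}\!\bra{j\oplus\ell}$ and applying the block-diagonal unitary. Once that holds, every remaining step uses only the fact that the dephasing kills off-diagonal terms and that the trace is invariant under unitary conjugation.
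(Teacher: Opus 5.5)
Your proof is correct and follows the paper's argument essentially step for step: expand each $\gamma^\ell$ via~\eqref{eq:gamma_ell_explicit}, let $\overline{\Delta}_{A_1'}$ remove the $i\neq j$ terms, use $\operatorname{Tr}[U^{i,\ell}\sigma^\ell(U^{i,\ell})^\dagger]=1$ to trace out the shield, and then sum over $\ell$ with weights $p_\ell$. You also check that the controlled unitaries $U^{ij,\ell}$ reduce to $U^{i,\ell}\equiv U^{ii,\ell}$ on the support of $(\Phi^K)^\ell$; this check is correct, and it is a consistency point the paper takes for granted.
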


\begin{proof}
Using the explicit expansion~\eqref{eq:gamma_ell_explicit} of each 
$\gamma^\ell$, we compute:
\begin{align}
    \overline{\Delta}_{A_1'}(\omega_{A_1'B_1'A_1''B_1''})
    &= \overline{\Delta}_{A_1'}\!\left(\sum_{\ell=0}^{K-1} p_\ell\, 
    \gamma^\ell_{A_1'B_1'A_1''B_1''}\right) \nonumber\\
    &= \sum_{\ell=0}^{K-1} p_\ell \frac{1}{K} \sum_{i,j=0}^{K-1} 
    \overline{\Delta}_{A_1'}(\ket{i}\!\bra{j}_{A_1'}) \otimes 
    X_{B_1'}^{\ell}\ket{i}\!\bra{j}_{B_1'}X_{B_1'}^{-\ell} \otimes 
    U^{i,\ell}_{A_1''B_1''}\, \sigma^\ell_{A_1''B_1''}\, 
    (U^{j,\ell}_{A_1''B_1''})^\dagger \nonumber\\
    &= \sum_{\ell=0}^{K-1} p_\ell \frac{1}{K} \sum_{i=0}^{K-1} 
    \ket{i}\!\bra{i}_{A_1'} \otimes 
    X_{B_1'}^{\ell}\ket{i}\!\bra{i}_{B_1'}X_{B_1'}^{-\ell} \otimes 
    U^{i,\ell}_{A_1''B_1''}\, \sigma^\ell_{A_1''B_1''}\, 
    (U^{i,\ell}_{A_1''B_1''})^\dagger,
\end{align}
where in the last step we used 
$\overline{\Delta}(\ket{i}\!\bra{j}) = \delta_{ij}\ket{i}\!\bra{i}$. In the literature, this state is called a 
``key-attacked state'' (see, e.g.,~\cite{Christandl_2017}). Tracing over the shield systems $A_1''B_1''$ and using $\operatorname{Tr}[U^{i,\ell}\, \sigma^\ell\, 
(U^{i,\ell})^\dagger] = \operatorname{Tr}[\sigma^\ell] = 1$, we find that
\begin{align}
    (\overline{\Delta}_{A_1'} \otimes \operatorname{Tr}_{A_1''B_1''})
    (\omega_{A_1'B_1'A_1''B_1''})
    &= \sum_{\ell=0}^{K-1} p_\ell \frac{1}{K} \sum_{i=0}^{K-1} 
    \ket{i}\!\bra{i}_{A_1'} \otimes 
    X_{B_1'}^{\ell}\ket{i}\!\bra{i}_{B_1'}X_{B_1'}^{-\ell} \nonumber\\
    &= \sum_{\ell=0}^{K-1} p_\ell\, 
    \overline{\Phi^K}_{A_1'B_1'}^{\ell},
\end{align}
concluding the proof.
\end{proof}

In the case of interest considered in the main text, both the key and shield systems are qubit systems,
i.e.\ $d_{A_1'} = d_{A_1''} = d_{B_1'} = d_{B_1''} = 2$. Setting $\ell = 0$
in~\eqref{eq:gamma_ell_explicit}, a generic (unshifted) private state takes
the following explicit form:
\begin{multline}
\gamma_{A_1'B_1'A_1''B_1''} = \frac{1}{2}\ket{00}\!\bra{00}_{A_1'B_1'}
\otimes U^{0}\sigma\, (U^{0})^\dagger 
+ \frac{1}{2}\ket{00}\!\bra{11}_{A_1'B_1'}\otimes U^{0}\sigma\, 
(U^{1})^\dagger \\
+ \frac{1}{2}\ket{11}\!\bra{00}_{A_1'B_1'}\otimes U^{1}\sigma\, 
(U^{0})^\dagger 
+ \frac{1}{2}\ket{11}\!\bra{11}_{A_1'B_1'}\otimes U^{1}\sigma\, 
(U^{1})^\dagger,
\end{multline}
where we wrote $U^{i} \equiv U^{i,0}_{A_1''B_1''}$ for brevity. Thus, a 
private state of the above form is described by two unitaries $U^0, U^1$ on
the shield systems and an arbitrary shield state $\sigma_{A_1''B_1''}$.

We now show that the Choi state $\Phi^{\mathcal{P}}_{A_1 B_1}$ of the Horodecki
channel in~\eqref{eq:choi-state-horo} is a mixture of two shifted private
states.
\begin{lemma}
\label{lem:choi-as-mixture}
The Choi state $\Phi^{\mathcal{P}}_{RB}$ of the Horodecki channel 
in~\eqref{eq:choi-state-horo} is a mixture of two shifted private states:
\begin{equation}
\label{eq:choi-as-mixture}
\Phi^{\mathcal{P}}_{RB} = (1-p)\,\gamma^0_{R'B'R''B''} + p\,
\gamma^1_{R'B'R''B''},
\end{equation}
with $p = 1/(1+\sqrt{2})$, where $\gamma^0$ and $\gamma^1$ are private 
states of the form~\eqref{eq:gamma_ell_explicit} with parameters:
\begin{align}
\label{eq:sector0-params}
&\textit{Sector } \ell = 0: &
\sigma^0 &= \pi_4, &
U^{0,0} &= \mathbbm{1}_4, &
U^{1,0} &= \begin{pmatrix}
    1 & 0 & 0 & 0\\
    0 & 0 & 1 & 0\\
    0 & 1 & 0 & 0\\
    0 & 0 & 0 & -1
\end{pmatrix}, \\[6pt]
\label{eq:sector1-params}
&\textit{Sector } \ell = 1: &
\sigma^1 &= \overline{\Phi^2}, &
U^{0,1} &= \mathbbm{1}_4, &
U^{1,1} &= \begin{pmatrix}
    \frac{1}{\sqrt{2}} & 0 & 0 & \frac{1}{\sqrt{2}}\\[2pt]
    0 & 0 & 1 & 0\\
    0 & 1 & 0 & 0\\[2pt]
    \frac{1}{\sqrt{2}} & 0 & 0 & -\frac{1}{\sqrt{2}}
\end{pmatrix},
\end{align}
where $\pi_4 \coloneq \mathbbm{1}_4/4$ is the maximally mixed state, $\overline{\Phi^2} \coloneqq 
\frac{1}{2}(\ket{00}\!\bra{00} + \ket{11}\!\bra{11})$ is the maximally correlated qubit state.
\end{lemma}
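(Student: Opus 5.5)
The plan is a direct verification: expand the right-hand side of \eqref{eq:choi-as-mixture} using the explicit form \eqref{eq:gamma_ell_explicit} with $K=2$, and match it term by term against \eqref{eq:choi-state-horo}. Throughout, $R'$ and $B'$ play the role of key systems $A_1'B_1'$ and $R''B''$ that of the shield. The first step is to regroup \eqref{eq:choi-state-horo} by which Bell states appear on the key. The states $\psi^0,\psi^1$ are supported on $\mathrm{span}\{\ket{00},\ket{11}\}$, i.e.\ on the unshifted sector $\ell=0$. The states $\psi^2,\psi^3$ are supported on $\mathrm{span}\{\ket{01},\ket{10}\}$, i.e.\ on $\ell=1$ with $X_{B'}$ applied. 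Since $q_0+q_1 = 1-p$ and $q_2+q_3=p$, this gives the candidate weights $(1-p,p)$. It also reduces the claim to two separate $4\times 4$-block identities, one per sector:
\begin{equation}
(1-p)\gamma^0 = q_0\,\psi^0\otimes\rho^{(0)} + q_1\,\psi^1\otimes\rho^{(1)},\qquad p\,\gamma^1 = q_2\,\psi^2\otimes\rho^{(2)} + q_3\,\psi^3\otimes\rho^{(3)}.
\end{equation}

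To check each sector, I write $\psi^{0/1} = \tfrac12(\ket{00}\!\bra{00}+\ket{11}\!\bra{11} \pm \ket{00}\!\bra{11} \pm \ket{11}\!\bra{00})$. The sector-$0$ right-hand side then becomes a $2\times2$ block matrix in the key basis $\{\ket{00},\ket{11}\}$. Its diagonal blocks are $\tfrac14(\rho^{(0)}+\rho^{(1)})$ and its off-diagonal blocks are $\tfrac14(\rho^{(0)}-\rho^{(1)})$. By \eqref{eq:gamma_ell_explicit}, $\gamma^0$ has diagonal blocks $\tfrac12 U^{i,0}\sigma^0 U^{i,0\dagger}$ and off-diagonal block $\tfrac12 U^{0,0}\sigma^0 U^{1,0\dagger}$.

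Matching these blocks requires the following. For the diagonal: $\rho^{(0)}+\rho^{(1)} = \tfrac12(\ket{00}\!\bra{00}+\ket{11}\!\bra{11}+\psi^2+\psi^3) = 2\pi_4$, which equals $2\sigma^0$ and is invariant under the unitary $U^{1,0}$. For the off-diagonal: $U^{1,0\dagger}/4$ must equal $\tfrac12(\ket{00}\!\bra{00}-\ket{11}\!\bra{11}+\psi^2-\psi^3)/2$. This $U^{1,0}$ fixes $\ket{00}$, maps $\ket{11}\mapsto-\ket{11}$, and swaps $\ket{01}\leftrightarrow\ket{10}$. Since $\psi^2-\psi^3 = \ket{01}\!\bra{10}+\ket{10}\!\bra{01}$, the match holds exactly, and $U^{1,0}$ is Hermitian.

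Sector $1$ is handled the same way, now on the key subspace $\{\ket{01},\ket{10}\}$ with $(\Phi^2)^1 = X_{B'}\Phi^2X_{B'}$. The diagonal blocks give $\rho^{(2)}+\rho^{(3)} = \chi^++\chi^- = \ket{00}\!\bra{00}+\ket{11}\!\bra{11}$, using $|\chi^+|^2$ and $|\chi^-|^2$ coefficients summing to $1$. This equals $2\overline{\Phi^2}=2\sigma^1$. It remains to check that $U^{1,1}\sigma^1U^{1,1\dagger}=\sigma^1$. The Hadamard-like block on $\mathrm{span}\{\ket{00},\ket{11}\}$ maps $\overline{\Phi^2}$, which is proportional to the identity there, to itself. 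The off-diagonal condition reads $\chi^+-\chi^- = 2\,\sigma^1 U^{1,1\dagger}$. I will compute $\chi^+-\chi^-$ explicitly. The diagonal entries are $\tfrac14[(2+\sqrt2)-(2-\sqrt2)] = \tfrac{\sqrt2}{2}$ and $\tfrac14[(2-\sqrt2)-(2+\sqrt2)] = -\tfrac{\sqrt2}{2}$. The coherences are $\tfrac14\sqrt{2}$ from $\chi^+$ and $+\tfrac14\sqrt2$ from $-\chi^-$, giving $\tfrac{\sqrt2}{2}$. This is exactly $\sigma^1 U^{1,1\dagger}\cdot 2$ restricted to that subspace.

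The main obstacle is bookkeeping rather than ideas. One must confirm the shift convention ($X^\ell$ acting on $B'$, and the labeling $\ket{i}\!\bra{j}_{A_1'}\otimes X\ket{i}\!\bra{j}X$). One must also fix the exact ordering of the shield basis in which the given $4\times4$ matrices are written, along with the signs of the coherence terms; a sign or transpose slip there would misplace $U$ versus $U^\dagger$. Since both $U^{1,0}$ and $U^{1,1}$ are real symmetric, this ambiguity is harmless. I would state that observation explicitly to close the argument.
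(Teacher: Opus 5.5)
Your proposal is correct and follows the paper's proof step for step. It splits the Choi state into the even- and odd-parity key sectors with weights $q_0+q_1=1-p$ and $q_2+q_3=p$, reduces each sector to the block conditions $U^{i,\ell}\sigma^\ell (U^{j,\ell})^\dagger=\tfrac12\bigl(\rho^{(2\ell)}\pm\rho^{(2\ell+1)}\bigr)$ (with $+$ for $i=j$ and $-$ for $i\neq j$), and checks these using the explicit forms of $\rho^{(0)}\pm\rho^{(1)}$ and $\chi^+\pm\chi^-$; the paper phrases this as solving for the parameters, you as verifying the given ones, and your quoted block entries simply divide out the common sector weight $1-p$ (resp.\ $p$), which cancels on both sides.
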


\begin{proof}
The four Bell states in the Choi state~\eqref{eq:choi-state-horo} split into two groups according to their key-system support. We identify the two groups separately.

\medskip\noindent\textit{Sector $\ell = 0$.} 
Since $\psi^0$ and  $\psi^1$ have support on the even-parity subspace $\{\ket{00}, \ket{11}\}_{A_1'A_2'}$, the $\ell = 0$ contribution is
\begin{equation}
\label{eq:sector0-claim}
(1-p)\,\gamma^0 = q_0\,\psi^0 \otimes \rho^{(0)} 
+ q_1\,\psi^1 \otimes \rho^{(1)},
\end{equation}
with $q_0 = q_1 = (1-p)/2$. 
From~\eqref{eq:gamma_ell_explicit} with $K = 2$ and $\ell = 0$, the 
private state $\gamma^0$ has the expansion
\begin{equation}
\gamma^0_{A_1' A_2' A_1'' A_2''} = \frac{1}{2}\sum_{i,j \in \{0,1\}} 
\ket{i}\!\bra{j}_{A_1'} \otimes \ket{i}\!\bra{j}_{A_2'} \otimes 
(U^{i,0}\, \sigma^0\, (U^{j,0})^\dagger)_{A_1'' A_2''}.
\end{equation}
Comparing the two sides 
of~\eqref{eq:sector0-claim}, we obtain the constraints:
\begin{align}
\label{eq:sector0-diag-constraint}
U^{i,0}\, \sigma^0\, (U^{i,0})^\dagger &= \frac{1}{2}(\rho^{(0)} + \rho^{(1)}) 
\qquad \forall\, i \in \{0,1\},\\
\label{eq:sector0-antidiag-constraint}
U^{0,0}\,\sigma^0\,(U^{1,0})^\dagger 
&= \frac{1}{2}\bigl(\rho^{(0)} - \rho^{(1)}\bigr).
\end{align}
A direct computation gives:\begin{align}
    \rho^{(0)} + \rho^{(1)} &= \frac{1}{2}\bigl(
\ket{00}\!\bra{00}
+ \ket{01}\!\bra{01} + \ket{10}\!\bra{10}+ \ket{11}\!\bra{11} \bigr) = \frac{1}{2}\mathbbm{1}_4,\\
\rho^{(0)} - \rho^{(1)} &= \frac{1}{2}\bigl(
\ket{00}\!\bra{00}
+ \ket{01}\!\bra{10} + \ket{10}\!\bra{01}  - \ket{11}\!\bra{11} \bigr).
\end{align}Substituting into~\eqref{eq:sector0-antidiag-constraint} and \eqref{eq:sector0-diag-constraint}, we obtain a solution:\begin{equation}
\label{eq:solution_l0}
\sigma^0 = \pi_4 \qquad U^{0,0} = \mathbbm{1}_4\qquad U^{1,0} = \bigl(
\ket{00}\!\bra{00} - \ket{11}\!\bra{11} 
+ \ket{01}\!\bra{10} + \ket{10}\!\bra{01}\bigr).
\end{equation}.

\medskip\noindent\textit{Sector $\ell = 1$.} Since  $\psi^2$ and $\psi^3$ have support on the odd-parity subspace $\{\ket{01}, \ket{10}\}_{A_1'A_2'}$, the $\ell = 1$ contribution is
\begin{equation}
\label{eq:sector1-claim}
p\,\gamma^1 = q_2\,\psi^2 \otimes \rho^{(2)} 
+ q_3\,\psi^3 \otimes \rho^{(3)},
\end{equation}
with $q_2 = q_3 = p/2$. 
From~\eqref{eq:gamma_ell_explicit} with $K = 2$ and $\ell = 1$:
\begin{equation}
\gamma^1_{A_1' A_2' A_1'' A_2''} = \frac{1}{2}\sum_{i,j \in \{0,1\}} 
\ket{i}\!\bra{j}_{A_1'} \otimes 
(\sigma_X)_{A_2'}\ket{i}\!\bra{j}_{A_2'}(\sigma_X)^\dagger_{A_2'} 
\otimes 
(U^{i,1}\, \sigma^1\, (U^{j,1})^\dagger)_{A_1'' A_2''}.
\end{equation}
Comparing the two sides 
of~\eqref{eq:sector1-claim}, we obtain the constraints:
\begin{align}
\label{eq:sector1-diag-constraint}
U^{i,1}\, \sigma^1\, (U^{i,1})^\dagger &= \frac{1}{2}(\rho^{(2)} + \rho^{(3)}) 
\qquad \forall\, i \in \{0,1\},\\
\label{eq:sector1-antidiag-constraint}
\,U^{0,1}  \sigma^0\, (U^{1,1})^\dagger 
&= \frac{1}{2} \bigl(\rho^{(2)} - \rho^{(3)}\bigr).
\end{align}
A direct computation gives:\begin{align}
 \rho^{(2)} + \rho^{(3)} &= 
\ket{00}\!\bra{00} + \ket{11}\!\bra{11} = \overline{\Gamma^2,}\\
\rho^{(2)} - \rho^{(3)} &= \frac{\sqrt{2}}{2}\bigl(\ket{00}\!\bra{00} + \ket{00}\!\bra{11} + \ket{11}\!\bra{00} - \ket{11}\!\bra{11}\bigr).
\end{align} 
Substituting into~\eqref{eq:sector1-diag-constraint} and~\eqref{eq:sector1-antidiag-constraint}, we obtain a solution:\begin{equation}
\label{eq:solution_l1}
\sigma^1 = \overline{\Phi^2} \qquad U^{0,1} = \mathbbm{1}_4\qquad U^{1,1} = \frac{1}{\sqrt{2}}\bigl(
\ket{00}\!\bra{00} - \ket{11}\!\bra{11} 
+ \ket{00}\!\bra{11} + \ket{11}\!\bra{00}\bigr) + \ket{01}\!\bra{10} + \ket{10}\!\bra{01}.
\end{equation}

Note that the solutions in \eqref{eq:solution_l0} and \eqref{eq:solution_l1} are not unique, and any choice of states $\{\sigma_{\ell}\}$ and unitaries $\{U^{i, \ell}\}_{i, \ell}$ satisfying \cref{eq:sector0-diag-constraint,eq:sector0-antidiag-constraint,eq:sector1-diag-constraint,eq:sector1-antidiag-constraint} equally works.
\end{proof}

\subsection{Proof of Effective Channel}

The proof of Proposition~\ref{prop:equivalent_channel} is constructive and is based on providing explicit encoding and decoding channels. It takes inspiration from the original construction of~\cite{Smith2008, Oppenheim_2008} and can be understood as a correction procedure for the channel $\mathcal{P}$ in~\eqref{eq:ppt_channel}. We first describe the action of the channels $\widetilde{\mathcal{E}}$ and $\widetilde{\mathcal{D}}$ and then prove that they give rise to the channel in~\eqref{eq:equivalent_channel}, when suitably combined with the erasure channel $\mathcal{A}$ and the PPT channel $\mathcal{P}$. 

When describing the encoding/decoding protocol in what follows, we label the input of the PPT channel as $A_1 \equiv A_1' A_1''$ and that of the erasure channel as $A_2 \equiv A_2' A_2''$; similarly, their outputs are denoted as $B_1$ and $B_2$, respectively. For reasons that will become clear in the following, the systems $A_1'$ and $A_2'$ are called \textit{key systems} and their dimension equals $d_{A_1'}= d_{A_2'} = K$, while the systems $A_1''$ and $A_2''$ are called \textit{shield systems}, and their dimension equals $d_{A_1''}= d_{A_2''} = d_s$.
\

To fix the notation, we denote the input state as:\begin{equation}
    \rho_{A_1'} \equiv \sum_{i,j=1}^{d_{A_1'}} \rho_{ij} \ket{i}\!\bra{j}_{A_1'},
\end{equation} and the states at various stages of the protocol are:\begin{align}
 \label{eq:definition_zeta}
    \zeta_{A_1 A_2} &\equiv \widetilde{\mathcal{E}}_{A_1' \to A_1'A_1''A_2'A_2''}(\rho_{A_1'}),\\
     \label{eq:definition_omega}
    \omega_{B_1 B_2} &\equiv (\mathcal{P}_{A_1\to B_1}\otimes \mathcal{A}_{A_2\to B_2})(\zeta_{A_1 A_2}),\\
    \label{eq:definition_xi}
    \xi_{B_1' Z} &\equiv \widetilde{\mathcal{D}}_{B_1 B_2 \to B_1' Z}(\omega_{B_1 B_2}).
\end{align}This notation will be used consistently throughout this section.

In the case of interest for superactivation, the key and shield systems are two-dimensional. In the following, however, we will describe the protocol for the most general case of a PPT channel having a Choi state as in~\eqref{eq:choi_state_generic_PPT_channel} and a quantum erasure channel as in~\eqref{eq:quantumerasurechannel} of input dimension $d_s \cdot K$ (and output dimension $d_s \cdot K +1$), and generic erasure probability $q \in[0,1]$. We will later specialize it to~\eqref{eq:choi-as-mixture} to prove Proposition~\ref{prop:equivalent_channel}, which corresponds to the choice:\begin{equation}
    K = d_S = 2, \qquad p = 1/(1+\sqrt{2}), \qquad q =1/2.
\end{equation}

\begin{theorem}
\label{thm:effective_channel_general}
Let $\mathcal{P} \in \mathrm{CPTP}(A_1 \to B_1)$ be a channel with $d_{A_1} = d_{B_1} = K d_s$ whose Choi state admits a decomposition into $K$ shifted private states as in~\eqref{eq:choi_state_generic_PPT_channel}:
\begin{equation}
    \Phi^{\mathcal{P}}_{A_1 B_1} = \sum_{\ell=0}^{K-1} p_\ell\, \gamma^\ell_{A_1'B_1'A_1''B_1''},
\end{equation}
and let $\mathcal{A}^{q} \in \mathrm{CPTP}(A_2 \to B_2)$ be a quantum erasure channel of input dimension $K d_s$. Then, there exists an encoding / decoding protocol $(\widetilde{\mathcal{E}},\widetilde{\mathcal{D}})$ such that $\widetilde{\mathcal{N}}\coloneq \widetilde{\mathcal{D}} \circ (\mathcal{P} \otimes \mathcal{A}^q) \circ \widetilde{\mathcal{E}}$ is the flagged channel
\begin{equation}
\label{eq:effective_channel_general}
    \widetilde{\mathcal{N}}(\rho) = (1-q)\,\rho \otimes \ket{0}\!\bra{0}_Z + q\, (\mathcal{X}^{\overline{p}} \circ \overline{\Delta})(\rho) \otimes \ket{1}\!\bra{1}_Z,
\end{equation}
with $\overline{\Delta}(\cdot) \coloneqq \sum_{i=0}^{K-1} \ket{i}\!\bra{i}(\cdot)\ket{i}\!\bra{i}$ the completely dephasing channel, $\mathcal{X}^{\overline{p}}(\cdot) \coloneqq \sum_{\ell=0}^{K-1} p_\ell\, X^\ell(\cdot)X^{-\ell}$ the random displacement channel.
\end{theorem}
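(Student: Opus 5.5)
The plan is to build $(\widetilde{\mathcal{E}},\widetilde{\mathcal{D}})$ explicitly, in the spirit of the private-state construction of~\cite{Smith2008}, and then to check the output on each erasure branch separately. The guiding idea is that the Choi state of $\mathcal{P}$ is a mixture of shifted private states (Lemma~\ref{lem:choi-as-mixture} and~\eqref{eq:choi_state_generic_PPT_channel}). Its key part therefore carries perfect coherence, but that coherence is locked by the twisting unitaries $U^{i,\ell}$ acting jointly on the reference shield and the output shield. Bob can unlock it only if he holds the reference shield, together with a copy of the reference key that tells him which $U^{i,\ell}$ to undo. The encoder sends exactly these reference systems through the erasure channel.

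\emph{Encoder.} Alice applies the isometry $\ket{i}_{A_1'} \mapsto \ket{i}_{A_1'}\ket{i}_{A_2'}$, a coherent copy of the data into the erasure-key system. She also prepares $\Phi^{d_s}_{A_1''A_2''}$. The systems $A_1 = A_1'A_1''$ are fed into $\mathcal{P}$, and $A_2 = A_2'A_2''$ into $\mathcal{A}^q$. Using the Choi representation~\eqref{eq:choi_jamiolkovski_isomorphism} and the expansion~\eqref{eq:gamma_ell_explicit}, $\mathcal{P}$ acts on $\ket{i}\!\bra{j}_{A_1'}$ tensored with half of a maximally entangled shield. I expect the joint state of $A_2'A_2''B_1'B_1''$ before the erasure to be
\begin{equation*}
\sum_{\ell} p_\ell \sum_{i,j}\rho_{ij}\,\ket{i}\!\bra{j}_{A_2'}\otimes X^\ell\ket{i}\!\bra{j}X^{-\ell}_{B_1'}\otimes U^{i,\ell}\sigma^\ell (U^{j,\ell})^\dagger_{A_2''B_1''}.
\end{equation*}
Possibly this holds only up to a transpose or complex conjugation, which can be absorbed by choosing the basis of the shield MES (or by conjugating the $U^{i,\ell}$). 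Establishing this formula, with all transpose conventions right, is the step I expect to be the main obstacle. I would verify it first by writing $\Phi^{\mathcal{P}}$ with the reference split as $R'R''$ and identifying $R''$ with $A_2''$ through the maximally entangled input.

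\emph{Decoder.} Bob first measures the erasure flag of $B_2$ and records the outcome in $Z$.

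If there is no erasure ($Z=0$, probability $1-q$), Bob holds $A_2'A_2''B_1'B_1''$. He applies the untwisting unitary $V^\dagger$ from~\eqref{eq:twisting-unitaries-shift}, controlled on $A_2'$ (which gives $i$) and $B_1'$ (which gives $i\oplus\ell$). This replaces each shield term by $\sigma^\ell$, independent of $i$ and $j$. He then applies a generalized CNOT $B_1' \mapsto B_1'\ominus A_2'$, which maps the key register to $\ket{\ell}\!\bra{\ell}$ while leaving the coherent data $\sum\rho_{ij}\ket{i}\!\bra{j}_{A_2'}$ untouched. The resulting state is
\begin{equation*}
\rho_{A_2'}\otimes\sum_\ell p_\ell \ket{\ell}\!\bra{\ell}\otimes\sigma^\ell,
\end{equation*}
so discarding the junk and relabeling $A_2'$ as the output qubit gives exactly $\rho$.

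If there is an erasure ($Z=1$, probability $q$), the systems $A_2'A_2''$ are lost. Tracing them out is precisely the operation of Lemma~\ref{lem:key-attacked}: tracing $A_2'$ dephases the index $i$, and the shield trace yields $1$. Bob then outputs $B_1'$ after tracing $B_1''$, and obtains $\sum_\ell p_\ell X^\ell\,\overline{\Delta}(\rho)X^{-\ell} = (\mathcal{X}^{\overline{p}}\circ\overline{\Delta})(\rho)$.

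Combining the two branches gives~\eqref{eq:effective_channel_general}. Specializing to $K=d_s=2$, $q=1/2$ and $p_1=p=1/(1+\sqrt2)$ via Lemma~\ref{lem:choi-as-mixture} then yields Proposition~\ref{prop:equivalent_channel}.
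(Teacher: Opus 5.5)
Your proposal is correct and follows essentially the paper's construction. It uses the same CNOT-plus-shield-MES encoder, the same pre-erasure state (which the paper derives exactly in \eqref{eq:omega0_full}, with no transpose, because the maximally entangled shield input undoes the transpose in the Choi reconstruction), the same untwisting $V^\dagger$, and the same key-attack computation on the erasure branch. The only real difference is the last step of the no-erasure decoder: the paper traces the shields, measures parity, corrects with $X^{-\ell}$ on $B_1'$, uncomputes with the inverse CNOT onto $A_2'$ and outputs $B_1'$, whereas your single coherent CNOT from $A_2'$ onto $B_1'$ dumps $\ell$ into $B_1'$ and outputs $A_2'$, which is equally valid and avoids the measurement.
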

\begin{proof}
    The next sections describe the action of the encoder $\widetilde{\mathcal{E}}$ and decoder $\widetilde{\mathcal{D}}$, using the notation introduced above.
\subsection{Encoding Procedure}
The encoding procedure can be understood as a generalization of the construction of~\cite{Smith2008}.
The encoder takes as $K$-dimensional input the key system $A_1'$ and provides as output a $K^2\cdot d_s^2$-dimensional system $A_1A_2$ that is then sent over the joint channel $\mathcal{P}_{A_1 \to B_1} \otimes \mathcal{A}^q_{A_2 \to B_2}$, according to the scheme in Figure~\ref{fig:encoder_protocol}:\begin{enumerate}
    \item Alice prepares the states $\Phi^{d_s}_{A_{1}'' A_2''}$ and $\ket{0}\!\bra{0}_{A_2'}$. This operation can be seen as the application of an appending channel:\begin{equation}
         \left(\operatorname{id}_{A_1'}\otimes \mathcal{R}^{\Phi^{d_s} \otimes \ket{0}\!\bra{0}}_{\mathbb{C}\to A_1''A_{2}'' A_2'}\right)(\rho_{A_1'}) \coloneqq  \rho_{A_1'} \otimes \Phi^{d_s}_{A_1'' A_2''} \otimes \ket{0}\!\bra{0}_{A_2'}.
    \end{equation} 
 \item Alice's input state $A_1'$ is coherently copied into $A_2'$ by applying a generalized controlled-NOT gate $\mathcal{U}_{A_1' A_2'}^{\operatorname{CNOT}}(\cdot) = U^{\operatorname{CNOT}}_{A_1'A_2'} (\cdot) U^{\operatorname{CNOT}}_{A_1'A_2'}$, defined by the unitary operator
\begin{equation}
\label{eq:gen_CNOT}
    U^{\operatorname{CNOT}}_{A_1'A_2'} \coloneqq \sum_{\ell=0}^{K-1} \ket{\ell}\!\bra{\ell}_{A_1'} \otimes X_{A_2'}^{\ell},
\end{equation}
where $X^\ell_{A_2'}$ is the generalized shift operator.  
    \item Alice sends $A_1 = A_1'A_1''$ over the PPT channel and $A_2 = A_2' A_2''$ over the erasure channel. 
\end{enumerate}

\begin{figure}[ht]
\centering
\begin{tikzpicture}[
    every node/.style={font=\large},
    block/.style={draw, line width=0.7pt, minimum width=2.2cm, minimum height=1.1cm, align=center},
    arrow/.style={line width=0.7pt, -{Stealth[scale=1.0]}},
    scale=0.88, transform shape
]

\draw[dashed, rounded corners=10pt, thick, draw=gray!70, fill=gray!8]
  (-1.6,-1.1) rectangle (5.5,5.0);
\node[above right, font=\large\bfseries, text=gray!70!black] at (2.5,5) {$\widetilde{\mathcal{E}}$};

\filldraw[black] (0.5,1.2) circle (1.3pt);
\node[left] at (0.5,1.2) {$\Phi^{d_s}_{A_{1}'' A_2''}$};
\draw[arrow] (0.5,1.2) -- (1.5,2.8) -- (6,2.8);
\draw[arrow] (0.5,1.2) -- (1.5,-0.4) -- (6,-0.4);
\node[above] at (3,2.8) {$A_1''$};
\node[above] at (3,-0.5) {$A_2''$};

\filldraw[black] (2.5,1) circle (1.3pt);
\node[left] at (2.5,1) {$\ket{0}_{A_{2}'}$};
\draw[arrow] (2.5,1) -- (6, 1);
\node[above] at (3,1) {$A_2'$};

\draw[arrow] (-3,4.2) -- (6,4.2);
\node[above] at (-2,4.2) {$A_1'$};

\filldraw[black] (4,4.2) circle (1.3pt);

\draw[line width=0.7pt] (4,4.2) -- (4,1.2);

\draw[fill=white] (4,1) circle (6pt);
\draw[line width=0.7pt] (4,0.8) -- (4,1.3);
\draw[line width=0.7pt] (3.7,1) -- (4.3,1);

\draw[block] (6,2.2) rectangle (8.2,4.8);
\node at (7.1,3.5) {\Large $\mathcal{P}$};

\draw[block] (6,-0.8) rectangle (8.2,1.6);
\node at (7.1,0.4) {\Large $\mathcal{A}$};

\end{tikzpicture}
\caption{Encoder of the protocol for the joint channel.}
\label{fig:encoder_protocol}
\end{figure}
These operations can be seen as the application of the following channel:
\begin{equation}
\label{eq:encoder_protocol}
    \widetilde{\mathcal{E}}_{A_1' \to A_1'A_1''A_2'A_2''}(\cdot) \coloneqq \mathcal{U}^{\operatorname{CNOT}}_{A_1' A_2'}\circ (\operatorname{id}_{A_1'}\otimes \mathcal{R}^{\Phi^{d_s} \otimes \ket{0}\!\bra{0}}_{\mathbb{C}\to A_1''A_{2}'' A_2'})(\cdot),
\end{equation}
so that the output state of the encoder $\zeta_{A_1 A_2}$ becomes: 
\begin{equation}
\label{eq:output_state_encoder}
    \zeta_{A_1 A_2} \equiv \sum_{i,j =0}^{K-1} \rho_{ij}\, \ket{i}\!\bra{j}_{A_1'} \otimes \ket{i}\!\bra{j}_{A_2'} \otimes \Phi^{d_S}_{A_1''A_2''},
\end{equation}
where the CNOT has coherently copied the values in $A_1'$ into $A_2'$.
    
\subsection{Decoding Procedure}

The decoding procedure is at the heart of the correction protocol. Since $d_{B_1} =K\cdot d_s$ and $d_{B_2} = K\cdot d_s +1$, the overall output of the two channels is $(K^2 \cdot d_s^2 +1)$-dimensional. Bob receives the output of the joint channel, denoted as $\omega_{B_1B_2}$, and first performs an \textit{erasure measurement}, that is, a projective measurement of the form $\Pi_{B_2} = \{\Pi_{A_2}, \ket{e}\!\bra{e}_{B_2}\}$ where $\ket{e}$ is the erasure symbol and $\Pi_{A_2}$ denotes the projector onto the input Hilbert space.
The action of the decoder is then described in Figure~\ref{fig:decoder_complete}. Note that $\{\mathcal{D}^z\}_{z \in \{0,1\}}$ is a conditional quantum channel, i.e.~a collection of (two) CPTP maps with a classical input and a quantum input, and the output also contains the classical information about the erasure measurement in the system~$Z$.

Thus, the output state decomposes as: \begin{equation}
\label{eq:omega_full}
    \omega_{B_1 B_2}
    = (1-q)\,(\mathcal{P}\otimes\operatorname{id}_{A_2})(\zeta_{A_1A_2})
    + q\,\mathcal{P}(\zeta_{A_1})\otimes\ket{e}\!\bra{e}_{B_2},
\end{equation}
where $\zeta_{A_1} \equiv \mathrm{Tr}_{A_2}[\zeta_{A_1A_2}]$. The erasure measurement acts on the subsystem $B_2$ alone, and its two outcomes determine which decoder is applied to the remaining systems. The two conditional states are:
\begin{equation}
\label{eq:omega_conditional}
    \omega^{(0)}_{B_1 A_2}
    \equiv (\mathcal{P}\otimes\operatorname{id}_{A_2})(\zeta_{A_1A_2}),
    \qquad
    \omega^{(1)}_{B_1}
    \equiv \mathcal{P}(\zeta_{A_1}),
\end{equation}
so that the post-measurement state, including the classical flag $Z$, is:
\begin{equation}
    \omega_{B_1 A_2 Z}
    = (1-q)\,\omega^{(0)}_{B_1 A_2}\otimes\ket{0}\!\bra{0}_Z
    + q\,\omega^{(1)}_{B_1}\otimes\ket{1}\!\bra{1}_Z.
\end{equation}
Therefore, the final state of the protocol also splits as:\begin{equation}
\label{eq:output_state_protocol}
    \xi_{B_1' Z}
    = (1-q)\,\xi^{(0)}_{B_1'}\otimes\ket{0}\!\bra{0}_Z
    + q\,\xi^{(1)}_{B_1'}\otimes\ket{1}\!\bra{1}_Z,
\end{equation}
where\begin{equation}
\label{eq:xi_conditional}
    \xi^{(0)}_{B_1'}
    \equiv (\mathcal{D}^0_{B_1 A_2 \to B_1'})(\omega^{(0)}_{B_1 A_2}),
    \qquad
    \xi^{(1)}_{B_1'}
    \equiv (\mathcal{D}^1_{B_1 \to B_1'})(\omega^{(1)}_{B_1}),
\end{equation}
consistent with the definition in~\eqref{eq:definition_xi}.

Since the decoder $\widetilde{\mathcal{D}}$ conditions on $Z$, we treat each case separately.
\begin{figure}[!ht]
\centering
\begin{tikzpicture}[>=stealth, scale=0.95]
\tikzset{
    qarrow/.style={->, line width=0.9pt},
    classarrow/.style={->, line width=1.5pt},
    block/.style={draw, minimum width=1.8cm, minimum height=2.8cm},
    piblock/.style={draw, minimum width=1.8cm, minimum height=2.8cm},
    decodeblock/.style={draw, minimum width=2.0cm, minimum height=4.5cm},
    lab/.style={font=\large}
}

\draw[dashed, rounded corners=8pt, thick, draw=gray!70, fill=gray!8]
  (3.8,-0.7) rectangle (9,5.8);
\node[above right, font=\normalsize\bfseries, text=gray!70!black] at (6.8,5.8) {$\widetilde{\mathcal{D}}$};

\draw[block] (0,3.2) rectangle (1.8,5.6);
\node at (0.9,4.4) {\large $\mathcal{P}$};

\draw[block] (0,0.2) rectangle (1.8,2.6);
\node at (0.9,1.4) {\large $\mathcal{A}$};

\draw[qarrow] (1.8,4.4) -- (6.3,4.4);
\node[lab, above] at (2.8,4.4) {$B_1$};

\draw[qarrow] (1.8,1.4) -- (4.0,1.4);
\node[lab, above] at (2.8,1.4) {$B_2$};

\draw[piblock] (4.0,0.2) rectangle (5.8,2.6);
\node at (4.9,1.4) {\large $\Pi_{B_2}$};

\draw[qarrow] (5.8,1.4) -- (6.3,1.4);

\draw[line width=1.5pt] (4.9,0.2) -- (4.9,-0.3) -- (7.55,-0.3);
\node[lab, above] at (6.2,-0.3) {$Z$};

\draw[classarrow] (7.55,-0.3) -- (7.55,0.4);

\draw[classarrow] (7.55,-0.3) -- (10.0,-0.3);
\node[lab, above] at (9.4,-0.3) {$Z$};

\filldraw[black] (7.55,-0.3) circle (2pt);

\draw[decodeblock] (6.3,0.4) rectangle (8.8,5.6);
\node at (7.55,3.0) {\large $\mathcal{D}^z$};

\draw[qarrow] (8.8,4.4) -- (10.0,4.4);
\node[lab, above] at (9.4,4.4) {$B_1'$};

\end{tikzpicture}
\caption{Decoder of the protocol for the joint channel: a global view.}
\label{fig:decoder_complete}
\end{figure}

\paragraph{Case 1: No Erasure:}

In the case of no erasure, Bob uses the decoder $\mathcal{D}^0$, which can be seen in Figure~\ref{fig:decoder_noerasure}. It consists of the following steps:\begin{enumerate}
    \item Bob performs a global \textit{untwisting} unitary channel $\mathcal{V}^\dag_{B_1'A_2'B_1''A_2''}$ on the output systems. Recalling~\eqref{eq:twisting-unitaries-shift}, the corresponding unitary operator has the form
    \begin{equation}
    \label{eq:untwisting_unitary}
    V^\dag_{B_1'A_2'B_1''A_2''} \coloneqq \sum_{\ell=0}^{K-1} \sum_{i=0}^{K-1}  \ket{i \oplus \ell}\!\bra{i \oplus \ell}_{B_1'} \otimes \ket{i}\!\bra{i}_{A_2''}\otimes  
\left(U^{i,\ell}_{A_2''B_1''}\right)^\dag,
    \end{equation}
    where each $\left(U^{i,\ell}_{A_2''B_1''}\right)^\dag$ is a unitary on the shield systems $A_2''B_1''$ that inverts the corresponding $U^{i,\ell}_{A_1''B_1''}$.
    
    \item Bob traces over the shield systems $B_1''A_2''$. These first two operations correspond to \textit{privacy squeezing}, as defined in~\cite{horodecki2008low}.
    
    \item Bob performs a generalized parity measurement on systems $B_1'A_2'$, i.e.\ a projective measurement with $K$ outcomes given by:
    \begin{equation}
    \label{eq:parity_measurement}
        \Pi^{\mathcal{M}} = \left\{
        (\mathbbm{1}_{B_1'} \otimes X_{A_2'}^{-\ell})\,
        \overline{\Gamma^K}_{B_1' A_2'}\,
        (\mathbbm{1}_{B_1'} \otimes X_{A_2'}^{\ell})
        \right\}_{\ell \in \{0,\ldots,K-1\}},
    \end{equation}
    where 
    \begin{equation}
        \overline{\Gamma^K}_{B_1' A_2'} \coloneqq \sum_{i=0}^{K-1} \ket{ii}\!\bra{ii}_{B_1'A_2'}
    \end{equation}
    is the projector onto the maximally correlated subspace and $X_{A_2'}^{\ell}$ is the generalized shift operator. 
    
    \item Bob corrects the shift error. If the outcome of the measurement is $\ell$, he applies the correction 
    $\mathcal{X}^{-\ell}_{B_1'}(\cdot) \coloneqq X_{B_1'}^{-\ell}(\cdot)X_{B_1'}^{\ell}$ to the key system $B_1'$. For $\ell = 0$, this is the identity.
    
    \item Bob performs a generalized inverse CNOT on $B_1'A_2'$, defined by the unitary operator
    \begin{equation}
    \label{eq:gen_inv_CNOT}
        (U^{\operatorname{CNOT}}_{B_1'A_2'})^{-1} \coloneqq 
        \sum_{i=0}^{K-1} \ket{i}\!\bra{i}_{B_1'} \otimes X_{A_2'}^{-i}.
    \end{equation}
    \item Bob discards the system $A_2'$.
\end{enumerate}

Similarly to the encoding procedure, all these operations can be seen as a sequential concatenation of channels:
\begin{equation}
\label{eq:decoder_D0}
    \mathcal{D}^0(\cdot) \coloneqq 
    \Tr_{A_2'} \circ 
    (\mathcal{U}^{\operatorname{CNOT}}_{B_1'A_2'})^{-1} \circ 
    (\mathcal{X}^{-\ell}_{B_1'L \to B_1'} \otimes \mathrm{id}_{A_2'}) \circ 
    \mathcal{M}_{B_1'A_2' \to B_1'A_2'L} \circ 
    \Tr_{B_1''A_2''} \circ 
   \mathcal{V}^\dag_{B_1'A_2'B_1''A_2''}(\cdot),
\end{equation}
where $\mathcal{M}$ is a quantum instrument~\cite{DaviesLewis1970Instruments} implementing the parity  measurement~\eqref{eq:parity_measurement} with classical output $\ell$  stored in the register $L$, $\mathcal{X}^{-\ell}_{B_1'L \to B_1'}$ is a conditional quantum channel applying the shift correction $X^{-\ell}_{B_1'}$ controlled by the measurement outcome in $L$.

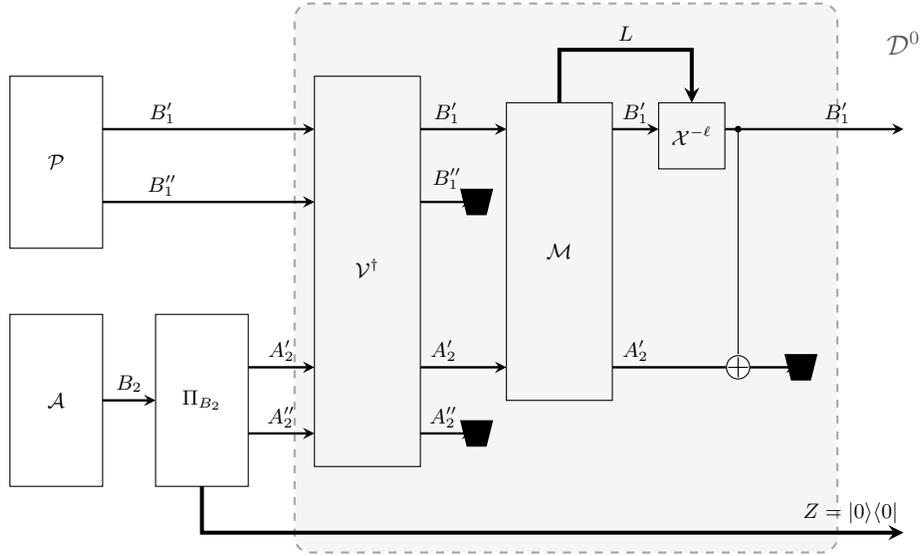
\begin{figure}[!ht]
\centering
\begin{tikzpicture}[>=stealth, scale=0.88, transform shape]
\tikzset{
    qarrow/.style={->, line width=0.9pt},
    classarrow/.style={->, line width=1.5pt},
    block/.style={draw, minimum width=1.4cm, minimum height=3.2cm},
    piblock/.style={draw, minimum width=1.4cm, minimum height=3.2cm},
    decodeblock/.style={draw, minimum width=1.6cm, minimum height=5.2cm},
    dblock/.style={draw=gray!70, dashed, rounded corners=6pt, thick}
}

\fill[gray!8, rounded corners=6pt] (4.3,-0.8) rectangle (12.5,7.5);
\draw[dblock] (4.3,-0.8) rectangle (12.5,7.5);
\node[text=gray!70!black] at (13.5,6.9) {\large $\mathcal{D}^0$};

\draw[block] (0,3.8) rectangle (1.4,6.4);
\node at (0.7,5.1) {$\mathcal{P}$};

\draw[block] (0,0.2) rectangle (1.4,2.8);
\node at (0.7,1.5) {$\mathcal{A}$};

\draw[qarrow] (1.4,5.6) -- (4.6,5.6);
\node at (2.3,5.85) {$B_1'$};

\draw[qarrow] (1.4,4.5) -- (4.6,4.5);
\node at (2.3,4.75) {$B_1''$};

\draw[qarrow] (1.4,1.5) -- (2.2,1.5);
\node at (1.8,1.75) {$B_2$};

\draw[piblock] (2.2,0.2) rectangle (3.6,2.8);
\node at (2.9,1.5) {$\Pi_{B_2}$};

\draw[qarrow] (3.6,2.0) -- (4.6,2.0);
\node at (4.1,2.25) {$A_2'$};

\draw[qarrow] (3.6,1.0) -- (4.6,1.0);
\node at (4.1,1.25) {$A_2''$};

\draw[classarrow] (2.9,0.2) -- (2.9,-0.5) -- (13.5,-0.5);
\node at (12.7,-0.2) {$Z = \ket{0}\!\bra{0}$};

\draw[decodeblock] (4.6,0.5) rectangle (6.2,6.4);
\node at (5.4,3.5) {$\mathcal{V}^\dag$};

\draw[qarrow] (6.2,5.6) -- (7.5,5.6);
\node at (6.6,5.85) {$B_1'$};

\draw[qarrow] (6.2,4.5) -- (7.0,4.5);
\node at (6.6,4.85) {$B_1''$};

\draw[qarrow] (6.2,2.0) -- (7.5,2.0);
\node at (6.55,2.25) {$A_2'$};

\draw[qarrow] (6.2,1.0) -- (7,1.0);
\node at (6.55,1.25) {$A_2''$};

\fill[black] (6.8,4.7) -- (7.3,4.7) -- (7.2,4.3) -- (6.9,4.3) -- cycle;
\fill[black] (6.8,1.2) -- (7.3,1.2) -- (7.2,0.8) -- (6.9,0.8) -- cycle;

\draw[decodeblock] (7.5,1.5) rectangle (9.1,6.0);
\node at (8.3,3.75) {$\mathcal{M}$};

\draw[qarrow] (9.1,5.6) -- (9.8,5.6);
\node at (9.45,5.85) {$B_1'$};

\draw[decodeblock] (9.8,5.0) rectangle (10.8,6.0);
\node at (10.3,5.5) {$\mathcal{X}^{-\ell}$};

\draw[classarrow] (8.3,6.0) -- (8.3,6.8) -- (10.3,6.8) -- (10.3,6.0);
\node at (9.3,7.05) {$L$};

\draw[qarrow] (9.1,2.0) -- (11.9,2.0);
\node at (9.45,2.25) {$A_2'$};

\filldraw[black] (11.0,5.6) circle (1.2pt);
\draw (11.0,5.6) -- (11.0,2.2);
\draw[fill=white] (11.0,2.0) circle (5pt);
\draw (11.0,1.85) -- (11.0,2.15);
\draw (10.85,2.0) -- (11.15,2.0);

\fill[black] (11.7,2.2) -- (12.2,2.2) -- (12.1,1.8) -- (11.8,1.8) -- cycle;

\draw[qarrow] (10.8,5.6) -- (13.5, 5.6);
\node at (12.5,5.85) {$B_1'$};

\end{tikzpicture}
\caption{Scheme of decoder $\mathcal{D}^0$ corresponding to the case that an erasure is not detected.}
\label{fig:decoder_noerasure}
\end{figure}

\paragraph{Case 2: Erasure:} 

In the case of an erasure, Bob uses the decoder $\mathcal{D}^1$, which can be seen in Figure~\ref{fig:decoder_erasure}. It consists of the following steps:
\begin{enumerate}
    \item Bob discards the output system of the erasure channel $B_2$, since the original input state has been lost to the environment.
    \item Bob discards the other shield qubit $B_1''$.
\end{enumerate}

This decoder corresponds to the trace-out channel:
\begin{equation}
    \mathcal{D}^1 (\cdot)\coloneqq \Tr_{B_1''A_2'A_2''} (\cdot).
\end{equation}
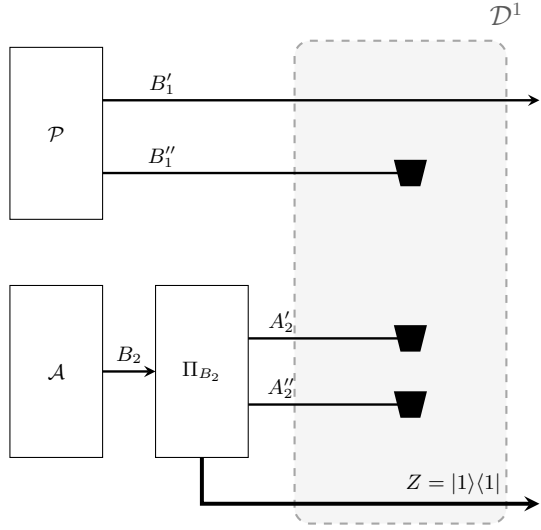
\begin{figure}[!ht]
\centering
\begin{tikzpicture}[>=stealth, scale=0.88, transform shape]
\tikzset{
    qarrow/.style={->, line width=0.9pt},
    classarrow/.style={->, line width=1.5pt},
    block/.style={draw, minimum width=1.4cm, minimum height=3.2cm},
    piblock/.style={draw, minimum width=1.4cm, minimum height=3.2cm},
    decodeblock/.style={draw, minimum width=1.6cm, minimum height=5.2cm},
    dblock/.style={draw=gray!70, dashed, rounded corners=6pt, thick}
}

\fill[gray!8, rounded corners=6pt] (4.3,-0.8) rectangle (7.5,6.5);
\draw[dblock] (4.3,-0.8) rectangle (7.5,6.5);
\node[text=gray!70!black] at (7.5,6.9) {\large $\mathcal{D}^1$};

\draw[block] (0,3.8) rectangle (1.4,6.4);
\node at (0.7,5.1) {$\mathcal{P}$};

\draw[block] (0,0.2) rectangle (1.4,2.8);
\node at (0.7,1.5) {$\mathcal{A}$};

\draw[qarrow] (1.4,5.6) -- (8.0,5.6);
\node at (2.3,5.85) {$B_1'$};

\draw[qarrow] (1.4,4.5) -- (6.2,4.5);
\node at (2.3,4.75) {$B_1''$};

\draw[qarrow] (1.4,1.5) -- (2.2,1.5);
\node at (1.8,1.75) {$B_2$};

\draw[piblock] (2.2,0.2) rectangle (3.6,2.8);
\node at (2.9,1.5) {$\Pi_{B_2}$};

\draw[qarrow] (3.6,2.0) -- (6.2,2.0);
\node at (4.1,2.25) {$A_2'$};

\draw[qarrow] (3.6,1.0) -- (6.2,1.0);
\node at (4.1,1.25) {$A_2''$};

\draw[classarrow] (2.9,0.2) -- (2.9,-0.5) -- (8.0,-0.5);
\node at (6.7,-0.2) {$Z = \ket{1}\!\bra{1}$};

\fill[black] (5.8,4.7) -- (6.3,4.7) -- (6.2,4.3) -- (5.9,4.3) -- cycle;
\fill[black] (5.8,1.2) -- (6.3,1.2) -- (6.2,0.8) -- (5.9,0.8) -- cycle;
\fill[black] (5.8,2.2) -- (6.3,2.2) -- (6.2,1.8) -- (5.9,1.8) -- cycle;

\end{tikzpicture}
\caption{Scheme of decoder $\mathcal{D}^1$ corresponding to the case that an erasure is detected.}
\label{fig:decoder_erasure}
\end{figure}

\subsection{Proof of equivalence}

Using the notation introduced above, the claim of the proposition can be rewritten, using~\eqref{eq:output_state_protocol}, as\begin{equation}
\xi^{(0)} = \rho, \qquad \xi^{(1)} = (\mathcal{X}^{\overline{p}} \circ \overline{\Delta})(\rho) \qquad \forall \rho \in \mathcal{D}(A_1'),
\end{equation}where $\overline{\Delta}(\cdot) \coloneqq \sum_{i=0}^{K-1} \ket{i}\!\bra{i} \,(\cdot)\, \ket{i}\!\bra{i}$ is the completely dephasing channel and $\mathcal{X}^{\overline{p}}(\cdot) \coloneqq \sum_{\ell=0}^{K-1} p_\ell \, X^\ell (\cdot) X^{-\ell}$ is a random displacement channel. Here, $X^\ell$ denotes a Heisenberg--Weyl displacement operator, and $\overline{p} \equiv \left\{p_\ell\right\}_\ell$ is the probability distribution determined by the PPT channel via~\eqref{eq:choi_state_generic_PPT_channel}.
Indeed, this is the same as requiring that in case of no-erasure the effective channel behaves as the identity, while in case of an erasure it behaves as the  noisy Pauli channel $\mathcal{X}^{\overline{p}} \circ \overline{\Delta}$.

\paragraph{Case 1: No Erasure:}

When no erasure occurs, the erasure channel acts as $\operatorname{id}_{A_2}$, and the conditional state is
$\omega^{(0)}_{B_1 A_2} = (\mathcal{P} \otimes\operatorname{id}_{A_2})(\zeta_{A_1 A_2})$. We need to show that in this case the decoder recovers the input, i.e.\ $\xi^{(0)} = \rho$.

We first compute the channel output using the Choi state of $\mathcal{P}$. The encoder output~\eqref{eq:output_state_encoder} can be expanded as:
\begin{equation}
    \zeta_{A_1 A_2} =\frac{1}{d_s} \sum_{i,j = 0}^{K-1}\sum_{a,b = 0}^{d_s-1} \rho_{ij}\,
    \ket{i}\!\bra{j}_{A_1'} \otimes \ket{i}\!\bra{j}_{A_2'} \otimes
     \ket{a}\!\bra{b}_{A_1''} \otimes
    \ket{a}\!\bra{b}_{A_2''}.
\end{equation}
Since the channel $\mathcal{P}$ acts on $A_1 = A_1'A_1''$ while $A_2$ passes through, we have by linearity:
\begin{equation}
\label{eq:output_omega_0_explicit}
    \omega^{(0)}_{B_1 A_2} = \frac{1}{d_s} \sum_{i,j = 0}^{K-1}\sum_{a,b = 0}^{d_s-1} \rho_{ij}\,\mathcal{P}_{A_1 \to B_1}\!\left(\ket{i}\!\bra{j}_{A_1'}  \otimes\ket{a}\!\bra{b}_{A_1''}\right) \otimes \ket{i}\!\bra{j}_{A_2'}\otimes\ket{a}\!\bra{b}_{A_2''}.
\end{equation}
We compute the channel output using~\eqref{eq:choi_jamiolkovski_isomorphism}, substituting in the Choi decomposition in~\eqref{eq:gamma_ell_explicit}:
\begin{equation}
    \Phi^{\mathcal{P}}_{A_1 B_1} = \frac{1}{K}\sum_{\ell=0}^{K-1}p_\ell\sum_{i', j'=0}^{K-1}\ket{i'}\!\bra{j'}_{A_1'} \otimes X^\ell \ket{i'}\!\bra{j'}_{B_1'}X^\ell \otimes (U^{i',\ell}\,\sigma^\ell\,(U^{j',\ell})^\dagger)_{A_1''B_1''},
\end{equation}
and we find that
\begin{equation}
\label{eq:action_P_standard_basis}
\begin{aligned}
    &\mathcal{P}\!\left(\ket{i}\!\bra{j}_{A_1'} \otimes\ket{a}\!\bra{b}_{A_1''}\right) \\
    &\quad= K d_s \Tr_{A_1}[\ket{i}\!\bra{j}_{A_1'} \otimes\ket{a}\!\bra{b}_{A_1''} \otimes \mathbbm{1}_{B_1}) \otimes \Phi^{\mathcal{P}}_{A_1B_1}]\\[4pt]
    &\quad= Kd_s \frac{1}{K} \sum_{\ell=0}^{K-1} p_\ell\sum_{i',j'=0}^{K-1}\operatorname{Tr}_{A_1'}\!\left[\ket{j}\!\bra{i}_{A_1'}\,\ket{i'}\!\bra{j'}_{A_1'}\right]\, X^{\ell}\ket{i'}\!\bra{j'}_{B_1'}X^{-\ell}\\
    &\qquad\qquad\qquad\qquad\quad \otimes\,\operatorname{Tr}_{A_1''}\!\left[\bigl(\ket{b}\!\bra{a}_{A_1''} \otimes \mathbbm{1}_{B_1''}\bigr)\,(U^{i',\ell}\,\sigma^\ell\,(U^{j',\ell})^\dagger)_{A_1''B_1''}\right]\\
    &\quad=d_s \sum_{\ell=0}^{K-1}
    p_\ell\,X^{\ell}\ket{i}\!\bra{j}_{B_1'}X^{-\ell} \otimes \operatorname{Tr}_{A_1''}\!\left[\bigl(\ket{b}\!\bra{a}_{A_1''} \otimes \mathbbm{1}_{B_1''}\bigr)\,
(U^{i,\ell}\,\sigma^\ell\,(U^{j,\ell})^\dagger)_{A_1''B_1''}\right]\\
    &\quad=d_s \sum_{\ell=0}^{K-1} p_\ell\,\ket{i\oplus \ell}\!\bra{j\oplus \ell}_{B_1'} \otimes \bigl(\bra{a}_{A_1''} \otimes \mathbbm{1}_{B_1''}\bigr)\,
  (U^{i,\ell}\,\sigma^\ell\,(U^{j,\ell})^\dagger)_{A_1''B_1''}\,
    \bigl(\ket{b}_{A_1''} \otimes \mathbbm{1}_{B_1''}\bigr),
\end{aligned}
\end{equation}
Substituting~\eqref{eq:action_P_standard_basis} into the full output~\eqref{eq:output_omega_0_explicit}, we get:
\begin{equation}
\begin{aligned}
\label{eq:omega0_full}
    \omega^{(0)}_{B_1 A_2} &=  \sum_{\ell=0}^{K-1} p_\ell \sum_{i,j = 0}^{K-1}\rho_{ij}\,
    X^{\ell}\ket{i}\!\bra{j}_{B_1'}X^{-\ell} \otimes \ket{i}\!\bra{j}_{A_2'} \otimes\\
    &\qquad \qquad \qquad \left(\sum_{a,b = 0}^{d_s-1} \left(\bigl(\bra{a}_{A_1''} \otimes \mathbbm{1}_{B_1''}\bigr)\,
  (U^{i,\ell}\,\sigma^\ell\,(U^{j,\ell})^\dagger)_{A_1''B_1''}\,
    \bigl(\ket{b}_{A_1''} \otimes \mathbbm{1}_{B_1''}\bigr) \otimes\ket{a}\!\bra{b}_{A_2''}\right)\right)\\
    &=  \sum_{\ell=0}^{K-1} p_\ell \sum_{i,j = 0}^{K-1}\rho_{ij}\,
    X^{\ell}\ket{i}\!\bra{j}_{B_1'}X^{-\ell} \otimes \ket{i}\!\bra{j}_{A_2'} \otimes  (U^{i,\ell}\,\sigma^\ell\,(U^{j,\ell})^\dagger)_{A_2''B_1''},
\end{aligned}
\end{equation}
where the last equality follows by explicitly writing the shield system operators
\begin{align}
X^{i,j}_{A_1'' B_1''}(\ell) & \equiv (U^{i,\ell}\,\sigma^\ell\,(U^{j,\ell})^\dagger)_{A_1''B_1''}, \\
Y^{i,j}_{B_1''A_2''}(\ell)& \equiv\sum_{a,b=0}^{d_s-1}\left((\bra{a}_{A_1''}\otimes \mathbbm{1}_{B_1''})X^{i,j}(\ell)_{A_1''B_1''}(\ket{b}_{A_1''}\otimes \mathbbm{1}_{B_1''})\right)\otimes\ket{a}\!\bra{b}_{A_2''},
\end{align}
in the basis $\{\ket{a}_{A_1''}\otimes \ket{a'}_{B_1''}\}_{a,a' \in [d_s]}$, and observing that, up to the relabeling $A_1'' \simeq A_2''$, we have $Y^{i,j}_{B_1''A_2''}(\ell) = \mathcal{F}(X^{i,j}_{A_1'' B_1''}(\ell))$ where $\mathcal{F}$ denotes the flip channel, which swaps systems $A_1''$ and $B_1''$, for all $i,j, \ell \in [K]$.

Moreover, the state~\eqref{eq:omega0_full} has a \textit{parity block-diagonal} structure on the key systems $B_1'A_2'$. To see this, observe that each $\ell$-term in~\eqref{eq:omega0_full} is supported on key pairs of the form $\ket{i \oplus \ell,\, i}\!\bra{j \oplus \ell,\, j}$, so we get:
\begin{equation}
\label{eq:omega0_direct_sum}
\begin{aligned}
        \omega^{(0)}_{B_1A_2} &\eqqcolon \bigoplus_{\ell=0}^{K-1}\;\omega_{B_1A_2}^{(0)}(\ell),
\end{aligned}
\end{equation}
where \begin{equation}
    \omega^{(0)}_{B_1A_2}(\ell) \equiv \sum_{i,j=0}^{K-1} \rho_{ij}\, p_\ell\,
    \ket{i \oplus \ell}\!\bra{j \oplus \ell}_{B_1'}\otimes \ket{i}\!\bra{j}_{A_2''} \otimes 
    (U^{i,\ell}\,\sigma^\ell\,(U^{j,\ell})^\dagger)_{A_2''B_1''}.
\end{equation}
The action of the untwisting unitary channel~\eqref{eq:untwisting_unitary} on~\eqref{eq:omega0_direct_sum} preserves the block-diagonal structure; indeed, for every $\ell \in [K]$, the corresponding set of unitaries on the shield system are simply $\left(U^{i, \ell}_{A_2'B_1'}\right)^\dag = \left(U^{i, \ell}_{A_2'B_1'}\right)^{-1}$, for all $i \in [K]$, and
\begin{equation}
    \mathcal{V}_{B_1 A_2}^\dag (\omega^{(0)}_{B_1A_2}) = \bigoplus_{\ell=0}^{K-1} p_\ell \ket{i \oplus \ell}\!\bra{j \oplus \ell}_{B_1'}\otimes \ket{i}\!\bra{j}_{A_2''} \otimes \sigma^\ell_{A_2''B_1''}.
\end{equation}
Tracing over the shield gives:
\begin{equation}
    \Tr_{B_1'' A_2''}[\mathcal{V}^\dag (\omega^{(0)}_{B_1A_2})] = \bigoplus_{\ell=0}^{K-1} p_\ell \ket{i \oplus \ell}\!\bra{j \oplus \ell}_{B_1'}\otimes \ket{i}\!\bra{j}_{A_2''}.
\end{equation}
Since the state is still parity block-diagonal, the parity measurement~\eqref{eq:parity_measurement} projects onto sector $\ell$ with probability $p_\ell$, and the correction $X_{B_1'}^{-\ell}$ shifts $\ket{i \oplus \ell} \mapsto \ket{i}$, leading to the state \begin{equation}
\label{eq:state_after_correction}
     \sum_{\ell = 0}^{K-1} p_\ell \sum_{i,j = 0}^{K-1} \rho_{i,j}\, \ket{i}\!\bra{j}_{B_1'} \otimes
    \ket{i}\!\bra{j}_{A_2'} = \sum_{i,j = 0}^{K-1} \rho_{i,j}\, \ket{i}\!\bra{j}_{B_1'} \otimes
    \ket{i}\!\bra{j}_{A_2'},
\end{equation}
where in the last equality we used $\sum_{\ell=0}^{K-1} p_\ell = 1$. Finally, by applying the inverse CNOT~\eqref{eq:gen_inv_CNOT} ($(U^{\operatorname{CNOT}})^{-1}\ket{i}_{B_1'}\ket{i}_{A_2'} = \ket{i}_{B_1'}\ket{0}_{A_2'}$) followed by tracing over $A_2'$, we obtain the final state
\begin{equation}
    \xi^{(0)}_{B_1'} = \mathcal{D}^0_{B_1 A_2 \to B_1'} (\omega^{(0)}_{B_1 A_2}) =\sum_{i,j=0}^{K-1} \rho_{i,j}\, \ket{i}\!\bra{j}_{B_1'} = \rho_{B_1'}.
\end{equation}
This proves that, conditioned on no erasure, the effective channel acts as the identity.

\paragraph{Case 2: Erasure:}

When an erasure takes place, the system $A_2 = A_2'A_2''$ is lost to the environment and the quantum correlations between $A_1$ and $A_2$ are  destroyed. Specifically, tracing over  $A_2 = A_2'A_2''$ from the encoder output~\eqref{eq:output_state_encoder}  gives the reduced state on $A_1$:
\begin{align}
    \zeta_{A_1} &\equiv \Tr_{A_2}\!\left[\zeta_{A_1A_2}\right] 
    = \sum_{i,j}\rho_{ij}\, \ket{i}\!\bra{j}_{A_1'} \cdot 
    \underbrace{\delta_{ij}}_{\Tr[\ket{i}\!\bra{j}_{A_2'}]} \cdot 
    \underbrace{\pi_{A_1''}}_{\Tr_{A_2''}[\Phi^{d_s}_{A_1'' A_2''}]} 
    \nonumber\\[4pt]
    &= \overline{\Delta}_{A_1'}(\rho_{A_1'}) \otimes \pi_{A_1''},
    \label{eq:zeta_A1_erasure}
\end{align}
where $\pi_{A_1''} = \mathbbm{1}_{A_1''}/d_s$. Thus, the loss of $A_2$  simultaneously dephases the key system $A_1'$ and replaces the shield system $A_1''$ with the maximally mixed state.

The decoder $\mathcal{D}^1$ traces out $B_1''$ from the channel output $\omega^{(1)}_{B_1} = \mathcal{P}(\zeta_{A_1})$, yielding a state on the key system $B_1'$ alone. To evaluate this, we observe that the composition of applying $\mathcal{P}$ to the input $\overline{\Delta}(\rho) \otimes \pi$ and tracing $B_1''$ is equivalent to a key attack on the Choi state in~\eqref{eq:key-attacked}. More precisely, using~\eqref{eq:choi_jamiolkovski_isomorphism}:
\begin{equation}
\label{eq:partial_trace_completely_dephasing}
\begin{aligned}
    \xi^{(1)}_{B_1'} &\coloneqq \operatorname{Tr}_{B_1''}\!\left[\mathcal{P}(\overline{\Delta}(\rho) \otimes \pi_{A_1''})\right]\\
    &= K d_s \operatorname{Tr}_{A_1 B_1''}\!\left[
    \bigl(\overline{\Delta}(\rho)^T_{A_1'} \otimes \pi_{A_1''} \otimes 
    \mathbbm{1}_{B_1}\bigr)\, \Phi^{\mathcal{P}}_{A_1 B_1}\right]\\
    &= K \sum_{i=0}^{K-1} \rho_{i,i}\, 
    \operatorname{Tr}_{A_1'' B_1''}\!\left[
    \bigl(\ket{i}\!\bra{i}_{A_1'} \otimes 
    \mathbbm{1}_{A_1'' B_1}\bigr)\, 
    \Phi^{\mathcal{P}}_{A_1 B_1}\right],
\end{aligned}
\end{equation}
where we used $\overline{\Delta}(\rho)^T = \overline{\Delta}(\rho)$ and we wrote explicitly the action of the 
completely dephasing channel. Using the Choi state decomposition~\eqref{eq:choi_state_generic_PPT_channel} and 
the explicit form~\eqref{eq:gamma_ell_explicit}, the dephasing forces 
$i = j$ in each $\gamma^\ell$ from the partial trace in~\eqref{eq:partial_trace_completely_dephasing}, and the shield trace gives unity ($\operatorname{Tr}[U^{i,\ell}\sigma^\ell(U^{i,\ell})^\dagger] = 1$), yielding:
\begin{align}
    \xi^{(1)} &= K \sum_{\ell=0}^{K-1}  \sum_{i=0}^{K-1} p_\ell \frac{1}{K} \rho_{ii}  
    \, 
    X^{\ell}\ket{i}\!\bra{i}_{B_1'}X^{-\ell} \nonumber\\
    &= \sum_{\ell=0}^{K-1} p_\ell\, X^{\ell}\, 
    \overline{\Delta}(\rho)\, X^{-\ell}
    = (\mathcal{X}^{\overline{p}} \circ \overline{\Delta})(\rho).
    \label{eq:erasure_effective_channel}
\end{align}

This proves that, conditioned on erasure, the effective channel acts as $\mathcal{X}^{\overline{p}} \circ \overline{\Delta}$, where $\overline{p} \equiv \{p_\ell\}_{\ell=0}^{K-1}$ is the probability distribution from the Choi decomposition~\eqref{eq:choi_state_generic_PPT_channel}.

The two cases are combined, weighted by the erasure probability $q$, into the flagged channel:
\begin{equation}
    \widetilde{\mathcal{N}}(\rho) = (1-q)\,\rho \otimes\ket{0}\!\bra{0}_Z +  q\, (\mathcal{X}^{\overline{p}} \circ \overline{\Delta})(\rho) \otimes \ket{1}\!\bra{1}_Z,
\end{equation}
thus concluding the proof.
\end{proof}

\begin{proof}[Proof of Proposition~\ref{prop:equivalent_channel}]
By Lemma~\ref{lem:choi-as-mixture}, the Choi state of the Horodecki channel admits the decomposition $\Phi^{\mathcal{P}}_{A_1B_1} = (1-p)\gamma^0 + p\gamma^1$ with $p = 1/(1+\sqrt{2})$. Since $K = d_s = 2$, the hypotheses of Theorem~\ref{thm:effective_channel_general} are satisfied with $q = 1/2$ and $\overline{p} = (1-p, p)$, giving:
\begin{equation}
    \widetilde{\mathcal{N}}(\rho) = \frac{1}{2}\,\rho \otimes \ket{0}\!\bra{0}_Z + \frac{1}{2}\,(\mathcal{X}^p \circ \overline{\Delta})(\rho) \otimes \ket{1}\!\bra{1}_Z,
\end{equation}
where $\mathcal{X}^p(\rho) = (1-p)\rho + p\,\sigma_X\rho\,\sigma_X$.

It remains to specify the four untwisting unitaries, which are simply given by the adjoint of the twisting unitaries from Lemma~\ref{lem:choi-as-mixture}. There are two parity sectors, each with two key pairs:

\medskip\noindent\textit{Even parity} ($\ell = 0$, key pairs $(0,0)$ and $(1,1)$):
\begin{equation}
   (U^{00})^\dagger = \mathbbm{1}_4, \qquad   (U^{10})^\dagger  = \begin{pmatrix}
    1 & 0 & 0 & 0\\
    0 & 0 & 1 & 0\\
    0 & 1 & 0 & 0\\
    0 & 0 & 0 & -1
    \end{pmatrix}.
\end{equation}
\medskip\noindent\textit{Odd parity} ($\ell = 1$, key pairs $(1,0)$ and $(0,1)$):
\begin{equation}
    (U^{01})^\dagger = \mathbbm{1}_4, \qquad  
     (U^{11})^\dagger  = \begin{pmatrix}
    \frac{1}{\sqrt{2}} & 0 & 0 & \frac{1}{\sqrt{2}}\\[2pt]
    0 & 0 & 1 & 0\\
    0 & 1 & 0 & 0\\[2pt]
    \frac{1}{\sqrt{2}} & 0 & 0 & -\frac{1}{\sqrt{2}}
    \end{pmatrix}.
\end{equation}
Notice that for this specific channel, the twisting and untwisting unitaries coincide, since they are all unitary involutions. In this case, the mixture weights $p_0 = 1-p$ and $p_1 = p$ from Lemma~\ref{lem:choi-as-mixture} become operationally equal to the probabilities of the even- and odd-parity outcomes, respectively, in the parity measurement~\eqref{eq:parity_measurement}, and the correction procedure for $\ell = 1$ reduces to the Pauli bit-flip $\sigma_X$. Moreover, the inverse CNOT coincides with the CNOT itself in~\eqref{eq:gen_CNOT} because $ \sigma_X^{-1} = \sigma_X$.
\end{proof}

In the following, we will focus on the lowest dimensional channels with $d_s = K = 2$, as discussed in the main text; in this case, the protocol consists of very simple operations (state preparation, unitary gates, measurements, etc.) on qubit systems; thus, in principle, it is easy to implement on a present-day quantum computer.

\section{Second-Order Analysis of Superactivation}

Having established the effective channel $\widetilde{\mathcal{N}}$ in Theorem~\ref{thm:effective_channel_general}, we can now compute its coherent information and coherent information variance analytically, leveraging the simple structure of~\eqref{eq:effective_channel_general}.

\subsection{Coherent Information Moments of the Effective Channel}

Consider sending the maximally entangled state $\Phi^K_{RA_1'}$ through the effective channel $\widetilde{\mathcal{N}}_{A_1' \to B_1'Z}$ in~\eqref{eq:effective_channel_general}. Following the notation of Section~\ref{sec:effective_protocol}, the corresponding states at various stages of the protocol are\begin{align}
    \zeta_{RA_1A_2} &\equiv \Phi_{RA_1A_2}^{\widetilde{\mathcal{E}}} = (\mathrm{id}_R \otimes \widetilde{\mathcal{E}}_{A_1' \to A_1A_2})(\Phi^K_{RA_1'}),\\
    \label{eq:choi_encoder+channels}
    \omega_{R B_1 B_2} &\equiv \Phi_{RB_1 B_2}^{(\mathcal{P}\otimes \mathcal{A})\circ \widetilde{\mathcal{E}}} = (\id_R \otimes (\mathcal{P}\otimes \mathcal{A})_{A_1A_2 \to B_1 B_2}) (\zeta_{RA_1A_2}),\\
     \omega_{R B_1 A_2 Z} &\equiv \Phi_{RB_1 A_2 Z}^{\Pi \circ (\mathcal{P}\otimes \mathcal{A})\circ \widetilde{\mathcal{E}}} = (\id_{RB_1} \otimes (\Pi)_{B_2 \to A_2Z}) (\omega_{R B_1 A_2 Z}) \\
     \label{eq:choi_encoder+channels+erasure_measurement}
     &= (1-q) \omega^{(0)}_{R B_1 A_2} \otimes \ket{0}\!\bra{0}_Z + q \omega^{(1)}_{RB_1} \otimes \ket{1}\!\bra{1}_Z,\\
     \label{eq:choi_effective_channel}
     \xi_{RB_1'Z} &\equiv  \Phi_{RB_1'Z}^{\widetilde{\mathcal{N}}}  = 
    (\mathrm{id}_R \otimes \widetilde{\mathcal{N}}_{A_1' \to B_1'Z})
    (\Phi^K_{RA_1'})\\
    &=(1-q)(\id_R \otimes \mathcal{D}^0)(\omega^{(0)}_{R B_1 A_2}) \otimes \ket{0}\!\bra{0}_Z + q (\id_R \otimes \mathcal{D}^1)\omega^{(1)}_{RB_1} \otimes \ket{1}\!\bra{1}_Z,
\end{align}
where:\begin{align}
    \omega^{(0)}_{R B_1A_2} \equiv \Phi_{RB_1 A_2}^{(\mathcal{P}\otimes \mathrm{id})\circ \widetilde{\mathcal{E}}}= (\mathrm{id}_R \otimes (\mathcal{P}_{A_1 \to B_1}\otimes \mathrm{id}_{A_2}))(\zeta_{RA_1A_2}),\\
    \omega^{(1)}_{R B_1} \equiv \Phi_{RB_1 A_2}^{(\mathcal{P}\otimes \mathrm{Tr})\circ \widetilde{\mathcal{E}}}= (\mathrm{id}_R \otimes (\mathcal{P}_{A_1 \to B_1}\otimes \mathrm{Tr}_{A_2}))(\zeta_{RA_1A_2}).
\end{align}
Using the explicit encoder channel $\widetilde{\mathcal{E}}$ as given in Theorem~\ref{thm:effective_channel_general}, these states can be determined explicitly. Specifically:\begin{equation}
    \label{eq:choi_encoder_protocol}
    \zeta_{RA_1A_2} = \Phi^{\mathrm{GHZ},K}_{RA_1'A_2'}\otimes \Phi^{d_s}_{A_1'' A_2''},
\end{equation}
where we defined:\begin{equation}
\label{eq:GHZ_state}
    \Phi_{RA_1'A_2'}^{\text{GHZ}, K} \coloneqq \frac{1}{K}\sum_{i,j=0}^{K-1}%
|i\rangle\!\langle j|_{R}\otimes|i\rangle\!\langle j|_{A_1'}%
\otimes|i\rangle\!\langle j|_{A_2'}.
\end{equation}
Similarly:
\begin{align}
\label{eq:omega_0_choi_state}
\omega^{(0)}_{R B_1A_2} &= \mathcal{U}^{\mathrm{CNOT}}_{RA_2'} (\Phi^{\mathcal{P}}_{R A_2''B_1' B_1''}\otimes \ket{0}\!\bra{0}_{A_2'})\\
&=V_{A_2' B_1' A_2''B_1''}\left(\sum_{\ell=0}^{K-1}p_{\ell}\Phi_{RB_1'A_2'}^{\text{GHZ},K^{\ell}}\otimes
\sigma_{A_2''B_1''}^{\ell}\right)  (V_{A_2' B_1' A_2''B_1''
})^{\dag},
\end{align}
where\begin{align}
\Phi_{RB_1'A_2'}^{\text{GHZ},K^{\ell}} &\coloneqq X_{A_1'\to B_1'}^{\ell}\Phi_{RA_1'A_2'}^{\text{GHZ}}X_{A_1'\to B_1'}^{-\ell}.
\end{align}
Also, in~\eqref{eq:omega_0_choi_state} we used the definition of $\Phi^{\mathcal{P}}$ and the fact that $\Phi^{K\cdot d_s}_{RA_2'' A_1'A_1''} \coloneq \Phi^{K}_{RA_1'}\otimes \Phi^{d_s}_{A_1''A_2''}$. On the other hand:
\begin{equation}
    \omega^{(1)}_{R B_1} = \sum_{\ell = 0}^{K-1} p_\ell\overline{\Phi^K
}_{RB_1'}^{\ell}\otimes\pi_{B_1''},
\end{equation}
where we used similar steps as in Lemma~\ref{lem:key-attacked} to compute the partial trace over $B_1'A_2'A_2''$.  

Finally, the Choi state of the effective channel $\widetilde{\mathcal{N}}$ can be computed explicitly as
\begin{align}
\label{eq:output_MES_effective}
    \xi_{RB_1'Z} &= (1-q)\, \Phi^K_{RB_1'} \otimes \ket{0}\!\bra{0}_Z 
    + q \sum_{\ell=0}^{K-1} p_\ell\, 
    (\overline{\Phi^K})^{\ell}_{RB_1'} \otimes \ket{1}\!\bra{1}_Z,
\end{align}
where we used $(\mathrm{id}_R \otimes \overline{\Delta})(\Phi^K) =  (\overline{\Phi^K})$ and $(\mathrm{id}_R \otimes \mathcal{X}^{\overline{p}})(\overline{\Phi^K}) = \sum_\ell p_\ell (\overline{\Phi^K})^\ell$.

\begin{remark}
    In~\eqref{eq:omega_0_choi_state}, the systems $RA_2''$ play the role of reference systems in the definition the Choi state $\Phi^{\mathcal{P}}$, in accordance with the conventions of Figure~\ref{fig:encoder_protocol}. Pictorially, one can think that Alice and Bob first share the Choi state $\Phi^{\mathcal{P}}$; specifically, Alice possesses systems $R$ and $A_2''$, Bob possesses systems $B_1'$ and $B_1''$, and the goal is to generate entanglement (of dimension $K$) using the bound entangled state $\Phi^{\mathcal{P}}$, a local encoding $\widetilde{\mathcal{E}}$ and the zero-quantum capacity channel $\mathcal{A}^q$. This setting, already considered in \cite{Wilde_2010}, gives an alternative interpretation of $\widetilde{\mathcal{E}}$ in the context of \textit{entanglement generation}: to generate entanglement between systems $R$ and $B_1'$, Alice appends the state $|0\rangle\!\langle0|_{A_2'}$ and performs a CNOT locally to obtain the state $\omega^{(0)}$. Then, she sends $A_2'A_2''$ over the erasure channel $\mathcal{A}^q$. If an erasure is detected, Bob then traces out system $B_1''$ and replaces it with the maximally mixed state, obtaining $\omega^{(1)}$. This interpretation applies to the special case where the input state is $\Phi^K_{RA_1'}$, where the encoder gives rise to the state in~\eqref{eq:choi_encoder_protocol}.
\end{remark}

The following theorem gives an alternative interpretation to the decoding protocol $\widetilde{\mathcal{D}}$ of Figure~\ref{fig:decoder_complete}. Recall that by the data-processing inequality for the coherent information (see, e.g., \cite[Theorem 11.9.3]{Wilde_book}), we have:\begin{equation}
\label{eq:DPI_coherent_info}
     I(R \rangle B_1'Z)_\xi \leq I(R\rangle B_1B_2)_{\omega},
\end{equation}
where $\omega_{RB_1B_2}$ is defined in~\eqref{eq:choi_encoder+channels} and $\xi_{RB_1'Z} = (\id_R \otimes \widetilde{\mathcal{D}}_{B_1B_2 \to B_1'Z})(\omega_{RB_1B_2})$ in~\eqref{eq:choi_effective_channel}. We now show that these two states have the \textit{same} coherent information, that is, the decoder $\widetilde{\mathcal{D}}$ is precisely a quantum channel that saturates~\eqref{eq:DPI_coherent_info}. We do so by explicitly computing all coherent information moments of $\omega_{RB_1B_2}$ and $\xi_{RB_1'Z}$ in closed form, making use of the classical--quantum structure of the underlying states and the preliminary results of Section~\ref{sec:second-order-analysis-background}.
\begin{theorem}
\label{thm:coh-info-moments}
For all $k \in \mathbb{N}$, the coherent information $k$th moments of the states $\omega_{RB_1B_2}$ and $\xi_{RB_1'Z}$, defined respectively in~\eqref{eq:choi_encoder+channels} and in~\eqref{eq:choi_effective_channel}, are equal and given by
\begin{equation}
    M_k(R\rangle B_1B_2)_{\omega} =  M_k(R \rangle B_1'Z)_\xi = (1-q)\,[\log K]^k + q\, M_k(\vec{p}),
\end{equation}
where $M_k(\vec{p}) \coloneqq \sum_{\ell=0}^{K-1} p_\ell\, 
[\log p_\ell]^k$.
\end{theorem}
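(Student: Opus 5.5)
The plan is to compute both sides directly and show that each equals $(1-q)[\log K]^k + q\,M_k(\vec p)$. For both $\omega_{RB_1B_2}$ and $\xi_{RB_1'Z}$ the argument has the same structure:
\begin{enumerate}
\item split the state along an orthogonal block decomposition on Bob's side (the flag $Z$, respectively the erasure subspace of $B_2$);
\item apply Lemma~\ref{lem:cq-moments} in the case $r=s$, i.e.~\eqref{eq:direct-sim-kth-moment};
\item reduce each block with Lemmas~\ref{lem:mom-iso-inv} and~\ref{lem:kth-mom-prod-state} to one of two elementary computations.
\end{enumerate}

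\textbf{The effective channel.} Start from $\xi_{RB_1'Z}$ in~\eqref{eq:output_MES_effective}. Its Bob marginal is
\begin{equation}
\xi_{B_1'Z}=(1-q)\,\pi_{B_1'}\otimes\ket{0}\!\bra{0}_Z+q\,\pi_{B_1'}\otimes\ket{1}\!\bra{1}_Z ,
\end{equation}
since every $(\overline{\Phi^K})^\ell$ has maximally mixed $B_1'$-marginal. Hence $\xi_{RB_1'Z}$ and $\mathbbm{1}_R\otimes\xi_{B_1'Z}$ are classical--quantum in $Z$ with the same weights $r=s=(1-q,q)$. Equation~\eqref{eq:direct-sim-kth-moment} then gives
\begin{equation}
M_k(R\rangle B_1'Z)_\xi=(1-q)\,M_k(\Phi^K\Vert\mathbbm{1}_R\otimes\pi)+q\,M_k\Bigl(\textstyle\sum_\ell p_\ell(\overline{\Phi^K})^\ell\Bigm\Vert\mathbbm{1}_R\otimes\pi\Bigr).
\end{equation}
Both arguments commute with $\mathbbm{1}\otimes\pi$, so each term is a classical computation.
\begin{itemize}
\item $\Phi^K$ is an eigenvector of $\log\Phi^K$ (on its support) with eigenvalue $0$, and $-\log(\mathbbm{1}\otimes\pi)=\log K$. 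This gives $[\log K]^k$.
\item $\sum_\ell p_\ell(\overline{\Phi^K})^\ell$ is diagonal in $\{\ket{i,i\oplus\ell}\}$ with eigenvalues $p_\ell/K$. The operator $\log\rho-\log\sigma$ equals $\log p_\ell$ on the $\ell$-th eigenvectors, which gives $\sum_\ell p_\ell[\log p_\ell]^k=M_k(\vec p)$.
\end{itemize}

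\textbf{The joint channel before decoding.} For $\omega_{RB_1B_2}$, the output space of $B_2$ splits orthogonally into the input subspace $A_2$ and $\mathrm{span}\{\ket{e}\}$. Both $\omega_{RB_1B_2}$ and $\mathbbm{1}_R\otimes\omega_{B_1B_2}$ are block diagonal with respect to this split, with weights $(1-q,q)$. After identifying the blocks with a classical register via an isometry (Lemma~\ref{lem:mom-iso-inv}), \eqref{eq:direct-sim-kth-moment} applies again and reduces the problem to $M_k(R\rangle B_1A_2)_{\omega^{(0)}}$ and $M_k(R\rangle B_1)_{\omega^{(1)}}$.

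\emph{Erasure branch.} We have $\omega^{(1)}_{RB_1}=\sum_\ell p_\ell(\overline{\Phi^K})^\ell_{RB_1'}\otimes\pi_{B_1''}$, and $\pi_{B_1''}$ also appears as a tensor factor of the marginal. Lemma~\ref{lem:kth-mom-prod-state} removes it, and the computation above gives $M_k(\vec p)$.

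\emph{No-erasure branch.} By~\eqref{eq:omega_0_choi_state}, $\omega^{(0)}=V\bigl(\sum_\ell p_\ell\Phi^{\mathrm{GHZ},K^\ell}\otimes\sigma^\ell\bigr)V^\dagger$, where $V$ acts only on Bob's systems $B_1'A_2'B_1''A_2''$. Therefore $\mathbbm{1}_R\otimes\omega^{(0)}_{B}$ is conjugated by the same $V$, and Lemma~\ref{lem:mom-iso-inv} lets us drop $V$. The remaining state is block diagonal in the parity sectors $\ell$ of $B_1'A_2'$; these are orthogonal Bob-side subspaces, and the Bob marginal carries the same weights $p_\ell$. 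A second use of~\eqref{eq:direct-sim-kth-moment}, followed by Lemma~\ref{lem:kth-mom-prod-state} to remove $\sigma^\ell$, leaves
\begin{equation}
M_k\bigl(\Phi^{\mathrm{GHZ},K^\ell}\bigm\Vert\mathbbm{1}_R\otimes\Phi^{\mathrm{GHZ},K^\ell}_{B_1'A_2'}\bigr).
\end{equation}
A Bob-local unitary (shift correction followed by an inverse CNOT) maps $\Phi^{\mathrm{GHZ},K^\ell}$ to $\Phi^K_{RB_1'}\otimes\ket0\!\bra0_{A_2'}$. Using Lemmas~\ref{lem:mom-iso-inv} and~\ref{lem:kth-mom-prod-state}, each block therefore contributes $[\log K]^k$, independently of $\ell$. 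Summing with weights $p_\ell$ gives $[\log K]^k$ for this branch, and the two expressions coincide.

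\textbf{Main obstacle.} The delicate step is rigorously justifying the ``block-diagonal Lemma~\ref{lem:cq-moments}'', i.e.~replacing a genuine classical register by an orthogonal decomposition of Bob's space. I would handle it by exhibiting an explicit isometry onto $\ket{x}\!\bra{x}\otimes(\cdot)$ form and checking two things: that it intertwines both $\rho$ and $\mathbbm{1}_R\otimes\rho_B$, and that the block weights of the marginal match, so that $r=s$. The first point holds because the decomposition lives on Bob's side only. A second, minor obstacle is support: $\log$ of the rank-deficient states must be read on their supports. Since $\mathrm{supp}\,\rho\subseteq\mathrm{supp}(\mathbbm{1}_R\otimes\rho_B)$, the moments are well defined and only the eigenvalues on $\mathrm{supp}\,\rho$ enter.
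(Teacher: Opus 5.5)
Your proposal is correct and follows the paper's proof essentially step for step: the erasure split becomes an isometric flag, \eqref{eq:direct-sim-kth-moment} handles both the flag and the parity sectors, and $V$, $\sigma^\ell$, $\pi_{B_1''}$ and $\ket{e}\!\bra{e}$ are removed via Lemmas~\ref{lem:mom-iso-inv} and~\ref{lem:kth-mom-prod-state}. The only cosmetic difference is that you finish each no-erasure sector with the Bob-local shift correction and inverse CNOT, reducing it to the $\Phi^K$ computation, whereas the paper evaluates $M_k(\Phi^{\mathrm{GHZ},K}\Vert\mathbbm{1}_R\otimes\overline{\Phi})$ directly via the projector $\Pi^\perp$.

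Like the paper, you take $\omega^{(1)}_{RB_1}=\sum_\ell p_\ell(\overline{\Phi^K})^\ell_{RB_1'}\otimes\pi_{B_1''}$ as given. Removing the $B_1''$ factor actually needs the marginals $\Tr_{A_1''}[U^{i,\ell}\sigma^\ell(U^{i,\ell})^\dagger]$ not to depend on $(i,\ell)$. This holds for the Horodecki channel of Lemma~\ref{lem:choi-as-mixture}, where they all equal $\pi_{B_1''}$, but not for an arbitrary mixture of shifted private states. For instance, if $B_1''$ records $\ell$, the $\omega$-half of the identity fails.
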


\begin{proof}
First, let us show that the states in~\eqref{eq:choi_encoder+channels} and~\eqref{eq:choi_encoder+channels+erasure_measurement} have the same coherent information moments. This generalizes a standard argument
and follows from the fact that $\omega_{RB_1A_2 Z} = (\id_{RB_1}\otimes \Pi_{B_2 \to A_2Z})(\omega_{RB_1B_2})$, where the erasure measurement $\Pi_{B_2 \to A_2Z}$ can be seen as the application of an isometric channel:\begin{equation}
    U^Z_{B_2 \to B_2Z} = \Pi_{A_2} \otimes \ket{0}_Z + \ket{e}\!\bra{e}_{B_2}\otimes \ket{1}_Z,
\end{equation}followed by tracing out the erasure flag. Then:\begin{align}
\label{eq:M_k_CQ_state_output}
    M_k(R \rangle B_1B_2)_{\omega} &= M_k(R \rangle B_1B_2Z)_{\omega} \\
\label{eq:M_k_CQ_state_output_decomposed}
     &=\left( 1-q\right) M_{k}(R\rangle B_1A_2)_{\omega^{(0)}} + q M_{k}(R\rangle B_1 B_2)_{\omega^{(1)}\otimes \ket{e}\!\bra{e}}\\
        &=\left( 1-q\right) M_{k}(R\rangle B_1A_2)_{\omega^{(0)}} + q M_{k}(R\rangle B_1)_{\omega^{(1)}},
\end{align}
where the equality in~\eqref{eq:M_k_CQ_state_output} follows from Lemma~\ref{lem:mom-iso-inv}, the equality in~\eqref{eq:M_k_CQ_state_output_decomposed} follows from Lemma~\ref{lem:cq-moments}, and the last equality follows from Lemma~\ref{lem:kth-mom-prod-state}. 

Let us consider the two terms separately. The first term is given by:
\begin{align}
\label{eq:proof-M_k_omega_start}
M_{k}(R\rangle B_1A_2)_{\omega^{(0)}}
&  =M_{k}\!\left(  \omega^{(0)}_{RB_1A_2}\middle\Vert \mathbbm{1}_{R}\otimes\omega^{(0)}_{B_1A_2}\right)  \\
&  =M_{k}\!\left(
\begin{array}
[c]{c}%
V_{A_2^{\prime}B_1^{\prime}A_2^{\prime\prime}B_1^{\prime\prime}}\left(  \sum_{\ell}p_{\ell}\Phi_{RB_1^{\prime}A_2^{\prime}}^{\text{GHZ},K^{\ell}}\otimes\sigma_{A_2^{\prime\prime}B_1^{\prime\prime}}^{\ell}\right)
(V_{A_2^{\prime}B_1^{\prime}A_2^{\prime\prime}B_1^{\prime\prime}})^{\dag}\Vert\\
\label{eq:proof-M_k_omega_twisting}
\mathbbm{1}_{R}\otimes V_{A_2^{\prime}B_1^{\prime}A_2^{\prime\prime}B_1^{\prime\prime}}\left(  \sum_{\ell}p_{\ell
}\overline{\Phi}_{B_1^{\prime}A_2^{\prime}}^{\ell}\otimes\sigma_{A_2^{\prime\prime}B_1^{\prime\prime}}^{\ell
}\right)  (V_{A_2^{\prime}B_1^{\prime}A_2^{\prime\prime}B_1^{\prime\prime}})^{\dag}%
\end{array}
\right)  \\
\label{eq:proof-M_k_omega_untwisting}
&  =M_{k}\!\left(  \sum_{\ell}p_{\ell}\Phi_{RB_1^{\prime}A_2^{\prime}}^{\text{GHZ},K^{\ell}
}\otimes\sigma_{A_2^{\prime\prime}B_1^{\prime\prime}}^{\ell}\middle\Vert \mathbbm{1}_{R}%
\otimes\sum_{\ell}p_{\ell}\overline{\Phi}_{B_1^{\prime}A_2^{\prime}}^{\ell}\otimes\sigma_{A_2^{\prime\prime
}B_1^{\prime\prime}}^{\ell}\right)  \\
\label{eq:proof--M_k_omega_post-measurement}
&  =\sum_{\ell}p_{\ell}M_{k}\!\left(  \Phi_{RB_1^{\prime}A_2^{\prime}}^{\text{GHZ},K^{\ell}
}\otimes\sigma_{A_2^{\prime\prime}B_1^{\prime\prime}}^{\ell}\middle\Vert \mathbbm{1}_{R}%
\otimes\overline{\Phi}_{B_1^{\prime}A_2^{\prime}}^{\ell}\otimes\sigma_{A_2^{\prime\prime}B_1^{\prime\prime}}^{\ell
}\right)  \\
\label{eq:proof--M_k_omega_trace_shield}
&  =\sum_{\ell}p_{\ell}M_{k}\!\left(  \Phi_{RB_1^{\prime}A_2^{\prime}}^{\text{GHZ},K^{\ell}
}\middle\Vert \mathbbm{1}_{R}\otimes\overline{\Phi}_{B_1^{\prime}A_2^{\prime}}^{\ell}\right)  \\
\label{eq:proof--M_k_omega_X_correction}
&  =\sum_{\ell}p_{\ell}M_{k}\!\left(  \Phi_{RB_1^{\prime}A_2^{\prime}}^{\text{GHZ},K%
}\middle\Vert \mathbbm{1}_{R}\otimes\overline{\Phi}_{B_1^{\prime}A_2^{\prime}}\right)  \\
&  =M_{k}\!\left(  \Phi_{RB_1^{\prime}A_2^{\prime}}^{\text{GHZ},K}\middle\Vert \mathbbm{1}_{R%
}\otimes\overline{\Phi}_{B_1^{\prime}A_2^{\prime}}\right)  \label{eq:proof--M_k_omega_X_correction_after}.
\end{align}
In the above, we applied Lemmas~\ref{lem:mom-iso-inv}\ and
\ref{lem:kth-mom-prod-state}, as well as~\eqref{eq:direct-sim-kth-moment}.
Defining the projection%
\begin{equation}
\Pi^{\perp}\coloneqq \mathbbm{1}_{R}\otimes\overline{\Gamma}_{B_1^{\prime}A_2^{\prime}}-\Phi_{RB_1^{\prime}A_2^{\prime}}^{\text{GHZ},K},
\end{equation}
now consider that%
\begin{align}
\label{eq:proof--M_k_omega_X_undo_encoder}
  M_{k}\!\left(  \Phi_{RB_1^{\prime}A_2^{\prime}}^{\text{GHZ},K}\Vert \mathbbm{1}_{R}%
\otimes\overline{\Phi}_{B_1^{\prime}A_2^{\prime}}\right)  
&  =\operatorname{Tr}\!\left[  \Phi_{RB_1^{\prime}A_2^{\prime}}^{\text{GHZ},K}\left(  \log
_{2}\Phi_{RB_1^{\prime}A_2^{\prime}}^{\text{GHZ},K}-\log_{2}\mathbbm{1}_{R}\otimes\overline
{\Phi}_{B_1^{\prime}A_2^{\prime}}\right)  ^{k}\right]  \\
&  =\operatorname{Tr}\!\left[  \Phi_{RB_1^{\prime}A_2^{\prime}}^{\text{GHZ},K}\left(  \log
_{2}\Phi_{RB_1^{\prime}A_2^{\prime}}^{\text{GHZ},K}+\left(  \log_{2}K\right)  \mathbbm{1}_{R}\otimes\overline{\Gamma}_{B_1^{\prime}A_2^{\prime}}\right)  ^{k}\right]  \\
&  =\operatorname{Tr}\!\left[  \Phi_{RB_1^{\prime}A_2^{\prime}}^{\text{GHZ},K}\left(  \log
_{2}\Phi_{RB_1^{\prime}A_2^{\prime}}^{\text{GHZ},K}+\left(  \log_{2}K\right)  \left(
\Phi_{RB_1^{\prime}A_2^{\prime}}^{\text{GHZ},K}+\Pi^{\perp}\right)  \right)  ^{k}\right]  \\
&  =\operatorname{Tr}\!\left[  \Phi_{RB_1^{\prime}A_2^{\prime}}^{\text{GHZ},K}\left(  \left(
\log_{2}K\right)  \Phi_{RB_1^{\prime}A_2^{\prime}}^{\text{GHZ},K}+\left(  \log_{2}K\right)
\Pi^{\perp}\right)  ^{k}\right]  \\
&  =\operatorname{Tr}\!\left[  \Phi_{RB_1^{\prime}A_2^{\prime}}^{\text{GHZ},K}\left(  \left(
\log_{2}K\right)  ^{k}\Phi_{RB_1^{\prime}A_2^{\prime}}^{\text{GHZ},K}+\left(  \log
_{2}K\right)  ^{k}\Pi^{\perp}\right)  \right]  \\
&  =\left[  \log_{2}K\right]  ^{k}.
\label{eq:proof--M_k_omega_X_undo_encoder_end}
\end{align}
The second term is given by:
\begin{align}
\label{eq:erasure_term_M_k}
M_k(R \rangle B_1)_{\omega^{(1)}} &=M_k(R \rangle B_1')_{\sum_\ell p_\ell \overline{\Phi^K}^\ell}\\
&= M_{k}\!\left(  \sum_{\ell}p_{\ell}\overline{\Phi}_{R B_1'}^{\ell}
\middle\Vert \mathbbm{1}_{R}\otimes\operatorname{Tr}_{R}\!\left[
\sum_{\ell'}p_{\ell'}\overline{\Phi}_{R B_1'}^{\ell'}\right]  \right)  \\
&= M_{k}\!\left(  \sum_{\ell}p_{\ell}\overline{\Phi}_{R B_1'}^{\ell}
\middle\Vert \frac{\mathbbm{1}_{R}\otimes \mathbbm{1}_{B_1'}}{K}\right)  \\
&= \operatorname{Tr}\!\left[  \sum_{\ell}p_{\ell}\overline{\Phi}_{R B_1'}^{\ell}
\left(  \log_{2}\sum_{\ell'}p_{\ell'}\overline{\Phi}_{R B_1'}^{\ell'}
-\log_{2}\frac{\mathbbm{1}_{R}\otimes \mathbbm{1}_{B_1'}}{K}\right)^{k}\right]  \\
&= \sum_{\ell}p_{\ell}\operatorname{Tr}\!\left[  \overline{\Phi}_{R B_1'}^{\ell}
\left(  \log_{2}\sum_{\ell'}p_{\ell'}K\,\overline{\Phi}_{R B_1'}^{\ell'}\right)^{k}\right]  \\
&= \sum_{\ell}p_{\ell}\operatorname{Tr}\!\left[  \overline{\Phi}_{R B_1'}^{\ell}
\left(  \log_{2}\sum_{\ell'}p_{\ell'}\overline{\Gamma}_{R B_1'}^{\ell'}\right)^{k}\right]  \\
&= \sum_{\ell}p_{\ell}\operatorname{Tr}\!\left[  \overline{\Phi}_{R B_1'}^{\ell}
\left(  \sum_{\ell'}\bigl[\log_{2}(p_{\ell'})\bigr]^{k}\overline{\Gamma}_{R B_1'}^{\ell'}\right)\right]  \\
&= \sum_{\ell,\ell'}p_{\ell}\bigl[\log_{2}(p_{\ell'})\bigr]^{k}
\operatorname{Tr}\!\left[  \overline{\Phi}_{R B_1'}^{\ell}\,
\overline{\Gamma}_{R B_1'}^{\ell'}\right]  \\
&= \sum_{\ell}p_{\ell}\bigl[\log_{2}(p_{\ell})\bigr]^{k} \\
&= M_k(\vec p)\,,
\end{align}
where the first equality follows by Lemma~\ref{lem:kth-mom-prod-state}, and  we used the orthogonality of the projectors $\{\overline{\Gamma^K}^{\ell}\}_\ell \coloneqq \{K\,\overline{\Phi^K}^{\ell}\}_\ell$, and the property $\operatorname{Tr}_{R}[\sum_\ell p_\ell \overline{\Phi^K}^\ell] = 
\pi_K$ for all $\ell = 0, \ldots, K-1$. This concludes the first part of the proof. 

For the second part, since the state $\xi_{RB_1'Z}$ is classical--quantum with respect to $Z$,  by~\eqref{eq:direct-sim-kth-moment}:
\begin{equation}
    M_k(R \rangle B_1'Z)_\xi = (1-q)\, M_k(R \rangle B_1')_{\Phi^K} 
    + q\, M_k(R \rangle B_1')_{\sum_\ell p_\ell \overline{\Phi}^\ell}.
\end{equation}
For the first term, using the projector onto the MES $\Gamma^K$ and the orthogonal complement $\Pi^\perp \coloneqq \mathbbm{1}_{K^2} - \Gamma^K$, we have:
\begin{align}
    M_k(R \rangle B_1')_{\Phi^K} 
    &= \operatorname{Tr}\!\left[\Phi^K \left(\log \Phi^K - 
    \log \frac{\mathbbm{1}_{K^2}}{K}\right)^k
    \right] \nonumber\\
    &= \operatorname{Tr}\!\left[\Phi^K \left(\log \Phi^K + 
    (\log K)\mathbbm{1}_{K^2}\right)^k\right]
    \nonumber\\
    &= \operatorname{Tr}\!\left[\Phi^K \left(\log \Phi^K + 
    (\log K)(\Gamma^K + \Pi^\perp)\right)^k\right] \nonumber\\
    &= \operatorname{Tr}\!\left[\Phi^K \left((\log K)\Gamma^K + 
    (\log K)\Pi^\perp\right)^k\right] = [\log K]^k,
\end{align}
The second term is the same as in~\eqref{eq:erasure_term_M_k}, so the proof is concluded.
\end{proof}  
Notice that the algebraic steps from~\eqref{eq:proof-M_k_omega_start} to~\eqref{eq:proof--M_k_omega_X_undo_encoder_end} are ``physically'' implemented by the correcting protocol $\mathcal{D}^0$ in Figure~\ref{fig:decoder_noerasure}, which allows to recover the identity channel. Specifically, the untwisting unitary channel corresponds to the step from~\eqref{eq:proof-M_k_omega_twisting} to~\eqref{eq:proof-M_k_omega_untwisting}, the generalized parity measurement and trace over the shield systems corresponds to steps from~\eqref{eq:proof-M_k_omega_untwisting} to~\eqref{eq:proof--M_k_omega_trace_shield}, the $\mathcal{X}$-correction corresponds to~\eqref{eq:proof--M_k_omega_X_correction}-\eqref{eq:proof--M_k_omega_X_correction_after}, and the last two steps, which undo the encoding operation $\widetilde{\mathcal{E}}$, correspond to the steps from~\eqref{eq:proof--M_k_omega_X_undo_encoder} to~\eqref{eq:proof--M_k_omega_X_undo_encoder_end}.

\begin{corollary}
\label{cor:coh-info-variance}
The coherent information and coherent information variance of the states $\xi_{RB_1'Z}$ and $\omega_{RB_1B_2}$ are:
\begin{align}
    I(R\rangle B_1B_2)_{\omega} &= I(R \rangle B_1'Z)_\xi = (1-q)\log K - q\, H(\vec{p}), 
    \label{eq:coh_info_effective}\\
    V(R\rangle B_1B_2
)_{\omega} &=  V(R \rangle B_1'Z)_\xi=  q\, V(\vec{p}) + q(1-q)\,
    \bigl(\log K + H(\vec{p})\bigr)^2, 
    \label{eq:coh_var_effective}
\end{align}
where $H(\vec{p}) \coloneqq -\sum_\ell p_\ell \log p_\ell$ and $V(\vec{p}) \coloneqq \sum_\ell p_\ell (-\log p_\ell - H(\vec{p}))^2$ are the Shannon entropy and Shannon entropy variance of the 
distribution $\{p_\ell\}_\ell$.
\end{corollary}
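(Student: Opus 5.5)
The corollary follows from Theorem~\ref{thm:coh-info-moments}, since the coherent information is the first moment and the variance is $M_2 - M_1^2$. I will first invoke the theorem with $k=1$ and $k=2$; it gives identical values for $\omega_{RB_1B_2}$ and $\xi_{RB_1'Z}$, so both equalities between the two states are immediate. It then remains to evaluate the common expressions.

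For $k=1$, the theorem gives
\begin{equation}
I(R\rangle B_1B_2)_\omega = I(R\rangle B_1'Z)_\xi = (1-q)\log K + q\sum_\ell p_\ell \log p_\ell = (1-q)\log K - q\,H(\vec p),
\end{equation}
which is \eqref{eq:coh_info_effective}.

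For $k=2$, the theorem gives $M_2 = (1-q)(\log K)^2 + q\,M_2(\vec p)$, where $M_2(\vec p) = \sum_\ell p_\ell (\log p_\ell)^2 = V(\vec p) + H(\vec p)^2$. Using the definition \eqref{eq:coherent_entropy_variance}, the variance is $V = M_2 - M_1^2$. Write $a \coloneqq \log K$ and $h \coloneqq H(\vec p)$. Then
\begin{equation}
V = (1-q)a^2 + q\,V(\vec p) + q h^2 - \bigl((1-q)a - q h\bigr)^2 .
\end{equation}
Expanding the square, the $a^2$ terms combine to $(1-q)q\,a^2$, the $h^2$ terms to $q(1-q)h^2$, and the cross term gives $+2q(1-q)ah$. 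Together these form $q(1-q)(a+h)^2$, so $V = q\,V(\vec p) + q(1-q)(\log K + H(\vec p))^2$, which is \eqref{eq:coh_var_effective}.

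An equivalent route, useful as a check, is Corollary~\ref{cor:variance-cq-state} applied to the classical--quantum state $\xi$ with $r=s$. The branch variances are $0$ for the maximally entangled branch and $V(\vec p)$ for the key-attacked branch. The branch means are $\log K$ and $-H(\vec p)$, and their Bernoulli variance is $q(1-q)(\log K + H(\vec p))^2$. There is no genuine obstacle here; the only care needed is tracking signs in the cross term of the square.
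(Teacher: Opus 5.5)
Your proposal is correct and follows the paper's proof essentially verbatim: set $k=1$ and $k=2$ in Theorem~\ref{thm:coh-info-moments}, use $M_2(\vec p) = V(\vec p) + H(\vec p)^2$, and expand $M_2 - M_1^2$ so the terms collect into $q(1-q)\bigl(\log K + H(\vec p)\bigr)^2$. Your cross-check via Corollary~\ref{cor:variance-cq-state} is a valid independent confirmation, but the proof does not need it.
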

\begin{proof}
Setting $k = 1$ in Theorem~\ref{thm:coh-info-moments} and using 
$M_1(\vec{p}) = -H(\vec{p})$ gives~\eqref{eq:coh_info_effective}. 
Setting $k = 2$:
\begin{align}
    V(R\rangle B_1B_2
)_{\omega}& =V(R \rangle B_1'Z)_\xi \\
&= M_2(R \rangle B_1'Z)_\xi - 
    [I(R \rangle B_1'Z)_\xi]^2 \nonumber\\
    &= (1-q)[\log K]^2 + q\, M_2(\vec{p}) - 
    [(1-q)\log K - q\, H(\vec{p})]^2.
\end{align}
Expanding the square and collecting terms:
\begin{align}
     V(R \rangle B_1'Z)_\xi&= q\, M_2(\vec{p}) + (1-q)[\log K]^2(1 - (1-q)) + 
    2q(1-q)\, H(\vec{p})\log K - q^2[H(\vec{p})]^2 \nonumber\\
    &= q\bigl(M_2(\vec{p}) - [H(\vec{p})]^2\bigr) + 
    q(1-q)\bigl([\log K]^2 + 2H(\vec{p})\log K + [H(\vec{p})]^2\bigr) 
    \nonumber\\
    &= q\, V(\vec{p}) + q(1-q)\bigl(\log K + H(\vec{p})\bigr)^2,
\end{align}
where we used $V(\vec{p}) = M_2(\vec{p}) - [H(\vec{p})]^2$.
\end{proof}

\begin{remark}
Comparing~\eqref{eq:coh_info_effective} with~\eqref{eq:DPI_coherent_info} gives an interpretation of the decoding channel $\mathcal{D}^{p}$ is the channel which saturates the data-processing inequality for the coherent information, that is it satisfies:\begin{equation}
\label{eq:saturation_DPI}
    D( \xi_{RZB_1'} \Vert \mathbbm{1}_R \otimes \xi_{ZB_1'} ) = D(\omega_{RB_1B_2} \Vert \mathbbm{1}_R \otimes \omega_{B_1B_2}),
\end{equation}
where $\xi_{RZB_1'} = (\id_R \otimes \mathcal{D}_{B_1B_2 \to ZB_1'}^p)(\omega_{RB_1B_2})$ and $\xi_{ZB_1'}\equiv \Tr_R[\xi_{RZB_1'}].$ This physically means that the local processing $\widetilde{\mathcal{D}}$ on the state $\omega_{RB_1B_2}$ is one that perfectly preserves the quantum correlations with the reference system $R$, as measured by the coherent information, and thus it is optimal for entanglement transmission over $(\mathcal{P}\otimes \mathcal{A})\circ \widetilde{\mathcal{E}}$.
\end{remark}

\subsection{First Demonstration of Non-Asymptotic Superactivation}

We now focus on the case $K = d_s = 2$ and $q = 1/2$ with the Horodecki channel of Lemma~\ref{lem:choi-as-mixture}. The distribution is $\vec{p} = (1-p,\, p)$ with $p = 1/(1+\sqrt{2})$, and the effective channel is $\widetilde{\mathcal{N}}$ from Proposition~\ref{prop:equivalent_channel}. It is well known from the analysis of \cite{Smith2008} that the state $\omega_{RB_1B_2}$ gives rise to a strictly positive coherent information:\begin{equation}
\label{eq:asymptotic_coherent_info_smith_yard}
    I_c(R\rangle B_1B_2)_{\omega} \approx 0.0107 >0.
\end{equation}
Since we are interested in a lower bound on the $n$-shot quantum capacity, the coherent information and the coherent information variance of $\widetilde{\mathcal{N}}$ can directly be evaluated at the MES $\Phi^2_{RA_1'}$, so to obtain:\begin{equation}
    \iota  \equiv I_c(R\rangle B_1'Z)_\xi =\frac{1}{2} (1  - h(p)),\quad 
    \nu  \equiv V(R \rangle B_1'Z)_\xi = \frac{1}{2}v(p) + \frac{1}{4}(1+h(p))^2,
\end{equation}
with the binary entropy $h(p)$ and binary entropy variance~$v(p)$ defined as
\begin{align}
h(p)& \coloneqq  -p\log p - (1-p) \log (1-p),\\
v(p) & \coloneqq p (-\log p -h(p))^2 
 + (1-p)(-\log (1-p) - h(p))^2.
\end{align}
For $p = 1/(1+\sqrt{2})$, we recover the result of~\eqref{eq:asymptotic_coherent_info_smith_yard}, proving that the channel $\widetilde{\mathcal{N}}$ can transmit quantum information at a non-trivial rate:\begin{equation}
    Q(\widetilde{\mathcal{N}}) \geq I_c(\widetilde{\mathcal{N}}) \geq \iota \approx 0.0107 >0.
\end{equation}

We can compute numerically the corresponding third order moment $\tau \equiv T^3(R\rangle B_1'Z)_\xi$, by evaluating~\eqref{eq:third-order-coh-absolute-moment} for the same input MES $\Phi^2_{RA_1'}, $ thus obtaining, by the data-processing inequality ($Q^{n, \varepsilon}(\mathcal{P}\otimes \mathcal{A}) \geq Q^{n, \varepsilon}(\widetilde{\mathcal{N}})$), the second-order lower bound for the joint channel:
\begin{equation}
    Q^{n, \varepsilon}(\mathcal{P}\otimes \mathcal{A}) \geq \iota + \sqrt{\frac{\nu}{n}} \Phi^{-1}\!\left(\varepsilon - \frac{1}{\sqrt{n}}\frac{C\, \tau}{\nu^{3/2}}\right),
    \label{eq:3nd-order-lower-bnd}
\end{equation}
which is the explicit formula using the Berry--Esseen Theorem on~\eqref{eq:2nd-order-lower-bnd}. The \textit{normal approximation} is instead an approximated lower bound where the higher-order term in \eqref{eq:2nd-order-lower-bnd} is neglected.

For our channel of interest, one has:\begin{equation}
   \iota \approx 0.0107, \qquad \nu \approx 1.009, \qquad \tau \approx 1.061,
\end{equation}
and substituting $C = 0.4748$ \cite{korolev2012improvement}, the results are shown in Figure \ref{fig:rate_bounds}. Notice that the third-order correction of~\eqref{eq:3nd-order-lower-bnd} is $\frac{C,\tau}{\nu^{3/2}\sqrt{n}} \approx \frac{0.4969}{\sqrt{n}}$, so the Berry--Esseen curve resembles the normal approximation, with a modest shift that delays the crossing, as shown in Figure \ref{fig:normal-approx} in the main text. 

To give an idea, for $\varepsilon = 0.25$,  the first value of $n$ yielding a rate larger than both uniform upper bounds in~\eqref{eq:outer_bound_PPT} and~\eqref{eq:outer_bound_erasure} is $n = 4218$ for the normal approximation and $n = 4504$ for the Berry--Esseen corrected bound (a difference of $286$ channel uses).

\begin{figure*}[t]
    \centering
    \includegraphics[width=\textwidth]{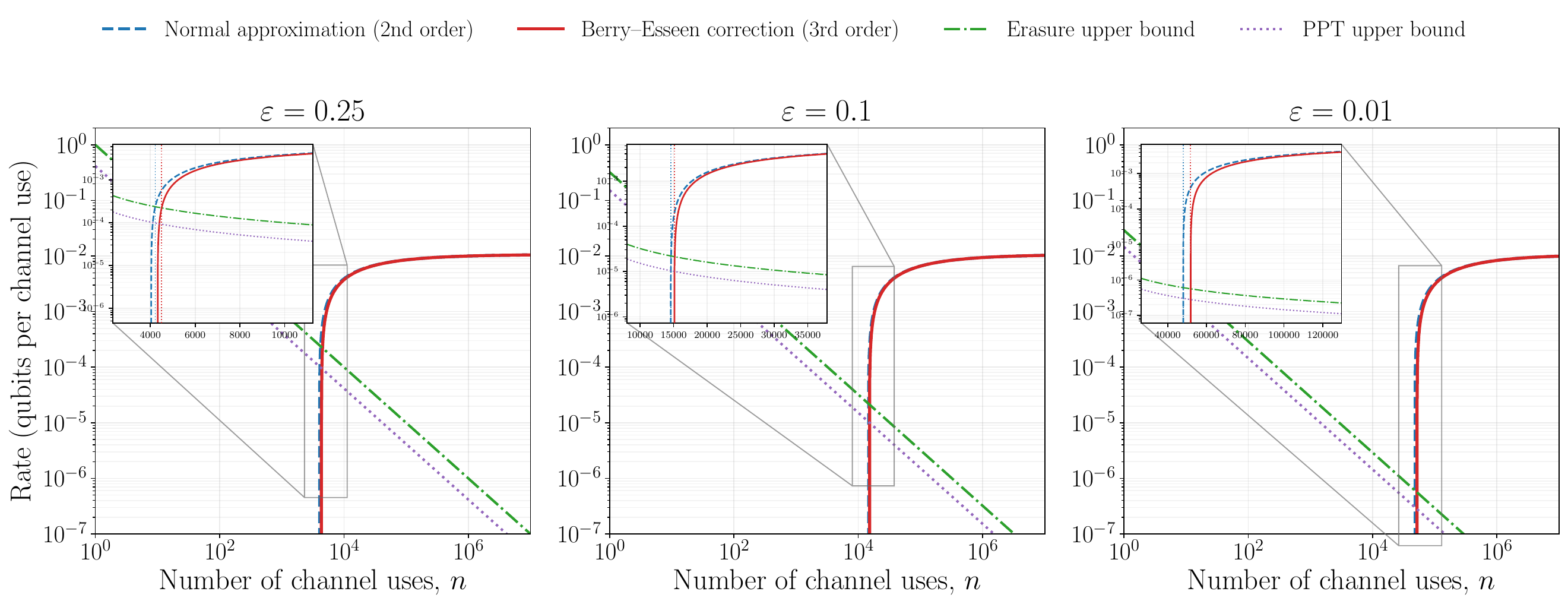}
    \caption{%
        Rate bounds for the superactivation channel $\mathcal{P} \otimes \mathcal{A}$ as a function of the number of channel uses $n$, for three error probabilities $\varepsilon$. The blue dashed curve shows the normal approximation (second-order) lower bound and the red solid curve shows the Berry--Esseen corrected (third-order) lower bound, both from Eq.\eqref{eq:lowerBound_analytic_n_shot} ($\iota \approx 0.0107$, $\nu \approx 1.009$, $\tau \approx 1.061$). The green dash-dotted and purple dotted curves show the upper bounds for the erasure and PPT component channels, respectively, from Eqs.\eqref{eq:outer_bound_erasure} and~\eqref{eq:outer_bound_PPT}. Both axes use logarithmic scale. The insets zoom into the crossing region ($n \sim 10^3$--$10^4$) where each lower bound first exceeds both upper bounds, certifying superactivation in the finite blocklength regime; the separation between the two lower bound crossings illustrates the effect of the third-order correction. The slow convergence at small $n$ motivates the direct optimization methods presented in Section~\ref{sec:symmetric_seesaw}.
    }
    \label{fig:rate_bounds}
\end{figure*}

\section{From Rates to Error Analysis}

The large number of channel uses needed to observe superactivation and the lack of explicit encoders and decoders from second-order analysis motivates us to switch the focus from the rate $Q^{n, \varepsilon }(\mathcal{N})$ to the achievable error $\varepsilon _Q^n(\mathcal{N},d)$, defining non-asymptotic superactivation as follows.
\begin{definition}
    Given $n \in \mathbb{N}$ and $\varepsilon  \in (0,1)$, we say that the channels $\mathcal{P}$ and $\mathcal{A}$ exhibit $n$-shot ($\varepsilon $-error) superactivation if:
    \begin{equation}
    \label{eq:n-shot-superact_SM}
\begin{aligned}
\max\!\left\{F_c(\mathcal{P}^{\otimes m},d),\,
F_c(\mathcal{A}^{\otimes m},d)\right\}
&\le 1-\varepsilon \quad \forall\, m\in\mathbb{N},\\
F_c((\mathcal{P}\!\otimes\!\mathcal{A})^{\otimes n},d)
&>1-\varepsilon .
\end{aligned}
\end{equation}
\end{definition}
In words, there exists a coding protocol for the joint channel achieving a  fidelity that is unattainable by the individual channels, regardless of the number of times they are used. Since the channel fidelity cannot be computed efficiently, we rely again on upper bounds on $F_c$ for an arbitrary number of uses of the single channels and lower bounds on $F_c$ for $n$ uses of the joint channel. 

As proven in the main text, we have the following uniform upper bounds for the two channels, holding for all $m\in \mathbb{N}$:
\begin{align}
\label{eq:PPT_fidelity}
    F_c(\mathcal{P}^{\otimes m},d) &\leq \frac{1}{d},\\
    \label{eq:ADG_fidelity}
   F_c(\mathcal{A}^{\otimes m},d) &\leq \frac{1}{2} + \frac{1}{2d}.
\end{align}
These bounds follow from the facts that the channels $\mathcal{P}^{\otimes m}$ and $\mathcal{A}^{\otimes m}$ are PPT and two-extendible, respectively, and that the local decoding operation preserves the PPT and the two-extendibility properties, respectively, of their output states. 

To approach the task of finding tight lower bounds on channel fidelity for $(\mathcal{P}\otimes \mathcal{A})^{\otimes n}$, we can use two approaches: $n$-shot error bounds and numerical techniques. In both cases, we make use of the effective channel of Section~\ref{sec:effective_protocol}, because one copy of the joint channel has dimensions $d_A = 16$ and $d_B = 20$. It is well-known that such a pre- and post-processing can only decrease the channel fidelity, but, for the sake of completeness, we state this as the following lemma.

\begin{lemma}
\label{lemma:DPI_channel_fidelity}
    For every pre-processing channel $\widetilde{\mathcal{E}} \in \mathrm{CPTP}(A' \to A)$ and post-processing channel $\widetilde{\mathcal{D}} \in \mathrm{CPTP}(B \to B')$, the channel fidelity of $\mathcal{N}_{A \to B}$ is non-increasing:
    \begin{equation}
        F_c(\widetilde{\mathcal{D}} \circ \mathcal{N} \circ \widetilde{\mathcal{E}}, d) \leq F(\mathcal{N}, d).
    \end{equation}
\end{lemma}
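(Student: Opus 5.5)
The argument is a direct reduction: any code for the processed channel is also a code for $\mathcal{N}$. Fix an arbitrary encoder $\mathcal{E} \in \mathrm{CPTP}(A'' \to A')$ and decoder $\mathcal{D} \in \mathrm{CPTP}(B' \to B'')$ with $d_{A''} = d_{B''} = d$, feasible in the optimization~\eqref{eq:channel_fidelity_SM} defining $F_c(\widetilde{\mathcal{D}} \circ \mathcal{N} \circ \widetilde{\mathcal{E}}, d)$. (Here $\mathcal{N}$ stands for the channel in question; for the application, $\mathcal{N}$ is replaced by $\mathcal{N}^{\otimes n}$ and the pre/post-processings by $\widetilde{\mathcal{E}}^{\otimes n}$ and $\widetilde{\mathcal{D}}^{\otimes n}$.)

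Define the composite maps $\mathcal{E}' \coloneqq \widetilde{\mathcal{E}} \circ \mathcal{E}$ and $\mathcal{D}' \coloneqq \mathcal{D} \circ \widetilde{\mathcal{D}}$. These are CPTP because serial concatenation of channels is CPTP, and they have the right input and output systems: $\mathcal{E}'$ maps the $d$-dimensional system into $A$, and $\mathcal{D}'$ maps $B$ to the $d$-dimensional output. Hence $(\mathcal{E}', \mathcal{D}')$ is a feasible code for $\mathcal{N}$ with the same $d$. By associativity of composition,
\begin{equation}
\mathcal{D} \circ (\widetilde{\mathcal{D}} \circ \mathcal{N} \circ \widetilde{\mathcal{E}}) \circ \mathcal{E} = \mathcal{D}' \circ \mathcal{N} \circ \mathcal{E}' .
\end{equation}
The output states $\omega_{RB''}$ of the two protocols are therefore identical, and so are their fidelities with $\Phi^d$. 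The fidelity achieved by $(\mathcal{E},\mathcal{D})$ on the processed channel thus equals the fidelity achieved by $(\mathcal{E}',\mathcal{D}')$ on $\mathcal{N}$, which is at most $F_c(\mathcal{N}, d)$. Taking the maximum over $(\mathcal{E},\mathcal{D})$ gives the claim.

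No step is a real obstacle; the only points to check are the dimensional constraint and that the maximum in the definition is attained. The constraint $d \le \min\{d_A,d_B\}$ in the definition is not needed for the argument, and if it is imposed it holds whenever the left-hand side is defined, since $\widetilde{\mathcal{E}}$ and $\widetilde{\mathcal{D}}$ only change the intermediate systems. The maximum is attained by compactness of the set of channels; alternatively, one can work with suprema throughout and the argument is unchanged.
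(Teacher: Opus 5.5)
Your proof is correct and is the same argument the paper gives: composing any feasible code $(\mathcal{E},\mathcal{D})$ for $\widetilde{\mathcal{D}}\circ\mathcal{N}\circ\widetilde{\mathcal{E}}$ with the pre- and post-processing yields the feasible code $(\widetilde{\mathcal{E}}\circ\mathcal{E},\,\mathcal{D}\circ\widetilde{\mathcal{D}})$ for $\mathcal{N}$, which achieves the same fidelity. One side remark is wrong, though harmless: the constraint $d\le\min\{d_A,d_B\}$ does not automatically hold whenever the left-hand side is defined, since nothing relates $d_{A'}$ to $d_A$ (for example, $\widetilde{\mathcal{E}}$ could map a large $A'$ into a small $A$). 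As you yourself note, the argument never uses that constraint, so the proof is unaffected.
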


\begin{proof}
    Fix $d \in \mathbb{N}$. Then
    \begin{align*}
        F(\widetilde{\mathcal{D}} \circ \mathcal{N} \circ \widetilde{\mathcal{E}}, d) &= \max_{\substack{\mathcal{E} \in \mathrm{CPTP}(A'' \to A') \\ \mathcal{D} \in \mathrm{CPTP}(B' \to B'')}} F\!\left(\Phi^d_{RB''},\, (\mathrm{id}_R \otimes (\mathcal{D} \circ\widetilde{\mathcal{D}} \circ \mathcal{N} \circ \widetilde{\mathcal{E}}\circ \mathcal{E}))(\Phi^d_{RA''})\right) \\
        &\leq \max_{\substack{\tilde{\mathcal{E}} \in \mathrm{CPTP}(A'' \to A) \\ \tilde{\mathcal{D}} \in \mathrm{CPTP}(B \to B'')}} F\!\left(\Phi^d_{RB''},\, (\mathrm{id}_R \otimes (\tilde{\mathcal{D}} \circ \mathcal{M} \circ \tilde{\mathcal{E}}))(\Phi^d_{RB''})\right) \\
        &= F(\mathcal{N}, d),
    \end{align*}
    where the inequality holds because the compositions $\tilde{\mathcal{E}} = \widetilde{\mathcal{E}} \circ \mathcal{E}$ and $\tilde{\mathcal{D}} = \mathcal{D} \circ \widetilde{\mathcal{D}}$ are particular instances of the encoding and decoding channels  over which $F(\mathcal{N}, d)$ is optimized.
\end{proof} 
In our case of interest, this implies the following for the minimum achievable error:\begin{equation}
        \varepsilon_{Q}^n(\widetilde{\mathcal{D}} \circ (\mathcal{P}\otimes \mathcal{A}) \circ \widetilde{\mathcal{E}}, d) \geq   \varepsilon_{Q}^n(\widetilde{\mathcal{N}}, d).
    \end{equation}
for all $n\in \mathbb{N}$ and $d \in \mathbb{N}$.


\subsection{Details of Error Exponent Analysis}

Here we briefly discuss entropic error bounds, following the recent work \cite{berta2026tight}. 

The $\alpha$-\textit{Petz--R\'enyi relative entropy} is defined for $\alpha \in [0,2]$ as \cite{petz1985quasi}:\begin{equation}
\label{petzRenyi}
    \bar{D}_\alpha(\rho \|\sigma) \coloneqq \begin{cases}
        \frac{1}{\alpha -1} \log\left(\Tr[\rho^\alpha \sigma^{1-\alpha}]\right) \ \  \ \ &\text{supp}(\rho) \subseteq \text{supp}(\sigma), \\
        +\infty &\text{otherwise}.
    \end{cases}
\end{equation}
The corresponding $\alpha$-Petz--R\'enyi conditional entropy of a bipartite state is defined via the variational characterization:\begin{equation}
     H_\alpha(A|B)_{\rho} \coloneqq -\inf_{\sigma_B} D_\alpha(\rho_{AB} \|\mathbbm{1}_A \otimes \sigma_B),
\end{equation}
and its $\alpha$-R\'enyi coherent information is:\begin{equation}
\label{eq:petz_coh_renyi}
    I_{\alpha}(A\rangle B)_{\rho} \coloneqq- H_\alpha(A|B)_{\rho}  = \inf_{\sigma_{B}}D_{\alpha}(\rho_{AB}\|I_{A}\otimes\sigma_{B}).
\end{equation}

The result of \cite[Theorem 14]{berta2026tight}, gives an $n$-shot upper bound on the minimum error probability for entanglement transmission (as measured by the channel fidelity) of the form
\begin{equation}
\label{eq:error_exponent_bound_2}
    \varepsilon^n_{Q}(\mathcal{N}, d)
    \leq 2\sqrt{\frac{s^s (1-s)^{1-s}}{s}}\,
         2^{-\frac{s}{2}\bigl(I^c_{1/(1+s)}(\mathcal{N}^{\otimes n}) - \log d\bigr)},
\end{equation}
valid for every $s \in (0,1)$ (equivalently, $\alpha = 1/(1+s)\in (1/2, 1)$), where $I^c_{\alpha}(\mathcal{N}) \coloneqq \max_{\psi_{RA}} I_\alpha(R \rangle B)_{\omega}$ is the Petz--R\'enyi coherent information of $\mathcal{N}$ and $\omega_{RB} = (\mathrm{id}_R\otimes \mathcal{N}_{A\to B})(\psi_{RA})$.

Similarly to the second-order asymptotics analysis, since we are interested in an achievable error, we will use superadditivity of the channel Petz--R\'enyi coherent information to simplify the optimization, $I^c_\alpha(\mathcal{N}^{\otimes n})\geq nI^c_\alpha(\mathcal{N})$, and we will focus on the effective channel $\widetilde{\mathcal{N}}$ instead of $\mathcal{P}\otimes \mathcal{A}$, restricting to a MES input state $\psi_{RA} = \Phi_{RA}$ in the optimization. This can be computed analytically as shown below.


First, observe that the Petz--R\'enyi coherent information of a bipartite state $\rho_{AB}$, as defined in~\eqref{eq:petz_coh_renyi}, can be equivalently written as:
\begin{align}
I_{\alpha}(A\rangle B)_{\rho} & =\frac{\alpha}{\alpha-1}\log\Tr\!\left[\left(\Tr_{A}\!\left[\rho_{AB}^{\alpha}\right]\right)^{\frac{1}{\alpha}}\right],
\end{align}
where the equality follows from \cite[Lemma 3]{Sharma_2013}.
\begin{lemma}
\label{lem:cq-state-renyi-coh-info}For a classical--quantum state
of the following form:
\begin{equation}
\rho_{XAB}\coloneqq\sum_{x}p(x)|x\rangle\!\langle x|_{X}\otimes\rho_{AB}^{x},
\end{equation}
the following equality holds:
\begin{equation}
I_{\alpha}(A\rangle BX)_{\rho}=\frac{\alpha}{\alpha-1}\log\sum_{x}p(x)\Tr\!\left[\left(\Tr_{A}\!\left[\left(\rho_{AB}^{x}\right)^{\alpha}\right]\right)^{\frac{1}{\alpha}}\right].
\end{equation}
\end{lemma}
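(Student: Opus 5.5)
The plan is to apply the closed-form expression quoted just before the lemma (from \cite[Lemma 3]{Sharma_2013}),
\begin{equation*}
I_\alpha(A\rangle B)_\rho=\frac{\alpha}{\alpha-1}\log\Tr\!\left[\left(\Tr_A[\rho_{AB}^\alpha]\right)^{1/\alpha}\right],
\end{equation*}
with the conditioning system taken to be $BX$ instead of $B$. The proof then reduces to evaluating the inner trace $\Tr[(\Tr_A[\rho_{XAB}^\alpha])^{1/\alpha}]$ using the block-diagonal structure in $X$.

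First, I compute $\rho_{XAB}^\alpha$. The operators $|x\rangle\!\langle x|_X\otimes\rho^x_{AB}$ have mutually orthogonal supports for distinct $x$, so functional calculus acts blockwise:
\begin{equation*}
\rho_{XAB}^\alpha=\sum_x p(x)^\alpha |x\rangle\!\langle x|_X\otimes(\rho^x_{AB})^\alpha .
\end{equation*}
The partial trace over $A$ does not touch $X$, so it preserves this block structure:
\begin{equation*}
\Tr_A[\rho_{XAB}^\alpha]=\sum_x p(x)^\alpha|x\rangle\!\langle x|_X\otimes\Tr_A[(\rho^x_{AB})^\alpha].
\end{equation*}

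Next, I apply the power $1/\alpha$, again blockwise. Each block $\Tr_A[(\rho^x_{AB})^\alpha]$ is positive semidefinite, so the power is well defined and gives
\begin{equation*}
\sum_x p(x)\,|x\rangle\!\langle x|_X\otimes\left(\Tr_A[(\rho^x_{AB})^\alpha]\right)^{1/\alpha},
\end{equation*}
because $(p(x)^\alpha)^{1/\alpha}=p(x)$. Taking the full trace over $XB$ yields $\sum_x p(x)\Tr[(\Tr_A[(\rho^x_{AB})^\alpha])^{1/\alpha}]$. Inserting this into the quoted formula gives the claimed identity.

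The only step that needs care is the applicability of the Sharma--Warsi formula with $BX$ as the conditioning system, together with the support conventions when $p(x)=0$ or some $\rho^x$ is not full rank. Terms with $p(x)=0$ contribute zero on both sides. Powers are taken on the support, consistent with the definition of $D_\alpha$ and with the infimum over $\sigma_{BX}$ in the variational definition. For $\alpha\in(1/2,1)$ the formula holds for arbitrary bipartite states, so no further obstacle is expected.
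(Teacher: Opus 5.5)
Your proposal is correct and matches the paper's proof essentially step for step. Both invoke the Sharma--Warsi closed form with $BX$ as the conditioning system, compute $\rho_{XAB}^\alpha$ blockwise using the orthogonality of the $X$-blocks, move the partial trace over $A$ inside the blocks, apply the power $1/\alpha$ blockwise so that $(p(x)^\alpha)^{1/\alpha}=p(x)$, and then take the trace; your remarks on zero-probability terms and support conventions are details the paper leaves implicit.
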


\begin{proof}
Consider that
\begin{align}
I_{\alpha}(A\rangle BX)_{\rho} & =\frac{\alpha}{\alpha-1}\log\Tr\!\left[\left(\Tr_{A}\!\left[\rho_{XAB}^{\alpha}\right]\right)^{\frac{1}{\alpha}}\right]\\
 & =\frac{\alpha}{\alpha-1}\log\Tr\!\left[\left(\Tr_{A}\!\left[\left(\sum_{x}|x\rangle\!\langle x|_{X}\otimes p(x)\rho_{AB}^{x}\right)^{\alpha}\right]\right)^{\frac{1}{\alpha}}\right]\\
 & =\frac{\alpha}{\alpha-1}\log\Tr\!\left[\left(\Tr_{A}\!\left[\sum_{x}|x\rangle\!\langle x|_{X}\otimes\left(p(x)\rho_{AB}^{x}\right)^{\alpha}\right]\right)^{\frac{1}{\alpha}}\right]\\
 & =\frac{\alpha}{\alpha-1}\log\Tr\!\left[\left(\sum_{x}|x\rangle\!\langle x|_{X}\otimes p(x)^{\alpha}\Tr_{A}\!\left[\left(\rho_{AB}^{x}\right)^{\alpha}\right]\right)^{\frac{1}{\alpha}}\right]\\
 & =\frac{\alpha}{\alpha-1}\log\Tr\!\left[\sum_{x}p(x)|x\rangle\!\langle x|_{X}\otimes\left(\Tr_{A}\!\left[\left(\rho_{AB}^{x}\right)^{\alpha}\right]\right)^{\frac{1}{\alpha}}\right]\\
 & =\frac{\alpha}{\alpha-1}\log\sum_{x}p(x)\Tr\!\left[\left(\Tr_{A}\!\left[\left(\rho_{AB}^{x}\right)^{\alpha}\right]\right)^{\frac{1}{\alpha}}\right],
\end{align}
thus concluding the proof.
\end{proof}
\begin{proposition}
\label{prop:petz-renyi-omega}
For all $\alpha > 0$, the Petz--R\'enyi coherent information of the states 
$\omega_{RB_1A_2Z}$ defined in~\eqref{eq:choi_encoder+channels+erasure_measurement} and $\xi_{RB_1'Z}$ defined in~\eqref{eq:output_MES_effective} are equal to:
\begin{equation}
\label{eq:petz_renyi_omega}
    I_\alpha(R \rangle B_1A_2Z)_\omega = I_\alpha(R \rangle B_1'Z)_\xi = \frac{\alpha}{\alpha-1}
    \log\!\left(q\left(\sum_{\ell=0}^{K-1} p_\ell^\alpha
    \right)^{\frac{1}{\alpha}} + (1-q)\, K^{\frac{\alpha-1}{\alpha}}
    \right).
\end{equation}
\end{proposition}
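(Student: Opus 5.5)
The plan is to reduce both sides to Lemma~\ref{lem:cq-state-renyi-coh-info} and then evaluate the two conditional terms, $Z=0$ and $Z=1$, separately. This mirrors the proof of Theorem~\ref{thm:coh-info-moments}.

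We begin with $\xi_{RB_1'Z}$ in~\eqref{eq:output_MES_effective}. This is a cq state with flag probabilities $1-q$ and $q$, and conditional states $\Phi^K_{RB_1'}$ and $\sum_\ell p_\ell\overline{\Phi^K}^\ell_{RB_1'}$. By Lemma~\ref{lem:cq-state-renyi-coh-info}, it suffices to compute $T(\rho)\coloneqq\Tr[(\Tr_R[\rho^\alpha])^{1/\alpha}]$ for each conditional state.

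For $\rho=\Phi^K$, which is a pure state, we have $\rho^\alpha=\rho$. Hence $\Tr_R[\rho^\alpha]=\mathbbm{1}/K$, and
\[
T=K\cdot K^{-1/\alpha}=K^{(\alpha-1)/\alpha}.
\]
For the mixture $\sum_\ell p_\ell\overline{\Phi^K}^\ell$, the terms have orthogonal supports. This follows because $\{\overline{\Gamma^K}^\ell\}$ are orthogonal projectors and the state equals $\sum_\ell (p_\ell/K)\overline{\Gamma^K}^\ell$. Therefore its $\alpha$-power is $\sum_\ell (p_\ell/K)^\alpha\overline{\Gamma^K}^\ell$. Each $\Tr_R\overline{\Gamma^K}^\ell=\mathbbm{1}_{B_1'}$, so
\[
\Tr_R[\rho^\alpha]=K^{-\alpha}\Bigl(\sum_\ell p_\ell^\alpha\Bigr)\mathbbm{1}_K,
\]
and consequently $T=K\cdot K^{-1}(\sum_\ell p_\ell^\alpha)^{1/\alpha}$. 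Weighting by $1-q$ and $q$ gives exactly~\eqref{eq:petz_renyi_omega}.

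For $\omega_{RB_1A_2Z}$, we again apply Lemma~\ref{lem:cq-state-renyi-coh-info}, with conditional states $\omega^{(0)}_{RB_1A_2}$ and $\omega^{(1)}_{RB_1}$.

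For $\omega^{(1)}=\sum_\ell p_\ell\overline{\Phi^K}^\ell_{RB_1'}\otimes\pi_{B_1''}$, the tensor factor $\pi_{d_s}$ contributes the following:
\begin{itemize}
\item $\Tr_R[\rho^\alpha]$ picks up a factor $d_s^{-\alpha}\mathbbm{1}_{d_s}$;
\item after the $1/\alpha$ power and the trace, this becomes $d_s\cdot d_s^{-1}=1$.
\end{itemize}
So the $Z=1$ term reduces to the one already computed. Alternatively, the Petz--R\'enyi coherent information is invariant under appending a state on $B$, and we can use the variational form.

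For $\omega^{(0)}$, we use~\eqref{eq:omega_0_choi_state}: $\omega^{(0)}=V(\sum_\ell p_\ell\Phi^{\mathrm{GHZ},K^\ell}\otimes\sigma^\ell)V^\dagger$, where $V$ is a unitary on Bob's side $B_1A_2$ only. The quantity $T$ is invariant under a unitary $V_B$, since $\Tr_R[(V\rho V^\dagger)^\alpha]=V\Tr_R[\rho^\alpha]V^\dagger$. So we may drop $V$.

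The remaining operator is a direct sum over the orthogonal parity sectors $\ell$ on $B_1'A_2'$. Hence its $\alpha$-power is
\[
\sum_\ell p_\ell^\alpha\,\Phi^{\mathrm{GHZ},K^\ell}\otimes(\sigma^\ell)^\alpha,
\]
using that the GHZ projector is pure. Tracing out $R$ gives $\sum_\ell p_\ell^\alpha K^{-1}\overline{\Gamma}^\ell_{B_1'A_2'}\otimes(\sigma^\ell)^\alpha$. This is still block diagonal, because the $\overline{\Gamma}^\ell$ are orthogonal projectors. The $1/\alpha$ power therefore acts blockwise and yields $\sum_\ell p_\ell K^{-1/\alpha}\overline{\Gamma}^\ell\otimes\sigma^\ell$. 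Its trace is $\sum_\ell p_\ell K\cdot K^{-1/\alpha}=K^{(\alpha-1)/\alpha}$, which matches the $Z=0$ term of $\xi$.

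The main obstacle is this last step: checking carefully that the sectors remain orthogonal after $\Tr_R$ and that the shield states $\sigma^\ell$ contribute trivially through $\Tr[((\sigma^\ell)^\alpha)^{1/\alpha}]=1$. Everything else is direct substitution. One should also handle the erasure-flag embedding as in Theorem~\ref{thm:coh-info-moments}, i.e.\ use the isometry $U^Z$ to identify $\omega_{RB_1B_2}$ with its $Z$-flagged version.
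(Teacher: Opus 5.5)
Your proposal is correct and follows essentially the same route as the paper. You apply Lemma~\ref{lem:cq-state-renyi-coh-info} to split into the $Z=0$ and $Z=1$ branches, then evaluate $\Tr[(\Tr_R[\rho^\alpha])^{1/\alpha}]$ in each branch using invariance under the Bob-side twisting unitary $V$, orthogonality of the shifted GHZ and maximally correlated sectors, and purity of the GHZ states. The differences are cosmetic: you treat $\xi$ before $\omega$, and you state explicitly that the $\pi_{B_1''}$ factor and the erasure-flag embedding contribute trivially, which the paper leaves mostly implicit.
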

\begin{proof}
By applying Lemma~\ref{lem:cq-state-renyi-coh-info}, consider that
\begin{align}
I_{\alpha}(R\rangle B_1A_2Z)_{\omega} & =\frac{\alpha}{\alpha-1}\log\!\left(\begin{array}{c}
q\operatorname{Tr}\!\left[\left(\operatorname{Tr}_{R}\!\left[\omega^{(1)}_{RB_1'B_1''}\right)\right]^{\frac{1}{\alpha}}\right]\\
+\left(1-q\right)\operatorname{Tr}\!\left[\left(\operatorname{Tr}_{R}\!\left[\left(\omega^{(0)}_{RB_1'A_2'B_1''A_2''}\right)^{\alpha}\right]\right)^{\frac{1}{\alpha}}\right]
\end{array}\right).
\end{align}
Now consider that:
\begin{align}
 \operatorname{Tr}\!\left[\operatorname{Tr}_{R}\!\left[\left(\omega^{(1)}_{RB_1'B_1''}\right)^\alpha\right]^{\frac{1}{\alpha}}\right] &=\operatorname{Tr}\!\left[\left(\operatorname{Tr}_{R}\!\left[\left(\sum_{\ell=0}^{K-1}p_{\ell}\overline{\Phi^K}_{RB_1'}^{\ell}\otimes\pi_{B_1''}\right)^{\alpha}\right]\right)^{\frac{1}{\alpha}}\right] \nonumber\\
 & =\operatorname{Tr}\!\left[\left(\operatorname{Tr}_{R}\!\left[\sum_{\ell=0}^{K-1}p_{\ell}^{\alpha}\left(\overline{\Phi^K}_{RB_1'}^{\ell}\right)^{\alpha}\otimes\pi_{B_1''}^{\alpha}\right]\right)^{\frac{1}{\alpha}}\right]\\
 & =\operatorname{Tr}\!\left[\left(\operatorname{Tr}_{R}\!\left[\sum_{\ell=0}^{K-1}p_{\ell}^{\alpha}\frac{1}{K^{\alpha}}\overline{\Gamma^K}_{RB_1'}^{\ell}\otimes\pi_{B_1''}^{\alpha}\right]\right)^{\frac{1}{\alpha}}\right]\\
 & =\operatorname{Tr}\!\left[\left(\sum_{\ell=0}^{K-1}p_{\ell}^{\alpha}\frac{1}{K^{\alpha}}\mathbbm{1}_{B_1'}\otimes\pi_{B_1''}^{\alpha}\right)^{\frac{1}{\alpha}}\right]\\
 & =\frac{1}{K}\operatorname{Tr}\!\left[\left(\sum_{\ell=0}^{K-1}p_{\ell}^{\alpha}\right)^{\frac{1}{\alpha}}\mathbbm{1}_{B_1'}\otimes\pi_{B_1''}\right]\\
 & =\left(\sum_{\ell=0}^{K-1}p_{\ell}^{\alpha}\right)^{\frac{1}{\alpha}}.
\end{align}
The first equality follows because $\left\{ \overline{\Phi^K}_{RB_1'}^{\ell}\right\}_{\ell}$
is an orthogonal set. The second equality follows because $\overline{\Phi^K}_{RB_1'}^{\ell}=\frac{1}{K}\overline{\Gamma^K}_{RB_1'}^{\ell}$.
Also, consider that
\begin{align}
 & \operatorname{Tr}\!\left[\left(\operatorname{Tr}_{R}\!\left[\left(\omega^{(0)}_{RB_1'A_2'B_1''A_2''}\right)^{\alpha}\right]\right)^{\frac{1}{\alpha}}\right]\nonumber \\
 & =\operatorname{Tr}\!\left[\left(\operatorname{Tr}_{R}\!\left[\left(V_{A_2'B_1'A_2''B_1''}\left(\sum_{\ell=0}^{K-1}p_{\ell}\Phi_{RB_1'A_2'}^{\mathrm{GHZ},K^{\ell}}\otimes\sigma_{A_2''B_1''}^{\ell}\right)\left(V_{A_2'B_1'A_2''B_1''}\right)^{\dag}\right)^{\alpha}\right]\right)^{\frac{1}{\alpha}}\right]\\
 & =\operatorname{Tr}\!\left[\left(\operatorname{Tr}_{R}\!\left[\left(\sum_{\ell=0}^{K-1}p_{\ell}\Phi_{RB_1'A_2'}^{\mathrm{GHZ},K^{\ell}}\otimes\sigma_{A_2''B_1''}^{\ell}\right)^{\alpha}\right]\right)^{\frac{1}{\alpha}}\right]\\
 & =\operatorname{Tr}\!\left[\left(\operatorname{Tr}_{R}\!\left[\sum_{\ell=0}^{K-1}p_{\ell}^{\alpha}\Phi_{RB_1'A_2'}^{\mathrm{GHZ},K^{\ell}}\otimes\left(\sigma_{A_2''B_1''}^{\ell}\right)^{\alpha}\right]\right)^{\frac{1}{\alpha}}\right]\\
 & =\operatorname{Tr}\!\left[\left(\sum_{\ell=0}^{K-1}p_{\ell}^{\alpha}\overline{\Phi^K}_{B_1'A_2'}^{\ell}\otimes\left(\sigma_{A_2''B_1''}^{\ell}\right)^{\alpha}\right)^{\frac{1}{\alpha}}\right]\\
 & =\operatorname{Tr}\!\left[\sum_{\ell=0}^{K-1}p_{\ell}\left(\overline{\Phi^K}_{B_1'A_2'}^{\ell}\right)^{\frac{1}{\alpha}}\otimes\sigma_{A_2''B_1''}^{\ell}\right]\\
 & =\sum_{\ell=0}^{K-1}p_{\ell}\operatorname{Tr}\!\left[\left(\overline{\Phi^K}_{B_1'A_2'}^{\ell}\right)^{\frac{1}{\alpha}}\right]\\
 & =\sum_{\ell=0}^{K-1}p_{\ell}\frac{K}{K^{\frac{1}{\alpha}}}\\
 & =K^{1-\frac{1}{\alpha}}\\
 & =K^{\frac{\alpha-1}{\alpha}}.
\end{align}
The second equality follows because the function $(\cdot)\mapsto\operatorname{Tr}[(\cdot)^{\frac{1}{\alpha}}]$
is unitarily invariant. The third equality follows because $\left\{ \Phi_{RB_1'A_2'}^{\mathrm{GHZ},K^{\ell}}\right\}_{\ell}$
is an orthogonal set and because each $\Phi_{RB_1'A_2'}^{\mathrm{GHZ},K^{\ell}}$
is a pure state, so that $\left(\Phi_{RB_1'A_2'}^{\mathrm{GHZ},K^{\ell}}\right)^{\alpha}=\Phi_{RB_1'A_2'}^{\mathrm{GHZ},K^{\ell}}$.
The fourth equality follows because $\operatorname{Tr}_{R}\!\left[\Phi_{RB_1'A_2'}^{\mathrm{GHZ},K^{\ell}}\right]=\overline{\Phi^K}_{B_1'A_2'}^{\ell}$.
The fifth equality follows because $\left\{ \overline{\Phi^K}_{B_1'A_2'}^{\ell}\right\}_{\ell}$
is an orthogonal set. This implies that
\begin{align}
I_{\alpha}(R\rangle B_1A_2Z)_{\omega} & =\frac{\alpha}{\alpha-1}\log\!\left(q\left(\sum_{\ell=0}^{K-1}p_{\ell}^{\alpha}\right)^{\frac{1}{\alpha}}+\left(1-q\right)K^{\frac{\alpha-1}{\alpha}}\right),
\end{align}
thus concluding the proof for $I_{\alpha}(R\rangle B_1A_2Z)_{\omega}$. For $\xi_{RB_1'Z}$, by Lemma~\ref{lem:cq-state-renyi-coh-info} we have:\begin{equation}
    I_{\alpha}(R\rangle B_1'Z)_{\xi} =\frac{\alpha}{\alpha-1}\log\!q\operatorname{Tr}\!\left[\left(\operatorname{Tr}_{R}\!\left[\left(\xi^{(1)}_{RB_1'}\right)^\alpha\right]^{\frac{1}{\alpha}}\right)\right]+\left(1-q\right)\operatorname{Tr}\!\left[\left(\operatorname{Tr}_{R}\!\left[\left(\xi^{(0)}_{RB_1'}\right)^{\alpha}\right]\right)^{\frac{1}{\alpha}}\right]
\end{equation}
Here, the computation for the erasure branch is identical to Proposition~\ref{prop:petz-renyi-omega}, since $\omega^{(1)}_{RB_1} = \xi^{(1)}_{RB_1'} \otimes \pi_{B_1''}$, while for the no-erasure branch it simplifies to:
\begin{equation}
    \operatorname{Tr}\!\left[\left(\operatorname{Tr}_R\!\left[(\Phi^K_{RB_1'})^\alpha\right]\right)^{\frac{1}{\alpha}}\right]
    = \operatorname{Tr}\!\left[\left(\frac{\mathbbm{1}_{B_1'}}{K}\right)^{\frac{1}{\alpha}}\right] = \frac{K}{K^{1/\alpha}} 
    = K^{\frac{\alpha-1}{\alpha}},
\end{equation}
thus concluding the proof.
\end{proof}

Note that since\begin{equation}
    I_{\alpha}(A \rangle B)_{\rho} \xrightarrow{\alpha \to 1} I(A \rangle B),
\end{equation} by a continuity argument ($\iota_\alpha \to \iota$ as $\alpha \to 1$) this provides an alternative proof of~\eqref{eq:coh_info_effective}.

For the particular case of interest for us, we then find that the R\'enyi coherent information takes the following value, for $q=\frac{1}{2}$, $K=2$, and $p=\frac{1}{1+\sqrt{2}}$:
\begin{equation}
\label{eq:iota_alpha}
\iota_\alpha \equiv \frac{\alpha}{\alpha-1}\log\!\left(\frac{1}{2}\left(\frac{1}{\left(1+\sqrt{2}\right)^{\alpha}}+\left(\frac{\sqrt{2}}{1+\sqrt{2}}\right)^{\alpha}\right)^{\frac{1}{\alpha}}+2^{-\frac{1}{\alpha}}\right).
\end{equation}

The bound in~\eqref{eq:error_exponent_bound_2} thus becomes:
\begin{equation}
\label{eq:error_exponent_bound_superactivation_special_case}
     \varepsilon^n_{Q}(\widetilde{\mathcal{N}}, d)\leq 
     2\sqrt{\frac{s^s (1-s)^{1-s}}{s}}\cdot 2^{
     -\frac{s}{2} \left(n\, \iota_{1/(1+s)} - \log d\right)},
\end{equation}
valid for every $s \in (0, 1)$. By direct inspection of~\eqref{eq:iota_alpha}, we observe that $\iota_\alpha$ is monotonically increasing in $\alpha$, and $\iota_\alpha < 0$ for all $\alpha < \alpha^* \approx 0.9704$, so the exponent in~\eqref{eq:error_exponent_bound_superactivation_special_case} becomes negative only for a tiny interval of values $\alpha \in (\alpha^*, 1)$, vanishing at the extremes. Imposing $\partial_\alpha \bigl(\tfrac{1-\alpha}{\alpha} \iota_\alpha\bigr) = 0$, the maximum of the error exponent is attained at $\alpha_{\max} \approx 0.9849$ and is very small, of order $10^{-5}$.

We can now compare the achievable error from~\eqref{eq:error_exponent_bound_superactivation_special_case} with the uniform upper bounds in~\eqref{eq:PPT_fidelity} and~\eqref{eq:ADG_fidelity} for $d = 2$, where the bound to surpass to show superactivation is equal to $3/4$, as explained in the main text. The results, shown in Figure~\ref{fig:fidelity_bounds}, demonstrate that non-asymptotic superactivation in the sense of~\eqref{eq:n-shot-superact_SM} can indeed be proven via \eqref{eq:error_exponent_bound_superactivation_special_case}, but only at blocklengths of order $n \approx 10^5$. This is due to the combination of a small exponent (driven by the small value of $\iota_\alpha$) and the non-negligible prefactor $\sqrt{s^s (1-s)^{1-s}/s}$ in~\eqref{eq:error_exponent_bound_2}, which trivialize the bound over a large range of $n$.

\begin{figure}[t]
    \centering
    \includegraphics[width=\textwidth]{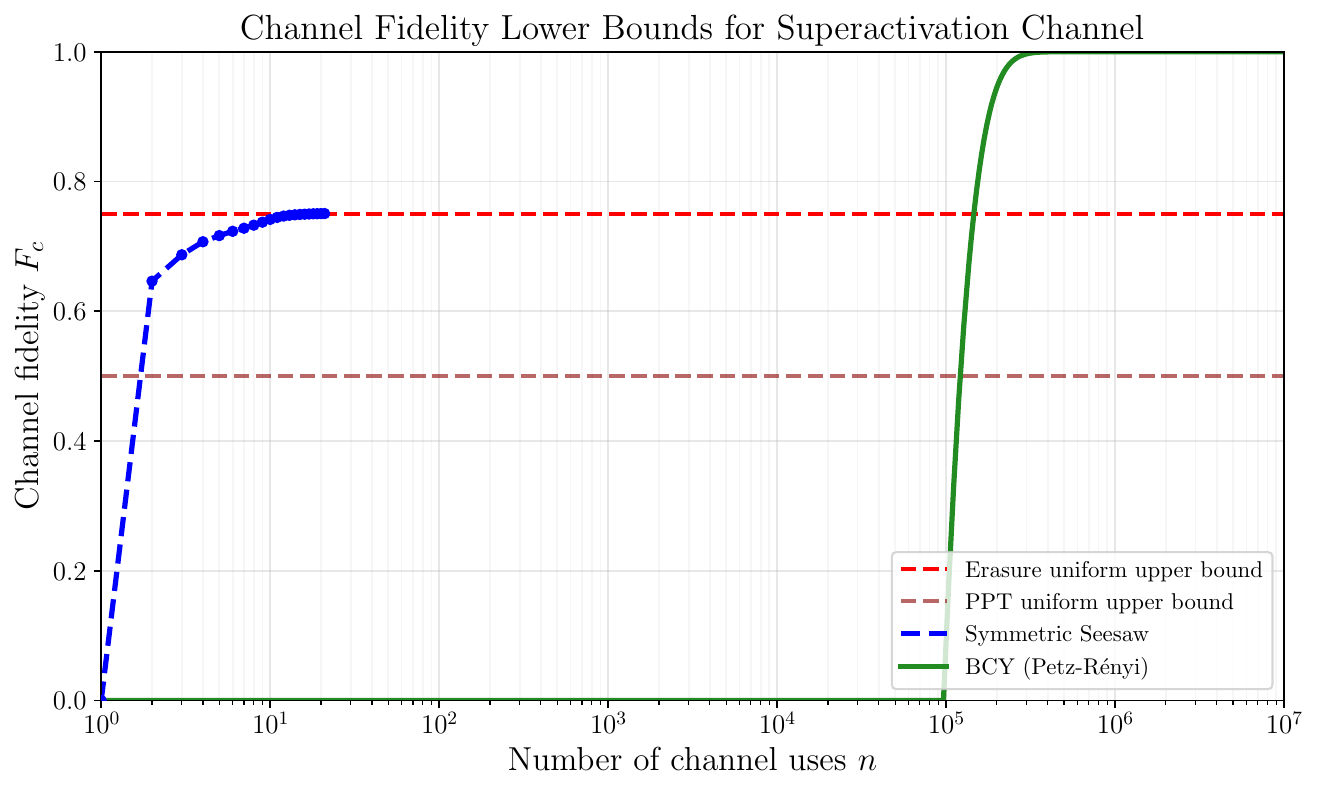}
    \caption{Channel fidelity lower bounds for the superactivation joint
    channel $(\mathcal{P}\otimes \mathcal{A})^{\otimes n}$ as a function
    of the number of channel uses $n$ (log-scale) for a fixed input
    dimension $d = 2$. The green curve shows the Petz--R\'enyi lower
    bound of \cite{berta2026tight} applied to $\widetilde{\mathcal{N}}$
    (equation~\eqref{eq:error_exponent_bound_superactivation_special_case}). Comparison with the uniform upper
    bounds for $\mathcal{P}$ and $\mathcal{A}$ allows one to prove
    non-asymptotic superactivation at blocklengths of order
    $n \approx 10^5$: the erasure threshold $\varepsilon = 1/4$ is
    crossed at $n \approx 1.5 \times 10^5$, while the PPT threshold
    $\varepsilon = 1/2$ is crossed at $n \approx 1.2 \times 10^5$. The
    blue dashed curve corresponds to the achieved fidelity using the
    symmetric seesaw method, also provided in the main text in
    Figure~\ref{fig:main}.}
    \label{fig:fidelity_bounds}
\end{figure}

In the same figure, we also showed the lower bound on $F_c(\tilde{\mathcal{\mathcal{N}}}^{\otimes n},2)$ coming from the symmetric seesaw method. This comparison with analytical bounds motivates the numerical strategy described below, based on SDP optimization.

\subsection{Symmetric Seesaw Method}

This section introduces the numerical method adopted in the analysis. The general method is laid out in detail in \cite{Bergh2026}, of which we give a brief summary here, also focusing more on the very particular problem we want to apply it to. To introduce the method, we first review the seesaw algorithm and the power iteration method originally proposed by~\cite{Reimpell2005}, and then present a symmetric variant of it which reduces the optimization to permutation-invariant (PI) codes, so as to make the optimization scale polynomially with the number of channel uses $n$.  While PI codes need not be optimal, the restriction to PI codes is a computationally motivated ansatz that has proven very effective in practice: Bhalerao and Leditzky~\cite{BhaleraLeditzky2025} recently showed that restricting to permutation invariant states in optimizing the coherent information significantly improves the best known lower bounds on the quantum capacity of several channel families (e.g., the 2-Pauli and BB84 channels), outperforming previous results. In this section, we show that PI codes are similarly effective for the finite-blocklength regime, yielding improved lower bounds on the channel fidelity of the channel \eqref{eq:equivalent_channel} involved in superactivation.


\subsubsection{Seesaw Iteration Method}
The seesaw method, or alternating convex search method, is an iterative numerical procedure for finding lower bounds on bilinear optimization problems like~\eqref{eq:channel_fidelity_SM} by iteratively optimizing over one variable while keeping the other fixed. This simple method gives no guarantee about the optimality of the solution nor estimates of the error, but it often works well in practice and typically converges to a value close to the actual optimum.

The intuition behind the seesaw method can be explained straightforwardly by expanding~\eqref{eq:channel_fidelity_SM}:
\begin{align}
        F_c(\mathcal{N}^{\otimes n}, d) &= \displaystyle \max_{\mathcal{E}_n,\mathcal{D}_n} \Tr[\Phi^d_{RB'} \Phi_{RB'}^{\mathcal{D}_n\circ\mathcal{N}^{\otimes n}\circ\mathcal{E}_n}] \\
        \label{eq:seesaw_first}
        &= \displaystyle \max_{\mathcal{E}_n,\mathcal{D}_n} \Tr[\Phi_{RB^n}^{\mathcal{D}^*_n}\Phi_{RB^n}^{\mathcal{N}^{\otimes n}\circ\mathcal{E}_n}] \\
         \label{eq:seesaw_second}
        &= \displaystyle \max_{\mathcal{E}_n,\mathcal{D}_n} \Tr[\Phi_{RA^n}^{(\mathcal{D}_n\circ \mathcal{N}^{\otimes n})^*} \Phi_{RA^n}^{\mathcal{E}_n}].
    \end{align}
Now, suppose that in~\eqref{eq:seesaw_first} we are provided with an encoder $\mathcal{E}_n$ (e.g., a random seed, or coming from a previous iteration). Then, defining $\mathcal{M}_{A' \to B^n}  \equiv \mathcal{N}^{\otimes n}_{A^n \to B^n}\circ\mathcal{E}_{A' \to A^n}$, its Choi matrix $\Phi^{\mathcal{M}}_{RB^n}$ is known, and the optimization problem becomes:\begin{equation}
    F_D(\mathcal{M}_n) \coloneqq \displaystyle \max_{\mathcal{D}_n} \Tr[\Phi_{RB^n}^{\mathcal{D}^*_n}\Phi_{RB^n}^{\mathcal{M}_n}].
\end{equation}
This is an SDP with input $\Phi^{\mathcal{M}}_{RB^n}$ with respect to the Choi state \footnote{The Choi state of a CPU map need not be normalized, but we maintain the terminology not to overload the notation.} of the adjoint of the decoder map:\begin{equation}
\label{eq:SDP_Maximal_Recovery_Coefficient}
\boxed{
\begin{aligned}
\textbf{Primal:}\quad & \\
\max_{\Phi^{\mathcal{D}_n^*}_{RB^n}} \quad & \Tr\!\big(\Phi_{RB^n}^{\mathcal{M}_n}\, \Phi^{\mathcal{D}_n^*}_{RB^n}\big) \\
\textrm{s.t.} \quad & \Tr_R[\Phi^{\mathcal{D}_n^*}_{RB^n}] = \frac{\mathbbm{1}_{B^n}}{d} \\
& \Phi^{\mathcal{D}_n^*}_{RB^n}\geq 0
\end{aligned}
\qquad\qquad
\begin{aligned}
\textbf{Dual:}\quad & \\
\min_{Y_{B^n}} \quad & \frac{1}{d}\,\Tr(Y_{B^n}) \\
\textrm{s.t.} \quad & \Phi_{RB^n}^{\mathcal{M}_n} \leq \mathbbm{1}_R \otimes Y_{B^n} \\
&
\end{aligned}
}
\end{equation}
This SDP is intimately related with the notion of singlet fraction~\cite{Horodecki1999} (and thus also related to conditional-min entropy~\cite{Koenig2009Operational}). In particular, defining the  \textit{maximal singlet fraction} of a bipartite state $\rho_{AB}$ as:\begin{equation}
    \label{eq:maximal_singlet_fraction}
    F_\Phi(\rho_{AB}) = \max_{\mathcal{R}_{B \to A}\in \operatorname{CPTP}}\bra{\Phi^{d_A}}(\id_A \otimes \mathcal{D}_{B \to A})(\rho_{AB}) \ket{\Phi^{d_A}},
\end{equation}one has:\begin{align}
    \label{eq:recovery_as_singlet_fraction}
       F_{D}(\mathcal{M}_n) =F_{\Phi^d}(\Phi_{RB^n}^{\mathcal{M}_n}),
       \end{align}
where $d = d_{A'} = d_R$ is the input dimension of $\mathcal{M}_n$.

Conversely, suppose that in~\eqref{eq:seesaw_second} we are provided with a decoder $\mathcal{D}_n$ (e.g., a random seed or coming from a previous iteration): then, defining $\mathcal{M}'_{A^n \to B'}  \equiv \mathcal{D}_{B^n \to B'} \circ \mathcal{N}^{\otimes n}_{A^n \to B^n}$, we have that $\Phi^{\mathcal{M}'^*}_{RA^n} = \Phi^{(\mathcal{D}_n \circ \mathcal{N}^{\otimes n})^*}_{RA^n}$, and the optimization problem becomes:\begin{equation}
    F_E(\mathcal{M}'_n) \coloneqq \displaystyle \max_{\mathcal{E}_n} \Tr[\Phi_{RA^n}^{\mathcal{M}_n'^*} \Phi_{RA^n}^{\mathcal{E}_n}].
\end{equation}
This is an SDP with input $\Phi^{\mathcal{M}'^*}_{RA^n}$ with respect to the Choi state of the encoder map:
\begin{equation}
\label{eq:SDP_Maximal_Preparation_Coefficient}
\boxed{
\begin{aligned}
\textbf{Primal:}\quad & \\
\max_{\Phi^{\mathcal{E}_n}_{RA^n}} \quad & \Tr\!\big(\Phi_{RA^n}^{\mathcal{M}'^*_n}\, \Phi^{\mathcal{E}_n}_{RA^n}\big) \\
\textrm{s.t.} \quad & \Tr_{A^n}[\Phi^{\mathcal{E}_n}_{RA^n}] = \frac{\mathbbm{1}_{R}}{d} \\
& \Phi^{\mathcal{E}_n}_{RA^n}\geq 0
\end{aligned}
\qquad\qquad
\begin{aligned}
\textbf{Dual:}\quad & \\
\min_{Y_{R}} \quad & \frac{1}{d}\,\Tr(Y_{R}) \\
\textrm{s.t.} \quad & \Phi_{RA^n}^{\mathcal{M}'^*_n} \leq Y_{R} \otimes \mathbbm{1}_{A^n}  \\
&
\end{aligned}
}
\end{equation}
The quantities in~\eqref{eq:SDP_Maximal_Recovery_Coefficient} and~\eqref{eq:SDP_Maximal_Preparation_Coefficient} correspond respectively to the \textit{maximal fidelity of recovery}, i.e., the optimal entanglement fidelity that can be achieved via a local operation at the decoder's side~\cite{barnum2002reversing}, and the \textit{maximal fidelity of preparation}, i.e.~the optimal entanglement fidelity that can be achieved via a local operation at the encoder's side. These quantities are essentially equivalent, in the sense that for a given channel $\mathcal{W}_{A \to B}$, one has:\begin{equation}
\label{eq:equivalence-F_D-F_E}
    F_D(\mathcal{W}) = \frac{d^2_B}{d^2_A}F_E(\mathcal{W}).
\end{equation}
This follows by using~\eqref{eq:adjoint_choi} on the unnormalized Choi matrices:\begin{align}
    F_E(\mathcal{W}) &= \frac{1}{d_B^2}\max_{\mathcal{E_{B \to A} \in \mathrm{CPTP}}}\Tr[ \Gamma^{\mathcal{W}^*}_{BA}\Gamma^{\mathcal{E}}_{BA}]\\
    &=\frac{1}{d_B^2}\max_{\mathcal{E_{B \to A} \in \mathrm{CPTP}}}\Tr[ \Gamma^{\mathcal{W}}_{AB}\Gamma^{\mathcal{E}^*}_{AB}]\\
    &=\frac{1}{d_B^2}\max_{\mathcal{D_{B \to A} \in \mathrm{CPTP}}}\Tr[ \Gamma^{\mathcal{D}^*}_{AB}\Gamma^{\mathcal{W}}_{AB}] \\
    &= \frac{d^2_A}{d^2_B} F_D(\mathcal{W}).
\end{align}

More details on the meaning and properties of these quantities, which form the core of the seesaw method, are discussed in~\cite{Parentin2025Thesis}. While equivalent (up to a normalization factor, in the sense of~\eqref{eq:equivalence-F_D-F_E}), since the SDPs~\eqref{eq:SDP_Maximal_Recovery_Coefficient} and~\eqref{eq:SDP_Maximal_Preparation_Coefficient} will be solved for varying input channels of different dimensions, we will use a distinct notation and treat the two separately.

We now formally introduce the seesaw algorithm~\cite{Reimpell2005,Fletcher2007,Taghavi_2010,johnson2017qvectoralgorithmdevicetailoredquantum}. 

\begin{definition}
\label{def:seesaw}
Given a quantum channel $\mathcal{N}_{A \to B}$, a number of channel uses $n \in \mathbb{N}$, an input dimension $d \leq \min\{d_A^n, d_B^n\}$, and a convergence threshold $\delta > 0$, in the setting of Figure~\ref{fig:quantumcommunicationquantumchannel}, the seesaw iteration method produces a lower bound $\tilde{F} \leq F_c(\mathcal{N}^{\otimes n}, d)$ together with an encoding--decoding pair $(\tilde{\mathcal{E}}, \tilde{\mathcal{D}})$ that achieves it. The algorithm proceeds as follows.

\begin{enumerate}
    \item \textbf{Initialization.} Choose a random encoder $\mathcal{E}_0 \in \mathrm{CPTP}(A' \to A^n)$ and a random decoder $\mathcal{D}_0 \in \mathrm{CPTP}(B^n \to B')$ as initial seeds. Set $i = 0$.

    \item \textbf{Iteration.} Repeat:
    \begin{enumerate}
        \item \textit{Optimize decoder.} Solve the SDP~\eqref{eq:SDP_Maximal_Recovery_Coefficient} for the channel $\mathcal{M}_i = \mathcal{N}^{\otimes n} \circ \mathcal{E}_{i}$, obtaining the optimal decoder $\mathcal{D}_{i+1} \in \mathrm{CPTP}(B^n \to B)$ and the value $F_{D}(\mathcal{M}_i)$.

        \item \textit{Optimize encoder.} Solve the SDP~\eqref{eq:SDP_Maximal_Preparation_Coefficient} for the channel $\mathcal{M}_i' =\mathcal{D}_{i+1} \circ \mathcal{N}^{\otimes n}$, obtaining the optimal encoder $\mathcal{E}_{i+1}\in \mathrm{CPTP}(A' \to A^n)$ and the value $F_{E}(\mathcal{M}_i')$.
        \item \textit{Convergence check.} If
        \begin{equation}
            F_{E}(\mathcal{M}_i') - F_{D}(\mathcal{M}_i) < \delta,
        \end{equation}
        terminate. Otherwise, set $i \leftarrow i+1$ and repeat.
    \end{enumerate}
    \item \textbf{Output.} Return $\tilde{F} \coloneqq F_e(\mathcal{D}_{i+1} \circ \mathcal{N}^{\otimes n} \circ \mathcal{E}_{i+1})$ and the maps $(\mathcal{E}, \mathcal{D}) \coloneqq (\mathcal{E}_{i+1}, \mathcal{D}_{i+1})$ (in the Choi representation).
\end{enumerate}
\end{definition}
The seesaw method, originally introduced to evaluate the performance of quantum error correcting codes, only guarantees convergence to a local maximum, since it features a monotonic increase in the entanglement fidelity at every half-step. The locally optimal solution $\tilde{F}$ depends on the initial random seeds $(\mathcal{E}_0, \mathcal{D}_0)$, and hence it is inherently stochastic. To obtain better results — ideally close to the global optimum  $F_c( \mathcal{N}^{\otimes n} , d)$— one should repeat the method with multiple random initializations and take the maximum over the resulting fidelities.

\

The entire optimization procedure can be formulated using only the Choi matrices of the involved channels, exploiting the concatenation and adjoint rules for Choi matrices (\cref{eq:concatenation_choi,eq:choi_tensor_product,eq:adjoint_choi}), and essentially consists of repeatedly solving the SDPs $F_D(\mathcal{M}_n)$ and $F_E(\mathcal{M}'_n)$ for varying channel inputs, updated at every step.  

Besides SDP solvers, such as MOSEK~\cite{mosek} and SCS~\cite{SCS}, each half-step~\eqref{eq:SDP_Maximal_Recovery_Coefficient} and~\eqref{eq:SDP_Maximal_Preparation_Coefficient} of the seesaw iteration can be solved via the \textit{channel power iteration method} introduced by Reimpell and Werner~\cite{Reimpell2005}. The method is inspired by the classical power method~\cite{vonMises1929Power} for computing the largest eigenvalue and the corresponding eigenvector of a positive semidefinite matrix $A$, by iteratively applying $v \mapsto Av/\|Av\|$. The channel analogue replaces the linear map $v \mapsto Av$ with a sandwiching operation on the Choi matrix of the encoder (resp. decoder adjoint), followed by a normalization that enforces the trace-preservation (resp. unitality) constraint. 

\begin{definition}
\label{def:power_method}
Given $n \in \mathbb{N}$, a quantum channel $\mathcal{M}_{A' \to B^n} = \mathcal{N}^{\otimes n}_{A^n \to B^n} \circ \mathcal{E}_{A' \to A^n}$, an initial decoder $\mathcal{D}_n \in \mathrm{CPTP}(B^n \to B')$, and a convergence threshold $\delta_p > 0$, the channel power iteration produces an estimate of $F_{D}(\mathcal{M})$ and a decoder $\mathcal{D}$ that achieves it. Set $j = 0$ and $\mathcal{D}_0 = \mathcal{D}$ and repeat:
\begin{enumerate}
    \item \textbf{Apply.} Compute the unnormalized Choi matrix:
    \begin{equation}
    \label{eq:power_step}
        \Gamma_{RB^n}' = \Gamma^{\mathcal{M}}_{RB^n}  \Gamma^{\mathcal{D}^*_j}_{RB^n}\Gamma^{\mathcal{M}}_{RB^n}.
    \end{equation}
    \item \textbf{Normalize.} Compute $S_{B^n} = \Tr_{R}[\Gamma'_{RB^n}]$ and obtain the Choi matrix of the normalized decoder $\mathcal{D}^*_{j+1}$ by applying $S_{B^n}^{-1/2}$ (pseudoinverse) blockwise, i.e.:
    \begin{equation}
    \label{eq:normalization_choi}
        \Gamma^{\mathcal{D}^*_{j+1}}_{RB^n} \coloneqq (\mathbbm{1}_R \otimes S_{B^n}^{-1/2})\Gamma_{RB^n}' (\mathbbm{1}_R \otimes S_{B^n}^{-1/2}).
    \end{equation} 
    This enforces the unitality constraint $\Tr_{R}[\Gamma^{\mathcal{D}^*_{j+1}}_{RB^n}] = \mathbbm{1}_{B^n}$.

    \item \textbf{Convergence check.} If $F_e(\mathcal{D}_{j+1} \circ \mathcal{M}) - F_e(\mathcal{D}_j \circ \mathcal{M}) < \delta_p$, terminate. Otherwise, set $j \leftarrow j+1$ and repeat.
\end{enumerate}
Return $\mathcal{D} \equiv \mathcal{D}_{j+1}$ and $\tilde{F}\approx F_e(\mathcal{D} \circ \mathcal{M})$.
\end{definition}
Under suitable regularity conditions (which are satisfied in our case), the method satisfies monotonicity of the entanglement fidelity:\begin{equation}
    F_e(\mathcal{D}_{j+1} \circ \mathcal{M}) \geq F_e(\mathcal{D}_j \circ \mathcal{M}) \quad \forall j \in \mathbb{N},
\end{equation}
and any globally optimal decoder is a fixed point of the iteration; we refer to Reimpell's dissertation~\cite{reimpell2008quantum} for the convergence analysis. The method can be equally applied to estimate $F_{E}(\mathcal{M}')$ by working with the adjoint channel in the normalization step~\eqref{eq:normalization_choi}. As shown in \cite[Sections~3.2.1 and 3.2.2]{reimpell2008quantum}, the power method provides a significant computational advantage over standard SDP solvers, and it can be efficiently implemented on a GPU architecture, so as to benefit from parallelization and fast sparse linear algebra routines.

However, both methods share the same fundamental bottleneck: by~\eqref{eq:choi_tensor_product}, the Choi matrix $\Gamma^{\mathcal{N}^{\otimes n}}$ grows as $(d_A d_B)^{2n}$. 
Due to this exponential scaling, a preliminary investigation~\cite{Parentin2025Thesis} using the standard seesaw method in~\eqref{def:seesaw} for our channel of interest~\eqref{eq:equivalent_channel} and $n =1, \ldots, 8$, while promising, was not sufficient to establish $n$-shot superactivation. This motivates the following section, which describes a method to increase the number of channel uses, $n$, by exploiting symmetry.

\subsubsection{Permutation-Invariant Codes and Symmetric Seesaw}
\label{sec:symmetric_seesaw}

In order to tackle the exponential scaling with $n$, this section introduces a variant of Definition~\ref{def:seesaw} that we call the \textit{symmetric seesaw method}, which exploits permutation symmetry to reduce the complexity from exponential to polynomial in~$n$, using the representation theory of the symmetric group. This procedure is fairly standard today (see, e.g.,~\cite{Litjens_2016, gijswijt2009, KLERK20101} as general references) and has been applied to various quantum information problems, including bounds on channel capacities~\cite{fawzi2022hierarchy,singh2025}, quantum state and channel discrimination~\cite{Bergh2024,bergh2025discrimination,cheng2025invitation} and efficient approximations to regularized relative entropies~\cite{fang2025efficient}. The recent works~\cite{CTV2023, kossmann2025approximatequantumerrorcorrection} are of particular relevance as they solve a connected problem of finding upper bounds on $F_c(\mathcal{N}, d)$ exploiting similar symmetry considerations. 

\

Given the scheme in Figure~\ref{fig:quantumcommunicationquantumchannel}, we say that an encoder $\mathcal{E}_{A' \to A^n}$ is \textit{symmetric} if $\Gamma_{RA^n}^{\mathcal{E}} \in \mathrm{End}^{\mathfrak{S}_n}(R \otimes A^{\otimes n})$; analogously, a decoder $\mathcal{D}_{B^n \to B'}$ is symmetric if $\Gamma_{B^nR}^{\mathcal{D}} \in \mathrm{End}^{\mathfrak{S}_n}(B^{\otimes n}
\otimes R)$. A pair consisting of a symmetric encoder and decoder is also known as a \textit{permutation-invariant (PI) code}~\cite{Ruskai2000, PollatsekRuskai2004}. The key idea underlying PI codes is that i.i.d.~channels $\mathcal{N}^{\otimes n}$ are permutation covariant, and therefore map permutation-invariant inputs to permutation-invariant outputs.

While the bilinear optimization problem in~\eqref{eq:channel_fidelity_SM} cannot be restricted to PI codes, in this work we decide nonetheless to impose separate permutation invariance as a computational ansatz:
\begin{equation}
\label{eq:symmetric_enc-dec}
    \Gamma^{\mathcal{E}_n}_{RA^n} \in \mathrm{End}^{\mathfrak{S}_n}(R \otimes A^{\otimes n}), \qquad \Gamma^{\mathcal{D}_n^*}_{RB^n} \in \mathrm{End}^{\mathfrak{S}_n}(R \otimes B^{\otimes n}).
\end{equation}
This restriction reduces the optimization space in~\eqref{eq:SDP_Maximal_Recovery_Coefficient} and~\eqref{eq:SDP_Maximal_Preparation_Coefficient} from exponential to polynomial in $n$ using the representation theory of the symmetric group. While potentially excluding some optimal codes, this ansatz will work sufficiently well for our application. 
The key structural result underlying our method is that the seesaw iteration \textit{preserves} the symmetric subspace, in a sense specified by the following lemma.

\begin{lemma}
\label{lemma:seesaw_symmetric}
Let $n \in \mathbb{N}$ be a number of channel uses, and let $d \in \mathbb{N}$ be an input dimension. Given a permutation-covariant channel $\mathcal{N}_n$, i.e.~such that $\Gamma^{\mathcal{N}_n}_{A^nB^n} \in \mathrm{End}^{\mathfrak{S}_n}(A^{\otimes n} \otimes B^{\otimes n})$, if the encoder/decoder seeds $(\mathcal{E}_0,\mathcal{D}_0)$ are symmetric in the sense of~\eqref{eq:symmetric_enc-dec}, then the entire seesaw iteration for $F_c(\mathcal{N}_n, d)$ can be restricted to the symmetric subspace. In particular, the SDPs~\eqref{eq:SDP_Maximal_Recovery_Coefficient} and~\eqref{eq:SDP_Maximal_Recovery_Coefficient} can be restricted to the space $\mathrm{End}^{\mathfrak{S}_n}(R \otimes S^{\otimes n})$, where $S = A$ for $F_E$ and $S = B$ for $F_D$.
\end{lemma}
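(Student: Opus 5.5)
The argument is an induction over the half-steps of the seesaw in Definition~\ref{def:seesaw}. The claim to carry along is that every intermediate object — the encoder Choi matrix $\Gamma^{\mathcal{E}_i}_{RA^n}$, the decoder-adjoint Choi matrix $\Gamma^{\mathcal{D}_i^*}_{RB^n}$, and the effective-channel Choi matrices $\Gamma^{\mathcal{M}_i}_{RB^n}$ and $\Gamma^{\mathcal{M}_i'^*}_{RA^n}$ — is invariant under simultaneous permutation of the $n$ tensor factors. Here $P(\pi)$ acts on $A^{\otimes n}$ or $B^{\otimes n}$ and trivially on $R$. The base case is the assumption that the seeds are symmetric. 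The inductive step has two parts: (i) composing a symmetric code piece with the covariant channel $\mathcal{N}_n$ gives a symmetric effective Choi matrix, and (ii) each SDP, when fed a symmetric input, has an optimal solution in $\mathrm{End}^{\mathfrak{S}_n}(R\otimes S^{\otimes n})$.

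For (i), I would use the concatenation rule~\eqref{eq:concatenation_choi}: $\Gamma^{\mathcal{M}}_{RB^n}=\Tr_{A^n}[(\Gamma^{\mathcal{E}}_{RA^n})^{T_{A^n}}\Gamma^{\mathcal{N}_n}_{A^nB^n}]$. Conjugating by $P_{B^n}(\pi)$, I insert $P_{A^n}(\pi)^\dagger P_{A^n}(\pi)$ and use $\Gamma^{\mathcal{N}_n}=(P_{A^n}(\pi)\otimes P_{B^n}(\pi))\Gamma^{\mathcal{N}_n}(\cdots)^\dagger$ together with cyclicity of the partial trace over $A^n$. This moves the permutation onto $(\Gamma^{\mathcal{E}})^{T_{A^n}}$. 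Since $P(\pi)$ is a real permutation matrix, partial transposition commutes with conjugation by it (it becomes conjugation by $P(\pi)^T=P(\pi^{-1})$, which also lies in the group). Invariance of $\Gamma^{\mathcal{E}}$ then gives invariance of $\Gamma^{\mathcal{M}}$. The same computation, combined with the adjoint rule~\eqref{eq:adjoint_choi} (again transposes of real permutations), shows that $\Gamma^{\mathcal{M}'^*}_{RA^n}$ is symmetric whenever $\Gamma^{\mathcal{D}^*}$ is.

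For (ii), I would apply the standard twirling argument recalled in the SDP subsection. Let $\mathcal{T}(X)=\frac{1}{n!}\sum_\pi (\mathbbm{1}_R\otimes P(\pi))X(\mathbbm{1}_R\otimes P(\pi))^\dagger$. In~\eqref{eq:SDP_Maximal_Recovery_Coefficient}, if $X$ is feasible then each conjugate is feasible: positivity is clearly preserved, and $\Tr_R$ commutes with the conjugation while $P(\pi)\mathbbm{1}_{B^n}P(\pi)^\dagger=\mathbbm{1}_{B^n}$. The objective is also unchanged, because $\Tr[\Phi^{\mathcal{M}}P X P^\dagger]=\Tr[P^\dagger\Phi^{\mathcal{M}}P\,X]=\Tr[\Phi^{\mathcal{M}}X]$ by the symmetry established in (i). By convexity of the feasible set and linearity of the objective, $\mathcal{T}(X^\star)$ is therefore an optimal solution lying in $\mathrm{End}^{\mathfrak{S}_n}$. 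The argument for~\eqref{eq:SDP_Maximal_Preparation_Coefficient} is identical, with the constraint $\Tr_{A^n}X=\mathbbm{1}_R/d$ (partial trace over the permuted systems is permutation invariant). Restricting both SDPs to the invariant subspace therefore loses nothing at any step, and the induction closes. For the power-iteration variant, \eqref{eq:power_step} and~\eqref{eq:normalization_choi} preserve invariance directly, since products of invariant operators are invariant and $S_{B^n}^{-1/2}$ is a function of an invariant operator.

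I expect the main obstacle to be bookkeeping rather than conceptual: tracking the hidden swaps and partial transposes in the Choi concatenation and adjoint formulas, and checking that the permutation actions on $A^n$, $B^n$ and $R$ match so that the covariance of $\mathcal{N}_n$ can be transferred. A second subtlety is that the SDP solver may return a non-symmetric optimum. The statement only asserts that the iteration \emph{can} be restricted, so I would phrase the lemma as replacing each iterate by its twirl, which has the same value.
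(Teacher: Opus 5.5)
Your proposal is correct and follows essentially the same route as the paper: induction over the seesaw iterations, using covariance of $\mathcal{N}_n$ to get a symmetric effective Choi matrix, then a twirling argument showing the symmetrized solution is feasible with the same objective value. Your explicit verification of the concatenation and adjoint steps, via partial transposes of real permutation matrices, fills in a detail the paper only asserts.
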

\begin{proof}
We proceed by induction on the iteration number $i\in \mathbb{N}$. The base case $i = 0$ holds by assumption. Given a generic iteration $i>0$, assume that $(\mathcal{E}_{i-1},\mathcal{D}_{i-1})$ are symmetric in the sense of~\eqref{eq:symmetric_enc-dec}. By permutation covariance of $\mathcal{N}_n$, using~\eqref{eq:concatenation_choi}, the Choi matrix of the concatenated channel $\mathcal{M}_i = \mathcal{N}_n \circ \mathcal{E}_{i-1}$ satisfies:\begin{equation}
\label{eq:concatenated_channek_is_symmetric_proof}
    \Gamma^{\mathcal{M}_i}_{RB^n} \in \mathrm{End}^{\mathfrak{S}_n}(R \otimes
B^{\otimes n}).
\end{equation}
Then the optimal solution of~\eqref{eq:SDP_Maximal_Recovery_Coefficient} can be restricted to lie in $\mathrm{End}^{\mathfrak{S}_n}(R \otimes
B^{\otimes n})$. Indeed, let $\Gamma^{\mathcal{D}^*_i}_{RB^n}$ be feasible for the SDP~\eqref{eq:SDP_Maximal_Recovery_Coefficient} and define its symmetrization 
\begin{equation}
\overline{\Gamma^{\mathcal{D}^*_i}}\coloneqq \frac{1}{n!}\sum_{\pi \in \mathfrak{S}_n}(\mathbbm{1}_R \otimes P_{B^n}(\pi))\Gamma^{\mathcal{D}^*_i}(\mathbbm{1}_R \otimes P_{B^n}(\pi))^\dagger .    
\end{equation}
Then, $\overline{\Gamma^{\mathcal{D}^*_i}} \ge 0$ because it is a convex combination of positive semidefinite operators.
Moreover, using linearity of the trace and the assumption $\mathrm{Tr}_R[\Gamma^{\mathcal{D}^*_i}_{RB^n}]=\mathbbm{1}_{B^n}$, we find that
\begin{equation}
\mathrm{Tr}_R[\overline{\Gamma_{RB^n}^{\mathcal{D}^*_i}}] = \frac{1}{n!}\sum_{\pi} P_{B^n}(\pi)\, \mathbbm{1}_{B^n} P_{B^n}(\pi)^\dagger = \mathbbm{1}_{B^n}.
\end{equation}
Finally, by linearity and cyclicity of the trace:
\begin{align}
\mathrm{Tr}\!\left[ \Gamma^{\mathcal{M}_i}_{RB^n}\, \overline{\Gamma_{RB^n}^{\mathcal{D}^*_i}} \right]
&= \frac{1}{n!}\sum_{\pi} \mathrm{Tr}\!\left[ \Gamma^{\mathcal{M}_i}_{RB^n} (\mathbbm{1}_R \otimes U(\pi)) \Gamma_{RB^n}^{\mathcal{D}^*_i} (\mathbbm{1}_R \otimes U(\pi))^\dagger \right] \\
&= \mathrm{Tr}\!\left[ \overline{\Gamma^{\mathcal{M}_i}_{RB^n}} \Gamma_{RB^n}^{\mathcal{D}^*_i} \right] =\mathrm{Tr}\!\left[ \Gamma^{\mathcal{M}_i}_{RB^n} \Gamma_{RB^n}^{\mathcal{D}^*_i} \right],
\end{align}
where the last equality follows from~\eqref{eq:concatenated_channek_is_symmetric_proof}. The same reasoning can be applied to the encoder step: $\mathcal{D}_{i}$ symmetric and $\mathcal{N}_n$ permutation-covariant imply
$\Gamma^{\mathcal{M}_i'^*}_{RA^n} = \Gamma^{(\mathcal{D}_{i} \circ \mathcal{N}_n)^*}_{RA^n} \in
\mathrm{End}^{\mathfrak{S}_n}(R \otimes A^{\otimes n})$. Then, following the same steps as before for the SDP~\eqref{eq:SDP_Maximal_Preparation_Coefficient}, it follows that $\mathcal{E}_{i}$ can be chosen symmetric. We have shown that at every iteration $i$, if the encoding/decoding pair $(\mathcal{E}_{i-1}, \mathcal{D}_{i-1})$ is symmetric in the sense of~\eqref{eq:symmetric_enc-dec}, then also an optimal $(\mathcal{E}_{i}, \mathcal{D}_{i})$ is. 
\end{proof}

Motivated by Lemma \ref{lemma:seesaw_symmetric}, we call the seesaw method for which the original seeds are restricted to lie in the permutation-invariant subspace~\eqref{eq:symmetric_enc-dec} as the \textit{symmetric seesaw method}.

\begin{definition}
\label{def:symmetric-seesaw}
Given a quantum channel $\mathcal{N}_{A \to B}$, a number of channel uses $n \in \mathbb{N}$, an input dimension $d \leq \min\{d_A^n, d_B^n\}$, and a convergence threshold $\delta > 0$, the \textit{symmetric seesaw method} proceeds exactly as in Definition~\ref{def:seesaw}, with the only additional requirement that the initial seeds $(\mathcal{E}_0, \mathcal{D}_0)$ satisfy~\eqref{eq:symmetric_enc-dec}. It returns a lower bound $\tilde{F}\leq F_c(\mathcal{N}^{\otimes n}, d)$ together with a symmetric encoding / decoding pair that achieves it.
\end{definition}

Due to Lemma~\ref{lemma:seesaw_symmetric}, the two (primal) SDPs~\eqref{eq:SDP_Maximal_Recovery_Coefficient} and~\eqref{eq:SDP_Maximal_Preparation_Coefficient} in the symmetric seesaw method become equivalent to:
\begin{equation}
\begin{aligned}
\label{eq:Symmetric_SDP_Maximal_Recovery_Coefficient}
F_D^S (\mathcal{M}_{A' \to B^n}) \coloneqq \frac{1}{d^2}\max_{\Gamma^{\mathcal{D}_n^*}_{RB^n}} \quad & \Tr\big[\Gamma_{RB^n}^{\mathcal{M}_n}\, \Gamma^{\mathcal{D}_n^*}_{RB^n}\big] \\
\textrm{s.t.} \quad & \Tr_R[\Gamma^{\mathcal{D}_n^*}_{RB^n}] = \mathbbm{1}_{B^n}, \quad \Gamma^{\mathcal{D}_n^*}_{RB^n}\geq 0, \quad \Gamma^{\mathcal{D}_n^*}_{RB^n} \in \mathrm{End}^{\mathfrak{S}_n}(R \otimes B^{\otimes n}),
\end{aligned}
\end{equation}
for $\mathcal{M}_{A' \to B^n} = \mathcal{N}^{\otimes n}_{A^n \to B^n} \circ \mathcal{E}_{A' \to A^n}$, and:
\begin{equation}
\begin{aligned}
\label{eq:Symmetric_SDP_Maximal_Preparation_Coefficient}
F_E^S (\mathcal{M}'_{A^n \to B'}) \coloneqq \frac{1}{d^2}\max_{\Gamma^{\mathcal{E}_n}_{RA^n}} \quad & \Tr\big[\Gamma_{RA^n}^{\mathcal{M}'^*_n}\, \Gamma^{\mathcal{E}_n}_{RA^n}\big] \\
\textrm{s.t.} \quad & \Tr_{A^n}[\Gamma^{\mathcal{E}_n}_{RA^n}] = \mathbbm{1}_R, \quad \Gamma^{\mathcal{E}_n}_{RA^n} \geq 0, \quad  \Gamma^{\mathcal{E}_n}_{RA^n} \in \mathrm{End}^{\mathfrak{S}_n}(R \otimes A^{\otimes n}),
\end{aligned}
\end{equation}
for $\mathcal{M}'_{A^n \to B'} = \mathcal{D}_{A^n \to B'} \circ \mathcal{N}^{\otimes n}_{A^n \to B^n} $. 

The symmetry conditions~\eqref{eq:symmetric_enc-dec} are equivalent to requiring that the one-copy reduced maps $\mathcal{E}^{(1)} \equiv \mathrm{Tr}_{A^{n-1}} \circ \mathcal{E}_n$ and $\mathcal{D}^{*(1)} \equiv \mathrm{Tr}_{B^{n-1}} \circ \mathcal{D}^*_n$ are $n$-extendible. The two SDPs in~\eqref{eq:Symmetric_SDP_Maximal_Recovery_Coefficient} and~\eqref{eq:Symmetric_SDP_Maximal_Preparation_Coefficient} therefore optimize over the same feasible sets as the $n$-extendibility constraints considered, e.g., in~\cite{Kaur2021}. However, the objective functions are evaluated on the $n$-copy extensions, and thus our SDPs are not directly comparable to theirs.

While the dimension of $\text{End}^{\mathfrak{S}_n}(R: S^{n})$ is $m_S\cdot d_R^2$, where\begin{equation}
\label{eq:dimensions-perm-inv}
    m_S \coloneqq \dim \mathrm{End}^{\mathfrak{S}_n}(S^{\otimes n}) =  \binom{n + d_{S}^2 -1}{d_{S}^2 -1}\leq (n+1)^{d_S^2},
\end{equation} 
this does not yet show that the symmetric seesaw method has an efficient (poly($n$)) solution, because we first need to phrase the optimization problems~\eqref{eq:Symmetric_SDP_Maximal_Recovery_Coefficient} and~\eqref{eq:SDP_Maximal_Preparation_Coefficient} and the intermediate operations of the seesaw method in Definition~\ref{def:seesaw} using matrices of size poly($n$). For this purpose, the idea is to construct a suitable basis of $\text{End}^{\mathfrak{S}_n}(R: S^{n})$ and rephrase the SDPs as ones in which we minimize over now $O(\text{poly}(n))$ many basis coefficients. A natural basis of $\text{End}^{\mathfrak{S}_n}(S^{n})$ is called the orbit basis $\{C_r^S\}_{r \in [m_S]}$, which is an orthogonal basis indexed by the orbits of $\mathfrak{S}_n$.
Given an operator $X\in \mathcal{L}(\mathcal{H})$, the coefficients of its tensor product $X^{\otimes n} \in \mathrm{End}^{\mathfrak{S}_n}(\mathcal{H}^{\otimes n})$ with respect to the orbit basis can be computed straightforwardly from $X$ by picking the representative of orbit $r$, call it $(\underline{i_r}, \underline{j_r})$, and computing the product of the different $n$ elements of the original matrix in the given position:
\begin{equation}
\label{eq:coeff_tensor_product}
    c_r = \prod_{k=1}^n X_{({i_r}_k, {j_r}_k)}.
\end{equation}
Thus:
\begin{equation}
\label{eq:tensor_product_orbit}
    X^{\otimes n} = \sum_{r = 1}^{m_{\mathcal{H}}}c_r\, C_r^{\mathcal{H}}.
\end{equation}
Since only \textit{one} representative of each orbit is required, these coefficients are computable in $O(\mathrm{poly}(n))$ time. Therefore, all operators involved in the symmetric seesaw iteration method ($\mathcal{E}_n$, $\mathcal{D}^*_n$, $\mathcal{M}_n$, and $\mathcal{M}_n'^*$) can be written in the form:\begin{equation}
\label{eq:generic_perm_inv_input}
    X_{RS^n} =  \sum_{k,l = 0}^{d_R-1}\sum_{r=0}^{m_S -1} x_{k,l,r} \ket{k}\!\bra{l} \otimes C_r^S,
\end{equation}
where $x_{k,l,r} \in \mathbb{C}$, $S = B$ in~\eqref{eq:Symmetric_SDP_Maximal_Recovery_Coefficient}, and $S = A$ in~\eqref{eq:Symmetric_SDP_Maximal_Preparation_Coefficient}. While the orbit matrices $C_r^S$ have an exponential dimension in $n$, many of their properties can be computed without explicitly constructing them. Specifically, their trace $\Tr[C_r^S]$ and their Hilbert--Schmidt inner product $\Tr[(C_r^S)^\dag C_r^S]$ can be computed efficiently, i.e.~in poly($n$) time, using a combinatorial formula \cite{bergh2025discrimination}. Moreover, the intermediate steps of the symmetric seesaw method, namely, the concatenation and adjoint of symmetric channels, can also be performed efficiently in the orbit basis. The details of the numerical method, alongside with the code implementation, will be given in a separate paper \cite{Bergh2026}.

In order to efficiently solve the SDPs, we should phrase the positive semidefinite constraints (and the partial trace constraint in~\eqref{eq:Symmetric_SDP_Maximal_Recovery_Coefficient}) without constructing the exponentially large matrices.
This is possible by making use of the representation theory of the symmetric group~\cite{JamesKerber1981Representation, Sagan2001Symmetric}, which allows to decompose them into polynomially many blocks of polynomial size. This procedure, today adopted in several SDP applications beyond quantum information \cite{gijswijt2009, Litjens_2016, vallentin2009symmetry,fawzi2022hierarchy}, is based on constructing a $*$-algebra isomorphism, which maps the elements of $\mathrm{End}^{\mathfrak{S}_n}(R \otimes S^{\otimes n})$ to a block-diagonal form and preserves the products and the positive-semidefinite structure:\begin{equation}
\label{eq:full_isomorphism_with_reference}
     \widetilde{\psi}_{RS^n}:
    \mathrm{End}^{\mathfrak{S}_n}(R \otimes S^{\otimes n})
    \to \bigoplus_{\lambda \in \mathrm{Par}(d_S, n)}
    \mathbb{C}^{(d_R m_\lambda) \times (d_R m_\lambda)} ,
\end{equation}
where $\mathrm{Par}(d_S,n)$ is the set of \textit{partitions} of $n$ with at most $d_S$ parts, and $m_\lambda \coloneq |\mathcal{T}_{\lambda,d_S}|$ is the number of \textit{semistandard Young tableaux} of shape $\lambda$ with entries in $[d_S]$.  Crucially, both the number of blocks and the size of each block are polynomial in $n$, due to the following bounds \cite{Litjens_2016}:
\begin{equation}
    |\mathrm{Par}(d,n)| \leq (n+1)^{d_S}, \qquad m_\lambda \leq (n+1)^{d_S(d_S-1)/2}.
\end{equation}
The isomorphism acts on a generic input $X_{RS^n}$ in~\eqref{eq:generic_perm_inv_input} as $\widetilde{\psi}(X) \coloneqq \bigoplus_\lambda\widetilde{[X]}_\lambda$, acting trivially on the reference system $R$, so that each block can be decomposed into $d_R \cdot d_R$ matrices:\begin{equation}
\label{eq:blocks}
    \widetilde{[X]}_\lambda^{(k,l)}= \sum_{r=1}^{m_S} x_{k,l,r}\, \widetilde{[C_r^S]}_\lambda \qquad k, l \in [d_R],
\end{equation}
where $x_{k,l,r}$ are the orbit basis coefficients and $\widetilde{[C_r^S]}_\lambda$ denotes the image of the generic orbit matrix $C_r^S$ under $\widetilde{\psi}_{S^n}$. To efficiently compute the action of $\widetilde{\psi}$ on the orbit matrices, one exhibits a representative set $\{U_\lambda\}_{\lambda \in \mathrm{Par}(d,n)}$ for the action of the symmetric group $\mathfrak{S}_n$ on $S^{\otimes n}$, using the Young basis $\{|u_\tau\rangle\}_{\tau \in \mathcal{T}_{\lambda,d}}$ for the $\lambda$-isotypic component of $\mathcal{H}^{\otimes n}$. 
It is well-known (see e.g.~\cite{gijswijt2009, Litjens_2016}) that this map can be implemented efficiently (see the separate paper~\cite{Bergh2026} for the explicit construction in our case of interest).

In order to phrase the two SDPs~\eqref{eq:Symmetric_SDP_Maximal_Recovery_Coefficient} and~\eqref{eq:Symmetric_SDP_Maximal_Preparation_Coefficient} in this new \textit{block basis}, indexed by blocks of the form~\eqref{eq:blocks}, we recall that the full Schur--Weyl decomposition of $S^{\otimes n}$ is as follows:\begin{equation}
\label{eq:Shur-Weyl-decomposition}
    S^{\otimes n} \cong \bigoplus_{\lambda \in \mathrm{Par}(d_S,n)} V_\lambda \otimes W_\lambda,
\end{equation} where $V_\lambda$ denotes irreducible representations (irreps) and $W_\lambda$ the corresponding multiplicity space; we have $\dim V_\lambda \coloneqq m_\lambda$ and $\dim W_\lambda \coloneqq f_\lambda$, where $f_\lambda$ equals the number of \textit{standard Young tableaux} of shape $\lambda$ with entries in $[d_S]$ and is computable via the hook-length formula~\cite{Frame1954}. Permutations act non-trivially only on the irreps $V_\lambda$, so in the following we will account for the multiplicity $f_\lambda$ when taking the trace of an operator $A_{S^n}$, i.e., using~\eqref{eq:Shur-Weyl-decomposition}, $\Tr_{S^n}[A_{S^n}] = \sum_\lambda f_\lambda \Tr_{V_\lambda}[A_{S^n}]$. 
\

Under $\widetilde{\psi}_{RS^n}$, the objective functions become weighted sums over isotypic components and the partial trace constraints become:\begin{itemize}
    \item In~\eqref{eq:Symmetric_SDP_Maximal_Recovery_Coefficient} (optimization over CPU maps), the constraint $\mathrm{Tr}_R[\Gamma^{\mathcal{D}_n^*}_{RB^n}] = \mathbbm{1}_{B^n}$ decouples block by block:
\begin{equation}
\label{eq:decoder_partial_trace_block}
    \sum_{k=1}^{d_R}\widetilde{[\Gamma^{\mathcal{D}_n^*}]}_\lambda^{(k,k)}= \mathbbm{1}_{m_\lambda}\qquad \forall\, \lambda \in \mathrm{Par}(d_B, n).
\end{equation}
\item In~\eqref{eq:Symmetric_SDP_Maximal_Preparation_Coefficient} (optimization over CPTP maps), the constraint $\mathrm{Tr}_{A^n}[\Gamma^{\mathcal{E}_n}_{RA^n}] = \mathbbm{1}_R$ couples all blocks into a single $d_R \times d_R$ matrix equation:
\begin{equation}
\label{eq:encoder_partial_trace_block}
    \sum_{\lambda \in \mathrm{Par}(d_A,n)} f_\lambda\, \mathrm{Tr}_{V_\lambda}\!\left(
    \widetilde{[\Gamma^{\mathcal{E}_n}]}_\lambda\right)
    = \mathbbm{1}_R.
\end{equation}
\end{itemize} 

Applying the isomorphism to the SDPs in~\eqref{eq:Symmetric_SDP_Maximal_Recovery_Coefficient} and~\eqref{eq:Symmetric_SDP_Maximal_Preparation_Coefficient}, we obtain an effective map $\widetilde{\Psi}$ which takes them to the resulting SDPs:
\begin{equation}
\label{eq:block_SDP_FD}
\boxed{
\begin{aligned}
    \widetilde{\Psi}( F^S_{D}(\mathcal{M}_n)) = \max_{\{\widetilde{[\Gamma^{\mathcal{D}_n^*}]}_\lambda\}: \ \widetilde{[\Gamma^{\mathcal{D}_n^*}]}_\lambda \in \mathbb{C}^{d_R m_\lambda \times d_R m_\lambda } } \quad& \frac{1}{d^2} \sum_{\lambda\in \mathrm{Par}(d_B, n)} f_\lambda\,\mathrm{Tr}_{V_\lambda}\!\left[\widetilde{[\Gamma^{\mathcal{D}_n^*}]}_\lambda \cdot\widetilde{[\Gamma^{\mathcal{M}_n}]}_\lambda\right] \\\text{s.t.} \quad& \widetilde{[\Gamma^{\mathcal{D}_n^*}]}_\lambda\geq 0\quad \forall\, \lambda \in \mathrm{Par}(d_B, n) \\& \sum_{k=1}^{d_R}\widetilde{[\Gamma^{\mathcal{D}_n^*}]}_\lambda^{(k,k)}= \mathbbm{1}_{m_\lambda}\quad \forall\, \lambda \in \mathrm{Par}(d_B, n)
\end{aligned}}
\end{equation}
for symmetric $\mathcal{M}_{n} = \mathcal{N}^{\otimes n} \circ \mathcal{E}_{n}$, and:
\begin{equation}
\boxed{
\label{eq:block_SDP_FE}
\begin{aligned}
 \widetilde{\Psi}{(F^S_{E}(\mathcal{M}'_n))} = \max_{\{
    \widetilde{[\Gamma^{\mathcal{E}_n}]}_\lambda\}: \   \widetilde{[\Gamma^{\mathcal{E}_n}]}_\lambda \in \mathbb{C}^{d_R m_\lambda \times d_R m_\lambda } }\quad
    & \frac{1}{d^2} \sum_{\lambda\in \mathrm{Par}(d_A, n)} f_\lambda\,
    \mathrm{Tr}_{V_\lambda}\!\left[
    \widetilde{[\Gamma^{\mathcal{E}_n}]}_\lambda \cdot
    \widetilde{[\Gamma^{\mathcal{M}'^*_n}]}_\lambda\right] \\
    \text{s.t.} \quad
    & \widetilde{[\Gamma^{\mathcal{E}_n}]}_\lambda \geq 0
    \quad \forall\, \lambda \in \mathrm{Par}(d_A, n) \\
    & \sum_\lambda f_\lambda\,
    \mathrm{Tr}_{V_\lambda}\!\left(
    \widetilde{[\Gamma^{\mathcal{E}_n}]}_\lambda\right) =\mathbbm{1}_R
\end{aligned}}
\end{equation}
for symmetric $\mathcal{M}'_n= \mathcal{D}_n\circ \mathcal{N}^{\otimes n}$.

Since $\widetilde{\psi}_{RS^n}$ is a $*$-isomorphism, the new SDPs have the same feasible set and objective function as the original ones. The variables in these SDPs are the block matrices $\widetilde{[\Gamma^{\mathcal{D}^*}]}_\lambda$ (resp.\ $\widetilde{[\Gamma^{\mathcal{E}}]}_\lambda$), each of size $d_R m_\lambda \times d_R m_\lambda$.

Since in the symmetric seesaw method, the intermediate steps (concatenation and adjoint) are conveniently performed in the orbit basis, the orbit basis coefficients of $\Gamma^{\mathcal{M}_n}_{RB^n}$ and $\Gamma^{\mathcal{M}'^*_n}_{RA^n}$ are then transformed to the block basis via~\eqref{eq:full_isomorphism_with_reference} before being passed to the SDP solver. After the optimal decoder or encoder is found, the block variables are converted back to orbit-basis coefficients via $\widetilde{\psi}^{-1}$ for use in the next seesaw half-step. These interconversions can be implemented in practice with a fixed overhead by precomputing the matrices that implement the change of basis between orbits and blocks prior to the iteration.

The channel power iteration of Definition~\ref{def:power_method} can be adapted to work entirely in the block basis, as the three steps translate directly to block-wise operations. We describe the decoder case~\eqref{eq:block_SDP_FD}; the encoder case~\eqref{eq:block_SDP_FE} is analogous, up to the appropriate change in the normalization constraint.

\begin{definition}
\label{def:symmetric_power_method_D}
Given a symmetric channel $\mathcal{M}_{A' \to B^n}$ with $\Gamma^{\mathcal{M}}_{RB^n} \in \mathrm{End}^{\mathfrak{S}_n}(R \otimes B^n)$, an initial symmetric decoder $\mathcal{D}_0$, and a threshold $\delta_p > 0$, set $j = 0$ and repeat:
\begin{enumerate}
    \item \textbf{Apply.} For each
    $\lambda \in \mathrm{Par}(d_B, n)$:
    \begin{equation}
    \label{eq:power_apply_block}
        \widetilde{[\Gamma']}_\lambda
        = \widetilde{[\Gamma^{\mathcal{M}}]}_\lambda \cdot
        \widetilde{[\Gamma^{\mathcal{D}^*_j}]}_\lambda \cdot
        \widetilde{[\Gamma^{\mathcal{M}}]}_\lambda.
    \end{equation}

    \item \textbf{Normalize.} Compute
    $S_\lambda \coloneqq \sum_{k=1}^{d_R}
    \widetilde{[\Gamma']}_\lambda^{(k,k)}$ and set:
    \begin{equation}
    \label{eq:power_normalize_block}
        \widetilde{[\Gamma^{\mathcal{D}^*_{j+1}}]}_\lambda
        = (\mathbbm{1}_{d_R} \otimes S_\lambda^{-1/2})\,
        \widetilde{[\Gamma']}_\lambda\,
        (\mathbbm{1}_{d_R} \otimes S_\lambda^{-1/2}),
    \end{equation}
    using the pseudoinverse where needed. This
    enforces~\eqref{eq:decoder_partial_trace_block}:
    $\sum_k \widetilde{[\Gamma^{\mathcal{D}^*_{j+1}}]}_\lambda^{(k,k)}
    = \mathbbm{1}_{m_\lambda}$ for each $\lambda$.

    \item \textbf{Convergence.} Compute:
    \begin{equation}
    \label{eq:fidelity_block_D}
        F_e^{(j+1)} \coloneqq F_e(\mathcal{D}_{j+1}\circ \mathcal{M}) = \frac{1}{d^2}
        \sum_\lambda f_\lambda\,
        \mathrm{Tr}\!\left[
        \widetilde{[\Gamma^{\mathcal{M}}]}_\lambda \cdot
        \widetilde{[\Gamma^{\mathcal{D}^*_{j+1}}]}_\lambda\right].
    \end{equation}
    If $F_e^{(j+1)} - F_e^{(j)} < \delta_p$, terminate with
    $F^S_{D}(\mathcal{M}) \approx F_e^{(j+1)}$.
\end{enumerate}
\end{definition}

For the encoder (Schr\"odinger picture), we just substitute $B \to A$ and enforce the normalization constraint~\eqref{eq:encoder_partial_trace_block} instead, by sandwiching with the operator $(T_R^{-1/2} \otimes \mathbbm{1}_{m_\lambda})$ in step (2), where: \begin{equation}
    \label{eq:encoder_global_trace}
        T \coloneqq \sum_{\lambda \in \mathrm{Par}(d_A,n)} f_\lambda\,
        \mathrm{Tr}_{V_\lambda}\!\left(
        \widetilde{[\Gamma']}_\lambda\right)
        \in \mathbb{C}^{d_R \times d_R}.
    \end{equation}

Each block has size at most $d_R \cdot m_\lambda$, so that all operations in~\eqref{eq:power_apply_block} and~\eqref{eq:power_normalize_block} involve polynomial-size dense matrices. The entire symmetric power iteration runs in $\mathrm{poly}(n)$ time per step and is dominated by the block-wise multiplications in~\eqref{eq:power_apply_block}.
The block decomposition of permutation-invariant operators makes the symmetric power iteration method particularly well suited for computational optimization. Indeed, since the blocks are small and independent, 
all loops over $\lambda$ can be implemented in parallel on a GPU (e.g., using the \texttt{CuPy} package). Consequently, the actual computation time is significantly shorter than that of any SDP solver.
\

For these reasons, the symmetric seesaw method of Definition~\ref{def:symmetric-seesaw}, used in conjunction with the power iteration in Definition~\ref{def:symmetric_power_method_D}, is the fastest and most promising numerical tool to demonstrate $n$-shot superactivation in the sense of~\eqref{eq:n-shot-superact_SM}. However, in order to simplify the numerical analysis and reach values of $n$ up to $20$, as shown in Figure~\ref{fig:main}, after establishing the simpler effective channel $\widetilde{\mathcal{N}}$, we must further simplify the problem.

First, we decide to fix the input dimension to $d = d_R = 2$. This corresponds to the task of qubit transmission in~\eqref{fig:quantumcommunicationquantumchannel}, which is arguably the most relevant in practice, and allows to keep the dimensions of the involved matrices small.

Two more serious bottlenecks are the following:
\begin{enumerate}
    \item The effective channel $\widetilde{\mathcal{N}}_{A \to B}$ has input and output dimensions, respectively, $d_A = 2$ and $d_B = 4$. Therefore, the decoder's SDP~\eqref{eq:SDP_Maximal_Recovery_Coefficient}, whose complexity depends exponentially on $d_B$ (see~\eqref{eq:dimensions-perm-inv}), quickly becomes a bottleneck for large $n$. However, the classical-quantum (CQ) structure of the output allows us to use a simple decoding protocol that reduces the SDP~\eqref{eq:block_SDP_FD} to $n+1$ SDPs on qubit channels.
    
    \item The concatenation steps in the symmetric seesaw method require constructing all orbit basis coefficients for the tensor product channel. Even if polynomial in $n$, a quick calculation shows that the amount of orbits $m_{AB} =\binom{n + d_A^2 d_B^2 - 1}{n}$ scales as $O(n^{15})$ for qubit channels, so it quickly becomes a bottleneck as $n$ increases. However, exploiting the sparsity of the involved channels enables us to compute the corresponding partial trace relations only for the few nonzero orbit basis coefficients $c^{\widetilde{\mathcal{N}}}_s$.
\end{enumerate}
Addressing these two issues is the content of the next two subsections, which when combined allow us to implement the numerical solution that ultimately leads to Figure~\ref{fig:main}.

\subsubsection{Exploiting the Classical--Quantum Output Structure}
\label{sec:CQ_structure}

The effective channel $\widetilde{\mathcal{N}}$ of Proposition~\ref{prop:equivalent_channel} has input dimension $d_A = 2$ and output dimension $d_B = d_Z \cdot d_Q = 4$, where $Z$ is the flag classical system and $Q$ is the quantum output system. The number of free parameters in the SDP~\eqref{eq:block_SDP_FD} depends on $d_B$ as $O(n^{d_B^2}) = O(n^{16})$, so it quickly becomes the bottleneck of the symmetric seesaw method for large $n$. The CQ structure of the output allows us to circumvent this by decomposing the decoder optimization into $n+1$ independent SDPs on qubit channels ($d_Q = 2$), reducing the scaling to $O(n^7)$. Indeed, after $n$ channel uses, the classical flags $z \in \{0,1\}^n$ are available to the decoder, who can condition his strategy on them (see Figure~\ref{fig:CQ_decoding_strategy}).

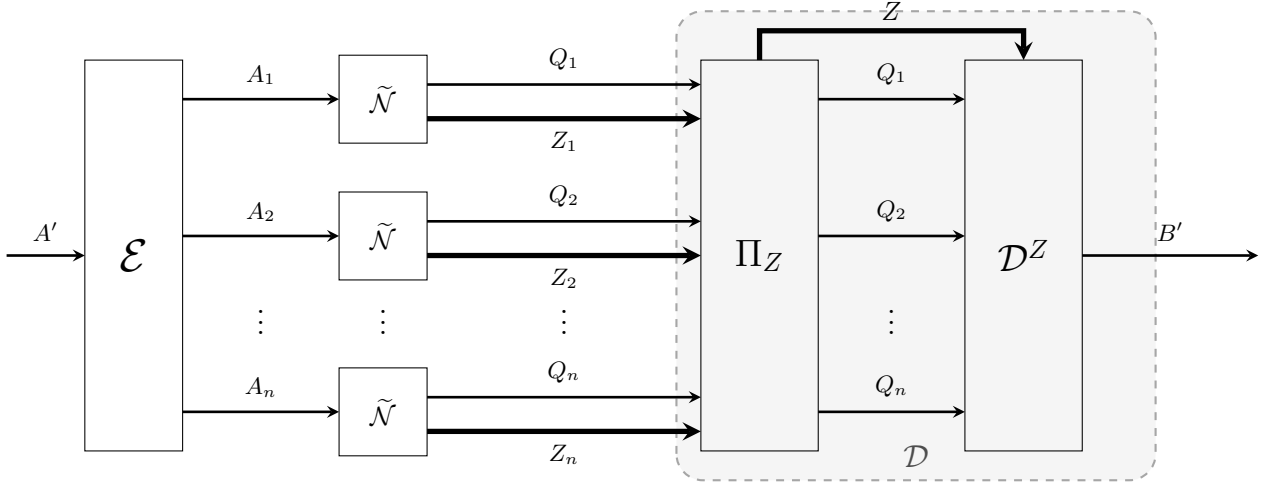
\begin{figure}[!ht]
\centering
\begin{tikzpicture}[>=stealth, scale=1.3]
\tikzset{
    qarrow/.style={->, line width=1pt},
    classarrow/.style={->, line width=2pt},
    block/.style={draw, minimum width=1.2cm, minimum height=4.8cm},
    procblock/.style={draw, minimum width=1.2cm, minimum height=4.8cm},
    channel/.style={draw, minimum width=0.9cm, minimum height=0.9cm},
    chanlabel/.style={font=\normalsize},
    wirelabel/.style={font=\small, above=2pt}
}

\def\xshift{0}

\def\ytop{6.6}      
\def\ymid{5.2}      
\def\ybot{3.4}      
\def\ycenter{5.0}    

\draw[draw=gray!70, thick, dashed, rounded corners=10pt, fill=gray!8]
    ({\xshift+6.05},2.7) rectangle ({\xshift+10.95},7.5);
\node[text=gray!60!black, font=\large] at ({\xshift+8.5},2.95) {$\mathcal{D}$};

\draw[block] (\xshift,3.0) rectangle ({\xshift+1.0},7.0);
\node at ({\xshift+0.5},\ycenter) {\LARGE $\mathcal{E}$};

\draw[channel] ({\xshift+2.6},{\ytop-0.45}) rectangle ({\xshift+3.5},{\ytop+0.45});
\node[chanlabel] at ({\xshift+3.05},\ytop) {$\widetilde{\mathcal{N}}$};

\draw[channel] ({\xshift+2.6},{\ymid-0.45}) rectangle ({\xshift+3.5},{\ymid+0.45});
\node[chanlabel] at ({\xshift+3.05},\ymid) {$\widetilde{\mathcal{N}}$};

\node at ({\xshift+3.05},4.4) {\large $\vdots$};

\draw[channel] ({\xshift+2.6},{\ybot-0.45}) rectangle ({\xshift+3.5},{\ybot+0.45});
\node[chanlabel] at ({\xshift+3.05},\ybot) {$\widetilde{\mathcal{N}}$};

\draw[procblock] ({\xshift+6.3},3.0) rectangle ({\xshift+7.5},7.0);
\node at ({\xshift+6.9},\ycenter) {\Large $\Pi_Z$};

\draw[procblock] ({\xshift+9.0},3.0) rectangle ({\xshift+10.2},7.0);
\node at ({\xshift+9.6},\ycenter) {\Large $\mathcal{D}^Z$};


\draw[qarrow] ({\xshift-0.8},\ycenter) -- (\xshift,\ycenter)
    node[wirelabel, midway] {$A'$};

\draw[qarrow] ({\xshift+1.0},\ytop) -- ({\xshift+2.6},\ytop)
    node[wirelabel, midway] {$A_1$};
\draw[qarrow] ({\xshift+1.0},\ymid) -- ({\xshift+2.6},\ymid)
    node[wirelabel, midway] {$A_2$};
\draw[qarrow] ({\xshift+1.0},\ybot) -- ({\xshift+2.6},\ybot)
    node[wirelabel, midway] {$A_n$};
\node at ({\xshift+1.8},4.4) {\large $\vdots$};

\draw[qarrow] ({\xshift+3.5},{\ytop+0.15}) -- ({\xshift+6.3},{\ytop+0.15})
    node[wirelabel, midway] {$Q_1$};
\draw[qarrow] ({\xshift+3.5},{\ymid+0.15}) -- ({\xshift+6.3},{\ymid+0.15})
    node[wirelabel, midway] {$Q_2$};
\draw[qarrow] ({\xshift+3.5},{\ybot+0.15}) -- ({\xshift+6.3},{\ybot+0.15})
    node[wirelabel, midway] {$Q_n$};

\draw[classarrow] ({\xshift+3.5},{\ytop-0.2}) -- ({\xshift+6.3},{\ytop-0.2})
    node[font=\small, below=1pt, midway] {$Z_1$};
\draw[classarrow] ({\xshift+3.5},{\ymid-0.2}) -- ({\xshift+6.3},{\ymid-0.2})
    node[font=\small, below=1pt, midway] {$Z_2$};
\draw[classarrow] ({\xshift+3.5},{\ybot-0.2}) -- ({\xshift+6.3},{\ybot-0.2})
    node[font=\small, below=1pt, midway] {$Z_n$};
\node at ({\xshift+4.9},4.4) {\large $\vdots$};

\draw[classarrow] ({\xshift+6.9},7.0) -- ({\xshift+6.9},7.3) -- ({\xshift+9.6},7.3) -- ({\xshift+9.6},7.0);
\node[font=\normalsize] at ({\xshift+8.25},7.5) {$Z$};

\draw[qarrow] ({\xshift+7.5},\ytop) -- ({\xshift+9.0},\ytop)
    node[wirelabel, midway] {$Q_1$};
\draw[qarrow] ({\xshift+7.5},\ymid) -- ({\xshift+9.0},\ymid)
    node[wirelabel, midway] {$Q_2$};
\draw[qarrow] ({\xshift+7.5},\ybot) -- ({\xshift+9.0},\ybot)
    node[wirelabel, midway] {$Q_n$};
\node at ({\xshift+8.25},4.4) {\large $\vdots$};

\draw[qarrow] ({\xshift+10.2},\ycenter) -- ({\xshift+12.0},\ycenter)
    node[wirelabel, midway] {$B'$};

\end{tikzpicture}
\caption{CQ decoding strategy. The decoder $\mathcal{D}$ (gray dashed box) first measures the classical flags $Z = (Z_1, \ldots, Z_n)$ via $\Pi_Z$, obtaining the Hamming weight $k = |Z|$, and then applies a $k$-dependent quantum decoder $\mathcal{D}^Z$ to the quantum outputs $(Q_1, \ldots, Q_n)$. This reduces the optimization from a single decoder with $d_B^n = 4^n$-dimensional input to $n+1$ decoders with $d_Q^n = 2^n$-dimensional input.}
\label{fig:CQ_decoding_strategy}
\end{figure}

Since $\mathcal{E}_n$ is symmetric, all flag patterns with the same Hamming weight $|z| = k$ produce equivalent channels up to a permutation of the positions, so the decoder need only be optimized for $n+1$ distinct channels $\mathcal{N}_k \in \mathrm{CPTP}(A \to Q)$, indexed by $k = 0, \ldots, n$:
\begin{equation}
\label{eq:channel_k}
    \mathcal{N}_k \coloneqq \mathcal{P}_{\vec{p}}^{\otimes k} \otimes \mathrm{id}^{\otimes (n-k)},
\end{equation}
where $\mathcal{P}_{\vec{p}} \coloneqq \mathcal{X}^p \circ \overline{\Delta}$ is the qubit Pauli channel with probability distribution
\begin{equation}
\vec{p} \coloneqq  \left( \frac{1}{\sqrt{2}+2}, \frac{1}{2\sqrt{2}+2}, \frac{1}{2\sqrt{2}+2}, \frac{1}{\sqrt{2}+2}\right).
\end{equation}

The channel in~\eqref{eq:channel_k} is not covariant under the full symmetric group $\mathfrak{S}_n$, but rather under the subgroup $G_k = \mathfrak{S}_k \times \mathfrak{S}_{n-k}$, which independently permutes the first $k$ positions (where $\mathcal{P}_{\vec{p}}$ acts) and the last $n-k$ (where $\mathrm{id}$ acts). More formally:
\begin{equation}
\label{eq:covariance_subgroup}
    \mathcal{N}_k \bigl(P_{A^n}(\pi)\,\rho\,P_{A^n}(\pi)^\dagger\bigr) = P_{Q^n}(\pi)\,\mathcal{N}_n(\rho)\,P_{Q^n}(\pi)^\dagger \qquad \forall\, \pi \in G_k.
\end{equation}
Therefore its Choi matrix $\Gamma^{\mathcal{N}_k}_{A^n Q^n}\in \mathrm{End}^{G_k}(A^n \otimes Q^n)$ decomposes in the $G_k$-orbit basis as:
\begin{equation}
\label{eq:choi_N_k_subgroup}
    \Gamma^{\mathcal{N}_k}_{A^n Q^n} = \sum_{s_1=1}^{m_{AQ}^{(k)}} \sum_{s_2=1}^{m_{AQ}^{(n-k)}} c_{s_1}^{\mathcal{P}_{\vec{p}}} c_{s_2}^{\mathrm{id}} \;\; C_{s_1}^{AQ^{(k)}} \otimes C_{s_2}^{AQ^{(n-k)}},
\end{equation}
where $m_{AQ}^{(k)} =\binom{k + d_A^2 d_Q^2 - 1}{k}$, $m_{AQ}^{(n-k)} =\binom{n-k + d_A^2 d_Q^2 - 1}{n-k}$, and $c_{s_1}^{\mathcal{P}_{\vec{p}}}$ and $c_{s_2}^{\mathrm{id}}$ are the orbit coefficients of $\Gamma^{\mathcal{P}_{\vec{p}}^{\otimes k}}$ and $\Gamma^{\mathrm{id}^{\otimes(n-k)}}$, computed using~\eqref{eq:tensor_product_orbit} on $\mathfrak{S}_k$ and $\mathfrak{S}_{n-k}$.
    This procedure allows to significantly reduce the involved dimensions at the decoder's SDP~\eqref{eq:block_SDP_FD}. Indeed, the decoder decomposes into $n+1$ SDPs, and the total dimension is:\begin{equation}
    m^{s}_Q \coloneqq \sum_{k=0}^n m_Q^{(k)} \cdot m_Q^{(n-k)} = \sum_{k=0}^{n}\binom{k+3}{k}\binom{n-k+3}{n-k} = \binom{n+7}{n},
\end{equation}
where in the last identity we used Vandermonde's convolution formula~\cite{Vandermonde1772}. Table~\ref{tab:CQ_dimension_comparison} shows the comparison between the different dimensions involved.
\begin{table}[ht]
\centering
\renewcommand{\arraystretch}{1.25}
\begin{tabular}{r r r r r}
\toprule
$n$ & $m_B = \binom{n+15}{n}$ & $m_Q = \binom{n+3}{n}$
    & $m_Q^s = \binom{n+7}{7}$ & $m_Q^s / m_B$ \\
\midrule
 2 &         136 &  10 &      36 & $26.5\%$ \\
 3 &         816 &  20 &     120 & $14.7\%$ \\
 4 &       3,876 &  35 &     330 & $8.5\%$ \\
 5 &      15,504 &  56 &     792 & $5.1\%$ \\
 6 &      54,264 &  84 &   1,716 & $3.2\%$ \\
 7 &     170,544 & 120 &   3,432 & $2.0\%$ \\
 8 &     490,314 & 165 &   6,435 & $1.3\%$ \\
\bottomrule
\end{tabular}
\caption{Comparison of orbit-basis dimensions for the full $d_B = 4$ output ($m_B$), the qubit CQ output ($m_Q$), and the CQ-decomposed total $m_Q^s = \sum_{k=0}^n m_k \cdot m_{n-k}$ with $m_k = \binom{k+3}{k}$. The ratio $m_Q^s / m_B$ decreases as $O(n^{-8})$, reflecting the asymptotic improvement from $O(n^{15})$ to $O(n^7)$.}
\label{tab:CQ_dimension_comparison}
\end{table}

Assuming the encoder $\mathcal{E}_n$ is symmetric, the concatenated channel $\mathcal{M}_k = \mathcal{N}_k \circ \mathcal{E}_n$ inherits $G_k$-invariance. Indeed, since $G_k \leq \mathfrak{S}_n$:\begin{equation}
    \Gamma^{\mathcal{E}_n}_{RA^n} \in \mathrm{End}^{\mathfrak{S}_n}(R \otimes A^n) \Longrightarrow \Gamma^{\mathcal{E}_n}_{RA^n}\in \mathrm{End}^{G_k}(R \otimes A^n) \quad \forall k =0, \ldots, n,
\end{equation}
and by~\eqref{eq:covariance_subgroup} we also have $\Gamma^{\mathcal{M}_k}_{RQ^n}\in \mathrm{End}^{G_k}(R \otimes Q^n)$. 
For all $k =0, \ldots, n$ the orbit bases coefficients of $\Gamma^{\mathcal{M}_k}_{RQ^n}$ can be efficiently computed straightforwardly from those of $\mathcal{N}_k$ and $\mathcal{E}_n$, and by the same twirling argument as in Lemma~\ref{lemma:seesaw_symmetric}, the optimal adjoint decoder $\mathcal{D}^{*}_k$ for each $\mathcal{M}_k$ can be restricted to satisfy $\Gamma^{\mathcal{D}_k^*}_{RQ^n}\in \mathrm{End}^{G_k}(R \otimes Q^n)$. This restriction is natural: in the decoding strategy $\mathcal{D}^Z$ of Figure~\ref{fig:CQ_decoding_strategy}, one may without loss of generality permute the output systems $(Q_1,\ldots,Q_n)$ such that the first $k$ subsystems correspond to erasure flag $z=1$ and the remaining $n-k$ to $z=0$. We now prove that such a strategy is indeed optimal for a flagged channel like $\widetilde{\mathcal{N}}$.
\begin{lemma}\label{lemma:fidelity_CQ}
Let $\rho_{ABY}$ be a quantum-classical state of the form
\begin{equation}\label{eq:QC_state}
    \rho_{ABY} = \sum_{y \in \mathcal{Y}} p(y)\, \rho_{AB}^{y} \otimes \ket{y}\!\bra{y}_Y.
\end{equation}
Then its maximal singlet fraction~\eqref{eq:maximal_singlet_fraction} satisfies
\begin{equation}
\label{eq:singlet_fraction_CQ}
    F_\Phi(\rho_{ABY}) = \sum_{y \in \mathcal{Y}} p(y)\, F_\Phi(\rho_{AB}^y),
\end{equation}
and this value is attained by a recovery map of the form
\begin{equation}\label{eq:recovery_CQ_state}
    \mathcal{N}_{BY \to \hat A} = \mathcal{R}^Y_{B \to \hat A} \circ \Pi_Y,
\end{equation}
where $\Pi_Y$ is the measurement on $Y$ in the $\{\ket{y}\}_y$ basis and $\{\mathcal{R}^y_{B \to \hat A}\}_y$ is a family of channels.
\end{lemma}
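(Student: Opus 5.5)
We prove the two inequalities $\geq$ and $\leq$ separately. Throughout, we use the definition~\eqref{eq:maximal_singlet_fraction} of the maximal singlet fraction, with the recovery map acting on the system held by the receiver, here $BY$.

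For the achievability direction ($\geq$), we fix for each $y$ a recovery channel $\mathcal{R}^y_{B\to\hat A}$ attaining $F_\Phi(\rho^y_{AB})$. The maximum exists because the set of CPTP maps is compact and the objective is continuous. We then define $\mathcal{N}_{BY\to\hat A}(\cdot)\coloneqq\sum_y \mathcal{R}^y_{B\to\hat A}\bigl(\bra{y}(\cdot)\ket{y}_Y\bigr)$, which is exactly the measure-then-recover map in~\eqref{eq:recovery_CQ_state} and is CPTP. Applying it to $\rho_{ABY}$ gives $\sum_y p(y)(\mathrm{id}_A\otimes\mathcal{R}^y)(\rho^y_{AB})$. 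Linearity of $\bra{\Phi}\cdot\ket{\Phi}$ then yields the value $\sum_y p(y)F_\Phi(\rho^y_{AB})$.

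For the converse direction ($\leq$), we take an arbitrary CPTP map $\mathcal{N}_{BY\to\hat A}$. For each $y$ we define $\mathcal{R}^y_{B\to\hat A}(\sigma_B)\coloneqq\mathcal{N}_{BY\to\hat A}(\sigma_B\otimes\ket{y}\!\bra{y}_Y)$. Each such map is CPTP, since it is the composition of the appending channel $\sigma\mapsto\sigma\otimes\ket{y}\!\bra{y}$ with $\mathcal{N}$. Because $\rho_{ABY}$ is block-diagonal in $Y$ by~\eqref{eq:QC_state}, linearity gives $(\mathrm{id}_A\otimes\mathcal{N})(\rho_{ABY})=\sum_y p(y)(\mathrm{id}_A\otimes\mathcal{R}^y)(\rho^y_{AB})$. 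The singlet fraction of this output is therefore $\sum_y p(y)\bra{\Phi}(\mathrm{id}\otimes\mathcal{R}^y)(\rho^y)\ket{\Phi}\leq\sum_y p(y)F_\Phi(\rho^y_{AB})$. Taking the maximum over $\mathcal{N}$ gives the upper bound, and the achievability step shows that it is attained by a map of the form~\eqref{eq:recovery_CQ_state}.

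The argument is elementary, and the only delicate point is the converse. There one must check that an arbitrary decoder, which might in principle process $Y$ coherently, gains nothing over measuring $Y$. This follows because the input is already classical on $Y$, so $\mathcal{N}$ only ever acts on the block-diagonal operators $\sigma_B\otimes\ket{y}\!\bra{y}$. Equivalently, $\rho_{ABY}$ is invariant under the dephasing $\Pi_Y$, so $\mathcal{N}=\mathcal{N}\circ\Pi_Y$ on the relevant inputs.
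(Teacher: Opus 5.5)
Your proof is correct and matches the paper's. Achievability uses measure-then-recover maps, with the optimization decoupling across $y$. The converse reduces an arbitrary decoder to the family $\mathcal{R}^y(\cdot)\coloneqq\mathcal{N}(\cdot\otimes\ket{y}\!\bra{y}_Y)$. The only difference is cosmetic: the paper first invokes convexity of $F_\Phi$ and then reduces each term, whereas you fix one decoder and decompose its output by linearity, which is just an inline proof of that convexity.
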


\begin{proof}
Let $\ket{\Phi}_{A\hat{A}}$ denote the MES of Schmidt rank $d_A$. Recall that $F_\Phi$ is defined as
\begin{equation}
    F_\Phi(\sigma) \coloneqq \sup_{\mathcal{N} \in \operatorname{CPTP}} \bra{\Phi}(\mathrm{id}_A \otimes \mathcal{N})(\sigma)\ket{\Phi},
\end{equation}
where the supremum is over CPTP maps from the non-$A$ subsystems of $\sigma$ to $\hat{A}$.

\emph{Lower bound.} For any family of CPTP maps $\{\mathcal{R}^y_{B \to \hat{A}}\}_{y \in \mathcal{Y}}$, the composition $\mathcal{R}^Y_{B \to \hat{A}} \circ \Pi_Y$ --- where $\Pi_Y$ is the projective measurement on $Y$ in the basis $\{\ket{y}\}_y$ and $\mathcal{R}^Y$ applies $\mathcal{R}^y$ conditioned on the classical outcome $y$ --- is itself a valid CPTP map from $BY$ to $\hat{A}$. Since $F_\Phi$ is monotone non-increasing under post-processing, restricting the supremum in the definition of $F_\Phi$ to such channels can only decrease its value:
\begin{align}
    F_\Phi(\rho_{ABY}) 
    &\geq \sup_{\{\mathcal{R}^y\}_y} \bra{\Phi}(\mathrm{id}_A \otimes (\mathcal{R}^Y_{B \to \hat{A}} \circ \Pi_Y))(\rho_{ABY})\ket{\Phi} \\
    &= \sup_{\{\mathcal{R}^y\}_y} \sum_{y \in \mathcal{Y}} p(y)\, \bra{\Phi}(\mathrm{id}_A \otimes \mathcal{R}^y_{B \to \hat{A}})(\rho_{AB}^y)\ket{\Phi}
    \label{eq:step_def_cqq_state} \\
    &= \sum_{y \in \mathcal{Y}} p(y)\, \sup_{\mathcal{R}^y \in \operatorname{CPTP}(B \to \hat{A})} \bra{\Phi}(\mathrm{id}_A \otimes \mathcal{R}^y_{B \to \hat{A}})(\rho_{AB}^y)\ket{\Phi} \\
    &= \sum_{y \in \mathcal{Y}} p(y)\, F_\Phi(\rho_{AB}^y),
\end{align}
where~\eqref{eq:step_def_cqq_state} follows from the definition~\eqref{eq:QC_state} and the action of $\Pi_Y$, and in the subsequent step we exchanged the supremum with the sum, which is justified because the $\{\mathcal{R}^y\}_y$ are independent CPTP maps with no shared constraint.

\emph{Upper bound.} Since $F_\Phi$ is convex in its input state~\cite[Chapter 7]{skrzypczyk2023semidefinite}, we have
\begin{align}
    F_\Phi(\rho_{ABY}) 
    &= F_\Phi\!\left(\sum_{y \in \mathcal{Y}} p(y)\, \rho_{AB}^y \otimes \ket{y}\!\bra{y}_Y\right) \\
    &\leq \sum_{y \in \mathcal{Y}} p(y)\, F_\Phi(\rho_{AB}^y \otimes \ket{y}\!\bra{y}_Y)
    \label{eq:step_convexity} \\
    &= \sum_{y \in \mathcal{Y}} p(y) \sup_{\mathcal{N}_{BY \to \hat{A}}} \bra{\Phi}(\mathrm{id}_A \otimes \mathcal{N}_{BY \to \hat{A}})(\rho_{AB}^y \otimes \ket{y}\!\bra{y}_Y)\ket{\Phi}
    \label{eq:recovery_map_definition} \\
    &= \sum_{y \in \mathcal{Y}} p(y) \sup_{\mathcal{R}^y \in \operatorname{CPTP}(B \to \hat{A})} \bra{\Phi}(\mathrm{id}_A \otimes \mathcal{R}^y_{B \to \hat{A}})(\rho_{AB}^y)\ket{\Phi}
    \label{eq:step_reduction} \\
    &= \sum_{y \in \mathcal{Y}} p(y)\, F_\Phi(\rho_{AB}^y).
\end{align}
Here~\eqref{eq:step_convexity} uses convexity,~\eqref{eq:recovery_map_definition} is the definition of $F_\Phi$, and~\eqref{eq:step_reduction} follows because for every fixed $y \in \mathcal{Y}$, since the $Y$ register is in a fixed pure state $\ket{y}\!\bra{y}_Y$, the action of any joint CPTP map $\mathcal{N}_{BY \to \hat{A}}$ effectively reduces to a CPTP map $\mathcal{R}^y_{B \to \hat{A}}$ acting on system $B$, defined as $\mathcal{R}^y_{B \to \hat{A}}(\cdot) \coloneqq \mathcal{N}_{BY \to \hat{A}}(\cdot \otimes \ket{y}\!\bra{y}_Y)$.

Combining the two bounds yields~\eqref{eq:singlet_fraction_CQ}, and the lower-bound derivation shows that the optimum is attained by a recovery map of the form~\eqref{eq:recovery_CQ_state}.
\end{proof}

\begin{corollary}
\label{corollary:modified_seesaw_recovery_coeff}
    Given the $n$-fold tensor product channel $\widetilde{\mathcal{N}}^{\otimes n}$, with $\widetilde{\mathcal{N}}$ defined in~\eqref{eq:equivalent_channel}, and given a symmetric encoder $\mathcal{E}_n$, the decoding strategy of Figure~\ref{fig:CQ_decoding_strategy} is optimal, i.e.:\begin{equation}
    \label{eq:modified_seesaw_recovery_coeff}
        F_D(\mathcal{M}) = \frac{1}{2^n}\sum_{k = 0}^n \binom{n}{k} F_D(\mathcal{M}_k),
    \end{equation}
    where $\mathcal{M}= \widetilde{\mathcal{N}}^{\otimes n} \circ \mathcal{E}$ and $\mathcal{M}_k = \mathcal{N}_k \circ \mathcal{E}$.
\end{corollary}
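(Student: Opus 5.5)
The plan is to reduce the statement to Lemma~\ref{lemma:fidelity_CQ} by recognizing that the Choi state of $\mathcal{M}=\widetilde{\mathcal{N}}^{\otimes n}\circ\mathcal{E}$ is quantum--classical in the flag registers $Z^n=(Z_1,\dots,Z_n)$. First, we expand $\widetilde{\mathcal{N}}^{\otimes n}$ using \eqref{eq:equivalent_channel}:
\begin{equation}
\widetilde{\mathcal{N}}^{\otimes n}=\frac{1}{2^n}\sum_{z\in\{0,1\}^n}\Bigl(\bigotimes_{i=1}^n \mathcal{C}_{z_i}\Bigr)\otimes\ket{z}\!\bra{z}_{Z^n},
\end{equation}
where $\mathcal{C}_0=\mathrm{id}$ and $\mathcal{C}_1=\mathcal{X}^p\circ\overline{\Delta}=\mathcal{P}_{\vec p}$. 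Consequently,
\begin{equation}
\Phi^{\mathcal{M}}_{RQ^nZ^n}=\sum_z 2^{-n}\,\Phi^{\mathcal{M}^{z}}_{RQ^n}\otimes\ket{z}\!\bra{z},
\qquad \mathcal{M}^{z}\coloneqq\Bigl(\bigotimes_i\mathcal{C}_{z_i}\Bigr)\circ\mathcal{E}.
\end{equation}
By \eqref{eq:recovery_as_singlet_fraction}, $F_D(\mathcal{M})$ equals the maximal singlet fraction of this state. Applying Lemma~\ref{lemma:fidelity_CQ} with $Y=Z^n$ and $B=Q^n$ then gives
\begin{equation}
F_D(\mathcal{M})=2^{-n}\sum_z F_D(\mathcal{M}^z),
\end{equation}
and shows that the optimum is attained by first measuring the flags and then applying a $z$-dependent decoder. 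This is precisely the strategy of Figure~\ref{fig:CQ_decoding_strategy}, which establishes its optimality.

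Next, we group the terms by Hamming weight $k=|z|$. For each $z$ with $|z|=k$, choose a permutation $\pi_z\in\mathfrak{S}_n$ that moves the positions with $z_i=1$ to the first $k$ slots. Then
\begin{equation}
\bigotimes_i\mathcal{C}_{z_i}=\mathcal{P}(\pi_z)^{\dagger}\circ\mathcal{N}_k\circ\mathcal{P}(\pi_z),
\end{equation}
where $\mathcal{P}(\pi)$ denotes the permutation unitary channel. Since $\mathcal{E}$ is symmetric, $\mathcal{P}(\pi_z)\circ\mathcal{E}=\mathcal{E}$; that is, the output of the encoder is invariant under conjugation by permutations, which is what the condition $\Gamma^{\mathcal{E}}\in\mathrm{End}^{\mathfrak{S}_n}$ means. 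Hence
\begin{equation}
\mathcal{M}^z=\mathcal{P}(\pi_z)^\dagger\circ\mathcal{M}_k.
\end{equation}
Because a unitary post-processing can be absorbed into, or undone by, the decoder, $F_D(\mathcal{M}^z)=F_D(\mathcal{M}_k)$. There are $\binom{n}{k}$ strings of weight $k$, which gives \eqref{eq:modified_seesaw_recovery_coeff}.

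The main point requiring care is this invariance step. One must check that symmetry of the Choi matrix $\Gamma^{\mathcal{E}}_{RA^n}$ under $\mathbbm{1}_R\otimes P(\pi)$ indeed implies $P(\pi)\,\mathcal{E}(\rho)\,P(\pi)^\dagger=\mathcal{E}(\rho)$ for all inputs $\rho$. This follows from \eqref{eq:choi_jamiolkovski_isomorphism}, since the permutation acts only on the output leg. It also matters that the invariance of $F_D$ under output unitaries is a two-sided argument: if $\mathcal{D}$ is optimal for one channel, then $\mathcal{D}\circ\mathcal{U}$ is feasible for the other, and conversely.
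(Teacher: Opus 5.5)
Your proof is correct and takes the same route as the paper: you identify $F_D(\mathcal{M})$ with the maximal singlet fraction of the quantum--classical Choi state $\Phi^{\mathcal{M}}_{RQ^nZ^n}$ and apply Lemma~\ref{lemma:fidelity_CQ} with $Y=Z^n$ and $B=Q^n$. The paper's proof sets $\rho^y_{AB}=\Phi^{\mathcal{M}_k}_{RQ^n}$ directly and only remarks informally that symmetry gives the binomial weights, whereas your argument via $\mathcal{M}^z=\mathcal{P}(\pi_z)^\dagger\circ\mathcal{M}_k$, encoder permutation invariance, and unitary invariance of $F_D$ makes the reduction from $2^n$ flag strings to the $n+1$ Hamming-weight classes rigorous.
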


\begin{proof}
 It is a direct consequence of Lemma~\ref{lemma:fidelity_CQ} and the link between the fidelity of recovery and the maximal singlet fraction in~\eqref{eq:recovery_as_singlet_fraction}. Indeed, the Choi state of the channel $\mathcal{M}$ is the quantum-classical state $\Phi^{\mathcal{M}}_{RQ^nZ^n}$, so that:
 \begin{equation}
    \label{eq:maximal_recovery_coefficient_CQ}
        F_D(\mathcal{M}) = F_{\Phi}(\Phi_{RQ^nZ^n}^{\mathcal{M}}), \ \ \ \ F_D(\mathcal{M}_k) = F_{\Phi}(\Phi_{RQ^n}^{\mathcal{M}_k}).
    \end{equation}
Using the correspondence $A \coloneqq R, Y \eqqcolon  Z^n$ and $B \eqqcolon  Q^n$, we can substitute in~\eqref{eq:singlet_fraction_CQ}  the states $\Phi_{RQ^n}^{\mathcal{M}_k} \eqqcolon  \rho_{AB}^y$ and $\Phi_{RQ^nZ^n}^{\mathcal{M}} \eqqcolon  \rho_{ABY}$, so that~\eqref{eq:modified_seesaw_recovery_coeff} follows.
\end{proof}
Note that in our case, since the channel $\mathcal{M}$ is symmetric, the probability distribution takes the simple form $\left\{\frac{1}{2^n}\binom{n}{k}\right\}_{k = 0}^{n}$, where the term $1/2^n$ accounts for the 50\%  erasure probability in $\widetilde{\mathcal{N}}$ and the binomial coefficient $\binom{n}{k}$ weights the contributions of the maximal recovery coefficients $F_D(\mathcal{M}_k)$ based on the occurrences of $k$ erasures out of $n$ uses.

Building on the results of the previous section, we can efficiently solve the SDP for the $k$-th decoder, using the fact that the irreducible representations of $G_k = \mathfrak{S}_k \times \mathfrak{S}_{n-k}$ are tensor products $\lambda_k \boxtimes \lambda_{n-k}$, and (\cite[Lemma 2.5]{CTV2023}) the representative set for $R \otimes Q^k \otimes Q^{n-k}$ is $\{\mathbbm{1}_R \otimes U^Q_{\lambda_k} \otimes U^Q_{\lambda_{n-k}}\}$, and for every $k$, the isomorphism $\widetilde{\psi_k}$ maps each operator to a collection of blocks indexed by pairs $(\lambda_k, \lambda_{n-k})$, each of size $d_R \cdot m_{\lambda_k} \cdot m_{\lambda_{n-k}}$. All the formulas from representation theory of the previous section extend straightforwardly with the substitution $\lambda \to (\lambda_k, \lambda_{n-k})$ and the multiplicity $f_\lambda \to f_{\lambda_k} \cdot f_{\lambda_{n-k}}$. 

Thus, for every $k \in \{0, \ldots, n\}$, the decoder optimization in~\eqref{eq:block_SDP_FD} is mapped to:
\begin{equation}
\label{eq:SDP_k_block}
\boxed{
\begin{aligned}
    \widetilde{\Psi_k}(F^S_{D}(\mathcal{M}_k)) = &\max_{\widetilde{[\Gamma^{\mathcal{D}_k^*}]}_{(\lambda_k, \lambda_{n-k})}} \quad  \frac{1}{d^2} \sum_{(\lambda_k, \lambda_{n-k})} f_{\lambda_k}\, f_{\lambda_{n-k}}\, \mathrm{Tr}\!\left[\widetilde{[\Gamma^{\mathcal{D}_k^*}]}_{(\lambda_k, \lambda_{n-k})} \cdot \widetilde{[\Gamma^{\mathcal{M}_k}]}_{(\lambda_k, \lambda_{n-k})}\right] \\
    \text{s.t.} \quad & \widetilde{[\Gamma^{\mathcal{D}_k^*}]}_{(\lambda_k, \lambda_{n-k})} \geq 0, \qquad \forall\, (\lambda_k, \lambda_{n-k}) \in \mathrm{Par}(d_Q, k)\times  \mathrm{Par}(d_Q, n-k) \\
    & \sum_{i=1}^{d_R} \widetilde{[\Gamma^{\mathcal{D}_k^*}]}_{(\lambda_k, \lambda_{n-k})}^{(i,i)} = \mathbbm{1}_{m_{\lambda_k} \cdot m_{\lambda_{n-k}}},  \quad \forall\, (\lambda_k, \lambda_{n-k}) \in \mathrm{Par}(d_Q, k)\times  \mathrm{Par}(d_Q, n-k)
\end{aligned}}
\end{equation}
where the sums run over $\lambda_k \in \mathrm{Par}(d_Q, k)$ and $\lambda_{n-k} \in \mathrm{Par}(d_Q, n-k)$. For $k = 0$ and $k = n$, one of the two factors is trivial, and the SDP reduces to the standard $\mathfrak{S}_n$-symmetric case of~\eqref{eq:block_SDP_FD}.

The block-basis power iteration adapts to the subgroup case by treating each $(\lambda_k, \lambda_{n-k})$ block as a single block of size $d_R \cdot m_{\lambda_k} \cdot m_{\lambda_{n-k}}$. The three steps of Definition~\ref{def:symmetric_power_method_D}, namely, apply, normalize, and convergence check, proceed in the exact same way, with the only difference that now blocks are labeled by pairs of partitions $(\lambda_k, \lambda_{n-k})$.
The $n+1$ decoder optimizations are completely independent and can in principle be solved in parallel, leading to $n+1$ fidelities of recovery, which can be combined using~\eqref{eq:modified_seesaw_recovery_coeff} to get the first estimate of the fidelity at the first step of the symmetric seesaw method.

After solving the $n+1$ decoder SDPs~\eqref{eq:SDP_k_block}, we obtain decoders $\{\mathcal{D}_k\}_{k=0}^n$, each residing in a different $G_k$-invariant space. To perform the encoder optimization, we need to reconstruct the effective channel seen by Alice when the flag pattern is unknown.

For a fixed Hamming weight $k$, the decoder $\mathcal{D}_k$ is optimized for the channel $\mathcal{N}_k = \mathcal{P}_{\vec{p}}^{\otimes k} \otimes \mathrm{id}^{\otimes(n-k)}$, which places all errors in the first $k$ positions. However, from the encoder's perspective, any of the $\binom{n}{k}$ flag patterns with Hamming weight $k$ is equally likely. For each such pattern, the decoder first reorders the systems so that all errors are at the front and then applies $\mathcal{D}_k$, so the effective channel seen by Alice is as follows:
\begin{equation}
\label{eq:effective_channel_alice}
    \overline{\mathcal{M}'} = \frac{1}{2^n}\sum_{k=0}^n \sum_{\pi \in \mathfrak{S}_n / G_k}\mathcal{M}_k' \circ \mathcal{P}_\pi,
\end{equation}
where $\mathcal{P}_\pi(\rho) = P(\pi)\,\rho\,P(\pi)^\dagger$ permutes the input systems and $\mathcal{M}_k' = \mathcal{D}_k \circ \mathcal{N}_k$. Each concatenated channel $\mathcal{M}'_k = \mathcal{D}_k \circ \mathcal{N}_k$ is $G_k$-invariant, so its Choi matrix resides in the space $\mathrm{End}^{G_k}(A^n \otimes R)$, which is \emph{larger} than $\mathrm{End}^{\mathfrak{S}_n}(A^n \otimes R)$. Therefore, the $G_k$-orbit basis is finer than the $\mathfrak{S}_n$-orbit basis, and to express the effective channel $\overline{\mathcal{M}'}$ in the $\mathfrak{S}_n$-orbit basis required by the encoder's SDP~\eqref{eq:block_SDP_FE}, we must \emph{re-symmetrize} each $\mathcal{M}_k$ by averaging over the coset $\mathfrak{S}_n / G_k$. This can be done efficiently, using a simple combinatorial formula.

Since $\Gamma^{\overline{\mathcal{M}'}}_{RA^n}\in \mathrm{End}^{\mathfrak{S}_n}(R \otimes A^{\otimes n})$, the encoder optimization proceeds exactly as in~\eqref{eq:block_SDP_FE}. This completes one iteration in this modified symmetric seesaw method. Combining the results of this section, we can now state the full modified symmetric seesaw algorithm for the flagged channel $\widetilde{\mathcal{N}}$.

\begin{definition}
\label{def:CQ_seesaw}
Given the CQ channel $\widetilde{\mathcal{N}}$ of Proposition~\ref{prop:equivalent_channel}, a number of channel uses $n \in \mathbb{N}$, an input dimension $d$, and a convergence threshold $\delta > 0$, the modified symmetric seesaw method produces a lower bound $\tilde{F} \leq F_c(\widetilde{\mathcal{N}}^{\otimes n}, d)$ together with a symmetric encoder $\mathcal{E}$ and a family of symmetric decoders $\{\mathcal{D}_k\}_{k=0}^n$. The procedure is as follows.
\begin{enumerate}
    \item \textbf{Initialization.} Choose a random symmetric encoder $\mathcal{E}_0$ with $ \Gamma^{\mathcal{E}_0}_{RA^n}
        \in \mathrm{End}^{\mathfrak{S}_n}(R \otimes A^{\otimes n})$, and, for each $k = 0, \ldots, n$, a symmetric decoder $\mathcal{D}_{k,0}$ such that $\Gamma^{\mathcal{D}_{k,0}^*}_{RQ^n} \in \mathrm{End}^{G_k}(R \otimes Q^{\otimes n}).$ Set $i = 0$.

    \item \textbf{Iteration.} Repeat the following steps:

    \begin{enumerate}
        \item \textit{Decoder phase.}
        For every $k = 0, \ldots, n$:
        \begin{enumerate}
            \item Compute the coefficients of $\mathcal{M}_{k,i} = \mathcal{N}_k \circ \mathcal{E}_i$ in the $G_k$-orbit basis.
            \item Optimize the decoder $\mathcal{D}_{k,i}$ by solving the SDP~\eqref{eq:SDP_k_block} or applying the symmetric power method (Definition~\ref{def:symmetric_power_method_D}), obtaining $\mathcal{D}_{k,i+1}$ and $F_{D}(\mathcal{M}_{k,i})$.
        \end{enumerate}

        The $n+1$ decoder optimizations are independent. Define the weighted decoder fidelity
        \begin{equation}
            F^S_{D}(\overline{\mathcal{M}_{i}}) = \frac{1}{2^n} \sum_{k=0}^n \binom{n}{k} F^S_{D}(\mathcal{M}_{k,i}).
        \end{equation}
        \item \textit{Encoder phase.}
        \begin{enumerate}
            \item For each $k$, compute the coefficients of $\mathcal{M}'_{k,i} = \mathcal{D}_{k,i+1} \circ \mathcal{N}_k$ in the $G_k$-orbit basis.
            \item Aggregate the channels $\{\mathcal{M}'_k\}_k$ into the effective channel $\overline{\mathcal{M}_i'}$ in the $\mathfrak{S}_n$-orbit basis via re-symmetrization.
            \item Optimize the encoder over $\mathrm{End}^{\mathfrak{S}_n}(R \otimes A^{\otimes n})$ by solving~\eqref{eq:block_SDP_FE} or applying the symmetric power method, yielding $\mathcal{E}_{i+1}$ and $F^S_{E}(\overline{\mathcal{M}_i'})$.
        \end{enumerate}

        \item \textit{Convergence check.}
        If $F^S_{D}(\overline{\mathcal{M}_{i}}) - F^S_{E}(\overline{\mathcal{M}_i'}) < \delta$, terminate. Otherwise set $i \leftarrow i + 1$ and repeat.
    \end{enumerate}

    \item \textbf{Output.}
    Return $\tilde{F} \coloneqq  F^S_{E}(\overline{\mathcal{M}'})$ together with the code:\begin{equation}
    \label{eq:outcomes_modified_seesaw}
        (\mathcal{E}, \{\mathcal{D}_k\}_{k=0}^n) \equiv (\mathcal{E}_{i+1}, \{\mathcal{D}_{k,i+1}\}_{k=0}^n).
    \end{equation} 
\end{enumerate}
\end{definition}
The asymmetry between the two phases reflects the flagged structure of the channel: for every $n \in \mathbbm{N}$, the decoder specializes to each Hamming weight $k=0, \ldots, n$ (exploiting the $G_k$ symmetry), while the encoder must average over all flag patterns (requiring the $\mathfrak{S}_n$-re-symmetrization).
\subsubsection{Exploiting the Sparsity of Channels}
\label{sec:CLDUI}
The second bottleneck identified above is the serial concatenation step. Even though the number of joint orbits $m_{AQ} = \binom{n + d_A^2 d_Q^2 - 1}{n} = O(n^{15})$ (for $d_A = d_Q = 2$) is polynomial in $n$, it grows rapidly and becomes a bottleneck for large~$n$, both in memory and in time. We now show that the sparsity of the specific channels $\mathcal{N}_k$ involved allows us to restrict all computations to a much smaller set of nonzero orbits.
\begin{lemma}
\label{lem:tensor_product_sparsity}
Let $X \in \mathcal{L}(\mathcal{H})$ have support $\mathcal{A} \coloneqq \{(a,b) : X_{a,b} \neq 0\} \subseteq [d_\mathcal{H}]^2$. Then at most $\binom{n + |\mathcal{A}| - 1}{n} = O(n^{|\mathcal{A}|-1})$ orbits contribute to $X^{\otimes n}$ in the orbit basis~\eqref{eq:tensor_product_orbit}.
\end{lemma}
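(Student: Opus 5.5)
The plan is to expand $X^{\otimes n}$ entrywise and read off which orbit coefficients can be nonzero, using the product formula \eqref{eq:coeff_tensor_product}. Recall that the orbit basis of $\mathrm{End}^{\mathfrak{S}_n}(\mathcal{H}^{\otimes n})$ is indexed by the $\mathfrak{S}_n$-orbits of pairs of index strings $(\underline{i},\underline{j}) \in [d_\mathcal{H}]^n\times[d_\mathcal{H}]^n$ under simultaneous permutation. Equivalently, each orbit corresponds to a sequence of pairs $((i_1,j_1),\ldots,(i_n,j_n)) \in ([d_\mathcal{H}]^2)^n$ taken up to reordering. In other words, an orbit $r$ is the same thing as a multiset of size $n$ drawn from the alphabet $[d_\mathcal{H}]^2$, or a \emph{type} (occurrence vector) $t_r\colon [d_\mathcal{H}]^2\to\mathbb{Z}_{\geq 0}$ with $\sum_{(a,b)} t_r(a,b)=n$.

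First, rewrite \eqref{eq:coeff_tensor_product} in terms of the type. Since the product $\prod_{k} X_{(i_{r,k},j_{r,k})}$ does not depend on the order of the factors, we get
\begin{equation}
c_r=\prod_{(a,b)\in[d_\mathcal{H}]^2} X_{a,b}^{\,t_r(a,b)},
\end{equation}
which also confirms that the choice of representative is irrelevant. From this formula, $c_r\neq 0$ forces $t_r(a,b)=0$ for every $(a,b)\notin\mathcal{A}$, since any such pair appearing in the orbit contributes a factor $X_{a,b}=0$. Hence the orbits with nonzero coefficient in \eqref{eq:tensor_product_orbit} inject into the set of multisets of size $n$ with elements from $\mathcal{A}$.

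Second, count these multisets by stars and bars. There are $\binom{n+|\mathcal{A}|-1}{n}=\binom{n+|\mathcal{A}|-1}{|\mathcal{A}|-1}$ of them. This is a polynomial in $n$ of degree $|\mathcal{A}|-1$, bounded for instance by $(n+1)^{|\mathcal{A}|-1}$, so the count is $O(n^{|\mathcal{A}|-1})$. The bound "at most" is appropriate because some of these types could still give rise to coefficients that cancel, but none outside the set can contribute.

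There is no real obstacle in this argument. The only point needing care is the identification of orbits with types in $([d_\mathcal{H}]^2)^n/\mathfrak{S}_n$. This relies on the action $P(\pi)(\cdot)P(\pi)^\dagger$ permuting the row and column index strings simultaneously, so that the matrix unit $\ket{\underline{i}}\!\bra{\underline{j}}$ is sent to $\ket{\pi\underline{i}}\!\bra{\pi\underline{j}}$. Because of this, the orbit is exactly the multiset of the columns $(i_k,j_k)$, which justifies the counting above.
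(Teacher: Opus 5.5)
Your proof is correct and matches the paper's argument: the product formula \eqref{eq:coeff_tensor_product} shows that an orbit contributes only if all its index pairs lie in $\mathcal{A}$, such orbits correspond to size-$n$ multisets drawn from $\mathcal{A}$, and stars and bars counts them. One small correction to your aside: no cancellation can occur, because for a multiset drawn from $\mathcal{A}$ the coefficient $\prod_{(a,b)} X_{a,b}^{t_r(a,b)}$ is a product of nonzero numbers, so your injection is in fact a bijection (as the paper states) and the bound is attained exactly.
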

\begin{proof}
By~\eqref{eq:coeff_tensor_product}, the coefficient of an orbit with representative $(\underline{i}, \underline{j})$ is $\prod_{k=1}^n X_{i_k, j_k}$, which vanishes as soon as any pair $(i_k, j_k) \notin \mathcal{A}$. Nonzero orbits are therefore in bijection with multisets of size $n$ drawn from $\mathcal{A}$, of which there are $\binom{n + |\mathcal{A}| - 1}{n}$.
\end{proof}

The Choi matrices of the two single-copy channels of interest,
\begin{equation}
    \Gamma^{\mathrm{id}}_{AQ} = \begin{pmatrix}
        1 & 0 & 0 & 1 \\ 0 & 0 & 0 & 0 \\ 0 & 0 & 0 & 0 \\ 1 & 0 & 0 & 1
    \end{pmatrix}, \qquad
    \Gamma^{\mathcal{P}_{\vec{p}}}_{AQ} = \begin{pmatrix}
        1{-}p & 0 & 0 & 0 \\ 0 & p & 0 & 0 \\ 0 & 0 & p & 0 \\ 0 & 0 & 0 & 1{-}p
    \end{pmatrix},
\end{equation}
each have exactly four nonzero entries. (Both are instances of conjugate local diagonal unitary invariant (CLDUI) matrices~\cite{Johnston2018, Singh2021Diagonal}: the former is an isotropic state~\cite{Horodecki1999Reduction}, the latter is Bell-diagonal.) Since the orbit basis factorizes across the two tensor factors of $\mathcal{N}_k = \mathcal{P}_{\vec{p}}^{\otimes k} \otimes \mathrm{id}^{\otimes(n-k)}$, Lemma~\ref{lem:tensor_product_sparsity} applies independently to each, bounding the number of nonzero orbits in $\Gamma^{\mathcal{N}_k}$ by
\begin{equation}
    \binom{k+3}{k} \cdot \binom{n-k+3}{n-k} = O(n^6),
\end{equation}
compared to $O(n^{30})$ in the generic case; at the boundary $k \in \{0, n\}$ this reduces further to $O(n^3)$ versus $O(n^{15})$. Concretely, this sparsity is exploited at two points in the algorithm: (i) in the \emph{initialization step}, only nonzero orbits of $\Gamma^{\mathcal{P}_{\vec{p}}^{\otimes k}}$ and $\Gamma^{\mathrm{id}^{\otimes(n-k)}}$ need to be enumerated and stored; (ii) in the \emph{concatenation step}, sums over joint orbits $s$ can be restricted to those with $c_s^{\mathcal{N}_k} \neq 0$. 

Combining this sparsity with the decoding strategy of the previous section yields the algorithm used to produce the results of Figure~\ref{fig:main}.



\end{document}